\documentclass[pra, onecolumn, superscriptaddress, nofootinbib]{revtex4-2}
\usepackage[a4paper, left=1in, right=1in, top=1in, bottom=1in]{geometry}

% Makes ligatured fonts searchable and copyable in pdf readers
\usepackage{cmap} % Load before fontenc 
\usepackage[utf8]{inputenc}
\usepackage[english]{babel}
\usepackage[T1]{fontenc}
\usepackage{appendix}
\usepackage{braket}
\usepackage{url, amsfonts, tikz, physics, listings, xcolor, graphicx, float}
\usepackage[shortlabels]{enumitem}
\usepackage{dsfont}
\usepackage{amsmath, amssymb, amstext, amscd, amsthm, makeidx, graphicx, url, mathrsfs, mathtools, longdivision, polynom, bbm, complexity}
\usepackage{nicefrac}
\usepackage[linesnumbered,ruled,vlined]{algorithm2e}
\usepackage{xcolor}
\usepackage[normalem]{ulem}
\usepackage{booktabs}

\definecolor{blueviolet}{rgb}{0.2, 0.2, 0.6}
\definecolor{webgreen}{rgb}{0,.5,0}
\definecolor{webbrown}{rgb}{.6,0,0}
\usepackage[pdftex,
  bookmarks=false,
  colorlinks=true, %allcolors=blueviolet,
  urlcolor=webbrown,
  linkcolor=blueviolet, 
  citecolor=webgreen,
  pdfstartpage=1,
  pdfstartview={FitH},  % FitBH
  bookmarksopen=false
  ]{hyperref}
\usepackage{cleveref}

\makeatletter
\newcommand\RedeclareMathOperator{%
  \@ifstar{\def\rmo@s{m}\rmo@redeclare}{\def\rmo@s{o}\rmo@redeclare}%
}

% this is taken from \renew@command
\newcommand\rmo@redeclare[2]{%
  \begingroup \escapechar\m@ne\xdef\@gtempa{{\string#1}}\endgroup
  \expandafter\@ifundefined\@gtempa
     {\@latex@error{\noexpand#1undefined}\@ehc}%
     \relax
  \expandafter\rmo@declmathop\rmo@s{#1}{#2}}
% This is just \@declmathop without \@ifdefinable
\newcommand\rmo@declmathop[3]{%
  \DeclareRobustCommand{#2}{\qopname\newmcodes@#1{#3}}%
}
\@onlypreamble\RedeclareMathOperator
\makeatother

\allowdisplaybreaks

\RedeclareMathOperator*{\E}{{\mathbb{E}}}

\DeclareMathOperator*{\argmin}{arg\,min}

% distance metrics
\DeclareMathOperator*{\dtr}{d_{\mathrm{tr}}}
\DeclareMathOperator*{\davg}{d_{\mathrm{avg}}}
\DeclareMathOperator*{\dworst}{d_{\mathbin{\Diamond}}}

\newcommand{\floor}[1]{\left\lfloor#1\right\rfloor}
\newcommand{\ceil}[1]{\left\lceil#1\right\rceil}
\newcommand{\bigo}{\mathcal{O}}

\newcommand*\Let[2]{Let #1 $\gets$ #2}

\makeatletter
\DeclareFontFamily{OMX}{MnSymbolE}{}
\DeclareSymbolFont{MnLargeSymbols}{OMX}{MnSymbolE}{m}{n}
\SetSymbolFont{MnLargeSymbols}{bold}{OMX}{MnSymbolE}{b}{n}
\DeclareFontShape{OMX}{MnSymbolE}{m}{n}{
    <-6>  MnSymbolE5
   <6-7>  MnSymbolE6
   <7-8>  MnSymbolE7
   <8-9>  MnSymbolE8
   <9-10> MnSymbolE9
  <10-12> MnSymbolE10
  <12->   MnSymbolE12
}{}
\DeclareFontShape{OMX}{MnSymbolE}{b}{n}{
    <-6>  MnSymbolE-Bold5
   <6-7>  MnSymbolE-Bold6
   <7-8>  MnSymbolE-Bold7
   <8-9>  MnSymbolE-Bold8
   <9-10> MnSymbolE-Bold9
  <10-12> MnSymbolE-Bold10
  <12->   MnSymbolE-Bold12
}{}

\let\llangle\@undefined
\let\rrangle\@undefined
\DeclareMathDelimiter{\llangle}{\mathopen}%
                     {MnLargeSymbols}{'164}{MnLargeSymbols}{'164}
\DeclareMathDelimiter{\rrangle}{\mathclose}%
                     {MnLargeSymbols}{'171}{MnLargeSymbols}{'171}
\makeatother
\newcommand{\kket}[1]{|#1\rrangle}
\newcommand{\bbra}[1]{\llangle #1|}
\newcommand{\bbrakket}[2]{\llangle #1|#2 \rrangle}

\numberwithin{equation}{section}

\newtheorem{theorem}{Theorem}
\newtheorem{prop}{Proposition}
\newtheorem{lemma}{Lemma}
\newtheorem{corollary}{Corollary}
\newtheorem{definition}{Definition}

\newtheorem{claim}{Claim}

%%% Comments
% Matthias Caro 

%%%%%Revision commenting tools
\newcommand{\stkout}[1]{\ifmmode\text{\sout{\ensuremath{#1}}}\else\sout{#1}\fi}
\newif\ifverbose
% verbose: true: show corrections, false: hide corrections
% \verbosetrue
\verbosefalse
% ins: stuff that is new
\newcommand{\ins}[1]{\ifverbose\textcolor{blue}{#1}\else#1\fi}
% edit: stuff that has been replaced with other stuff
\newcommand{\edit}[2]{\ifverbose\textcolor{red}{\stkout{#1} #2}\else#2\fi}
% del: stuff that has been removed
\newcommand{\del}[1]{\ifverbose\textcolor{red}{\stkout{#1}}\fi}

\begin{document}
\title{Learning quantum states and unitaries of bounded gate complexity}

\setlength{\skip\footins}{12pt}
\def\thefootnote{$\star$}\footnotetext{These authors contributed equally to this work.}\def\thefootnote{\arabic{footnote}}

\author{Haimeng Zhao$^\star$}
\email{haimengzhao@icloud.com}
\affiliation{Institute for Quantum Information and Matter, Caltech, Pasadena, CA, USA}
\affiliation{Tsinghua University, Beijing, China}

\author{Laura Lewis$^\star$}
\email{llewis@alumni.caltech.edu}
\affiliation{Institute for Quantum Information and Matter, Caltech, Pasadena, CA, USA}
\affiliation{Google Quantum AI, Venice, CA, USA}

\author{Ishaan Kannan$^\star$}
\email{ishaan@alumni.caltech.edu}
\affiliation{Institute for Quantum Information and Matter, Caltech, Pasadena, CA, USA}

\author{Yihui~Quek}
\email{yquek@mit.edu}
\affiliation{Harvard University, 17 Oxford Street, Cambridge, MA, USA}
\affiliation{Massachusetts Institute of Technology, Cambridge, MA, USA}

\author{Hsin-Yuan Huang}
\email{hsinyuan@caltech.edu}
\affiliation{Institute for Quantum Information and Matter, Caltech, Pasadena, CA, USA}
\affiliation{Google Quantum AI, Venice, CA, USA}
\affiliation{Massachusetts Institute of Technology, Cambridge, MA, USA}

\author{Matthias C.~Caro}
\email{matthias.caro@fu-berlin.de}
\affiliation{Institute for Quantum Information and Matter, Caltech, Pasadena, CA, USA}
\affiliation{Freie Universit\"at Berlin, Berlin, Germany}

%\date{\today}

\begin{abstract}
While quantum state tomography is notoriously hard, most states hold little interest to practically-minded tomographers. Given that states and unitaries appearing in Nature are of bounded gate complexity, it is natural to ask if efficient learning becomes possible.
In this work, we prove that to learn a state generated by a quantum circuit with $G$ two-qubit gates to a small trace distance, a sample complexity scaling linearly in $G$ is necessary and sufficient. We also prove that the optimal query complexity to learn a unitary generated by $G$ gates to a small average-case error scales linearly in $G$.
While sample-efficient learning can be achieved, we show that under reasonable cryptographic conjectures, the computational complexity for learning states and unitaries of gate complexity $G$ must scale exponentially in $G$.
We illustrate how these results establish fundamental limitations on the expressivity of quantum machine learning models and provide new perspectives on no-free-lunch theorems in unitary learning.
Together, our results answer how the complexity of learning quantum states and unitaries relate to the complexity of creating these states and unitaries.
\end{abstract}

\maketitle

\vspace{-2em}
{\renewcommand\addcontentsline[3]{} 
\section{Introduction}}

A central problem in quantum physics is to characterize a quantum system by constructing a full classical description of its state or its unitary evolution based on data from experiments. These two tasks, named \emph{quantum state tomography}~\cite{banaszek2013focus,blume2010optimal,gross2010quantum,hradil1997quantum} and \emph{quantum process tomography}~\cite{mohseni2008quantum,o2004quantum,scott2008optimizing,chuang1997prescription,dariano2001quantum}, are (in)famous for being ubiquitous yet highly expensive. The applications of tomography include quantum metrology~\cite{giovannetti2006quantum, giovannetti2011advances}, verification~\cite{kokail2021entanglement, carrasco2021theoretical}, benchmarking~\cite{mohseni2008quantum,o2004quantum,scott2008optimizing,chuang1997prescription,levy2021classical,merkel2013self,blume2017demonstration,huang2022foundations}, and error mitigation \ins{\cite{cai2022quantum,van2023probabilistic, gupta2024probabilistic}}. Yet tomography provably requires exponentially many (in the system size $n$) copies of the unknown state \cite{haah2017sample, o2016efficient} or runs of the unknown process \cite{haah2023query}. This intuitively arises from the exponential scaling of the number of parameters needed to describe an {\em arbitrary} quantum system. 

But the situation is less dire than it theoretically appears. In practice, tools for analyzing many-body systems often exploit {\em known structures} cleverly to predict their phenomenology or classically simulate them. 
Notable examples include the BCS theory for superconductivity~\cite{bardeen1957theory}, tensor networks~\cite{orus2014practical, torlai2023quantum}, and neural network~\ins{\cite{carleo2017solving, torlai2018neural, zhao2023empirical,favoni2022lattice,apte2024deep,medvidovic2024neural}} Ansätze.
Indeed, while {\em most} of the states or unitaries may have exponential gate complexity \cite{brandao2021models}, such objects are also unphysical: an exponentially-complex state or unitary cannot be produced in Nature with a reasonable amount of time~\cite{poulin2011quantum}.
In particular,~\cite{poulin2011quantum} shows that quantum states/unitaries with bounded gate complexity are precisely those that can be produced by bounded-time evolution of time-dependent local Hamiltonians.

In this work, we study if tomography, too, can benefit from the observation that Nature can only produce states and unitaries with bounded complexity. This gives rise to the following main question.
\vspace*{2mm}
\begin{center}
    \textit{Can we efficiently learn states/unitaries of bounded gate complexity?}
\end{center}
In particular, we consider the following two tasks:
\begin{enumerate}
    \item Given copies (samples) of a pure quantum state $\ket{\psi}$ generated by $G$ two-qubit gates, learn $\ket{\psi}$ to within $\epsilon$ trace distance; see \Cref{fig:access_model}(a).
    \item Given uses (queries) of a unitary $U$ composed of $G$ two-qubit gates, learn $U$ to within $\epsilon$ root mean squared trace distance between output states (average-case learning); see \Cref{fig:access_model}(b).
\end{enumerate}
Note that the $G$ quantum gates can act on arbitrary pairs of qubits without any geometric locality constraint.
By allowing general gates beyond discrete gate sets, this setting encompasses continuous time-dependent Hamiltonian dynamics via Trotterization \cite{poulin2011quantum} and thus analog quantum simulation \cite{georgescu2014quantum}.
It also includes states heavily studied in condensed matter such as symmetry-protected topologically ordered states \cite{zeng2019quantum,malz2024preparation,stephen2024nonlocal} and tensor network states \cite{foss2021holographic,schon2005sequential,huang2015quantum}.
Previously, \cite{aaronson2018shadow} showed that Task 1 can be accomplished with a sample complexity of $\tilde{\bigo}(nG^2/\epsilon^4)$. 
In our work, we present algorithms for both of these tasks that use a number of samples/queries linear in the circuit complexity $G$ up to logarithmic factors. 
Moreover, the sample complexity is independent of system size.
Thus, for $G$ scaling polynomially with the number of qubits, our learning procedures improve upon previous work~\cite{aaronson2018shadow} and have significantly lower sample/query complexities than required for general tomography, answering our central question affirmatively.
We also prove matching lower bounds (up to logarithmic factors), showing that our algorithms are effectively optimal.
Moreover, we show that the focus on average-case learning is crucial in the case of unitaries: unitary tomography up to error $\epsilon$ in diamond distance (a worst-case metric over input states) requires a number of queries scaling exponentially in $G$, establishing an exponential separation between average and worst case.

\begin{figure}[t]
    \centering
    \includegraphics[width=\linewidth]{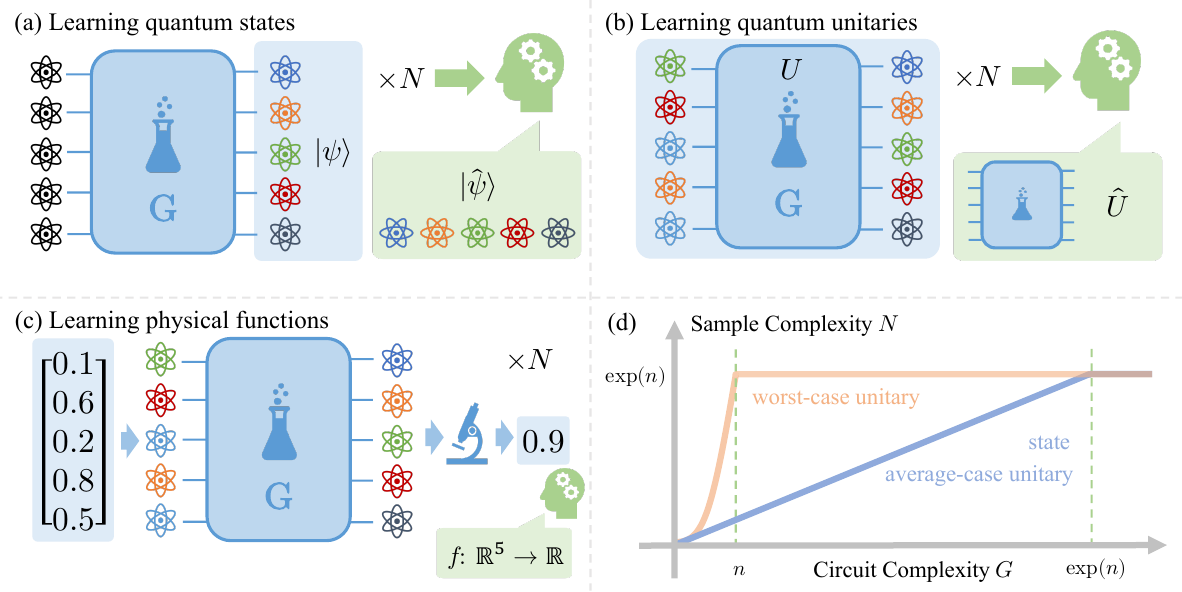}
    \caption{(a)-(c) Schematic overview of the learning models in this work. (a) Learning quantum states with bounded circuit complexity $G$. (b) Learning unitaries with bounded circuit complexity $G$. (c) Learning classical functions from quantum experiments with bounded circuit complexity $G$. (d) \ins{A conceptual depiction of the} \edit{S}{s}ample complexity of learning states in trace distance and unitaries in average-case distance scales linearly with circuit complexity, while that of learning unitaries in worst-case distance scales exponentially.}
    \label{fig:access_model}
\end{figure}

While our learning algorithms for bounded-complexity states and unitaries are efficient in terms of sample/query complexity, they are not computationally efficient. We prove that this is unavoidable. Assuming the quantum subexponential hardness of Ring Learning with Errors (\textsf{RingLWE})~\cite{lyubashevsky2010ideal, regev2009lattices, arunachalam2021quantum, diakonikolas2022cryptographic, aggarwal2023lattice, ananth2023revocable}, any quantum algorithm that learns arbitrary states/unitaries with $\tilde{\bigo}(G)$ gates requires computational time scaling exponentially in $G$.
This result highlights a significant computational complexity limitation on learning even comparatively simple states and unitaries. This result also answers an open question in \cite{anshu2023survey}.
Meanwhile, we show that $\mathrm{poly}(n)$-time algorithms are possible for $G=\mathcal{O}(\log n)$.
Together, this establishes a crossover in computational hardness at $G \sim \log n$, kicking in far before the sample complexity becomes exponential (at $G = \exp(n)$). 
This means that relatively few samples/queries already contain enough information for the learning task, but it is hard to retrieve the information.

Finally, we study two variations of unitary learning which deepen our insights about the problem. The first variation utilizes classical (not quantum) descriptions of input and output pairs, and explains why both learning states and unitaries display a linear-in-$G$ sample complexity: The underlying source of complexity in learning unitaries is, in fact, the readout of input and output quantum states, rather than learning the mapping. We generalize recent quantum no-free lunch theorems \cite{poland2020no, sharma2022reformulation} to reach this conclusion. 
For the second variation we study quantum machine learning (QML) models. We focus on learning classical functions that map variables controlling the input states and the evolution to some experimentally observed property of the outputs (\Cref{fig:access_model}(c)). Surprisingly, we find that certain well-behaved many-variable functions can in fact not (even approximately) be implemented by quantum experiments with bounded complexity. This highlights a fundamental limitation on the functional expressivity of both Nature and practical QML models.

% \vspace{2em}
{\renewcommand\addcontentsline[3]{} 
\section{Results}}
In this section, we discuss our rigorous guarantees for learning quantum states and unitaries with circuit complexity $G$. Our sample complexity results are summarized in \Cref{tab:sample_complexity} and \Cref{fig:access_model}(d).

\begin{table}
\begingroup
\setlength{\tabcolsep}{10pt} % Default value: 6pt
\renewcommand{\arraystretch}{1.5} % Default value: 1
\begin{tabular}{cccc}
\toprule
Sample complexity 
& State 
& Unitary (average-case) 
& Unitary (worst-case) \\ 
\midrule
Upper bound 
& $\tilde{\mathcal{O}}\left(G/\epsilon^2\right)$
& $\tilde{\mathcal{O}}\left(G\min\left\{1/\epsilon^2, \sqrt{2^n}/\epsilon\right\}\right)$ 
& $\tilde{\mathcal{O}}\left(2^n G/\epsilon\right)$ \\
Lower bound 
& $\tilde{\Omega}\left(G/\epsilon^2\right)$ 
& $\Omega\left(G/\epsilon\right)$ 
& $\Omega\left(2^{\min\{G/(2C), n/2\}}/{\epsilon}\right)$ \\ 
\bottomrule
\end{tabular}
\endgroup
\caption{\textbf{Sample complexity of learning $n$-qubit states and unitaries with circuit complexity $G$.} The learning accuracy $\epsilon$ is measured in trace distance for states, root mean squared trace distance for average case unitary learning, and diamond distance for worst case. Here, $C>0$ is some universal constant. Throughout the manuscript, $\tilde{\mathcal{O}}, \tilde{\Theta}$ and $\tilde{\Omega}$ denote that we are suppressing non-leading logarithmic factors.}
\label{tab:sample_complexity}
\end{table}

We also present computational complexity results, where we establish the exponential-in-$G$ growth of computational complexity, implying that $\log n$ gate complexity is a transition point at which learning becomes computationally inefficient. 
In particular, we prove that for circuit complexity $\tilde{\bigo}(G)$, any quantum algorithm for learning states in trace distance or unitaries in average-case distance must use time exponential in $G$, under the conjecture that \textsf{RingLWE} cannot be solved by a quantum computer in sub-exponential time.
Hence, for a number $G$ of gates that scales slightly higher than $\log n$, the learning tasks cannot be solved by any polynomial-time quantum algorithm under the same conjecture.
Meanwhile, for $G=\mathcal{O}(\log n)$, both learning tasks can be solved efficiently in polynomial time.

\vspace{2em}
{\renewcommand\addcontentsline[3]{} 
\subsection{Learning quantum states}}

We consider the task of learning quantum states of bounded circuit complexity.
Let $\ket{\psi} = U\ket{0}^{\otimes n}$ be an $n$-qubit pure state generated by a unitary $U$ consisting of $G$ two-qubit gates acting on the zero state.
Throughout this section, we denote $\rho \triangleq \ketbra{\psi}$.
Given $N$ identically prepared copies of $\rho$, the goal is to output a classical circuit description of a quantum state $\hat{\rho}$ that is $\epsilon$-close to $\rho$ in trace distance: $\dtr(\hat{\rho}, \rho) = \norm{\hat{\rho} - \rho}_1 /2 < \epsilon$.
We establish the following theorem, which states that linear-in-$G$ many samples (up to logarithmic factors) are both necessary and sufficient to learn the unknown quantum state $\ket{\psi}$ within a small trace distance.

\vspace*{1mm}
\begin{theorem}[State learning]
\label{thm:state-learning}
Suppose we are given $N$ copies of an $n$-qubit pure state $\rho = \ketbra{\psi}$, where $\ket{\psi} = U\ket{0}^{\otimes n}$ is generated by a unitary $U$ consisting of $G$ two-qubit gates. Then, $N = \tilde{\Theta}\left(G/\epsilon^2\right)$ copies are necessary and sufficient to learn the state within $\epsilon$-trace distance $\dtr$ with high probability.
\end{theorem}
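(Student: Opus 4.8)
The plan is to prove the two directions separately, each by reduction to a standard primitive: the upper bound to building a small covering net of the relevant state family followed by quantum hypothesis selection, and the lower bound to exhibiting a large packing of low-complexity states followed by an information-theoretic argument. The heuristic throughout is that a pure state produced by $G$ two-qubit gates is specified by $\mathcal{O}(G)$ continuous parameters plus a discrete choice of gate locations, so the number of distinguishable candidates has logarithm $\tilde{\mathcal{O}}(G)$, and learning any such discrete family to trace distance $\epsilon$ costs $\tilde{\mathcal{O}}(G/\epsilon^2)$ copies, the $1/\epsilon^2$ being the usual statistical overhead.

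\emph{Upper bound.} First I would build an $(\epsilon/2)$-net $\mathcal{N}$ of all $n$-qubit states of the form $W\ket{0}^{\otimes n}$ with $W$ a product of at most $G$ two-qubit gates: for each gate, discretize both the choice of which qubit pair it acts on (at most $\binom{n}{2}$ options) and the $\mathrm{U}(4)$ element (an $\eta$-net of the $16$-dimensional group $\mathrm{U}(4)$ has $(C/\eta)^{16}$ elements). Telescoping through the $G$ gates shows $\eta=\epsilon/(2G)$ makes the composed circuits $(\epsilon/2)$-close in operator norm, hence their outputs $(\epsilon/2)$-close in trace distance, so $\log\abs{\mathcal{N}}=\mathcal{O}\big(G\log(nG/\epsilon)\big)=\tilde{\mathcal{O}}(G)$. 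Then I would invoke a quantum hypothesis-selection subroutine: given $N=\mathcal{O}(\log\abs{\mathcal{N}}/\epsilon^2)$ fresh copies of $\rho$ together with the (classical circuit) descriptions of the elements of $\mathcal{N}$, one of which is $(\epsilon/2)$-close to $\rho$, it outputs with high probability some $\hat{\sigma}\in\mathcal{N}$ with $\dtr(\hat{\sigma},\rho)=\mathcal{O}(\epsilon)$, which is by construction a classical circuit. After rescaling $\epsilon$ by a constant this gives the claimed $N=\tilde{\mathcal{O}}(G/\epsilon^2)$.

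\emph{Lower bound.} I would construct a hard subfamily of size $2^{\Omega(G)}$ that is nonetheless ``flat'', carrying little information per copy. Fix $D=\Theta(G)$, let $k=\ceil{\log_2 D}$, and take a packing $\mathcal{P}$ of the real unit sphere $S^{D-1}$ of size $2^{\Omega(D)}=2^{\Omega(G)}$ with pairwise angular distance $\ge\pi/3$ and $\tfrac{1}{\abs{\mathcal{P}}}\sum_{v\in\mathcal{P}} vv^{\mathsf{T}}\preceq \tfrac{2}{D}I$ (a uniformly random packing works, by matrix concentration). To $v\in\mathcal{P}$ associate the state $\ket{\psi_v}=\sqrt{1-c^2\epsilon^2}\,\ket{0}^{\otimes n}+c\epsilon\,\ket{1}\otimes\big(\sum_{i=1}^{D} v_i\ket{\mathrm{bin}(i)}\big)\otimes\ket{0}^{\otimes(n-1-k)}$, for a suitable constant $c$. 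Each $\ket{\psi_v}$ is prepared from $\ket{0}^{\otimes n}$ by a single-qubit rotation followed by a circuit that prepares the $k$-qubit state $\sum_i v_i\ket{\mathrm{bin}(i)}$, controlled on the rotated qubit; this uses $\mathcal{O}(2^k)=\mathcal{O}(D)=\mathcal{O}(G)$ two-qubit gates, so with a small enough implied constant in $D=\Theta(G)$ the gate count is $\le G$. A direct computation gives $\abs{\braket{\psi_v}{\psi_w}}=1-\Theta(\epsilon^2)$ for $v\ne w$, hence pairwise trace distance $\Theta(\epsilon)$, so (calibrating $c$) any learner achieving accuracy $\epsilon$ must identify $v$, by the triangle inequality. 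Meanwhile the isotropy bound forces $\bar{\rho}\triangleq\tfrac{1}{\abs{\mathcal{P}}}\sum_v\ketbra{\psi_v}$ to have one eigenvalue $1-\Theta(\epsilon^2)$ and at most $D+\mathcal{O}(1)$ further eigenvalues of size $\mathcal{O}(\epsilon^2/D)$, so $S(\bar{\rho})=\mathcal{O}\big(\epsilon^2\log(G/\epsilon)\big)=\tilde{\mathcal{O}}(\epsilon^2)$. Given $N$ copies, Fano's inequality gives $I(V;\text{outcomes})\ge(1-p_{\mathrm{err}})\log\abs{\mathcal{P}}-1=\Omega(G)$, while Holevo's bound (the $\ket{\psi_v}$ being pure) and subadditivity of von Neumann entropy give $I(V;\text{outcomes})\le S\big(\tfrac{1}{\abs{\mathcal{P}}}\sum_v\ketbra{\psi_v}^{\otimes N}\big)\le N\,S(\bar{\rho})=N\cdot\tilde{\mathcal{O}}(\epsilon^2)$; combining, $N=\tilde{\Omega}(G/\epsilon^2)$.

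\emph{Main obstacle.} The delicate point is the design of the lower-bound family: it must simultaneously have circuit complexity $\le G$, contain $2^{\Omega(G)}$ states, and be concentrated in only $\Theta(G)$ dimensions, so that each copy leaks just $\tilde{\mathcal{O}}(\epsilon^2)$ bits about the hidden label; the obvious ``orthonormal flag'' family $\{\sqrt{1-c^2\epsilon^2}\ket{0}^{\otimes n}+c\epsilon\ket{1}\ket{x}\}_{x\in\{0,1\}^G}$ fails, because it spreads over $2^{G}$ dimensions and leaks $\Theta(G\epsilon^2)$ bits per copy, which washes out the $G$-dependence — replacing it by a $\Theta(G)$-dimensional spherical packing is the key fix. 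On the upper-bound side the only nontrivial ingredient is a hypothesis-selection primitive whose copy overhead is merely \emph{logarithmic} in $\abs{\mathcal{N}}$ (this requires collective measurements across the copies). Finally, to upgrade $\tilde{\mathcal{O}}(G/\epsilon^2)$ to a bound with \emph{no} dependence on the system size, one prepends a cheap support-identification step: since $\ket{\psi}$ has Hamming weight $\le 2G$, i.e.\ $\sum_i\big(1-\bra{0}\rho_i\ket{0}\big)\le 2G$, measuring $\tilde{\mathcal{O}}(\mathrm{poly}(G)/\epsilon^2)$ copies in the computational basis identifies, with no $n$-dependence, a set of $\mathrm{poly}(G/\epsilon)$ qubits on which $\rho$ is $(\epsilon/3)$-approximately supported, after which one nets over that many qubits rather than $n$ (verifying that the restricted state stays low-complexity needs a little care).
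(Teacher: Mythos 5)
Your proposal is correct, and the overall skeleton matches the paper's: covering net plus hypothesis selection for the upper bound, packing net plus a Fano/Holevo argument for the lower bound, and a computational-basis support-identification step to remove the $n$-dependence. The one place you genuinely diverge is the lower-bound construction. The paper does not build the hard family by hand: it sets $k=\lfloor\log_2(G/C)\rfloor$, observes that \emph{every} $k$-qubit pure state has circuit complexity $\le G$ (via Shende et al.), and then imports wholesale the packing net of Haah et al.\ for full pure-state tomography in dimension $2^k=\Theta(G)$, together with their bound $\chi_0=O(\epsilon^2\log(2^k/\epsilon))$ on the Holevo information per copy. You instead construct an explicit family of flag states $\sqrt{1-c^2\epsilon^2}\,\ket{0^n}+c\epsilon\ket{1}\ket{\phi_v}$ indexed by a near-isotropic spherical packing in $D=\Theta(G)$ real dimensions, and re-derive the per-copy entropy bound $S(\bar\rho)=O(\epsilon^2\log(G/\epsilon))$ directly from the eigenvalue profile of the average state. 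The two are morally the same hard instance (a $2^{\Omega(G)}$-element, pairwise-$\Theta(\epsilon)$-separated family concentrated in $\Theta(G)$ dimensions), but your version is self-contained where the paper's is a black-box reduction; your diagnosis of why the naive $\{\ket{1}\ket{x}\}_{x\in\{0,1\}^G}$ flag family leaks $\Theta(G\epsilon^2)$ bits per copy and hence fails is exactly the right reason the $\Theta(G)$-dimensional packing is needed. Two minor corrections: the hypothesis-selection subroutine does not require collective measurements — the paper implements it with single-copy random-Clifford classical shadows, exploiting that the Helstrom observables between pure hypotheses have rank at most $2$; and the ``little care'' you flag in the support-identification step is a real issue the paper spends some effort on (the identified set $\hat A$ may be a strict subset of the true support, which they handle with a gentle-measurement postselection and a permutation argument before netting over a $2G$-qubit subsystem).
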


Previous work~\cite{aaronson2018shadow} obtained a sample complexity of $\tilde{\bigo}(nG^2/\epsilon^4)$ for this task, which we show to be sub-optimal.
Notably, our result achieves the optimal scaling in both $G$ and $\epsilon$ up to logarithmic factors and is independent of the system size $n$.
Thus, we completely characterize the sample complexity, resolving an open question from~\cite{yu2023learning}.
We prove the upper bound in \Cref{app:states-upper}, utilizing covering nets~\cite{vershynin2018high} and quantum hypothesis selection~\cite{bădescu2023improved}.
Our proposed algorithm first creates a covering net over the space of all unitaries consisting of $G$ two-qubit gates.
This can easily be transformed into a covering net over the space of all quantum states generated by $G$ two-qubit gates by applying each element of the unitary covering net to the zero state.
Thus, any quantum state generated by $G$ two-qubit gates is close (in trace distance) to some element of the covering net.
We can then apply quantum hypothesis selection~\cite{bădescu2023improved} to the covering net, which allows us to identify the element in the covering net that is close to the unknown target state $\ket{\psi}$ and achieve the optimal $\epsilon$ dependence.
We also note that our algorithm for learning quantum states does not require knowledge of or access to the unitary $U$ which generates the unknown state $\ket{\psi}$.
Only the condition that some unitary $U$ consisting of $G$ gates generates $\ket{\psi}$ is needed.
The lower bound is proven in \Cref{app:states-lower} by using an information-theoretic argument via reduction to distinguishing a packing net over $G$-gate states~\cite{haah2017sample}.

Our algorithm to learn the unknown quantum state $\ket{\psi}$ is computationally inefficient, as it requires a search over a covering net whose cardinality is exponential in $G$.
We show that for circuits of size $\tilde{\mathcal{O}}(G)$, any quantum algorithm that can learn $\ket{\psi}$ to within a small trace distance given access to copies of this state must use time exponential in $G$, under commonly-believed cryptographic assumptions \cite{lyubashevsky2010ideal, regev2009lattices, arunachalam2021quantum, diakonikolas2022cryptographic, aggarwal2023lattice, ananth2023revocable}.
Meanwhile, the learning task is computationally-efficiently solvable for $G=\mathcal{O}(\log n)$ via junta learning \cite{chen2023testing} and standard tomography methods.
This implies a transition point of computational efficiency at $\log n$ circuit complexity.
Previous work~\cite{yang2023complexity,abbas2023quantum} arrives at similar hardness results for polynomial circuit complexity, but our detailed analysis allows us to sharpen the computational lower bound and obtain this transition point.
\cite{hinsche2023one} also proves computational complexity lower bounds for distribution learning that are similar in spirit.

\begin{theorem}[State learning computational complexity]
    \label{thm:state-comp-complex}
    Suppose we are given $N$ copies of an unknown $n$-qubit pure state $\ket{\psi} = U\ket{0}^{\otimes n}$ generated by an arbitrary unknown unitary $U$ consisting of $\tilde{\mathcal{O}}(G)$ two-qubit gates.
    Suppose that \textsf{RingLWE} cannot be solved by a quantum computer in sub-exponential time.
    Then, any quantum algorithm that learns the state to within $\epsilon$ trace distance $\dtr$ must use $\exp(\Omega(\min\{G, n\}))$ time.
    Meanwhile, for $G=\mathcal{O}(\log n)$, the learning task can be solved in polynomial time.
\end{theorem}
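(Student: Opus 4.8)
The plan is to prove the two halves separately: a cryptographic reduction for the exponential lower bound, and a reduction to junta learning for the $G=\mathcal{O}(\log n)$ upper bound.

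\textbf{Lower bound.} I would construct hard instances from pseudorandom states (PRS), which exist assuming quantum-secure pseudorandom functions and hence the quantum hardness of \textsf{RingLWE}. Fix $g = \Theta(\min\{G,n\})$, take a PRS ensemble $\{\ket{\phi_k}\}_k$ on $g$ qubits with security parameter $\lambda$, and define the $n$-qubit instance $\ket{\psi_k} = \ket{\phi_k}\otimes\ket{0}^{\otimes(n-g)}$. Since the partial trace contracts trace distance and copies of $\ket{\psi_k}$ restricted to the first $g$ qubits are exactly copies of $\ket{\phi_k}$, an $\epsilon$-accurate, $T$-time learner for $\ket{\psi_k}$ yields an $\epsilon$-accurate, $T$-time learner $\mathcal{A}$ for the $g$-qubit PRS. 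I would then convert $\mathcal{A}$ into a distinguisher $D$ between $\ket{\phi_k}$ and a Haar-random $g$-qubit state $\ket{\chi}$: run $\mathcal{A}$ on all but one input copy to obtain a classical circuit description of some $\hat\rho$, prepare $\hat\rho$, and SWAP-test it against the held-out copy (with $\mathcal{O}(1)$ repetitions). On PRS inputs $\mathcal{A}$ succeeds with high probability, so $\bra{\chi}\hat\rho\ket{\chi}\ge 1-2\epsilon$ and $D$ accepts. On Haar inputs, by Lévy's lemma $\bra{\chi}\hat\rho\ket{\chi}$ concentrates around $2^{-g}$ for \emph{every fixed} $\hat\rho$, so a union bound over the at most $2^{T+1}$ density operators describable by the length-$\le T$ output of $\mathcal{A}$ gives $\bra{\chi}\hat\rho\ket{\chi}\le \tfrac12$ with probability $1-2^{-\Omega(2^g)}$ whenever $T = o(2^g)$; equivalently, no algorithm using subexponentially many copies can learn a Haar-random state, which also follows from the tomography lower bounds of \cite{haah2017sample,o2016efficient}. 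Hence $D$ rejects on Haar inputs, contradicting PRS security as soon as $T = 2^{o(\min\{G,n\})}$.

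\textbf{Main obstacle.} The delicate point is the parameter balance: I need the PRS preparation circuit on $g$ qubits to use only $\tilde{\mathcal{O}}(G)$ two-qubit gates while remaining secure against $2^{\Omega(\min\{G,n\})}$-time (and $2^{\Omega(\min\{G,n\})}$-copy) adversaries. This forces an instantiation of the PRS via a pseudorandom function whose circuit size is essentially linear in the security parameter and whose security degrades only at the rate of a subexponential function of $\lambda$, so that the assumed subexponential quantum hardness of \textsf{RingLWE} transfers directly into an $\exp(\Omega(\min\{G,n\}))$ runtime lower bound rather than merely $\exp(\Omega(\mathrm{poly}(\min\{G,n\})))$; pinning down this trade-off, and verifying PRS security against the possibly super-polynomially many copies a general learner may request, is where the full strength of the conjecture is used. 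This also sharpens the polynomial-circuit-complexity hardness of \cite{yang2023complexity,abbas2023quantum} to the transition point $G\sim\log n$, since for $G=\omega(\log n)$ the bound $\exp(\Omega(\min\{G,n\}))$ is already super-polynomial in $n$.

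\textbf{Upper bound for $G=\mathcal{O}(\log n)$.} Here I would use that a product of $G$ two-qubit gates touches at most $2G$ qubits, so $\ket{\psi}=\ket{\phi}_S\otimes\ket{0}^{\otimes(n-|S|)}_{\bar S}$ for some unknown $S$ with $|S|\le 2G=\mathcal{O}(\log n)$ — i.e. $\ket{\psi}$ is an $\mathcal{O}(\log n)$-qubit junta state. I would run the quantum junta-learning procedure of \cite{chen2023testing} to (robustly) identify the relevant set $S$, then perform standard pure-state tomography on those $\le 2G$ qubits to accuracy $\epsilon$ at cost $\mathrm{poly}(2^{2G},1/\epsilon)=\mathrm{poly}(n,1/\epsilon)$ in samples and time, and output the circuit description of $\hat\rho_S\otimes\ketbra{0}_{\bar S}$. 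A triangle inequality over the two approximation steps gives trace distance $\mathcal{O}(\epsilon)$, and the whole procedure runs in $\mathrm{poly}(n)$ time, completing the proof.
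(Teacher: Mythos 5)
Your proposal is correct and follows essentially the same route as the paper: instantiate a pseudorandom state on $\Theta(\min\{G,n\})$ qubits from the RingLWE-based, $\tilde{\mathcal{O}}(\lambda)$-gate PRF construction, pad with $\ket{0}$'s, turn a fast learner into a PRS-vs-Haar distinguisher via a SWAP test against a held-out copy, and handle the Haar case by concentration plus a counting argument over the learner's possible outputs (your union bound over the $2^{O(T)}$ circuit descriptions plays the role of the paper's covering-net bound), with the upper bound likewise obtained by junta identification followed by tomography on the $\mathcal{O}(\log n)$ non-trivial qubits. The "main obstacle" you flag — calibrating the security parameter so that $\tilde{\mathcal{O}}(G)$-gate preparation coexists with security against $\exp(\Omega(\min\{G,n\}))$-time adversaries — is exactly the point the paper resolves by setting $\lambda = \Theta(G)$ in the Arunachalam et al.\ PRF and invoking sub-exponential RingLWE hardness.
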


{\renewcommand\addcontentsline[3]{} 
\subsection{Learning quantum unitaries}}
For learning unitaries, a natural distance metric analogous to the trace distance for states is the diamond distance $\dworst(U, V) = \max_{\rho} \|(U\otimes I)\rho(U\otimes I)^\dagger - (V\otimes I)\rho(V\otimes I)^\dagger\|_1$, where $\rho$ is over any arbitrarily extended Hilbert space.
It characterizes the optimal success probability for discriminating between two unitary channels. 
Moreover, it can be reinterpreted in terms of the largest distance between $U\ket{\psi}$ and $V\ket{\psi}$ over all input states $\ket{\psi}$, and thus represents the error we make in the worst case over input states. 
We find that in this worst-case learning task, a number of queries exponential in $G$ is necessary to learn the unitary.

\vspace*{1mm}
\begin{theorem}[Worst-case unitary learning]
    To learn an $n$-qubit unitary composed of $G$ two-qubit gates to accuracy $\epsilon$ in diamond distance $\dworst$ with high probability, any quantum algorithm must use at least $\Omega\left(\nicefrac{2^{\min\{G/(2C), n/2\}}}{\epsilon}\right)$ queries to the unknown unitary, where $C>0$ is a universal constant. 
    Meanwhile, there exists such an algorithm using $\tilde{\mathcal{O}}(\nicefrac{2^nG}{\epsilon})$ queries. 
    \label{thm:worst-case-unitary}
\end{theorem}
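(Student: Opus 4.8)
The plan is to prove the lower and upper bounds separately, using a packing-net argument on unitaries for the former and a combination of process tomography with a covering net for the latter. For the lower bound, the key idea is to exhibit a large family of $G$-gate unitaries that are mutually far apart in diamond distance yet hard to distinguish with few queries. First I would construct, on a block of $k = \min\{G/(2C), n/2\}$ qubits, a packing net of unitaries that are pairwise $\Omega(1)$-separated in diamond distance; since any $k$-qubit unitary can be realized with $\bigo(4^k) = \bigo(2^{2k})$ two-qubit gates, choosing $k = G/(2C)$ for a suitable universal constant $C$ ensures every element of the net has at most $G$ gates, and of course we also need $k \le n/2$ so the qubits fit (the extra factor of $2$ in the exponent leaves room to pad with ancillas/identity). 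The cardinality of such a packing net is $2^{\Omega(4^k)}$ — doubly exponential in $k$ — which is far more than we need; what we actually use is a subfamily of size roughly $2^{\Theta(2^k)}$ obtained, e.g., by taking diagonal unitaries $\mathrm{diag}(e^{i\theta_x})_{x \in \{0,1\}^k}$ with phases chosen from a fine enough net, so that two members differ in diamond distance by $\Theta(\max_x |\theta_x - \theta'_x|)$. Then a standard information-theoretic / adversary argument — each query to the unknown channel reveals at most $\bigo(\log(1/\epsilon) + \text{poly})$ bits about which member of the packing we hold, but identifying the member to within the separation requires $\Omega(k) = \Omega(\log(2^k/\epsilon))$ bits after accounting for the $\epsilon$-resolution, and crucially each query to a near-identity-except-on-$k$-qubits channel can be "absorbed" unless the input has overlap with the distinguishing subspace — yields that $T = \Omega(2^k/\epsilon)$ queries are necessary. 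Here the $1/\epsilon$ scaling comes from the fact that the phases differ by $\Theta(\epsilon)$ when we want diamond distance $\Theta(\epsilon)$, so distinguishing them needs $\Omega(1/\epsilon)$ uses even at fixed $k$, and the $2^k$ factor comes from the number of orthogonal "directions" in the phase space that must each be pinned down.

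For the upper bound, the plan is simpler: run ancilla-assisted process tomography. Apply the unknown $U$ to one half of $\lceil 2^n \rceil$ maximally entangled pairs (the Choi state), perform state tomography on the resulting $2n$-qubit Choi state to trace distance $\bigo(\epsilon/2^n)$ — which by the known $\tilde{\bigo}(d^2/\delta^2)$-type bounds, or better, by a covering-net plus hypothesis-selection argument restricted to $G$-gate unitaries as in the proof of \Cref{thm:state-learning}, costs $\tilde{\bigo}(G \cdot 2^n/\epsilon)$ copies of the Choi state — and then convert the Choi-state error back to diamond distance, picking up the dimension factor $2^n$ in the conversion. Combining, the total query count is $\tilde{\bigo}(2^n G/\epsilon)$. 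One should double-check that the covering net over $G$-gate unitaries (cardinality $\exp(\tilde{\bigo}(G))$, giving the $\log$-factors hidden in $\tilde{\bigo}$) indeed controls the diamond distance directly, not just the trace distance of Choi states; since diamond distance between $G$-gate unitaries and trace distance of their Choi states differ by at most a $2^n$ factor in one direction, and the net can be taken to be $\epsilon/2^n$-fine in diamond distance with the same cardinality bound up to constants in the exponent, this goes through.

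The main obstacle I anticipate is the lower bound, specifically making the adversary argument tight in \emph{both} $2^k$ and $1/\epsilon$ simultaneously. The subtlety is that these two factors come from genuinely different sources — the $2^k$ from a union/packing bound over exponentially many near-orthogonal perturbation directions, the $1/\epsilon$ from the quantum-metrology-style precision limit for estimating a single phase — and one must design the packing family so that the hard instance forces the learner to pay for both at once rather than amortizing. I expect the cleanest route is to reduce from a "quantum Gap-Hamming"-type or multi-parameter phase-estimation problem: encode $2^k$ independent $\pm\epsilon$ phases and show that distinguishing the all-$+$ instance from one with a randomly planted sign flip requires resolving a single phase to precision $\epsilon$ (costing $\Omega(1/\epsilon)$ queries by the standard unitary-distinguishing lower bound, e.g.\ via the hybrid/telescoping argument bounding $\|(U^{\otimes T} - V^{\otimes T})\ket{\Psi}\|$), while a counting argument over the $2^k$ possible locations of the flip multiplies this by $2^k$. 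Care is needed that the two qubits per gate and the $\min\{\cdot,\cdot\}$ with $n/2$ are handled correctly when padding the $k$-qubit hard instance up to $n$ qubits, but that is routine bookkeeping once the core reduction is in place.
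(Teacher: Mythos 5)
Your lower-bound construction has two genuine gaps, and they interact. First, the gate counting does not close: if you place your packing on $k=G/(2C)$ qubits and populate it with arbitrary $k$-qubit unitaries (cost $\Theta(4^k)=\Theta(2^{G/C})$ gates) or with general diagonal unitaries $\mathrm{diag}(e^{i\theta_x})_{x\in\{0,1\}^k}$ (cost $\Omega(2^k)=\Omega(2^{G/(2C)})$ gates, since there are $2^k$ independent phases), your hard instances require exponentially more than $G$ gates and so do not lie in the hypothesis class. The paper's fix is to use only the $2^k$ Hamming-weight-one phase oracles $U_x\ket{j}=e^{i\epsilon' x_j}\ket{j}$: each is a single $(k-1)$-controlled phase gate, implementable with $\bigo(k)$ two-qubit gates, so one can afford $k=\min\{\lfloor G/C\rfloor,n\}$ -- a much larger block than yours. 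Second, your claim that ``a counting argument over the $2^k$ possible locations of the flip multiplies this by $2^k$'' is not a valid quantum query lower bound: detecting which of $2^k$ positions carries the perturbed phase is unstructured search, for which the quantum cost is $\Theta(\sqrt{2^k})$, not $2^k$. This is precisely why the theorem's exponent is $\min\{G/(2C),n/2\}=k/2$ -- it is the \emph{square root} of the paper's $2^k$ with $k=\min\{G/C,n\}$, obtained via the phase adversary method (distinguishing the all-zero string from the Hamming-weight-one strings costs $\Omega(\sqrt{2^k}/\epsilon)$ queries). You misread the exponent as the number of qubits in the hard instance; correcting that, your target bound $2^k/\epsilon$ on $k$ qubits is both unproven by your argument and stronger than what this family can yield.

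The upper bound is essentially right in spirit and matches the paper's route: the paper runs its Heisenberg-scaling average-case learner at accuracy $\epsilon'=\Theta(\epsilon/\sqrt{2^n})$ and converts via $\dworst\leq 2d_2'\leq 2\sqrt{2^n}\,d_F'\leq 4\sqrt{2^n}\,\davg$, giving $\tilde{\bigo}(\sqrt{2^n}G/\epsilon')=\tilde{\bigo}(2^nG/\epsilon)$ queries. One caveat on your accounting: reaching Choi-state trace distance $\bigo(\epsilon/2^n)$ with plain hypothesis selection costs $\tilde{\bigo}(G\cdot(2^n/\epsilon)^2)$ copies because that routine scales as $1/\delta^2$ in the target accuracy $\delta$; to land at $\tilde{\bigo}(2^nG/\epsilon)$ you must invoke the $1/\delta$-scaling (bootstrapped) learner, as the paper does.
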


The complete proof is given in \Cref{app:worst-case-unitary}\ins{,} and \ins{the proof of the lower bound} relies on the adversary method~\cite{van2019quantum, ambainis2000quantum, belovs2015variations, hoyer2007negative}.
We construct a set of unitaries that a worst-case learning algorithm can successfully distinguish, but that only make minor differences when acting on states so that a minimal number of queries have to be made in order to distinguish them.
The upper bound is achieved by the average-case learning algorithm in \Cref{thm:avg-case-unitary} below when applied in the regime of exponentially small error.

Having established this no-go theorem for worst-case learning, we turn to a more realistic average-case learning alternative. Here, the accuracy is measured using the root mean squared trace distance between output states over Haar-random inputs, $\davg(U, V) = \sqrt{\mathbb{E}_{\ket{\psi}}[\dtr(U\ket{\psi}, V\ket{\psi})^2]}$. 
This metric characterizes the average error when testing the learned unitary on randomly chosen input states.

We find that, similarly to the state learning task, linear-in-$G$ many queries are both necessary and sufficient to learn a unitary in the average case.

\begin{theorem}[Average-case unitary learning]
    There exists an algorithm that learns an $n$-qubit unitary composed of $G$ two-qubit gates to accuracy $\epsilon$ in root mean squared trace distance $\davg$ with high probability using $\tilde{\mathcal{O}}\left(G\min\{\nicefrac{1}{\epsilon^2}, \nicefrac{\sqrt{2^n}}{\epsilon}\}\right)$ queries to the unknown unitary. 
    Meanwhile, $\Omega\left(\nicefrac{G}{\epsilon}\right)$ queries to the unitary or its inverse or the controlled versions are necessary for any such algorithm.
    \label{thm:avg-case-unitary}
\end{theorem}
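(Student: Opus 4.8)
The plan is to prove the two bounds separately, in both cases reducing to the corresponding statements about pure states (\Cref{thm:state-learning}) via Choi states, with one query to $U$ playing the role of one copy of the Choi state.

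\emph{Upper bound.} I would first construct a covering net $\mathcal{N}$ over the set of $n$-qubit unitaries built from $G$ two-qubit gates: each gate is specified by a choice of qubit pair (one of $\binom{n}{2}$) together with an element of the $16$-real-dimensional compact group $U(4)$, so an operator-norm $\eta$-net for a single gate has size $\binom{n}{2}(O(1/\eta))^{16}$; composing $G$ gates each within $\eta=\epsilon/(3G)$ of a net point and using unitary invariance of the operator norm yields an operator-norm $(\epsilon/3)$-net $\mathcal{N}$ over all $G$-gate unitaries with $\log\lvert\mathcal{N}\rvert=O(G\log(nG/\epsilon))=\tilde{O}(G)$; cf.\ \cite{vershynin2018high}. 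The bridge to states is the identity $\dtr(\ketbra{\Phi_U},\ketbra{\Phi_V})^{2}=\tfrac{2^{n}+1}{2^{n}}\,\davg(U,V)^{2}$ for the normalized Choi vectors $\ket{\Phi_U}=(U\otimes I)\ket{\Phi}$, $\ket{\Phi}=2^{-n/2}\sum_{i}\ket{i}\ket{i}$, which follows from the standard relations between process fidelity $\lvert 2^{-n}\operatorname{tr}(U^{\dagger}V)\rvert^{2}$ and average gate fidelity. Hence, up to constants, learning $U$ to $\davg$-error $\epsilon$ is the same as learning the state $\ket{\Phi_U}$ to trace-distance $\epsilon$, and $\mathcal{N}$ (being an operator-norm $(\epsilon/3)$-net) induces an $(\epsilon/3)$-net of Choi states in trace distance. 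Since each query to $U$, applied to half of a freshly prepared $\ket{\Phi}$, produces one copy of $\ket{\Phi_U}$, I would then run quantum hypothesis selection \cite{bădescu2023improved} against the family $\{\ketbra{\Phi_V}:V\in\mathcal{N}\}$; it uses $\tilde{O}(\log\lvert\mathcal{N}\rvert/\epsilon^{2})=\tilde{O}(G/\epsilon^{2})$ copies of $\ket{\Phi_U}$ — hence that many queries — and returns a $V^{\star}\in\mathcal{N}$ with $\dtr(\ketbra{\Phi_{V^{\star}}},\ketbra{\Phi_U})=O(\epsilon)$, i.e.\ $\davg(V^{\star},U)=O(\epsilon)$, which after rescaling $\epsilon$ by a constant gives the first branch. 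As in \Cref{thm:state-learning}, only a bound on $G$ is used, not access to the true circuit; crucially, hypothesis selection consumes only copies of the \emph{unknown} state, so the $\Theta(n)$-gate cost of preparing $\ket{\Phi}$ never enters the query count. The second branch $\tilde{O}(G\sqrt{2^{n}}/\epsilon)$, which beats the first only when $\epsilon<2^{-n/2}$, I would obtain from a complementary procedure that replaces sampling-based fidelity estimates by amplitude-estimation-based ones, deferring the details to the appendix; combined with the elementary relation $\dworst(U,V)=O(2^{n/2}\davg(U,V))$, this same branch (run with target average error $\epsilon/\sqrt{2^{n+1}}$) is exactly what yields the worst-case upper bound claimed in \Cref{thm:worst-case-unitary}.

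\emph{Lower bound.} I would argue information-theoretically by reducing to an identification task. Using a packing-net variant of the construction above, I would exhibit a family $\mathcal{F}$ of $G$-gate unitaries with $\log\lvert\mathcal{F}\rvert=\Omega(G)$ whose members are pairwise $\Theta(\epsilon)$-separated in $\davg$ — so that any $(\epsilon/2)$-accurate learner must in fact identify the unknown element — and yet are mutually \emph{hard to distinguish by queries}: each $U_i^{\dagger}U_j$ should have all its eigenvalues within $O(\epsilon)$ of $1$ (equivalently $\dworst(U_i,U_j)=O(\epsilon)$) while keeping a constant fraction of the eigenangles of magnitude $\Omega(\epsilon)$, so that $\davg$ stays $\Omega(\epsilon)$. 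Then I would apply the adversary method \cite{ambainis2000quantum,hoyer2007negative,belovs2015variations}, as in the proof of \Cref{thm:worst-case-unitary} but with an adversary matrix supported on $\mathcal{F}$, to conclude that identifying — hence learning — requires $\Omega(G/\epsilon)$ queries, a bound that the adversary method automatically extends to algorithms also invoking $U^{\dagger}$ and controlled-$U$.

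\emph{Main obstacle.} The delicate part is the lower-bound construction, not the upper bound. A naive \emph{product} choice such as $\bigotimes_{k\le G}R_Z(\theta_k)$ is $\Theta(\epsilon)$-separated in $\davg$ whenever the parameter vectors $\vec{\theta}$ are $\Theta(\epsilon)$-separated in $\ell_2$, and uses only $G$ gates, but a single query to it supplies one query to each tensor factor, so simultaneous Heisenberg-limited phase estimation identifies the tuple with only $\tilde{O}(\sqrt{G}/\epsilon)$ queries — too few for the claimed bound. Obtaining a genuinely linear-in-$G$ lower bound therefore requires $\mathcal{F}$ to \emph{couple} its $\Theta(G)$ degrees of freedom so that they cannot be estimated in parallel, while simultaneously retaining true gate complexity $O(G)$, the spectral profile above, and a structure amenable to a clean adversary computation; designing such a family (e.g.\ via a scrambled or sequentially-chained circuit) and carrying out the adversary bound is the crux of the argument.
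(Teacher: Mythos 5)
Your upper-bound argument for the $\tilde{\mathcal{O}}(G/\epsilon^2)$ branch matches the paper's: a covering net of $G$-gate unitaries of metric entropy $\tilde{\mathcal{O}}(G)$, the identity $\dtr(\kket{U},\kket{V})^2=\tfrac{d+1}{d}\davg(U,V)^2$ for Choi states, and classical-shadow-based hypothesis selection consuming one query per copy of $\kket{U}$ (the paper additionally runs a junta-learning step to strip the $G\log n$ term from the net's entropy when $G\ll n$, but under the stated $\tilde{\mathcal{O}}$ convention that is cosmetic). The other two pieces, however, have genuine gaps. For the $\tilde{\mathcal{O}}(\sqrt{2^n}G/\epsilon)$ branch you defer everything to an unspecified ``amplitude-estimation-based'' procedure; that is not how the dimensional factor arises and it is not clear it can. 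The paper instead bootstraps à la Haah--Kothari--O'Donnell: it iteratively learns $(U\hat U^\dagger)^{p_j}$ with $p_j=2^j$ over a $p$-independent covering net, and the key technical input is that $d_F'(U^{1/p},V^{1/p})\le \tfrac{2}{p}d_F'(U,V)$ only holds when both unitaries are within $O(1/\sqrt{d})$ of the identity in $d_F'$ — an average-case metric controls only the \emph{mean} of the eigenphases, not each one — which forces the base learner to run at accuracy $\Theta(1/\sqrt{d})$ and is precisely where the $\sqrt{2^n}$ enters. Without that (or a substitute) the second branch is unproven.

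The lower bound is the more serious gap, and you have diagnosed it yourself: your plan requires a family of $e^{\Omega(G)}$ $G$-gate unitaries that is $\Theta(\epsilon)$-separated in $\davg$, has all relative eigenphases $O(\epsilon)$, and admits a clean adversary computation, and you do not construct it — you only explain why the natural product construction fails against parallel phase estimation. The paper circumvents exactly this obstacle by \emph{not} building an $\epsilon$-separated hard family and \emph{not} using the adversary method here (the adversary method appears only in the worst-case bound, \Cref{thm:worst-case-unitary}). Instead it takes a packing $\mathcal{P}$ of \emph{Hermitian} unitaries on $k=\lfloor\log_4(G/C)\rfloor$ qubits with $\log|\mathcal{P}|=\Omega(d^2)=\Omega(G)$ and pairwise $d_F'$-separation a \emph{constant} $1/8$ (built from the Haah et al.\ rank-$r$ density-matrix packing), upper-bounds the success probability of any $Q$-query identification of an element of $\mathcal{P}$ by $\binom{Q+d^2-1}{Q}/|\mathcal{P}|$ via the SDP bound of Bavaresco et al., and then gains the $1/\epsilon$ by the fractional-query reduction: a learner achieving $\davg$-error $\epsilon$ on $R^{\alpha}$ with $1/\alpha=\lfloor 1/(32\epsilon)\rfloor$ identifies $R$ (since $d_F'(U^{1/\alpha},R)\le 2\epsilon/\alpha\le 1/16$), and $Q$ queries to $R^{\alpha}$ can be simulated by $50+100\alpha Q$ queries to controlled-$R$, forcing $\alpha Q\ge\Omega(d^2)$ and hence $Q\ge\Omega(G/\epsilon)$. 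The $\epsilon$-dependence thus comes from the simulation overhead rather than from the geometry of the hard instances, which is why the parallel-phase-estimation attack you worry about never has to be ruled out. As written, your lower bound is a plan with its central construction missing rather than a proof.
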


We show the upper bound in \Cref{app:avg-case-unitary-up} by combining a covering net with quantum hypothesis selection similarly to the upper bound in \Cref{thm:state-learning}. 
Our algorithm achieving the query complexity $\tilde{\bigo}(\nicefrac{G}{\epsilon^2})$ uses maximally entangled states and the Choi–Jamio{\l}kowski duality~\cite{choi1975completely, jamiolkowski1972linear, jiang2013channel}.
With a bootstrap method similar to quantum phase estimation \cite{haah2023query}, we improve the $\epsilon$-dependence to the Heisenberg scaling $\tilde{\bigo}(\nicefrac{1}{\epsilon})$, albeit at the cost of a dimensional factor.
\ins{It is an open question as to whether one can improve the $\epsilon$-dependence without incurring this dimensional factor.}
Without auxiliary systems, we prove a query complexity bound of $\tilde{\bigo}(G\min\{\nicefrac{1}{\epsilon^4}, \nicefrac{\left(\sqrt{2^n}\right)^3}{\epsilon}\})$.
The lower bound is proven in \Cref{app:avg-case-unitary-lo} by mapping to a fractional query problem \cite{haah2023query, berry2015hamiltonian, cleve2009efficient} and making use of a recent upper bound on the success probability in unitary distinguishing tasks \cite{bavaresco2022unitary}. 
In the case of learning generic unitaries, our result yields a $\Omega(4^n/\epsilon)$ lower bound, improving upon the $\Omega(4^n/n^2)$ bound from the recent work \cite{yang2023complexity},  which studies the hardness of learning Haar-(pseudo)random unitaries.

As Haar-random states are hard to generate in practice, we also discuss other input state ensembles of physical interest.
Relying on the equivalence of root mean squared trace distances over different locally scrambled ensembles \cite{kuo2020markovian, hu2023classical}, recently established in \cite{caro2022out}, our algorithm achieves the same average-case guarantee over any such ensemble. 
Notable examples of locally scrambled ensembles include products of Haar-random single-qubit states or of random single-qubit stabilizer states, $2$-designs on $n$-qubit states, and output states of random local quantum circuits with any fixed architecture.

The similar linear-in-$G$ sample/query complexity scaling in \Cref{thm:state-learning,thm:avg-case-unitary} hints at a common underlying source of complexity. 
However, in contrast to state learning, unitary learning comes with two natural such sources: (1) to readout input and output states, and (2) to learn the mapping from inputs to outputs. 
The similarity between learning states and unitaries in terms of complexities suggests that the former may encapsulate the central difficulty in unitary learning whereas the latter may be easy.
This seemingly contradicts recent quantum no-free-lunch theorems~\cite{poland2020no, sharma2022reformulation, wang2023transition}, which state $\Omega(2^n)$ samples are required to learn a generic unitary even from classical descriptions of input-output state pairs, highlighting the difficulty of (2).

To resolve this apparent contradiction, we reformulate the quantum no-free-lunch theorem (\Cref{thm:qnfl}) from a unifying information-theoretic perspective in \Cref{app:classical-description}. 
We highlight that enlarging the space for the classically described data allows to systematically reduce the sample complexity until a single sample suffices to learn a general unitary.
Therefore, the difficulty of learning the mapping, as indicated by quantum no-free-lunch theorems, vanishes when we allow auxiliary systems and query access to the unitary. 
Inspired by this observation, we give two ways of enlarging the representation space with auxiliary systems. The first is fundamentally quantum, making use of entangled input states~\cite{sharma2022reformulation}. The other is purely classical, relying on mixed state inputs~\cite{yu2023optimal}.

\begin{theorem}[Learning with classical descriptions]
    There exists an algorithm that learns a generic $n$-qubit unitary with any non-trivial accuracy and with high success probability using $\bigo(2^n/r)$ classically described input-output pairs with mixed (entangled) input states of (Schmidt) rank $r$.
    Moreover, any such algorithm that is robust to noise needs at least $\Omega(2^n/r)$ samples.
    \label{thm:classical-description}
\end{theorem}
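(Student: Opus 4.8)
\emph{Proof proposal.} The plan is to prove the two bounds separately, tied together by the observation underlying the reformulated quantum no-free-lunch theorem (\Cref{thm:qnfl}): a rank-$r$ mixed input state is the reduced state of a Schmidt-rank-$r$ pure state on the $n$-qubit system tensored with an $r$-dimensional reference, and a classical description of the output of $U\otimes I$ on such a state encodes exactly the action of $U$ on an $r$-dimensional subspace of $\mathbb{C}^{2^n}$. The sample count is then governed by how many $r$-dimensional subspaces are needed to span $\mathbb{C}^{2^n}$, namely $\bigo(2^n/r)$.

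\emph{Upper bound.} Write $d := 2^n$. Following the entangled-input route~\cite{sharma2022reformulation}, take a Schmidt-rank-$r$ input $\ket{\Psi} = \sum_{j=1}^{r}\sqrt{p_j}\,\ket{\phi_j}\otimes\ket{j}$ with all $p_j>0$, and let $\sigma := (U\otimes I)\ketbra{\Psi}{\Psi}(U\otimes I)^\dagger$ be its (exactly known) output description. The blocks $(I\otimes\bra{j})\,\sigma\,(I\otimes\ket{k}) = \sqrt{p_jp_k}\,U\ketbra{\phi_j}{\phi_k}U^\dagger$, ranging over $j,k\in[r]$, together determine the isometry $\ket{\phi_j}\mapsto U\ket{\phi_j}$ on $\operatorname{span}\{\ket{\phi_j}\}$ up to a single global phase. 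Choosing $\bigo(d/r)$ such inputs whose system-side Schmidt vectors jointly span $\mathbb{C}^d$, and letting consecutive inputs share one common direction so their per-sample phases can be aligned, we recover $U$ on a full basis, i.e.\ exactly up to an irrelevant global phase; since the classical descriptions are exact this reconstruction is exact, so any target accuracy is met with $\bigo(d/r)$ samples. The purely classical route~\cite{yu2023optimal} keeps the inputs mixed: from the eigendecompositions of a generic rank-$r$ $\rho$ and of $U\rho U^\dagger$ one reads off the images of the eigenvectors up to individual phases, and a single additional rank-$1$ reference input supported in the same subspace fixes all of these relative phases through its coherences; again $\bigo(d/r)$ samples suffice.

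\emph{Lower bound.} An algorithm using $m$ classically described samples with rank-$\le r$ inputs only ever sees the restriction of $U$ to $S := \operatorname{span}\bigcup_i\operatorname{supp}(\rho_i)$, which has $\dim S \le mr$; this remains true if the inputs are chosen adaptively, since the transcript (hence the queried inputs) is a function of the data $U\rho_iU^\dagger$ alone. If $mr \le (1-\delta)d$ for a constant $\delta>0$, set the adversary $U' := U(\Pi_S - \Pi_{S^\perp})$. Because every $\rho_i$ is supported in $S$ we have $U'\rho_iU'^\dagger = U\rho_iU^\dagger$, so the algorithm returns the same estimate against $U$ and against $U'$; but writing $W := \Pi_S - \Pi_{S^\perp}$ and $t := \norm{\Pi_{S^\perp}\ket{\psi}}^2$,
\[
\davg(U,U')^2 = \E_{\ket{\psi}}\!\big[1 - |\bra{\psi}W\ket{\psi}|^2\big] = \E_{\ket{\psi}}\!\big[4t(1-t)\big],
\]
and since $\E_{\ket{\psi}} t = \dim(S^\perp)/d \ge \delta$ with $\Var(t) = \bigo(1/d)$ by concentration of Haar measure, this equals $\Theta(\delta)$, a constant. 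By the triangle inequality for $\davg$, the common estimate is at distance $\ge \tfrac12\davg(U,U') = \Omega(\sqrt{\delta})$ from at least one of $U,U'$, so for any fixed non-trivial target accuracy the algorithm fails; choosing $\delta$ a small constant then forces $m = \Omega(d/r)$. Phrasing this for a Haar-random target $U$ makes ``fails with high probability'' meaningful, and the noise-robustness hypothesis rules out pathological exact-arithmetic algorithms exploiting measure-zero coincidences; the same bound holds in $\dworst$ since $\dworst \ge \davg$.

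\emph{Main obstacle.} The delicate part is the phase bookkeeping in the upper bound: each sample pins $U$ down on its own subspace only up to a sample-dependent global phase, and in the mixed-state route even the within-subspace phases must be extracted from coherences rather than from the eigenbasis; organizing the $\bigo(d/r)$ subspaces with enough overlap to stitch all these phases together consistently---without inflating the sample count---is what requires care. On the lower-bound side the only non-routine ingredient is the Haar-concentration estimate that converts the dimension deficit $\delta$ into a constant-order distance, together with a clean treatment of adaptivity and of the ``robust to noise'' clause.
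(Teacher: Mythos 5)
Your upper bound follows essentially the same route as the paper: partition the computational basis into $\lceil 2^n/r\rceil$ blocks of size $r$, use the ancilla register to label which column of $U$ each amplitude belongs to, and read the matrix elements off the exact classical description. Your extra bookkeeping for the per-sample and per-column phase ambiguities (overlapping subspaces for the entangled route, a rank-one coherent reference for the mixed route) is sound and in fact more careful than the paper, which implicitly assumes the description hands you the amplitude vector outright. Your lower bound, however, takes a genuinely different route from the paper's --- the paper proves an information-theoretic no-free-lunch bound (\Cref{thm:qnfl}), comparing the packing number of $U(2^n)$ under $\davg$, which is $\Omega(4^n\log(1/\epsilon))$, against the covering number $\bigo(2^n r\log(1/\epsilon))$ of the set of possible rank-$r$ output states, which is where the noise-robustness hypothesis is actually used --- and your direct indistinguishability argument, as written, has a quantitative error that breaks it.

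Concretely, your adversary $U' = U(\Pi_S-\Pi_{S^\perp})$ is \emph{not} far from $U$ in $\davg$ in the regime you need. Writing $W=\Pi_S-\Pi_{S^\perp}$, one has $|\tr(U^\dagger U')| = |\dim S - \dim S^\perp| = |2\dim S - d|$, so by \Cref{eq:davg} (equivalently, by your own formula $\E_\psi[4t(1-t)]$ with $t$ concentrating at $\mu=\dim S^\perp/d$), $\davg(U,U')^2 = \Theta\bigl(\mu(1-\mu)\bigr) = \Theta\bigl(\dim S\cdot\dim S^\perp/d^2\bigr) \leq \Theta(mr/d)$. Your claim that this is $\Theta(\delta)$ silently assumes $1-\mu=\dim S/d=\Omega(1)$, i.e.\ that $\dim S=\Omega(d)$ --- which is exactly the conclusion you are trying to force, not a hypothesis you may use. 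When $m$ is small, $W\approx -I$ and $U'$ equals $U$ up to global phase ($\davg(U,U')=\Theta(\sqrt{mr/d})$, e.g.\ $\Theta(1/\sqrt{d})$ for $m=r=1$), so there is no contradiction with the algorithm being accurate, and no lower bound follows. The argument is salvageable: replace the reflection by $U'=U(\Pi_S+V_{S^\perp})$ with $V_{S^\perp}$ a \emph{traceless} (or Haar-random) unitary on $S^\perp$; then $|\tr(U^\dagger U')|=\dim S\leq mr$, the data is still exactly preserved, and $\davg(U,U')^2\geq 1-\tfrac{1}{d+1}-(mr/d)^2$ is a constant whenever $mr\leq d/2$, which yields $m=\Omega(2^n/r)$ for any accuracy below a fixed constant. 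With that repair your argument is correct, applies verbatim to adaptive learners, and does not even need the noise-robustness clause (your handling of which was the one genuinely hand-wavy part); but as submitted, the key distance estimate is false.
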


Similarly to the case for state learning, our average-case unitary learning algorithm is not computationally efficient.
We show that this cannot be avoided. Under commonly-believed cryptographic assumptions \cite{lyubashevsky2010ideal, regev2009lattices, arunachalam2021quantum, diakonikolas2022cryptographic, aggarwal2023lattice, ananth2023revocable}, any quantum algorithm that can learn unknown unitaries with circuit size $\tilde{\bigo}(G)$ to a small error in average-case distance from queries must have a computational time exponential in $G$.
This implies the same computational hardness for worst-case unitary learning, and a $\log n$ transition point of computational efficiency.
Note that the hard instances we construct are implementable with a similar number of Clifford and T gates \cite{lai2022learning}. Therefore, together with \Cref{thm:state-comp-complex}, this implies that there is no polynomial time quantum algorithms for learning Clifford+T circuits with $\tilde{\omega}(\log n)$ T gates, answering an open question (the fifth question) in the survey \cite{anshu2023survey} negatively.

\begin{theorem}[Unitary learning computational complexity]
    \label{thm:unitary-comp-complex}
    Suppose we are given $N$ queries to an arbitrary unknown $n$-qubit unitary $U$ consisting of $\tilde{\bigo}(G)$ two-qubit gates.
    Assume that \textsf{RingLWE} cannot be solved by a quantum computer in sub-exponential time.
    Then, any quantum algorithm that learns the unitary to within $\epsilon$ average-case distance $\davg$ must use $\exp(\Omega(\min\{G, n\}))$ time.
    Meanwhile, for $G=\mathcal{O}(\log n)$, the learning task can be solved in polynomial time.
\end{theorem}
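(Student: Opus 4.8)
\emph{Proof strategy.} The plan follows the template of \Cref{thm:state-comp-complex}: a computational lower bound via a cryptographic reduction, and a matching $\mathrm{poly}(n)$-time upper bound for $G=\mathcal{O}(\log n)$ via junta learning plus tomography.

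For the hardness direction I would exhibit a family of $n$-qubit unitaries $\{U_k\}_k$, each implementable with $\tilde{\mathcal{O}}(G)$ two-qubit gates, such that (i) distinct members are $\Omega(1)$-far in $\davg$, so that $\davg$-learning to small error identifies the member, and (ii) no $2^{o(G)}$-time quantum algorithm with query access can do so, assuming the subexponential-time quantum hardness of \textsf{RingLWE}. The concrete choice is the diagonal phase unitary $U_k \triangleq \sum_x (-1)^{f_k(x)}\ketbra{x}$ of a quantum-secure pseudorandom function $f_k\colon\{0,1\}^n\to\{0,1\}$ built from \textsf{RingLWE}. The step I expect to be the main obstacle is to instantiate this with a \emph{near-linear-size} pseudorandom function, so that $f_k$ — and hence $U_k$, after computing $f_k$ into a phase and uncomputing — is realized with $\tilde{\mathcal{O}}(\lambda)$ gates in the security parameter $\lambda$; identifying $G$ with $\lambda$ up to logarithmic factors is precisely what upgrades earlier $\mathrm{poly}$-circuit-complexity hardness results to the sharp $\exp(\Omega(G))$ scaling and the $\log n$ transition point (a merely $\mathrm{poly}(\lambda)$-size construction would only yield $\exp(\Omega(G^{c}))$ for some $c<1$). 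I would then record two elementary facts. First, a query to $U_k$ simulates a phase query to $f_k$ (and, $U_k$ being real and diagonal, so do inverse and controlled queries). Second, for any unitary $\hat{V}$ with $\davg(\hat{V}, U_g) \le \epsilon$ one has $\mathbb{E}_{|\psi\rangle}|\langle\psi|\hat{V}^\dagger U_g|\psi\rangle|^2 \ge 1-\epsilon^2$, hence $|\mathrm{Tr}(\hat{V}^\dagger U_g)| \ge 2^n(1-\mathcal{O}(\epsilon^2))$ and $\|\hat{V} - U_g\|_F^2 = \mathcal{O}(\epsilon^2 2^n)$ up to a global phase; consequently rounding the diagonal entries of $\hat{V}$ to $\pm 1$ recovers a classical predictor that agrees with $g$ on a $(1-\mathcal{O}(\epsilon^2))$-fraction of the $2^n$ inputs.

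Given these, the reduction is: with oracle access to an unknown $g$ (either some $f_k$, or a uniformly random function), simulate the $U_g$-queries made by the hypothetical $2^{o(G)}$-time learner, obtain its output description $\hat{V}$, extract (by rounding) the predictor $\hat{f}$, and test whether $\hat{f}$ agrees with $g$ on a fresh random point. If $g = f_k$, the learner succeeds and agreement holds with probability $1-\mathcal{O}(\epsilon^2)$; if $g$ is random, then because $\hat{V}$ is produced from only $2^{o(\min\{G,n\})} \ll 2^n$ queries, an information-theoretic argument forces $\hat{f}$ to be essentially uncorrelated with $g$ off the queried set, so agreement is only $\approx 1/2$. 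This is a $2^{o(\lambda)}$-time distinguisher against the \textsf{RingLWE}-based pseudorandom function, contradicting the assumption. Taking $\lambda = \Theta(\min\{G,n\})$ — the construction lives on $n$ qubits — yields the claimed $\exp(\Omega(\min\{G,n\}))$ bound; compiling the \textsf{RingLWE} arithmetic shows the hard instances use a comparable number of Clifford and $T$ gates (cf.\ \cite{lai2022learning}), and since $\davg \le \dworst$ the same hardness transfers to worst-case learning.

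For the upper bound when $G = \mathcal{O}(\log n)$: a product of $G$ two-qubit gates acts nontrivially only on a subset $S$ of at most $2G = \mathcal{O}(\log n)$ qubits, so $U$ is a junta unitary. I would first locate $S$ using $\mathrm{poly}(n)$ queries via a lightcone/influence test that, for each qubit, decides with a few queries whether $U$ acts trivially on it — in the spirit of the junta learner of \cite{chen2023testing} — and then run standard process tomography on the $2^{|S|} = \mathrm{poly}(n)$-dimensional active subsystem to error $\epsilon$ at cost $\mathrm{poly}(n/\epsilon)$. Composing the two stages gives a $\mathrm{poly}(n/\epsilon)$-time algorithm achieving small $\davg$ (in fact $\dworst$) error, completing the proof.
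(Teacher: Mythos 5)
Your overall architecture matches the paper's: a near-linear-size \textsf{RingLWE}-based PRF so that the security parameter can be identified with $\Theta(G)$ (this is indeed the crucial upgrade over $\mathrm{poly}(\lambda)$-size constructions), a distinguisher built from the hypothetical learner, the tensor-with-identity trick to get $\exp(\Omega(\min\{G,n\}))$ from sub-exponential hardness, and junta identification plus tomography on the $\mathcal{O}(\log n)$-qubit active subsystem for the upper bound. Where you genuinely diverge is the distinguisher's test and, more importantly, the analysis of the random-function case. The paper queries $U$ and $\hat{U}$ on a random product of single-qubit stabilizer states and runs a SWAP test on the two output states; in the random-function case it then argues by pure counting: the learner's possible outputs are circuits of bounded size, hence admit a covering net of metric entropy far below $2^n$, while the number of Boolean functions whose oracle unitary lies within constant $\davg$-distance of any fixed unitary is at most $2^{c\cdot 2^n}$ with $c<1$ out of $2^{2^n}$, so a union bound makes the acceptance probability negligibly above the baseline \emph{regardless of how many queries the learner made}. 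Your route instead rounds the diagonal of $\hat{V}$ to a classical predictor and tests agreement at a fresh point, which buys a conceptually transparent reduction to PRF unpredictability but costs you three extra obligations: fixing the unknown global phase before rounding (resolvable, e.g.\ by testing both sign conventions), estimating $\bra{x}\hat{V}\ket{x}$ to constant precision (a Hadamard test with controlled-$\hat{V}$, fine since $\hat{V}$ is given as a circuit), and — the real soft spot — the ``information-theoretic argument'' that $2^{o(n)}$ \emph{quantum} queries to a uniformly random $g$ cannot yield a predictor correlated with $g$ at a fresh random point. That statement is true, but unlike its trivial classical analogue it requires a genuine quantum-random-oracle argument (compressed-oracle or one-way-to-hiding style), and you leave it entirely as a black box; the paper's counting argument sidesteps this dependence on the query count altogether and is the more robust choice here. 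With that lemma supplied, your proof goes through.
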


{\renewcommand\addcontentsline[3]{} 
\subsection{Learning \ins{with} physical functions}}

Apart from learning quantum states and dynamics themselves, a more classically minded learner may care more about learning classical functions resulting from quantum processes.
We define these \textit{physical functions} in \Cref{app:learning-function} as functions $f(x, \{U_i\}_{i=1}^G, a)$ mapping $x\in[0, 1]^\nu$ to $\mathbb{R}$ resulting from a physical experiment consisting of three steps: (1) a fixed state preparation procedure that can depend on $x$; (2) a unitary evolution consisting of $G$ tunable two-qubit gates $\{U_i\}_{i=1}^G$ and arbitrary fixed unitaries that can depend on $x$, arranged in a circuit architecture $a$; (3) the measurement of a fixed observable, whose expectation is the function output.
By tuning the local gates $\{U_i\}_{i=1}^G$ and potentially changing architecture $a$, we obtain a resulting class of functions that can be implemented in this general experimental setting.
Despite the generality of this setup, we find that certain well-behaved functions are actually not physical in this sense: they cannot be efficiently approximated or learned via physical functions.

\vspace*{1mm}
\begin{theorem}[Approximating and learning with physical functions]
    To approximate and learn arbitrary $1$-bounded and $1$-Lipschitz $\mathbb{R}$-valued functions on $[0, 1]^\nu$ to accuracy $\epsilon$ in $\|\cdot\|_\infty$ with high probability, using physical functions with $G$ gates and variable circuit structures, we must use $G\geq \tilde{\Omega}(\nicefrac{1}{\epsilon^{\nu/2}})$ gates and collect at least $\Omega(\nicefrac{1}{\epsilon^{\nu}})$ samples.
    \ins{If the circuit structure is fixed, we require $G \geq \tilde{\Omega}(1/\epsilon^\nu)$ gates.}
    \label{thm:approx-functions}
\end{theorem}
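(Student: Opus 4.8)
The plan is to compare the \emph{metric entropy} of the target class against that of the class of functions any bounded-complexity physical experiment can implement: the class $\mathcal{F}_\nu$ of $1$-bounded, $1$-Lipschitz functions $f\colon[0,1]^\nu\to\mathbb{R}$ is exponentially rich in the $\norm{\cdot}_\infty$ metric, whereas physical functions are parametrized by comparatively few quantities, so matching the two will force $G$ to be large. First I would invoke the classical Kolmogorov--Tikhomirov estimate: the $\epsilon$-packing number of $\mathcal{F}_\nu$ in $\norm{\cdot}_\infty$ is $2^{\Theta(\epsilon^{-\nu})}$. The lower-bound direction, which is all we need, is elementary: tile $[0,1]^\nu$ with a grid of $m=\Theta(\epsilon^{-\nu})$ cells of side $\Theta(\epsilon)$, place a fixed bump of height and width $\Theta(\epsilon)$ in each cell, and attach an arbitrary sign to the $j$-th bump; the resulting $2^m$ functions all lie in $\mathcal{F}_\nu$ and are pairwise $\Omega(\epsilon)$-separated. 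The same configuration is an $\epsilon$-fat-shattered set of size $m=\Theta(\epsilon^{-\nu})$, which immediately gives the sample-complexity bound $\Omega(\epsilon^{-\nu})$: $\epsilon$-accurate learning in $\norm{\cdot}_\infty$ requires distinguishing the $2^m$ well-separated functions, and a standard adversary/packing argument over the fat-shattered set (each sample probes $f$ at essentially one location, no matter how the data are produced) forces $\Omega(m)$ samples.

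Next I would bound the $\epsilon$-covering number, uniformly over $x\in[0,1]^\nu$, of the class of physical functions built from $G$ tunable two-qubit gates in a \emph{fixed} architecture. Such a function depends on the tunable gates only through a point of $U(4)^{\times G}$, i.e.\ $O(G)$ real parameters. This dependence is Lipschitz uniformly in $x$: replacing $U_i$ by $U_i'$ with $\norm{U_i-U_i'}_\infty\le\delta_i$ changes the circuit's output state by at most $\sum_i\delta_i$ in trace distance at every $x$ -- by unitary invariance of everything else in the circuit (including the $x$-dependent state preparation and the fixed $x$-dependent unitaries) and a telescoping estimate -- hence changes the measured expectation by at most $2\norm{O}_\infty\sum_i\delta_i=O(\sum_i\delta_i)$ for the (bounded) observable $O$. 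Composing with an $O(\epsilon/G)$-net of each copy of $U(4)$, of size $(O(G/\epsilon))^{O(1)}$, yields an $\epsilon$-net of the fixed-architecture physical-function class of cardinality $\exp(O(G\log(G/\epsilon)))$. If physical functions $\epsilon$-approximate every element of $\mathcal{F}_\nu$, this covering number is at least the $2\epsilon$-packing number of $\mathcal{F}_\nu$, so $\exp(O(G\log(G/\epsilon)))\ge 2^{\Omega(\epsilon^{-\nu})}$, giving $G\ge\tilde{\Omega}(\epsilon^{-\nu})$, which is the fixed-structure claim.

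For variable architectures the implementable class is a union over all layouts of the $G$ tunable gates. Here I would first reduce to $n=\tilde{O}(G)$ qubits -- the $G$ gates touch at most $2G$ wires, and the remaining qubits together with their fixed $x$-dependent evolution can be folded into the (free) state preparation and observable -- so that the number of distinct architectures is at most $n^{O(G)}=\exp(\tilde{O}(G^2))$. A union bound over architectures multiplies the covering estimate above by $\exp(\tilde{O}(G^2))$, and repeating the comparison with the $2^{\Omega(\epsilon^{-\nu})}$-size packing of $\mathcal{F}_\nu$ gives $\tilde{O}(G^2)\ge\Omega(\epsilon^{-\nu})$, i.e.\ $G\ge\tilde{\Omega}(\epsilon^{-\nu/2})$. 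The hard part will be exactly this last piece of bookkeeping, on two fronts: (i) making the uniform-in-$x$ Lipschitz bound rigorous, being careful that the arbitrary $x$-dependent fixed unitaries and state preparation neither amplify gate perturbations nor carry any extra information about the target function, since they are fixed given $x$; and (ii) justifying the reduction to $\tilde{O}(G)$ qubits and the resulting count of $\exp(\tilde{O}(G^2))$ architectures, as that exponent is precisely what turns $\epsilon^{-\nu}$ into $\epsilon^{-\nu/2}$. I would expect matching constructions (not needed for the statement) to certify both exponents are tight -- a fixed shallow layout encoding the $\Theta(\epsilon^{-\nu})$-bit discretization of $f$ in its $\Theta(\epsilon^{-\nu})$ gate parameters, versus a variable-layout scheme encoding it instead in the wiring of $\Theta(\epsilon^{-\nu/2})$ gates over $2^{\Theta(\epsilon^{-\nu/2})}$ qubits.
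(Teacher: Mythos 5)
Your overall strategy---pitting the metric entropy of the Lipschitz ball $B^{1,\infty}$ against a covering-number bound for the physical-function class---is sound and genuinely different from the paper's. The paper never covers the function class; it instead observes that each physical function is a polynomial of degree at most $2G$ in the $32G$ real matrix elements of the tunable gates (\Cref{lem:func_poly}), bounds the pseudo-dimension via Warren's theorem on sign patterns of bounded-degree polynomials (\Cref{prop:pdim_gates}), forces this dimension to be at least $\Omega(\epsilon^{-\nu})$ via the same bump-on-a-grid construction you use (\Cref{lem:pdim_approx}), and converts fat-shattering into a sample bound with the agnostic-learning machinery of \Cref{prop:fat}. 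Your bump construction, fat-shattering step, and (informal) sample-complexity argument coincide with the paper's; the divergence is entirely in how the gate count enters. One thing the paper's route buys is indifference to the observable: Warren's theorem does not care about the coefficients of the polynomial, whereas your uniform-in-$x$ Lipschitz estimate needs $\norm{O}=O(1)$ (only $\log\norm{O}$ enters, so this is mild). Your telescoping argument for that Lipschitz bound is otherwise correct.

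The genuine gap is in your variable-architecture bookkeeping. First, the reduction to $n=\tilde{O}(G)$ qubits is not justified: the fixed $x$-dependent unitaries act on all $n$ qubits and are interleaved with the tunable gates, so they can shuttle information between the $2G$ touched wires and the rest; more to the point, \emph{which} wires are touched is exactly the architectural datum being enumerated, and distinct placements yield distinct subclasses, so the count really is $\binom{n}{2}^G$ with the full $n$. Second, even granting $n=\tilde{O}(G)$, one has $n^{O(G)}=\exp(O(G\log G))=\exp(\tilde{O}(G))$, not $\exp(\tilde{O}(G^2))$; your final inequality $\tilde{O}(G^2)\geq\Omega(\epsilon^{-\nu})$ lands on the stated exponent by accident. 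The corrected version of your argument is in fact cleaner and stronger: taking a union over architectures only adds $\log|A|\leq 2G\log n$ to the metric entropy, so $\log\mathcal{N}(\mathcal{F}^\nu_{G,A},\norm{\cdot}_\infty,\epsilon)\leq O\bigl(G\log n+G\log(G/\epsilon)\bigr)$, still linear in $G$. Since fat-shattering is bounded by log-covering, comparison with the $\Omega(\epsilon^{-\nu})$ fat-shattering requirement yields $G\geq\tilde{\Omega}(\epsilon^{-\nu})$ even for variable structures (with $\log n$ hidden in the tilde), which implies---and improves on---the stated $\tilde{\Omega}(\epsilon^{-\nu/2})$. The paper's quadratic-in-$G$ loss arises because Warren's theorem pays $G\cdot\log|A|$ rather than $\log|A|$; you should present the corrected linear count rather than reverse-engineer the $\epsilon^{-\nu/2}$ exponent.
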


We prove this in \Cref{app:learning-function} by noting that to approximate arbitrary $1$-bounded and $1$-Lipschitz functions well, the complexity of experimentally implementable functions cannot be too small, as measured by pseudo-dimension \cite{pollard1984convergence} or fat-shattering dimension \cite{kearns1994efficient}. 
Then the gate complexity lower bound follows because the function class complexity is limited by the circuit complexity \cite{caro2020pseudo}, and we can appeal to results in classical learning theory \cite{anthony1999neural} to obtain our sample complexity lower bound.

It has been established that a classical neural network can learn to approximate any $1$-bounded and $1$-Lipschitz functions to accuracy $\epsilon$ in $\|\cdot\|_\infty$ with $\tilde{\Theta}(1/\epsilon^\nu)$ parameters, exponential in the number of variables $\nu$, known as the curse of dimensionality~\cite{grohs2022mathematical}. 
Our results show that quantum neural networks can do no better.
This result not only is relevant to the practical implementation of quantum machine learning, complementing existing results on the universal approximation of quantum neural networks~\cite{gonon2023universal, perez2021one, schuld2021effect, manzano2023parametrized}, but also has deep implications to the physicality of the function class at consideration.
It means that there are some many-variable $1$-bounded and $1$-Lipschitz functions that cannot be implemented in Nature efficiently.
On the other hand, certain more restricted function classes can be approximated using only $\mathcal{O}(1/\epsilon^2)$ parameters with both classical~\cite{grohs2022mathematical} and quantum neural networks~\cite{gonon2023universal}, independent of the number of variables.
This reveals a fundamental limitation on the functional expressivity of Nature, practical QML models, and quantum signal processing algorithms \cite{martyn2021grand,gilyen2019quantum}.
\ins{We remark that while we prove this no-go result, achieving a quantum advantage may still be possible for other function classes~\cite{yu2023provable}.}

\ins{
{\renewcommand\addcontentsline[3]{}
\section{Numerical Experiments}
\label{sec:numerics}}
}

\begin{figure}
    \centering
    \includegraphics[width=0.9\linewidth]{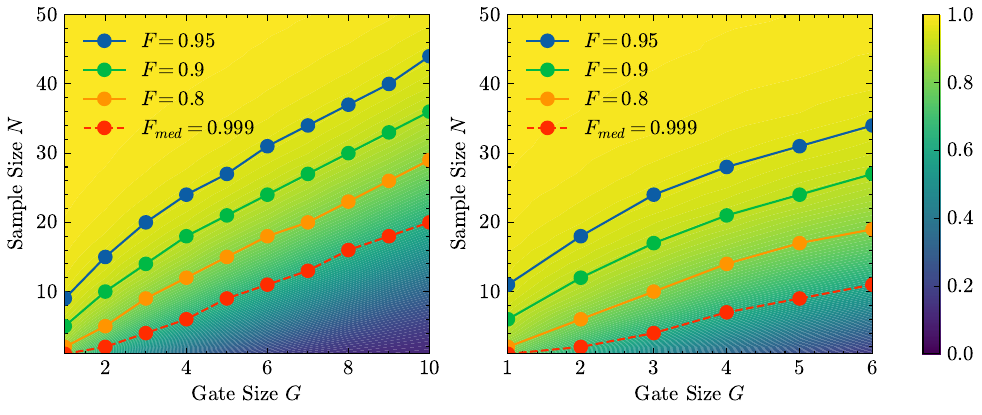}
    \caption{\ins{\textbf{Sample complexity $N$ of the learning algorithm with different gate numbers $G$ and reconstruction fidelity $F$.} 
    The unknown target states are pure states on $n=10000$ qubits generated from $G$ gates, either concentrated on the first $4$ qubits \textbf{(left)} or randomly placed \textbf{(right)}.
    The contour plot represents the fidelity for different $G$ and $N$ averaged over many random instances.
    Sample complexities with average fidelity $F$ and median fidelity $F_\mathrm{med}$ are plotted in solid and dashed lines, respectively.}}
    \label{fig:numerics}
\end{figure}

\ins{
To support our theoretical findings, we conduct numerical experiments using our learning algorithm applied to pure states generated from $G$ two-qubit gates.
The results reflect the linear-in-$G$ scaling of the sample complexity $N$ from \Cref{thm:state-learning}.
We consider a large system size $n=10000$ to illustrate that the sample complexity is independent of $n$.
We study two families of unknown target states with different gate configurations: (a) the $G$ gates are concentrated on $4$ qubits; and (b) the $G$ gates are randomly placed.
We note that Case (a) corresponds to the hard-to-learn states that we construct to prove the $\tilde{\Omega}(G)$ lower bound in \Cref{thm:state-learning}, while in Case (b) the gates are expected to spread out over the large system and form shallow circuits \cite{huang2024learning}.
Due to the exponential-in-$G$ computation time proved in \Cref{thm:state-comp-complex}, we restrict the gate size to $G=10$ in Case (a) and $G=6$ in Case (b).
We perform the simulations by implementing the algorithm from \Cref{app:states-upper} using shallow Clifford classical shadows~\cite{bertoni2024shallow,schuster2024random}.
The details of the numerical implementation are provided in \Cref{app:numerics}, and the code can be found at \url{https://github.com/haimengzhao/bounded-gate-tomography}.
}

\ins{
The performance of our learning algorithm is shown in \Cref{fig:numerics}.
\Cref{fig:numerics} (Left) corresponds to Case (a) described in the previous paragraph, and \Cref{fig:numerics} (Right) corresponds to Case (b).
We provide contour plots of the average fidelity $F$ of the reconstruction with different gate sizes $G$ and sample sizes $N$.
The sample complexity $N$ with different gate sizes $G$ is plotted in solid lines for different average fidelities $F$ and in dashed lines for the  median fidelity $F_\mathrm{med}$.
We see a linear dependence of sample complexity $N$ on gate size $G$, in accordance with our theoretical bound $N=\tilde{\Theta}(G)$ from \Cref{thm:state-learning}.
Moreover, we note that a relatively small sample size $N\sim 50$ suffices to learn states with $G\sim 10$ gates on very large system size $n=10000$, due to the fact that our sample complexity is independent of $n$.
}

\vspace{2em}
{\renewcommand\addcontentsline[3]{}
\section{Discussion}}
Our work provides a new, more fine-grained perspective on the fundamental problems of state and process tomography by analyzing them for the broad and physically relevant class of bounded-complexity states and unitaries.
It complements existing literature on learning restricted classes of states/unitaries or their properties. 
Examples include stabilizer circuits and states~\cite{aaronson2004improved,montanaro2017learning,low2009learning,rocchetto2018stabiliser}, Clifford circuits with few non-Clifford T gates and their output states~\cite{lai2022learning,grewal2023efficient,grewal2023efficient2, hangleiter2023bell, leone2022learning}, matrix product operators~\cite{torlai2023quantum} and states~\cite{cramer2010efficient,landon2010efficient,guo2023scalable}, phase states~\cite{bernstein1993quantum,montanaro2017learning,rotteler2009quantum,arunachalam2022optimal}, permutationally invariant states~\cite{toth2010permutationally,moroder2012permutationally,schwemmer2014experimental}, outputs of shallow quantum circuits \cite{rouze2021learning}, PAC learning quantum states \cite{aaronson2007learnability} and circuits \cite{cai2022sample}, shadow tomography~\cite{aaronson2018shadow}, classical shadow formalism~\cite{huang2020predicting, levy2021classical, elben2022randomized, kunjummen2023shadow}, and property prediction of the outputs of quantum processes~\cite{huang2022quantum, huang2022learning, caro2022learning}.
It also raises many interesting questions for future research.

% Firstly, given our results for learning pure states and unitaries of bounded complexity, it becomes natural to ask the analogous question about mixed states and channels. 
Firstly, to account for decoherence and imperfections in realistic experiments, it is natural to generalize our results to mixed states and channels. 
As our learning algorithms based on hypothesis selection and classical shadows rely on the purity/unitarity of the unknown state/process, it seems that different algorithmic approaches would be needed to go beyond states of constant rank.
Moreover, while our results show that learners using only single-copy measurements and no coherent quantum processing can achieve  optimal sample/query complexity (in $G$) for pure state/unitary learning (in line with the state tomography protocol in \cite{o2016efficient}, which uses at most $\mathrm{rank}(\rho)$ copies at a time for the tomography of general state $\rho$), quantum-enhanced learners, using multi-copy measurements and coherent processing, may have an advantage in the case of mixed states and channels. Such a quantum advantage is known for general mixed state tomography \cite{chen2022tight, chen2023when-does} and in certain channel learning scenarios \cite{chen2022exponential, huang2022quantum, caro2022learning, chen2023unitarity, fawzi2023lower, fawzi2023quantum-channel-certification, oufkir2023sample}, however, to our knowledge not yet under assumptions of bounded complexity.

Secondly, there are several regimes of interest in which our results may be further extended. For instance, while we establish \ins{a} computational efficiency transition for state and unitary learning at logarithmic circuit complexity, we leave open the question of computationally efficient learning with constraints beyond circuit complexity (e.g., constant-depth circuits where the gates are spread out).
Another potential improvement related to the computational complexity is in regards to average-case computational hardness. While our computational lower bounds hold in the worst-case, this does not tell us if most states/unitaries of bounded gate complexity are computationally hard to learn. \edit{Are}{Is} there a worst-case to average-case reduction for this problem? Or perhaps is there an average-case notion of pseudorandomness that one could leverage here?
An additional regime where our work can be extended is as follows. Our adaptation of the bootstrap strategy from \cite{haah2023query} to average-case unitary learning achieves Heisenberg scaling only at the cost of a dimension-dependent factor. 
Given recent work in state shadow tomography \cite{van2021quantum, huggins2022nearly, van2023quantum}, it may not be possible to find a learner free from this dimensional factor while achieving the $\epsilon^{-1}$ scaling.
Finding such a learner or disproving its existence could serve as an important contribution to recent progress on Heisenberg-limited learning in different scenarios \cite{huang2022learningmanybodyhamiltonians, dutkiewicz2023advantage, li2023heisenberglimited}.

Thirdly, can we make learning even more efficient if the circuit structure is fixed and known in advance? Our upper bound already implies an algorithm with $\tilde{\bigo}(G)$ sample complexity for fixed circuit structure, but the lower bound proof crucially relies on the ability to place gates freely in the construction of the packing net.
A particular fixed circuit structure of physical relevance is the brickwork circuit \cite{fisher2023random}. In \Cref{app:brickwork}, we give preliminary results showing that if an $n$-qubit $G$-gate brickwork circuit suffices to implement an approximate unitary $t$-design \cite{brandao2016local}, then the metric entropy of this unitary class with respect to $\davg$ is lower bounded by $\Omega(tn)$. 
Considering the known lower bound of $G\geq \tilde{\Omega}(tn)$ on the size of brickwork circuits implementing $t$-designs \cite{brandao2016local}, whose tightness is still an open problem \cite{haferkamp2022random}, this may hint at a similar $\tilde{\Theta}(G)$ sample complexity of learning brickwork circuits.

Lastly, we outline a potential connection to the Brown-Susskind conjecture \cite{brown2018second, susskind2018black} originating from the wormhole-growth paradox in holographic duality \cite{susskind2016computational, bouland2019computational, stanford2014complexity, brown2016complexity}. Informally, the conjecture states that the complexity of a generic local quantum circuit grows linearly with the number of $2$-qubit gates for an exponentially long time, dual to the steady growth of a wormhole's volume in the bulk theory. 
With ``complexity'' understood as ``circuit complexity'' \cite{stanford2014complexity}, this conjecture has recently been confirmed for exact circuit complexity~\cite{haferkamp2022linear, li2022short} while the case of approximate circuit complexity is only partially resolved~\cite{oszmaniec2022saturation,haferkamp2023moments}.
Our work suggests an alternative approach to the Brown-Susskind conjecture. Namely, we have demonstrated that the complexity of learning quantum circuits grows linearly with the number of local gates in the worst case. If our bounds were extended to hold with high probability over random circuits with $G$ gates, this would yield a sample complexity version of the Brown-Susskind conjecture, suggesting the complexity of learning as a dual of the wormhole volume.

Via these open questions, tomography problems dating back to the early days of quantum computation and information connect closely to different avenues of current research in the field. Consequently, answering these questions will shed new light on fundamental quantum physics as well as on the frontiers of quantum complexity and quantum learning.

\vspace{2em}
{\renewcommand\addcontentsline[3]{} 
\section{Methods}}

In this section, we discuss the main ideas behind the proof of our results on the sample complexity of learning states (\Cref{thm:state-learning}) and unitaries (\Cref{thm:avg-case-unitary}), along with the computational complexity (\Cref{thm:state-comp-complex,thm:unitary-comp-complex}).

\vspace{1.0em}
{\renewcommand\addcontentsline[3]{} 
\subsection{Sample complexity upper bounds}}

We prove the upper bounds in \Cref{thm:state-learning,thm:avg-case-unitary} using a hypothesis selection protocol similar to \cite{bădescu2023improved}, but now based on classical shadow tomography \cite{huang2020predicting} that enables a linear-in-$G$ scaling.

\paragraph{State learning}
For state learning, we first take a minimal covering net $\mathcal{N}$ over the set of states with bounded circuit complexity $G$ such that for any such state $\ket{\psi}$, there exists a state in the covering net that is $\epsilon$-close to $\ket{\psi}$ in trace distance.
This net then serves as a set of candidate states from which the learning algorithm will select one.
Importantly, we prove that the cardinality of $\mathcal{N}$ can be upper bounded by $|\mathcal{N}| \leq e^{\tilde{\bigo}(G)}$.
Here, note that the tilde hides a logarithmic factor in terms of system size, which we remove using a more detailed analysis with ideas from junta learning~\cite{chen2023testing}.

Next, we use classical shadows created via random Clifford measurements~\cite{huang2020predicting} to estimate the trace distance between the unknown state and each of the candidates in $\mathcal{N}$.
This is achieved by estimating the expectation value of the Helstrom measurement \cite{Helstrom1969}, which is closely related to the trace distance between two states.
As the rank of Helstrom measurements between pure states is at most $2$, Clifford classical shadows can efficiently estimate all $\binom{|\mathcal{N}|}{2}$ of them simultaneously to $\epsilon$ error using $\bigo(\log|\mathcal{N}|/\epsilon^2)\leq \tilde{\bigo}(G/\epsilon^2)$ copies of $\ket{\psi}$.
Then we select the candidate that has the smallest trace distance from $\ket{\psi}$ as the output.

The above strategy leads to a sample complexity upper bound that depends logarithmically on the number of qubits $n$.
This is undesirable when the circuit complexity $G$ is smaller than $n/2$ (i.e., when some of the qubits are in fact never influenced by the circuit).
We improve our algorithm in this small-size regime by first performing a junta learning step \cite{chen2023testing} to identify which of the qubits are acted on non-trivially.
After that, we enhance our protocol with a measure-and-postselect step.
This allows us to construct a covering net only over the qubits acted upon non-trivially whose cardinality no longer depends on $n$. 
We then perform the hypothesis selection as before.
In this way, we are able to achieve a sample complexity independent of system size.

\paragraph{Unitary learning}
The algorithm for unitary learning is similar to the state learning protocol.
When allowing the use of an auxiliary system, we utilize the fact that the average-case distance between unitaries is equivalent to the trace distance between their Choi states. This way, we can reduce the problem to state learning of the Choi states and achieve the $\tilde{\bigo}(G/\epsilon^2)$ sample complexity.
Without auxiliary systems, we can sample random input states and perform one-shot Clifford shadows on the outputs to estimate the squared average-case distance, resulting in an $\tilde{\bigo}(G/\epsilon^4)$ sample complexity with a sub-optimal $\epsilon$-dependence.

Furthermore, we improve the $\epsilon$ dependence in unitary learning to the Heisenberg scaling $\tilde{\bigo}(1/\epsilon)$ via a bootstrap method similar to \cite{haah2023query}, using the above learning algorithm as a sub-routine.
Specifically, we iteratively refine our learning outcome $\hat{U}$ by performing hypothesis selection over a covering net of $(U\hat{U}^\dagger)^{p}$, with $p$ increasing exponentially as the iteration proceeds.
Although the circuit complexity of $(U\hat{U}^\dagger)^{p}$ grows with $p$, a covering net with $p$-independent cardinality can be constructed based on the one-to-one correspondence to $U$.
However, unlike the diamond distance learner considered in \cite{haah2023query}, which has fine control over every eigenvalue of the unitaries, our average-case learner only has control over the average of the eigenvalues.
Thus for the bootstrap to work (i.e., for the learning error to decrease with increasing $p$), the average-case learner has to work in an exponentially small error regime, which results in a dimensional factor in the final sample complexity $\tilde{\bigo}(\sqrt{2^n}G/\epsilon)$.

{\renewcommand\addcontentsline[3]{} 
\subsection{Sample complexity lower bounds}}

We prove the sample complexity lower bounds in \Cref{thm:state-learning,thm:avg-case-unitary} by reduction to distinguishing tasks. 
Specifically, if we can learn the state/unitary to within $\epsilon$ error, then we can use this learning algorithm to distinguish a set of states/unitaries that are $3\epsilon$ far apart from each other.
Hence a lower bound on the sample complexity of distinguishing states/unitaries from a packing net implies a lower bound for the learning task.

\paragraph{State learning}
For state learning, we construct a packing net $\mathcal{M}$ of the set of $(\log_2 G)$-qubit states, which we later tensor product with zero states on the remaining qubits.
These states have circuit complexity $\sim G$ because $\mathcal{O}(2^k)$ two-qubit gates can implement any pure $k$-qubit states \cite{shende2005synthesis}.
We prove that the cardinality of $\mathcal{M}$ can be lower bounded by $e^{\Omega(G)}$.
This means that to distinguish the states in $\mathcal{M}$, one has to gather $\Omega(\log|\mathcal{M}|)\geq \Omega(G)$ bits of information.
Meanwhile, Holevo's theorem \cite{holevo} asserts that the amount of information carried by each sample is upper bounded by $\tilde{\mathcal{O}}(\epsilon^2)$ \cite{haah2016sample}.
Hence, we need at least $\tilde{\Omega}(G/\epsilon^2)$ copies of the unknown state.

\paragraph{Unitary learning}
Similarly, for unitary learning, we construct a packing net by stacking all the gates into $\log_4 G$ qubits, using the fact that $\bigo(4^k)$ two-qubit gates suffice to implement any $k$-qubit unitaries~\cite{vartiainen2004efficient}.
Lacking an analogue of Holevo's theorem for unitary queries, we turn to a recently established bound on the success probability of unitary discrimination \cite{bavaresco2022unitary} and obtain an $\Omega(G)$ sample complexity lower bound for constant $\epsilon$.
To incorporate the $\epsilon$ dependence, we follow \cite{haah2023query} and map the problem into a fractional query problem.
We show that with $N$ queries, we can use the learning algorithm to simulate \cite{berry2015hamiltonian, cleve2009efficient} an $\bigo(\epsilon N)$ query algorithm that solves the above constant-accuracy distinguishing problem.
This gives us the desired $N\geq \Omega(G/\epsilon)$ lower bound.

{\renewcommand\addcontentsline[3]{} 
\subsection{Computational hardness}}

We prove the computational complexity lower bounds in \Cref{thm:state-comp-complex,thm:unitary-comp-complex} again by reduction to distinguishing tasks, whose hardness relies on cryptographic primitives in this case.
In particular, we show that if we can learn the state/unitary in polynomial time, then we can use this learning algorithm to efficiently distinguish between pseudorandom states/functions~\cite{ji2018pseudorandom,goldreich1986construct} and truly random states/functions.
We note that similar ideas have been used to establish a cryptographic no-cloning theorem \cite{ji2018pseudorandom} for PRS, but without gate complexity dependence and the unitary counterpart.
The \textsf{RingLWE} hardness assumption here may also be relaxed to the existence of appropriate quantum-secure PRS/PRF constructions that have the same gate complexity discussed below.

Our proofs rely on the construction of quantum-secure pseudorandom functions (PRFs) that can be implemented using $\mathsf{TC}^0$ circuits, subject to the assumption that Ring Learning with Errors (\textsf{RingLWE}) cannot be solved by a quantum computer in sub-exponential time~\cite{arunachalam2021quantum}.
We show that the circuit construction of ~\cite{arunachalam2021quantum} can be implemented quantumly using $G = \mathcal{O}(n\polylog(n))$ gates by converting this $\mathsf{TC}^0$ circuit into a quantum circuit that computes the same function.
With this construction, we can prove the computational hardness of learning when $G = \mathcal{O}(n\polylog(n))$ as follows.

\paragraph{State learning}
For state learning, we utilize these quantum-secure PRFs to construct pseudorandom quantum states (PRS), in particular binary phase states from~\cite{ji2018pseudorandom,brakerski2019pseudo}, with $G = \mathcal{O}(n\polylog(n))$ gates.
Given copies of some unknown quantum state that is promised to either be a PRS or a Haar-random state, we design a procedure that can distinguish these two cases.
The distinguisher uses our algorithm for learning states along with the SWAP test applied to the learned state and the given state~\cite{barenco1997stabilization,buhrman2001quantum}.
Thus, we show that if our learning algorithm was able to computationally efficiently learn PRS, then we would have an efficient distinguisher between PRS and Haar-random states, contradicting the definition of a PRS~\cite{ji2018pseudorandom}.

\paragraph{Unitary learning}
The proof idea in the unitary setting is similar.
In this case, we consider PRFs directly rather than the PRS construction.
Given query access to some unknown unitary that is promised to be the unitary oracle of either a PRF or a uniformly random Boolean function, we design a procedure that can distinguish these two cases. 
The distinguisher uses our algorithm for learning unitaries along with the SWAP test~\cite{barenco1997stabilization,buhrman2001quantum}.
Here, we query the given/learned unitaries on a random tensor product of single-qubit stabilizer states and conduct the SWAP test between the output states.
This way, we show that if our learning algorithm was able to computationally efficiently learn a unitary implementing a PRF, then we would have an efficient distinguisher between PRFs and uniformly random functions, which contradicts the definition of a PRF~\cite{goldreich1986construct}.

We then go one step further and show computational hardness for circuit size $\tilde{\mathcal{O}}(G)$. To do this we rely critically on the assumption that $\mathsf{RingLWE}$ is hard not just to polynomial-time quantum algorithms, but even to quantum algorithms that run for longer (sub-exponential) time. This allows us to take a much smaller input size to the PRS/PRF in our previous constructions (i.e., over $\mathcal{O}(G)$ qubits which can be implemented with $\tilde{\bigo}(G)$ gates).
The sub-exponential computational hardness of $\mathsf{RingLWE}$ then implies that solving the learning tasks requires time exponential in $G$. 

Meanwhile, for $G=\mathcal{O}(\log n)$, the learning tasks can be solved efficiently by junta learning and standard tomography methods.
This establishes $\log n$ circuit complexity as a transition point of computational efficiency.
This also implies that the circuit complexity of the PRS/PRF constructions in \cite{arunachalam2021quantum, brakerski2019pseudo} is optimal up to logarithmic factors, otherwise it would contradict efficient tomography of $\bigo(\log n)$-complexity states/unitaries.
Finally, we note that the PRS/PRF we consider can be implemented with a similar number of Clifford and T gates, extending our results to Clifford+T circuits.

{\renewcommand\addcontentsline[3]{} 
\section*{Acknowledgments}}
The authors thank John Preskill for valuable feedback and continuous support throughout this project.
The authors also thank Ryan Babbush, Yiyi Cai, Charles Cao, Alexandru Gheorghiu, András Gilyén, Alex B. Grilo, Jerry Huang, \ins{Minghao Liu}, Nadine Meister, Chris Pattison, Alexander Poremba, Xiao-Liang Qi, Ruohan Shen, Mehdi Soleimanifar, Yu Tong, Tzu-Chieh Wei, Tai-Hsuan Yang, Nengkun Yu for fruitful discussions.
Finally, the authors thank Antonio Anna Mele and the anonymous reviewers \del{at QIP 2024} for feedback on an earlier version of this paper.
HZ was supported by a Caltech Summer Undergraduate Research Fellowship (SURF).
LL was supported by a Mellon Mays Undergraduate Fellowship.
HH was supported by a Google PhD fellowship and a MediaTek Research Young Scholarship.
HH acknowledges the visiting associate position at the Massachusetts Institute of Technology.
This work is supported by a collaboration between the US DOE and other Agencies. YQ acknowledges financial support from the U.S. Department of Energy, Office of Science, National Quantum Information Science Research Centers, Quantum Systems Accelerator.
MCC was supported by a DAAD PRIME fellowship.
This work was done (in part) while a subset of the authors were visiting the Simons Institute for the Theory of Computing.
The Institute for Quantum Information and Matter is an NSF Physics Frontiers Center.

\vspace{2em}
{\renewcommand\addcontentsline[3]{} 
\section*{Data Availability}}
\edit{No data are generated or analyzed in this theoretical work}{The code that generates the data presented in the figures and that support the other findings of this study is available at \url{https://github.com/haimengzhao/bounded-gate-tomography}.}

\vspace{2em}
{\renewcommand\addcontentsline[3]{} 
\section*{Author Contributions}}
Y.Q., H.H., and M.C.C. conceived the project. 
H.Z., L.L., and I.K. led the development of the theory and the analytic calculations. 
\ins{H.Z., I.K., and H.H. designed and conducted the numerical experiments.}
All authors contributed to the mathematical aspects of this work.
H.Z. and L.L. wrote the manuscript with input from all authors.

\vspace{2em}
{\renewcommand\addcontentsline[3]{} 
\section*{Competing Interests}}
The authors declare no competing interests.

\newpage
\bibliographystyle{unsrt}
\let\oldaddcontentsline\addcontentsline% Store \addcontentsline
\renewcommand{\addcontentsline}[3]{}% Make \addcontentsline a no-op
\bibliography{refs}
\let\addcontentsline\oldaddcontentsline% Restore \addcontentsline

\newpage
\appendix
\appendixpage

\tableofcontents

\section{Preliminaries}

Throughout the appendices, we use $d=2^n$ to denote the dimension of the $n$-qubit Hilbert space unless otherwise stated.

\subsection{Distance metrics}

Here we review some distance metrics and their properties used throughout our proofs.
In the main text we have already introduced the trace distance 
\begin{equation}
    \dtr(\ket{\psi}, \ket{\phi}) 
    =\frac{1}{2} \|\ketbra{\psi} - \ketbra{\phi}\|_1 \, ,
\end{equation}
which is analogously defined for density matrices as $\dtr(\rho, \sigma) = \norm{\rho - \sigma}_1/2$, the diamond distance
\begin{equation}
    \dworst(U, V) = \max_{\rho} \|(U\otimes I)\rho(U\otimes I)^\dagger - (V\otimes I)\rho(V\otimes I)^\dagger\|_1\, ,
\end{equation} 
and the root mean squared trace distance
\begin{equation}
    \davg(U, V) = \sqrt{\E_{\ket{\psi}}[\dtr(U\ket{\psi}, V\ket{\psi})^2]}
\end{equation}
where the expectation is taken over Haar measure\footnote{Due to the concentration of Lipschitz functions on inputs drawn from the Haar measure, controlling this root mean squared distance also leads to error bounds that hold with high probability over random input states.}.

Apart from these, we also use the following auxiliary distance metrics.
We define the quotient spectral distance 
\begin{equation}
    d_2'(U, V) 
    = \min_{e^{i\phi}\in U(1)}\|U-e^{i\phi}V\|
\end{equation}
to be the spectral distance $d_2(U, V) = \|U-V\|$ up to a global phase.
Similarly, we define the quotient normalized Frobenius distance 
\begin{equation}
    d_F'(U, V) = \min_{e^{i\phi}\in U(1)}\frac{1}{\sqrt{d}}\|U-e^{i\phi}V\|_F
\end{equation}
as the normalized Frobenius norm distance $d_F(U, V) = \tfrac{1}{\sqrt{d}}\|U-V\|_F$ up to a global phase.

The following lemma shows that (quotient) spectral distance and diamond distance are equivalent.
\begin{lemma}[Spectral and diamond distance of unitaries, variant of {\cite[Lemma B.5]{caro2022generalization}}]
    For any two $d$-dimensional unitaries $U$ and $V$, we have
    \begin{equation}
        \frac{1}{\sqrt{2}}d_2'(U, V)\leq \frac{1}{2}\dworst(U, V)\leq d_2'(U, V)\leq \|U-V\|.
    \end{equation}
\label{lem:dist-spectral-diamond}
\end{lemma}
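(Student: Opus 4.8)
## Proof Proposal for Lemma~\ref{lem:dist-spectral-diamond}

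The plan is to prove the three inequalities in order, working with the unitary channels directly. The global-phase minimization in $d_2'$ handles the fact that diamond distance (and channel action generally) is insensitive to the overall phase of $U$ and $V$, so without loss of generality I can fix the phase $e^{i\phi}$ achieving the minimum in $d_2'(U,V)$ and simply write $\|U - V\|$ for $d_2'(U,V)$ in the estimates below, replacing $V$ by $e^{i\phi}V$. The rightmost inequality $d_2'(U,V)\le \|U-V\|$ is then immediate from the definition of $d_2'$ as a minimum over phases (taking $\phi = 0$).

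For the upper bound $\tfrac12\dworst(U,V)\le d_2'(U,V)$: I would bound, for an arbitrary input state $\rho$ on the system tensored with an ancilla, the trace-norm difference $\|(U\otimes I)\rho(U\otimes I)^\dagger - (V\otimes I)\rho(V\otimes I)^\dagger\|_1$. Writing $W = U\otimes I$, $\tilde W = V\otimes I$, I use the standard telescoping identity $W\rho W^\dagger - \tilde W\rho\tilde W^\dagger = (W-\tilde W)\rho W^\dagger + \tilde W\rho(W-\tilde W)^\dagger$, then apply the triangle inequality and the Hölder-type bound $\|AB\|_1 \le \|A\|\,\|B\|_1$ together with $\|\rho W^\dagger\|_1 = \|\rho\|_1 = 1$ (since $W$ is unitary, it preserves the trace norm) and likewise $\|\tilde W \rho\|_1 = 1$. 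This gives the bound $2\|W - \tilde W\| = 2\|U-V\|$ since the spectral norm is stable under tensoring with identity. Taking the maximum over $\rho$ and the minimum over the global phase yields $\dworst(U,V)\le 2\,d_2'(U,V)$.

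For the lower bound $\tfrac{1}{\sqrt2}d_2'(U,V)\le \tfrac12\dworst(U,V)$: here I want to exhibit a good input state. Fix the optimal phase so $d_2'(U,V) = \|U-V\|$, and let $\ket{v}$ be a unit vector achieving the spectral norm, i.e. $\|(U-V)\ket{v}\| = \|U-V\|$. Feeding the (unentangled) pure input $\ket{v}$ into both channels, the diamond distance is at least $\|U\ketbra{v}U^\dagger - V\ketbra{v}V^\dagger\|_1 = 2\,\dtr(U\ket{v}, V\ket{v})$. Now I relate the trace distance between the two output pure states to the Euclidean distance $\|(U-V)\ket v\|$ of their representative vectors: for unit vectors $\ket{a},\ket{b}$ one has $\dtr(\ket a,\ket b) = \sqrt{1 - |\braket{a|b}|^2}$ while $\| \ket a - \ket b\|^2 = 2 - 2\,\mathrm{Re}\braket{a|b}$, and a short computation (or invoking the cited Lemma~B.5 of \cite{caro2022generalization} on which this is a variant) gives $\dtr(\ket a,\ket b) \ge \tfrac{1}{\sqrt2}\|\ket a - \ket b\|$ — the factor $\sqrt2$ being the worst case, attained when $\braket{a|b}$ is real. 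Chaining these, $\dworst(U,V) \ge 2\,\dtr(U\ket v, V\ket v) \ge \sqrt2\,\|(U-V)\ket v\| = \sqrt2\,\|U-V\| = \sqrt2\, d_2'(U,V)$, which rearranges to the claim.

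The main obstacle I anticipate is the careful bookkeeping of the global phase: the inequality chain only works because the \emph{same} phase can be chosen consistently across all three estimates, and one has to check that the vector $\ket v$ used in the lower bound is compatible with the phase-optimized $V$ rather than the original $V$ — this is routine but is exactly where an off-by-a-phase slip would break the argument. A secondary subtlety is confirming that the scalar inequality $\dtr(\ket a,\ket b)\ge \tfrac1{\sqrt2}\|\ket a-\ket b\|$ goes in the direction needed and that its constant is tight, since a looser constant here would degrade the lower bound; but since this is stated as a variant of an established lemma, I would just cite \cite[Lemma B.5]{caro2022generalization} and adapt the constants. Everything else is triangle inequality plus unitary invariance of the relevant norms.
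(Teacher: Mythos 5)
Your rightmost inequality is immediate, and your proof of $\tfrac12\dworst(U,V)\le d_2'(U,V)$ via the telescoping identity $W\rho W^\dagger-\tilde W\rho\tilde W^\dagger=(W-\tilde W)\rho W^\dagger+\tilde W\rho(W-\tilde W)^\dagger$ plus H\"older is correct (and is a different, equally valid route from the paper, which instead uses the exact pure-state formula $\tfrac12\|U\ketbra{\psi}U^\dagger-V\ketbra{\psi}V^\dagger\|_1=\sqrt{1-|\braket{\psi}{U^\dagger V|\psi}|^2}$ together with $1+|\braket{\psi}{U^\dagger V|\psi}|\le 2$).

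The leftmost inequality, however, has a genuine gap, and it sits exactly at the phase bookkeeping you flagged as the danger point. The scalar inequality $\dtr(\ket a,\ket b)\ge\tfrac1{\sqrt2}\|\ket a-\ket b\|$ is false in general: writing $c=\braket{a}{b}$ it amounts to $1-|c|^2\ge 1-\mathrm{Re}(c)$, i.e.\ $\mathrm{Re}(c)\ge|c|^2$, which fails whenever $\mathrm{Re}(c)<0$ (e.g.\ $\ket b=-\ket a$ gives $\dtr=0$ but $\|\ket a-\ket b\|=2$). Fixing the global phase $\phi^\star$ optimally for the \emph{operator} norm does not align the phase for the particular witness vector. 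Concretely, take $U=I$ and $V=Z$ on one qubit: then $d_2'(I,Z)=\min_\phi\max(|1-e^{i\phi}|,|1+e^{i\phi}|)=\sqrt2$, attained at $e^{i\phi^\star}=i$, and $I-iZ=\mathrm{diag}(1-i,1+i)$ has \emph{every} unit vector as a norm-achieving vector. Choosing $\ket v=\ket 0$ gives $\|(I-iZ)\ket0\|=\sqrt2$ but $\dtr(\ket 0,Z\ket0)=0$, so your chain would yield $0\ge\sqrt2\cdot\sqrt2$. The lemma itself still holds here (the state $\ket+$ witnesses $\dworst(I,Z)=2$), but your choice of witness does not find it. The repair is essentially what the paper does: keep the maximization over input states $\ket\psi$ and work with $1-|\braket{\psi}{U^\dagger V|\psi}|$ (absolute value, not the real part of a phase-rotated overlap), using $\sqrt{1-|c|^2}=\sqrt{(1+|c|)(1-|c|)}\ge\tfrac1{\sqrt2}\sqrt{2(1-|c|)}$ and then the identity $\max_\psi\sqrt{2(1-|\braket{\psi}{U^\dagger V|\psi}|)}=\min_{\phi}\max_\psi\|U\ket\psi-e^{i\phi}V\ket\psi\|_2=d_2'(U,V)$; the witness state must maximize $1-|c_\psi|$, not the Euclidean norm of $(U-e^{i\phi^\star}V)\ket\psi$.
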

\begin{proof}
    Since stabilization is not necessary for computing the diamond distance of two unitary channels~\cite{wildebook}, we have
    \begin{equation}
    \begin{split}
        &\frac{1}{2}\dworst(U, V) = \max_{\ket{\psi}} \frac{1}{2}\|U\ketbra{\psi}U^\dagger - V\ketbra{\psi}V^\dagger\|_1 
        = \max_{\ket{\psi}} \sqrt{1-|\braket{\psi}{U^\dagger V|\psi}|^2} \\
        &=\max_{\ket{\psi}} \sqrt{(1+|\braket{\psi}{U^\dagger V|\psi}|)(1-|\braket{\psi}{U^\dagger V|\psi}|)}
        \geq \max_{\ket{\psi}}\frac{1}{\sqrt{2}} \sqrt{2(1-|\braket{\psi}{U^\dagger V|\psi}|)} \\
        &=\frac{1}{\sqrt{2}}\min_{e^{i\phi}\in U(1)} \max_{\ket{\psi}} \|U\ket{\psi} - e^{i\phi}V\ket{\psi}\|_2 
        = \frac{1}{\sqrt{2}}\min_{e^{i\phi}\in U(1)}\|U-e^{i\phi}V\|
        = \frac{1}{\sqrt{2}}d_2'(U, V),
    \end{split}
    \end{equation}
    where we have used $\left|\braket{\psi}{U^\dagger V|\psi}\right|\geq 0$ and the standard conversion between trace distance and fidelity. 
    This proves the first inequality. 
    Similarly, we have
    \begin{equation}
    \begin{split}
        &\frac{1}{2}\dworst(U, V) = \max_{\ket{\psi}} \frac{1}{2}\|U\ketbra{\psi}U^\dagger - V\ketbra{\psi}V^\dagger\|_1 
        = \max_{\ket{\psi}} \sqrt{1-|\braket{\psi}{U^\dagger V|\psi}|^2} \\
        &=\max_{\ket{\psi}} \sqrt{(1+|\braket{\psi}{U^\dagger V|\psi}|)(1-|\braket{\psi}{U^\dagger V|\psi}|)}
        \leq \max_{\ket{\psi}} \sqrt{2(1-|\braket{\psi}{U^\dagger V|\psi}|)} \\
        &=\min_{e^{i\phi}\in U(1)} \max_{\ket{\psi}} \|U\ket{\psi} - e^{i\phi}V\ket{\psi}\|_2 
        = \min_{e^{i\phi}\in U(1)}\|U-e^{i\phi}V\|
        = d_2'(U, V),
    \end{split}
    \end{equation}
    where we have used $\left|\braket{\psi}{U^\dagger V|\psi}\right|\leq 1$, proving the second inequality. 
    The third inequality follows immediately from $d_2'(U, V)=\min_{e^{i\phi}\in U(1)}\|U-e^{i\phi}V\|\leq \|U-V\|$.
\end{proof}

We will also utilize the subadditivity of the diamond distance.

\begin{lemma}[Subadditivity of diamond distance {\cite[Prop.~3.48]{watrous2018book}}]
    \label{lem:subadd-diamond}
    For any $d$-dimensional unitaries $U_1, U_2, V_1, V_2$, we have the following inequality:
    \begin{equation}
        \dworst(U_2U_1, V_2V_1) \leq \dworst(U_2, V_2) + \dworst(U_1, V_1).
    \end{equation}
\end{lemma}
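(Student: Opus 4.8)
The plan is to derive the bound from two standard facts about the diamond distance between unitary channels: it obeys the triangle inequality (being a metric on channels), and it is invariant under composition with a fixed unitary channel on either side. Write $\mathcal{U}$ for the channel $X \mapsto U X U^\dagger$, so that the channel associated with the product $U_2 U_1$ is the composition $\mathcal{U}_2 \circ \mathcal{U}_1$ (apply $\mathcal{U}_1$ first, then $\mathcal{U}_2$).

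First, I would insert the hybrid unitary $V_2 U_1$ and invoke the triangle inequality for $\dworst$ on channels:
\begin{equation}
    \dworst(U_2 U_1, V_2 V_1) \leq \dworst(U_2 U_1, V_2 U_1) + \dworst(V_2 U_1, V_2 V_1).
\end{equation}
Second, I would show that each term on the right collapses to one of the terms in the claim. Using $\dworst(\Phi,\Psi) = \sup_\rho \|((\Phi - \Psi)\otimes I)(\rho)\|_1$ and factoring $((\mathcal{U}_2 - \mathcal{V}_2)\circ \mathcal{U}_1)\otimes I = ((\mathcal{U}_2 - \mathcal{V}_2)\otimes I)\circ(\mathcal{U}_1 \otimes I)$, the operator fed into $\mathcal{U}_2 - \mathcal{V}_2$ is $(U_1 \otimes I)\rho(U_1 \otimes I)^\dagger$, which ranges bijectively over all density operators as $\rho$ does since $U_1$ is unitary; hence $\dworst(U_2 U_1, V_2 U_1) = \dworst(U_2, V_2)$. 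Likewise, factoring $(\mathcal{V}_2 \circ (\mathcal{U}_1 - \mathcal{V}_1))\otimes I = (\mathcal{V}_2 \otimes I)\circ((\mathcal{U}_1 - \mathcal{V}_1)\otimes I)$ and using invariance of the trace norm under conjugation by the unitary $V_2 \otimes I$ gives $\dworst(V_2 U_1, V_2 V_1) = \dworst(U_1, V_1)$. Substituting these two identities into the displayed inequality yields the claim.

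The only point requiring care is the bookkeeping of composition order, i.e.\ checking that ``apply $U_1$ first'' is reflected correctly and that unitary invariance is applied on the appropriate side; there is no genuine mathematical obstacle here. An alternative route would go through \Cref{lem:dist-spectral-diamond}: one combines the triangle inequality for the quotient spectral distance $d_2'$ (which follows by choosing optimal relative phases together with $\|(U_2 - e^{i\phi}V_2)U_1\| = \|U_2 - e^{i\phi}V_2\|$ and $\|V_2(U_1 - e^{i\phi}V_1)\| = \|U_1 - e^{i\phi}V_1\|$) with the two-sided comparisons between $d_2'$ and $\dworst$. However, that conversion loses a multiplicative factor of $\sqrt{2}$, so the direct channel-level argument is the preferable one.
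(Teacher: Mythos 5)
Your argument is correct: the paper does not prove this lemma itself but cites it to Watrous (Prop.~3.48), and your hybrid-term decomposition via the triangle inequality together with the two invariance identities $\dworst(U_2U_1, V_2U_1) = \dworst(U_2,V_2)$ and $\dworst(V_2U_1, V_2V_1) = \dworst(U_1,V_1)$ is exactly the standard proof from that reference. Both identities are justified correctly (bijectivity of $\rho \mapsto (U_1\otimes I)\rho(U_1\otimes I)^\dagger$ on density operators for the first, unitary invariance of the trace norm for the second), so nothing is missing.
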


From the standard relationship between different $p$-norms, we have the following relation between $d_2'$ and $d_F'$.

\begin{lemma}[Norm conversion between quotient spectral and normalized Frobenius distance]
    For any two $d$-dimensional unitaries $U$ and $V$, we have
    \begin{equation}
        \frac{1}{\sqrt{d}}d_2'(U, V)\leq d_F'(U, V)\leq d_2'(U, V).
    \end{equation}
\label{lem:dist-norm-conversion}
\end{lemma}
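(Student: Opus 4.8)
The plan is to reduce everything to the elementary norm comparison $\|A\|\le\|A\|_F\le\sqrt d\,\|A\|$, valid for every $d\times d$ matrix $A$, and then carefully track the two separate minimizations over the global phase. The norm comparison follows by writing both quantities in terms of the singular values $\sigma_1,\dots,\sigma_d$ of $A$: one has $\|A\|^2=\max_i\sigma_i^2\le\sum_i\sigma_i^2=\|A\|_F^2\le d\max_i\sigma_i^2=d\,\|A\|^2$. Dividing by $\sqrt d$ gives the chain $\tfrac1{\sqrt d}\|A\|\le\tfrac1{\sqrt d}\|A\|_F\le\|A\|$, which is the pointwise (fixed-phase) version of the claimed inequalities.

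For the upper bound $d_F'(U,V)\le d_2'(U,V)$, I would let $e^{i\phi^\star}\in U(1)$ attain $d_2'(U,V)=\|U-e^{i\phi^\star}V\|$ (the minimum over the compact group $U(1)$ is attained). Using this phase as a feasible, possibly suboptimal candidate in the definition of $d_F'$ and applying $\tfrac1{\sqrt d}\|A\|_F\le\|A\|$ with $A=U-e^{i\phi^\star}V$,
\[
d_F'(U,V)\le\frac1{\sqrt d}\bigl\|U-e^{i\phi^\star}V\bigr\|_F\le\bigl\|U-e^{i\phi^\star}V\bigr\|=d_2'(U,V).
\]
For the lower bound $\tfrac1{\sqrt d}\,d_2'(U,V)\le d_F'(U,V)$, I would symmetrically let $e^{i\psi^\star}\in U(1)$ attain $d_F'(U,V)=\tfrac1{\sqrt d}\|U-e^{i\psi^\star}V\|_F$, use this phase as a candidate in the definition of $d_2'$, and apply $\|A\|\le\|A\|_F$ with $A=U-e^{i\psi^\star}V$:
\[
d_2'(U,V)\le\bigl\|U-e^{i\psi^\star}V\bigr\|\le\bigl\|U-e^{i\psi^\star}V\bigr\|_F=\sqrt d\,d_F'(U,V),
\]
and dividing by $\sqrt d$ finishes the argument.

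There is essentially no obstacle here; the only point requiring a little care is that the two quotient distances may be realized by different optimal phases, which is precisely why one plugs the minimizer of each side into the \emph{other} side's objective rather than assuming a common minimizer. Unitarity of $U$ and $V$ is not actually used — the statement holds for arbitrary $d\times d$ matrices — so the lemma is really just the standard $\ell_\infty$–$\ell_2$ comparison of singular-value vectors, lifted through the global-phase quotient.
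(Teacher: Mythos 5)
Your proposal is correct and follows essentially the same route as the paper: both rest on the pointwise comparison $\|A\|\le\|A\|_F\le\sqrt{d}\,\|A\|$ and then handle the phase quotient by plugging each side's optimal phase into the other side's objective. Your write-up is slightly more explicit than the paper's about which phase minimizer is being used where, but the argument is the same.
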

\begin{proof}
    For any $e^{i\phi}\in U(1)$, the standard relation between matrix norms gives us
    \begin{equation}
        \|U-Ve^{i\phi}\| \leq \|U-Ve^{i\phi}\|_F \leq \sqrt{d}\|U-Ve^{i\phi}\|.
    \end{equation}
    Taking the minimum of $\|U-Ve^{i\phi}\|_F$ over $e^{i\phi}$ in the first inequality and dividing by $\sqrt{d}$, we obtain 
    \begin{equation}
        \frac{1}{\sqrt{d}}d_2'(U, V)\leq \frac{1}{\sqrt{d}}\|U-Ve^{i\phi}\| \leq d_F'(U, V).
    \end{equation}
    Similarly, taking the minimum of $\|U-Ve^{i\phi}\|$ over $e^{i\phi}$ in the second inequality and dividing by $\sqrt{d}$ yields \begin{equation}
        d_F'(U, V)\leq \frac{1}{\sqrt{d}}\|U-Ve^{i\phi}\|_F \leq d_2'(U, V).
    \end{equation}
    Thus we have the desired results.
\end{proof}

The following lemma collects some useful properties of $d_F'$ and in particular shows that $d_F'$ and $\davg$ are equivalent.

\begin{lemma}[Properties of quotient normalized Frobenius distance]
For any two $d$-dimensional unitaries $U$ and $V$, we have:
\begin{enumerate}
    \item $\frac{1}{2} d_F'(U, V)\leq \davg(U, V) \leq d_F'(U, V)$. 
    \item For any integer $p\geq 1$, $d_F'(U^p, V^p)\leq p d_F'(U, V)$.
    \item For any integer $p\geq 1$, if $d_F'(U, I), d_F'(V, I) \leq \frac{4/(25\pi)}{\sqrt{d}}$, then $d_F'(U^{1/p}, V^{1/p})\leq \frac{2}{p} d_F'(U, V)$.
\end{enumerate}
\label{lem:dist-df'}
\end{lemma}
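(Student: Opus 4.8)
The plan is to treat the three items separately, reducing each to a statement about the eigenvalues of $W := U^\dagger V$ (or of $U$, $V$ individually) and a one-variable calculus estimate on the unit circle.

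For item 1, I would first compute $d_F'(U,V)^2$ explicitly. Writing $W=U^\dagger V$ with eigenvalues $e^{i\theta_1},\dots,e^{i\theta_d}$, one has $\tfrac{1}{d}\|U-e^{i\phi}V\|_F^2 = \tfrac{1}{d}\sum_j |1-e^{i(\theta_j-\phi)}|^2 = 2 - \tfrac{2}{d}\Re\!\big(e^{-i\phi}\operatorname{Tr} W\big)$, so minimizing over $\phi$ gives $d_F'(U,V)^2 = 2 - \tfrac{2}{d}|\operatorname{Tr} W|$. On the other side, by the same Fuchs--van de Graaf / trace-distance-to-fidelity conversion used in \Cref{lem:dist-spectral-diamond}, $\davg(U,V)^2 = \E_{\ket\psi}\big[1-|\braket{\psi}{W|\psi}|^2\big]$. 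I would expand $|\braket{\psi}{W|\psi}|^2$ using the Haar average $\E_{\ket\psi}[\braket{\psi}{A|\psi}\overline{\braket{\psi}{B|\psi}}] = \tfrac{1}{d(d+1)}(\operatorname{Tr} A\operatorname{Tr} B^\dagger + \operatorname{Tr}(AB^\dagger))$, applied with $A=B=W$, yielding $\davg(U,V)^2 = 1 - \tfrac{|\operatorname{Tr} W|^2 + d}{d(d+1)} = \tfrac{d}{d+1}\big(1 - \tfrac{|\operatorname{Tr} W|^2}{d^2}\big)$. Comparing with $d_F'^2 = 2(1-\tfrac{|\operatorname{Tr} W|}{d})$ and setting $t := |\operatorname{Tr} W|/d \in [0,1]$, item 1 reduces to the elementary inequality $\tfrac14 \cdot 2(1-t) \le \tfrac{d}{d+1}(1-t^2) \le 2(1-t)$, i.e. $\tfrac12(1-t) \le \tfrac{d}{d+1}(1-t)(1+t) \le 2(1-t)$, which holds since $\tfrac{d}{d+1}(1+t)\in[\tfrac12\cdot\tfrac{d}{d+1},\, 2\cdot\tfrac{d}{d+1}]\subseteq[\tfrac14,2]$ for $t\in[0,1]$ (in fact $\tfrac{d}{d+1}\ge\tfrac12$).

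For item 2, I would pass to the eigenvalue picture for $W=U^\dagger V$ but now I need the subadditivity under powers; here the clean route is the triangle-type bound $d_F'(U^p,V^p) \le \sum_{k=0}^{p-1} d_F'(U^{k+1}V^{p-k-1}\!,\,U^{k}V^{p-k})$ — a telescoping in which each term, after cancelling the unitary factors on either side inside the unitarily-invariant Frobenius norm, equals $d_F'(U,V)$ (up to the global-phase quotient, which only helps). So the bound $p\,d_F'(U,V)$ follows from the ordinary triangle inequality for $d_F'$, which itself is inherited from the triangle inequality for $\|\cdot\|_F$ after absorbing the phase optimization (if $U\sim e^{i\alpha}V$ and $V\sim e^{i\beta}W$ then $U\sim e^{i(\alpha+\beta)}W$). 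I'd state the triangle inequality for $d_F'$ as a small preliminary observation.

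For item 3, this is the delicate one and I expect it to be the main obstacle. The issue is that taking $p$-th roots of unitaries is only well-behaved when the eigenvalues are clustered near $1$, which is exactly what the hypothesis $d_F'(U,I), d_F'(V,I) \le \tfrac{4/(25\pi)}{\sqrt d}$ buys us: by item 1 (or directly) this forces $\davg(U,I)$ small, and after choosing the branch of $U^{1/p}$ that takes the eigenvalue-$1$-nearest representative, every eigenphase of $U$ lies in a small arc around $0$. The plan is: (i) fix phase representatives of $U,V$ so that $\operatorname{Tr} U, \operatorname{Tr} V > 0$ and all eigenphases $\theta_j(U),\theta_j(V)$ lie in $(-\pi,\pi]$; (ii) use the smallness hypothesis plus $\sum_j (1-\cos\theta_j) = d - \operatorname{Tr} U \le \tfrac{d}{2}d_F'(U,I)^2$ to conclude $\sum_j \theta_j(U)^2$ is small (via $1-\cos\theta \ge \tfrac{2}{\pi^2}\theta^2$ on $[-\pi,\pi]$), hence $\sum_j \theta_j(U)^2 \le \tfrac{\pi^2}{4}\,d\,d_F'(U,I)^2 \le \tfrac{\pi^2}{4}\cdot\tfrac{16}{625\pi^2} < \tfrac{1}{25}$, and the same for $V$; (iii) define $U^{1/p}$ by dividing each eigenphase by $p$, likewise $V^{1/p}$, and estimate $d_F'(U^{1/p},V^{1/p})^2$. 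For step (iii) the key is that $d_F'(U^{1/p},V^{1/p})^2 = 2 - \tfrac{2}{d}|\operatorname{Tr}(U^{-1/p}V^{1/p})|$, and I need to compare $\operatorname{Tr}(U^{-1/p}V^{1/p}) = \sum_j e^{i(\theta_j(V)-\theta_j(U))/p}$ against $\operatorname{Tr}(U^{-1}V) = \sum_j e^{i(\theta_j(V)-\theta_j(U))}$ — but note that $U^{-1}V$ need \emph{not} have eigenphases $\theta_j(V)-\theta_j(U)$ because $U,V$ need not commute! This non-commutativity is the real obstacle. To get around it, I would instead work with the Frobenius norm directly: bound $\|U^{1/p}-V^{1/p}\|_F$ via a first-order/mean-value argument, $\|f(U)-f(V)\|_F \le \|f'\|_\infty \cdot$ (something), using the integral representation $U^{1/p}-V^{1/p} = \tfrac{1}{2\pi i}\oint z^{1/p}\big[(z-U)^{-1}-(z-V)^{-1}\big]\,dz$ over a contour enclosing the relevant small arc of the unit circle but avoiding $0$ and the cut, together with the resolvent identity $(z-U)^{-1}-(z-V)^{-1} = (z-U)^{-1}(U-V)(z-V)^{-1}$; since $|z^{1/p}| \le $ const and the resolvents are bounded on the contour by a constant (the contour stays a constant distance from the spectra, using (ii)), one gets $\|U^{1/p}-V^{1/p}\|_F \le \tfrac{C}{p}\|U-V\|_F$ with the derivative-type factor $1/p$ coming from the $z^{1/p}$ Cauchy kernel after the change of variables — tracking the contour and the constants carefully enough (using the explicit numerical slack in the hypothesis) to land the factor $2/p$ rather than just $O(1/p)$. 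Finally I'd pass back from $\|\cdot\|_F$ to $d_F'$, noting the phase optimization in $d_F'(U^{1/p},V^{1/p})$ can only decrease things and that on the right the chosen representatives realize $d_F'(U,V) = \tfrac{1}{\sqrt d}\|U-V\|_F$ up to a harmless factor absorbed in the slack; alternatively, argue scalar-wise in a simultaneous approximate-eigenbasis, but the contour argument is cleaner and avoids the commutativity pitfall. The bookkeeping of constants in this last step is where the bulk of the work lies.
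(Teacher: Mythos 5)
Your Items 1 and 2 are correct and essentially identical to the paper's own proof: Item 1 by reducing both $\davg(U,V)^2$ and $d_F'(U,V)^2$ to the single parameter $t=|\tr(U^\dagger V)|/d$ via the Haar second-moment formula, and Item 2 by the telescoping triangle inequality together with unitary invariance of $\|\cdot\|_F$.

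For Item 3 you take a genuinely different route (holomorphic functional calculus with the resolvent identity) from the paper, which instead writes $U=e^X$, $V=e^Y$, shows $\|X\|,\|Y\|\le 2/5$, proves $\tfrac12\|X-Y\|_F\le\|e^X-e^Y\|_F\le\|X-Y\|_F$ by a telescoping/power-series estimate, and then divides the generators by $p$. Your route can be made to work, but the sketch has a concrete gap at exactly the step that matters: bounding $|z^{1/p}|$ by a constant on a contour a fixed distance from the spectra gives $\|U^{1/p}-V^{1/p}\|_F\le C\|U-V\|_F$ with $C$ \emph{independent of $p$}; the factor $1/p$ does not emerge from the Cauchy kernel ``after the change of variables.'' To extract it you must use that $z^{1/p}-1=O(|z-1|/p)$ near $z=1$ and that the constant term contributes nothing, since $\oint\bigl[(z-U)^{-1}-(z-V)^{-1}\bigr]\,dz=0$; only then does the numerator on a contour of radius $r$ around $1$ scale like $r/p$ and the constants close. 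A cleaner fix that avoids contours entirely (and also resolves the non-commutativity worry you correctly raise) is the exact Frobenius-norm divided-difference identity for normal matrices: writing $U=\sum_j e^{i\theta_j}P_j$ and $V=\sum_k e^{i\phi_k}Q_k$, one has $\|f(U)-f(V)\|_F\le \sup_{j,k}\bigl|\tfrac{f(e^{i\theta_j})-f(e^{i\phi_k})}{e^{i\theta_j}-e^{i\phi_k}}\bigr|\cdot\|U-V\|_F$, and for $f(z)=z^{1/p}$ on the small arc this supremum equals $\sup_u|\sin(u/p)/\sin u|\le \pi/(2p)\le 2/p$. Finally, the global-phase bookkeeping at the end is not merely ``harmless slack'': you must verify that the optimal relative phase between the centered representatives is itself close to $1$, so that the re-phased $U$ still satisfies the closeness-to-identity hypothesis needed for the branch of the $p$-th root (this is why the hypothesis carries $4/(25\pi)$ rather than $4/(5\pi)$); the paper devotes a separate paragraph to exactly this point.
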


Item 3 can be viewed as a version of \cite[Lemma 3.1]{haah2023query}.

\begin{proof}
    Item 1: From properties of the Haar integral (see e.g., \cite[Example 50]{mele2023introduction}), we have
    \begin{equation}
    \label{eq:davg}
        \davg(U, V)^2 = 1 - \frac{d+|\tr(U^\dagger V)|^2}{d(d+1)}.
    \end{equation}
    On the other hand, we have
    \begin{equation}
        d_F'^2(U, V) 
        = \min_{e^{i\phi}\in U(1)}\frac{1}{d}\|U-V e^{i\phi}\|_F^2 
        = \min_{e^{i\phi}\in U(1)}2 - \frac{2}{d}\mathrm{Re}[\tr(U^\dagger V e^{i\phi})] 
        = 2 - \frac{2}{d}|\tr(U^\dagger V)|.
    \end{equation}
    Combining them, we get
    \begin{equation}
        \davg(U, V)^2 = \frac{d}{d+1}d_F'^2(U, V)\left(1-\frac{d_F'^2(U, V)}{4}\right)\in \left[\frac{1}{4}d_F'^2(U, V), d_F'^2(U, V)\right],
    \end{equation}
    because $d_F'^2(U, V) \in [0, 2]$. Thus we have established Item 1.

    Item 2:
    From triangle inequality, we have
    \begin{equation}
        d_F'(U^p, V^p) \leq \sum_{k=1}^p d_F'(U^{p+1-k}V^{k-1}, U^{p-k}V^{k}) = \sum_{k=1}^p d_F'(U, V) = p d_F'(U, V)\, ,
    \end{equation}
    where we have used the unitary invariance of $d_F'$. This proves Item 2.

    Item 3:
    We first prove the following modified version without the global phase: ``If $d_F(U, I), d_F(V, I)\leq \frac{4/(5\pi)}{\sqrt{d}}$, then $d_F(U^{1/p}, V^{1/p})\leq \frac{2}{p}d_F(U, V)$.''
    Let $U=e^X, V=e^Y$ with $\|X\|, \|Y\|\leq \pi$. We can refine the bound on $\|X\|, \|Y\|$ by noting the following
    \begin{equation}
        \|X\|\leq \frac{\pi}{2}\|e^X-I\|\leq \frac{\pi}{2}\|U-I\|_F
        =\frac{\pi \sqrt{d}}{2}d_F(U, I)\leq \frac{2}{5},
    \end{equation}
    where the first inequality can be seen from eigenvalue analysis as follows: 
    Let $i\theta_k$ be the eigenvalues of $X$ with $|\theta_k|\leq \pi$. Then we have
    \begin{equation}
        \|X\|=\max_k |\theta_k| \leq \pi \max_k \left|\sin\frac{\theta_k}{2}\right| = \frac{\pi}{2} \max_k \left|e^{i\theta_k}-1\right| = \frac{\pi}{2}\|e^X-I\|.
    \end{equation}
    Similarly, we have $\|Y\|\leq 2/5$.

    Next, we prove the following inequality when $\|X\|, \|Y\|\leq 2/5$ (similar to~\cite[Appendix D]{barthel2018fundamental}):
    \begin{equation}\label{eq:frobenius-norm-generator-exponential-1}
        \frac{1}{2}\|X-Y\|_F \leq \|e^X-e^Y\|_F \leq \|X-Y\|_F.
    \end{equation}
    For the upper bound, we use the triangle inequality and a telescoping sum representation: For any $m\in\mathbb{N}$,
    \begin{equation}
        \|e^X-e^Y\|_F \leq \sum_{k=1}^m \|e^{(k-1)X/m}(e^{X/m}-e^{Y/m})e^{(m-k)Y/m}\|_F = m \|e^{X/m}-e^{Y/m}\|_F\, ,
    \end{equation}
    and by taking $m\to\infty$ we arrive at the upper bound.
    For the lower bound, note that by triangle inequality, we have
    \begin{equation}
        \|e^X-e^Y\|_F = \left\|\sum_{k=1}^{\infty} \frac{1}{k!}(X^k-Y^k)\right\|_F \geq \|X-Y\|_F - \left\|\sum_{k=2}^{\infty} \frac{1}{k!}(X^k-Y^k)\right\|_F.
    \end{equation}
    The second term can be upper bounded by
    \begin{equation}
    \begin{split}
        \left\|\sum_{k=2}^{\infty} \frac{1}{k!}(X^k-Y^k)\right\|_F &= \left\|\sum_{k=2}^{\infty} \sum_{l=1}^k \frac{1}{k!}X^{l-1}(X-Y)Y^{k-l}\right\|_F \\
        &\leq \sum_{k=2}^{\infty}\frac{k}{k!}\left(\frac{2}{5}\right)^k\|X-Y\|_F \\
        &= (e^{2/5}-1)\|X-Y\|_F\, ,
    \end{split}
    \end{equation}
    where we have used $\|AB\|_F\leq \|A\|\cdot \|B\|_F$ and $\|X\|, \|Y\|\leq 2/5$. 
    Plugging this bound back in, we arrive at the lower bound
    \begin{equation}\label{eq:frobenius-norm-generator-exponential-2}
        \|e^X-e^Y\|_F \geq (2-e^{2/5})\|X-Y\|_F\geq \frac{1}{2}\|X-Y\|_F.
    \end{equation}\Cref{eq:frobenius-norm-generator-exponential-1} in particular implies
    \begin{equation}
        d_F(U^{1/p}, V^{1/p})
        \leq\frac{1}{p\sqrt{d}}\|X-Y\|_F
        \leq\frac{2}{p}d_F(U, V)\, ,
    \end{equation}
    and thus the modified version of our claim.

    Finally, we deal with the global phase and prove the $d_F'$ version, where we assume $d_F'(U, I), d_F'(V, I)\leq\frac{4/(25\pi)}{\sqrt{d}}$. 
    Let $e^{i\phi_U}, e^{i\phi_V}, e^{i\phi}\in U(1)$ denote the global phases that minimize $d_F(U, Ie^{i\phi_U}), d_F(V, Ie^{i\phi_V})$ and $d_F(Ue^{-i\phi_U}, Ve^{-i\phi_V} e^{i\phi})$, respectively. 
    Then $d_F(U, Ie^{i\phi_U}), d_F(V, Ie^{i\phi_V})\leq \frac{4/(25\pi)}{\sqrt{d}}$ by assumption, and $d_F(Ue^{-i\phi_U}, Ve^{-i\phi_V})\leq d_F(U, Ie^{i\phi_U}) + d_F(V, Ie^{i\phi_V}) \leq \frac{8/(25\pi)}{\sqrt{d}}$. 
    Therefore,
    \begin{equation}
    \begin{split}
        d_F(e^{i\phi}, I) 
        &\leq d_F(e^{i\phi}, (Ve^{-i\phi_V})^\dagger (Ue^{-i\phi_U})) + d_F((Ve^{-i\phi_V})^\dagger (Ue^{-i\phi_U}), I) \\
        &= d_F(Ue^{-i\phi_U}, Ve^{-i\phi_V}e^{i\phi}) + d_F(Ue^{-i\phi_U}, Ve^{-i\phi_V}) 
        \\
        &\leq 2d_F(Ue^{-i\phi_U}, Ve^{-i\phi_V}) 
        \\
        &\leq \frac{16/(25\pi)}{\sqrt{d}}\, .
    \end{split}
    \end{equation}
    This means that $d_F(Ue^{-i\phi_U} e^{-i\phi}, I)\leq d_F(U, Ie^{i\phi_U}) + d_F(e^{i\phi}, I) \leq \frac{(4+16)/(25\pi)}{\sqrt{d}}=\frac{4/(5\pi)}{\sqrt{d}}.$ 
    We also know that $d_F(Ve^{-i\phi_V}, I)\leq \frac{4/(25\pi)}{\sqrt{d}}\leq \frac{4/(5\pi)}{\sqrt{d}}$. 
    Thus the two matrices $Ue^{-i\phi_U}e^{-i\phi}$ and $Ve^{-i\phi_V}$ satisfy the condition of the modified version without global phase, and we thus have
    \begin{equation}
    \begin{split}
        d_F'(U^{1/p}, V^{1/p})
        &\leq d_F'(U^{1/p}, V^{1/p}(e^{-i\phi_V})^{1/p}(e^{i\phi_U})^{1/p} (e^{i\phi})^{1/p})\\
        &=d_F((Ue^{-i\phi_U} e^{-i\phi})^{1/p}, (Ve^{-i\phi_V})^{1/p})
        \\
        &\leq\frac{2}{p}d_F(Ue^{-i\phi_U} e^{-i\phi}, Ve^{-i\phi_V})
        =d_F'(U, V). 
    \end{split}
    \end{equation}
    This concludes the proof of Item 3.
\end{proof}

Haar-random states are in general hard to generate. 
One may want to use other ensembles of input states and the associated distance metric for average-case learning.
A class of ensembles of physical interest is that of locally scrambled ensembles \cite{kuo2020markovian, hu2023classical} defined as follows:

\begin{definition}[Locally scrambled ensembles up to the second moment]
\label{def:local-scram-ensem}
An ensemble $\mathcal{S}$ of (i.e., a distribution over) $n$-qubit states is called a locally scrambled ensemble up to the second moment if it is of the form $\mathcal{S}=\mathcal{U}\ket{0}^{\otimes n}$, where $\mathcal{U}$ is an ensemble of unitaries that is locally scrambled up to the second moment. 
That is, there exists another unitary ensemble $\mathcal{U}'$, such that: (1) for any $U'$ randomly sampled from $\mathcal{U}'$ and for any tensor product of single-qubit unitaries $\otimes_{i=1}^n U_i$, $U'\otimes_{i=1}^n U_i$ follows the same distribution of $\mathcal{U}'$; and (2) for any $2n$-qubit density matrices $\rho$, we have $\mathbb{E}_{U\sim \mathcal{U}}[U^{\otimes 2}\rho (U^\dagger)^{\otimes 2}] = \mathbb{E}_{U'\sim \mathcal{U}'}[U'^{\otimes 2}\rho (U'^\dagger)^{\otimes 2}]$.  
We use $\mathbb{S}_\mathrm{LS}^{(2)}$ to denote the set of all such state ensembles.
\end{definition}

Notable examples of these ensembles include $n$-qubit Haar-random states, products of Haar-random single-qubit states, products of random single-qubit stabilizer states, $2$-designs on $n$-qubit states, and output states of random local quantum circuits with any fixed architecture.
The following lemma from the study of out-of-distribution generalization \cite{caro2022out} shows that these ensembles lead to mutually equivalent average-case distance metrics.

\begin{lemma}[Equivalence of locally scrambled average-case distances {\cite[Theorem 1]{caro2022out}}]
We denote by $d_P (U,V) = \sqrt{\E_{\ket{\psi}\sim P}[\dtr(U\ket{\psi}, V\ket{\psi})^2]}$ the root mean squared trace distance with respect to an ensemble $P$. For any $P, Q\in \mathbb{S}_\mathrm{LS}^{(2)}$ and for any unitaries $U, V$, we have
\begin{equation}
    \frac{1}{\sqrt{2}}d_Q(U, V)\leq d_P(U, V) \leq \sqrt{2}d_Q(U, V).
\end{equation}
\label{lem:equi_local_scram}
\end{lemma}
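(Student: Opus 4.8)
\textbf{Proof proposal for \Cref{lem:equi_local_scram} (Equivalence of locally scrambled average-case distances).}

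The plan is to reduce the statement to a single unified quantity computed from second moments of the underlying unitary ensembles, and then to read off the claimed two-sided bound. First I would rewrite the squared root mean squared trace distance $d_P(U,V)^2 = \E_{\ket{\psi}\sim P}[\dtr(U\ket{\psi},V\ket{\psi})^2]$ using the pure-state identity $\dtr(U\ket\psi, V\ket\psi)^2 = 1 - |\!\braket{\psi}{U^\dagger V|\psi}\!|^2$, so that $d_P(U,V)^2 = 1 - \E_{\ket{\psi}\sim P}\big[|\!\braket{\psi}{W|\psi}\!|^2\big]$ with $W \triangleq U^\dagger V$. The quantity $\E_{\ket\psi\sim P}\big[|\!\braket{\psi}{W|\psi}\!|^2\big] = \E_{\ket\psi\sim P}\big[\tr\!\big((W\otimes W^\dagger)\,\ketbra{\psi}^{\otimes 2}\big)\big]$ is a linear functional of the second moment operator $M_P \triangleq \E_{\ket\psi\sim P}[\ketbra{\psi}^{\otimes 2}]$, so everything comes down to controlling $M_P$ versus $M_Q$.

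The key step is to exploit \Cref{def:local-scram-ensem}: if $P = \mathcal{U}\ket{0}^{\otimes n}$ and $Q = \mathcal{V}\ket{0}^{\otimes n}$ with $\mathcal{U},\mathcal{V}$ locally scrambled up to the second moment, then $M_P = \E_{U\sim\mathcal U}[U^{\otimes 2}]\,\ketbra{0^n}^{\otimes 2}\,\E[(U^\dagger)^{\otimes 2}]$ is invariant under conjugation by $(\bigotimes_i u_i)^{\otimes 2}$ for arbitrary single-qubit unitaries $u_i$; the same holds for $M_Q$. By Schur–Weyl / twirling over the local unitary group on each qubit, any such invariant operator on $(\mathbb{C}^2)^{\otimes 2}$ per qubit lies in the span of $\{\mathbb{I}, \mathrm{SWAP}\}$ on that qubit's two copies, hence $M_P$ and $M_Q$ both lie in the commutant, i.e., they are linear combinations of $\bigotimes_i P_i$ where each $P_i \in \{\mathbb{I}_4, \mathrm{SWAP}_i\}$. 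One then pins down the coefficients using the two normalization constraints $\tr M = 1$ and $\tr(\mathrm{SWAP}^{\otimes n} M) = 1$ (the latter because $\tr(\mathrm{SWAP}^{\otimes n}\ketbra\psi^{\otimes 2}) = \braket{\psi}{\psi}^2 = 1$ for pure states). This is not quite enough to fix $M_P$ uniquely — the per-qubit purities $\tr(\mathrm{SWAP}_i \, M)$ can differ between ensembles — which is exactly why the lemma gives an \emph{equivalence} with constant $\sqrt2$ rather than an equality. The cleanest route, and the one I would actually follow, is to instead invoke the result of \cite{caro2022out} in the form already packaged there: the map $(U,V)\mapsto d_P(U,V)^2$ for locally scrambled $P$ can be sandwiched between two fixed reference quantities (e.g. the Haar value $d_{\mathrm{Haar}}(U,V)^2 = 1 - \frac{d + |\tr W|^2}{d(d+1)}$ from \Cref{eq:davg}, and a product-of-single-qubit-Haar ensemble value), with multiplicative slack at most a factor of $2$ in the squared distance; taking square roots then yields the factor $\sqrt2$. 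Concretely, I would show $\tfrac12\, d_{\mathrm{Haar}}(U,V)^2 \le d_P(U,V)^2 \le 2\, d_{\mathrm{Haar}}(U,V)^2$ for every $P\in\mathbb{S}^{(2)}_{\mathrm{LS}}$ by expanding $1 - \tr((W\otimes W^\dagger)M_P)$ in the $\{\mathbb{I},\mathrm{SWAP}\}^{\otimes n}$ basis and bounding each of the $2^n$ coefficients against the Haar coefficient, using that each per-qubit purity contribution lies in $[\,\text{something}\,,1]$; then chaining the bound for $P$ with the bound for $Q$ gives $\tfrac12 d_Q(U,V)^2 \le d_P(U,V)^2 \le 2\, d_Q(U,V)^2$, and square-rooting finishes the proof.

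The main obstacle I anticipate is the combinatorial bookkeeping in the last step: expanding $\tr\big((W\otimes W^\dagger)\bigotimes_i P_i\big)$ over all $2^n$ subsets $S\subseteq[n]$ of qubits carrying a $\mathrm{SWAP}$, and showing that replacing the ensemble only perturbs the \emph{weighting} of these terms by a bounded multiplicative factor uniformly — in particular, checking that the ``diagonal'' term $S=[n]$ (which contributes $|\tr W|^2/d$-type quantities) and the ``identity'' term $S=\emptyset$ (which contributes $1$) dominate, and that all intermediate terms are nonnegative or at least controllably signed. Getting a clean uniform factor of $2$ rather than an $n$-dependent factor requires being careful that the per-qubit local-scrambling constraint forces each per-qubit reduced second moment to be exactly $\tfrac{1}{3}(\mathbb{I} + \mathrm{SWAP})$-like up to the global correlation structure, which is precisely where \Cref{def:local-scram-ensem} condition (1) (closure under left multiplication by local unitaries) does the heavy lifting. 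If the direct expansion proves unwieldy, the fallback is to cite \cite[Theorem 1]{caro2022out} essentially verbatim, since the lemma as stated is exactly their result, and merely sketch the second-moment reduction above as the conceptual content.
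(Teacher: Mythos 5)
The paper does not actually prove \Cref{lem:equi_local_scram}: it is imported verbatim from \cite[Theorem 1]{caro2022out}, so your fallback of citing that result is exactly what the paper does, and is the only part of your proposal that is unproblematic. However, the proof sketch you put forward as your primary route has two concrete flaws worth flagging. First, the structural claim that $M_P = \E_{\ket{\psi}\sim P}[\ketbra{\psi}^{\otimes 2}]$ is invariant under conjugation by $(\bigotimes_i u_i)^{\otimes 2}$ and hence lies in $\mathrm{span}\{\mathbb{I},\mathrm{SWAP}\}^{\otimes n}$ is false. \Cref{def:local-scram-ensem} only requires that $\mathcal{U}'$ be closed under \emph{right} multiplication by local unitaries (the local unitaries act before $U'$), which merely replaces the input $\ketbra{0^n}^{\otimes 2}$ by the locally symmetrized state $\bigotimes_i \tfrac{1}{3}\Pi_{\mathrm{sym},i}$; it does not give invariance of $M_P$ under \emph{left} conjugation. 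A counterexample is $\mathcal{U}' = \{U_0\cdot\bigotimes_i u_i\}$ for a fixed entangling $U_0$ and Haar-random $u_i$: this is locally scrambled, yet $M_P = U_0^{\otimes 2}\bigl(\bigotimes_i \tfrac{1}{3}\Pi_{\mathrm{sym},i}\bigr)U_0^{\dagger\otimes 2}$ is generically not in the local-swap span. So the Schur--Weyl reduction you lean on does not apply, and the subsequent coefficient bookkeeping has no basis to start from.

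Second, even granting a two-sided sandwich $\tfrac12\, d_{\mathrm{Haar}}^2 \le d_P^2 \le 2\, d_{\mathrm{Haar}}^2$ for every locally scrambled $P$, chaining it for $P$ and $Q$ yields $d_P^2 \le 2\, d_{\mathrm{Haar}}^2 \le 4\, d_Q^2$, i.e.\ only $d_P \le 2\, d_Q$ --- a factor $2$, not the claimed $\sqrt{2}$. To recover $\sqrt{2}$ one needs the \emph{one-sided} sandwich $\tfrac12\, c(U,V) \le d_P(U,V)^2 \le c(U,V)$ for a single ensemble-independent quantity $c(U,V)$ (this is the form in which \cite{caro2022out} actually proves the statement); then $d_P^2 \le c \le 2\, d_Q^2$ gives the right constant. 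As written, your sketch would at best establish a weaker version of the lemma. Since the paper treats this as a black-box citation, I would recommend doing the same and dropping the sketch, or else reworking it around the correct normal form of $M_P$ and the one-sided sandwich.
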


The following lemma shows that the triangle inequality holds for $d_P$ (and in particular, $\davg$).

\begin{lemma}[Triangle inequality for average-case distance]
\label{lem:tri-ineq-avg}
Let $d_P (U,V) = \sqrt{\E_{\ket{\psi}\sim P}[\dtr(U\ket{\psi}, V\ket{\psi})^2]}$ be the root mean squared trace distance with respect to an ensemble $P$.
For any three unitaries $U, V$ and $W$, we have the triangle inequality
\begin{equation}
    d_P(U, V)\leq d_P(U, W) + d_P(W, V).
\end{equation}
\end{lemma}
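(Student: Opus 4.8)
The plan is to reduce the claim to two standard facts: the pointwise triangle inequality for trace distance, and Minkowski's inequality for the $L^2(P)$ norm. Concretely, for each fixed pure state $\ket{\psi}$, the quantities $\dtr(U\ket{\psi}, V\ket{\psi})$, $\dtr(U\ket{\psi}, W\ket{\psi})$, $\dtr(W\ket{\psi}, V\ket{\psi})$ are values of a genuine metric (the trace distance on density matrices, restricted to the pure states $U\ketbra{\psi}U^\dagger$, etc.), so the triangle inequality
\begin{equation}
    \dtr(U\ket{\psi}, V\ket{\psi}) \leq \dtr(U\ket{\psi}, W\ket{\psi}) + \dtr(W\ket{\psi}, V\ket{\psi})
\end{equation}
holds for every $\ket{\psi}$ in the support of $P$.

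Next I would view $d_P(U,V) = \|f\|_{L^2(P)}$ where $f(\psi) \triangleq \dtr(U\ket{\psi}, V\ket{\psi}) \geq 0$, and similarly set $g(\psi) \triangleq \dtr(U\ket{\psi}, W\ket{\psi})$ and $h(\psi) \triangleq \dtr(W\ket{\psi}, V\ket{\psi})$. The pointwise bound above reads $0 \leq f \leq g + h$ everywhere on the support of $P$. Since the map $t \mapsto t^2$ is monotone on $[0,\infty)$ and $f, g+h$ are both nonnegative, we get $f^2 \leq (g+h)^2$ pointwise, hence $\E_{\ket{\psi}\sim P}[f^2] \leq \E_{\ket{\psi}\sim P}[(g+h)^2]$, i.e. $\|f\|_{L^2(P)} \leq \|g+h\|_{L^2(P)}$.

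Finally, I would invoke Minkowski's inequality (the triangle inequality for the $L^2(P)$ norm), $\|g+h\|_{L^2(P)} \leq \|g\|_{L^2(P)} + \|h\|_{L^2(P)}$, which is valid since $g,h \in L^2(P)$ (they are bounded by $1$ and $P$ is a probability measure). Chaining the two displayed inequalities gives
\begin{equation}
    d_P(U,V) = \|f\|_{L^2(P)} \leq \|g\|_{L^2(P)} + \|h\|_{L^2(P)} = d_P(U,W) + d_P(W,V),
\end{equation}
as desired. There is no real obstacle here — the only things to be careful about are (i) stating clearly that the pointwise triangle inequality for trace distance is just metricity of $\dtr$ applied to the pure states $U\ket{\psi}$, $V\ket{\psi}$, $W\ket{\psi}$, and (ii) checking integrability so that Minkowski's inequality applies, which is immediate from the uniform bound $\dtr \leq 1$. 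The statement for $\davg$ follows as the special case $P = $ Haar measure.
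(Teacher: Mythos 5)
Your proof is correct and takes essentially the same route as the paper: both apply the pointwise triangle inequality for $\dtr$ and then the triangle inequality for the $L^2(P)$ norm. The only cosmetic difference is that you cite Minkowski's inequality as a black box, whereas the paper re-derives it in place by expanding the square and bounding the cross term with Cauchy--Schwarz.
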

\begin{proof}
    Note that
    \begin{equation}
    \begin{split}
        &d_{P}^2(U, V) 
        = \mathbb{E}_{\ket{\psi}\sim P}[\dtr(U\ket{\psi}, V\ket{\psi})^2]
        \leq \mathbb{E}_{\ket{\psi}\sim P}[\left(\dtr(U\ket{\psi}, W\ket{\psi}) + \dtr(W\ket{\psi}, V\ket{\psi})\right)^2] \\
        &= d_P^2(U, W) +  d_P^2(W, V) + 2 \mathbb{E}_{\ket{\psi}\sim P}[\dtr(U\ket{\psi}, W\ket{\psi}) \dtr(W\ket{\psi}, V\ket{\psi}) ] \\
        &\leq d_P^2(U, W) +  d_P^2(W, V) + 2 \sqrt{\mathbb{E}_{\ket{\psi}\sim P}[\dtr(U\ket{\psi}, W\ket{\psi})^2]}\cdot \sqrt{\mathbb{E}_{\ket{\psi}\sim P}[\dtr(W\ket{\psi}, V\ket{\psi})^2]} \\
        &= \left( d_{P}(U, W) + d_{P}(W, V)\right)^2,
    \end{split}
    \end{equation}
    where we have used the triangle inequality for $\dtr$ and the Cauchy-Schwartz inequality.
    Taking the square root gives us the desired result.
\end{proof}

\subsection{Covering and packing nets}
Our results in state and unitary learning utilize a tool from high-dimensional probability theory, namely covering and packing nets.
We employ covering nets in our proofs of the sample complexity upper bounds and packing nets in our proofs of sample complexity lower bounds.
Intuitively, covering and packing nets characterize the complexity of a space by discretizing it with small balls of a given resolution.
We formally define these concepts below.

\begin{definition}[Covering net/number and metric entropy]
    Let $(X,d)$ be a metric space. Let $K \subseteq X$ be a subset and $\epsilon > 0$. Then, define the following.
    \begin{itemize}
        \item $N \subseteq K$ is an \emph{$\epsilon$-covering net} of $K$ if for any $x \in K$, there exists a $y \in N$ such that $d(x,y) \leq \epsilon$.
        \item The \emph{covering number} $\mathcal{N}(K, d, \epsilon)$ of $K$ is the smallest possible cardinality of an $\epsilon$-covering net of $K$.
        \item The \emph{metric entropy} is $\log \mathcal{N}(K,d,\epsilon)$.
    \end{itemize}
\end{definition}

We can similarly define a packing net.

\begin{definition}[Packing net/number]
    Let $(X,d)$ be a metric space. Let $K \subseteq X$ be a subset and $\epsilon > 0$. Then, define the following.
    \begin{itemize}
        \item $N \subseteq K$ is an \emph{$\epsilon$-packing net} of $K$ if for any $x,y \in N$, $d(x,y) > \epsilon$.
        \item The \emph{packing number} $\mathcal{M}(K, d, \epsilon)$ of $K$ is the largest possible cardinality of an $\epsilon$-packing net of $K$.
    \end{itemize}
\end{definition}

The following equivalence between covering and packing numbers is often useful.

\begin{lemma}[Covering and packing are equivalent, {\cite[Section 4.2]{vershynin2018high}}]
\label{lem:equi-cov-pack}
Let $(X, d)$ be a metric space. Let $K\subseteq X$ and $\epsilon>0$. We have
\begin{equation}
    \mathcal{N}(K, d, \epsilon/2) \geq \mathcal{M}(K, d, \epsilon) \geq \mathcal{N}(K, d, \epsilon).
\end{equation}
\end{lemma}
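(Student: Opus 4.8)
The plan is to establish the two inequalities separately, each by an elementary combinatorial argument that exploits the triangle inequality together with the interplay between the two net definitions.

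For the right-hand inequality $\mathcal{M}(K,d,\epsilon) \geq \mathcal{N}(K,d,\epsilon)$, I would take a maximum-cardinality $\epsilon$-packing net $N$ of $K$ (a largest packing net is in particular maximal; if one worries about infinite $K$ where a largest net need not exist one may either restrict to the regime where the numbers are finite, which is all we ever apply this to, or invoke Zorn's lemma to obtain a maximal $N$). The key point is that maximality forces $N$ to also be an $\epsilon$-covering net: if some $x \in K$ satisfied $d(x,y) > \epsilon$ for every $y \in N$, then $N \cup \{x\}$ would still be an $\epsilon$-packing net, contradicting maximality. Hence every point of $K$ lies within distance $\epsilon$ of $N$, so $\mathcal{N}(K,d,\epsilon) \leq |N| = \mathcal{M}(K,d,\epsilon)$.

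For the left-hand inequality $\mathcal{N}(K,d,\epsilon/2) \geq \mathcal{M}(K,d,\epsilon)$, I would fix a maximum-cardinality $\epsilon$-packing net $N$ and a minimum-cardinality $\epsilon/2$-covering net $M$, and construct an injection $f : N \to M$. For each $x \in N$, the covering property of $M$ yields some element within distance $\epsilon/2$; let $f(x)$ be one such choice. If $f(x_1) = f(x_2)$ for distinct $x_1,x_2 \in N$, then by the triangle inequality $d(x_1,x_2) \leq d(x_1,f(x_1)) + d(f(x_2),x_2) \leq \epsilon/2 + \epsilon/2 = \epsilon$, contradicting the packing property $d(x_1,x_2) > \epsilon$. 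So $f$ is injective and $\mathcal{M}(K,d,\epsilon) = |N| \leq |M| = \mathcal{N}(K,d,\epsilon/2)$.

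The only real care needed — and it is genuinely minor — is keeping the strict inequality $d(x,y) > \epsilon$ in the packing definition aligned with the non-strict $d(x,y) \leq \epsilon$ in the covering definition across both arguments; as written, the two halves of the covered distance sum to exactly $\epsilon$ while the packing gap is strictly larger than $\epsilon$, so the contradiction is clean, and the factor $2$ loss between $\mathcal{N}(\cdot,\cdot,\epsilon/2)$ and $\mathcal{N}(\cdot,\cdot,\epsilon)$ is exactly what this mismatch produces. I do not anticipate any substantive obstacle.
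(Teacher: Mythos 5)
Your proof is correct and is exactly the standard argument from the cited reference (Vershynin, Lemma 4.2.8), which the paper itself does not reproduce: maximal packings are coverings, and the triangle-inequality injection gives the factor-of-two comparison. Both halves, including your handling of the strict versus non-strict inequalities and of possibly infinite cardinalities, are sound.
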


Covering numbers also have the following monotonicity property.

\begin{lemma}[Monotonicity of covering number, {\cite[Section 4.2]{vershynin2018high}}]
    Let $(K, d)$ be a metric space. If $L\subseteq K$, then $\mathcal{N}(L, d, \epsilon) \leq \mathcal{N}(K, d, \epsilon/2)$.
\end{lemma}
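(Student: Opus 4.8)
The plan is a standard ``push-into-the-subset'' argument. The only subtlety is that our definition of an $\epsilon$-covering net of $L$ requires the net to be a subset of $L$ itself, so a minimal $(\epsilon/2)$-covering net of $K$ cannot be used verbatim; we must relocate each of its points into $L$, and this relocation is exactly what costs the factor of $2$ in the resolution.

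First I would fix an optimal $(\epsilon/2)$-covering net $N \subseteq K$ of $K$, so that $|N| = \mathcal{N}(K, d, \epsilon/2)$. For each $y \in N$ for which the ball $\{x \in X : d(x,y) \le \epsilon/2\}$ has nonempty intersection with $L$, choose one point $z_y$ in that intersection; discard the $y$'s whose balls miss $L$. Let $N'$ be the set of chosen points $z_y$. By construction $N' \subseteq L$ and $|N'| \le |N| = \mathcal{N}(K, d, \epsilon/2)$.

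Next I would verify that $N'$ is an $\epsilon$-covering net of $L$. Take any $x \in L$. Since $L \subseteq K$, there is some $y \in N$ with $d(x,y) \le \epsilon/2$; in particular $x$ witnesses that the ball around $y$ meets $L$, so $z_y$ is defined. Then the triangle inequality gives $d(x, z_y) \le d(x,y) + d(y, z_y) \le \epsilon/2 + \epsilon/2 = \epsilon$. Hence every point of $L$ is within $\epsilon$ of a point of $N' \subseteq L$, so $N'$ is a valid $\epsilon$-covering net of $L$, and therefore $\mathcal{N}(L, d, \epsilon) \le |N'| \le \mathcal{N}(K, d, \epsilon/2)$.

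There is essentially no hard step here: the argument is purely the triangle inequality plus the bookkeeping needed to respect the ``net $\subseteq$ set'' convention in the definition. The one point worth stating carefully is why $z_y$ exists whenever it is needed — namely that the point $x \in L$ we are trying to cover is itself a witness that $L$ meets the $(\epsilon/2)$-ball around its covering center $y$ — so that the relocated net $N'$ still covers all of $L$.
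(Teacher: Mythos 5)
Your proof is correct and is exactly the standard argument from the cited reference (the paper itself states this lemma without proof): re-center a minimal $(\epsilon/2)$-net of $K$ inside $L$ and pay the factor of $2$ via the triangle inequality. The point you flag — that the witness $x\in L$ guarantees the relocated center $z_y$ exists whenever it is needed — is precisely the right detail to make the "net $\subseteq$ set" convention work.
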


For our purposes, we need the following upper and lower bounds on the covering number of the unitary group.
Since the states that we consider can be generated by unitaries applied to a fixed input state, a covering number upper bound for unitaries with respect to the diamond distance implies a corresponding covering number upper bound for states with respect to the trace distance.

\begin{lemma}[Covering number of the unitary group, {\cite[Proposition 7]{szarek1982nets}}, {\cite[Lemma 1]{barthel2018fundamental}} and {\cite[Lemma C.1]{caro2022generalization}}]
\label{lem:2q-unitaries}
    Let $\norm{\cdot}'$ be any unitarily invariant norm. there exist universal constants $c_1, c_2>0$ such that for any $\epsilon\in(0, 2]$, the covering number of the $d$-dimensional unitary group $U(d)$ with respect to the norm $\norm{\cdot}'$ satisfies:
    \begin{equation}
        \left(\frac{c_1}{\epsilon}\right)^{d^2}
        \leq 
        \mathcal{N}(U(d), \norm{\cdot}', \norm{I}'\epsilon) 
        \leq 
        \left(\frac{c_2}{\epsilon}\right)^{d^2}.
    \end{equation}
    In particular, for the spectral norm $\norm{\cdot}$, we have the upper bound $\mathcal{N}(U(d), \norm{\cdot}, \epsilon) \leq \left(6/\epsilon\right)^{2d^2}$.
    For the Frobenius norm $\norm{\cdot}_F$, we have $\left(c_1/\epsilon\right)^{d^2} \leq \mathcal{N}(U(d), \norm{\cdot}_F, \sqrt{d}\epsilon) \leq \left(c_2/\epsilon\right)^{d^2}$.
\end{lemma}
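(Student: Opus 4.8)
\emph{Approach.} I would prove the two inequalities separately, each by a volume‑ratio argument, but — crucially, in order to get the exponent $d^2$ rather than the naive $2d^2$ coming from the ambient space $\mathbb{C}^{d\times d}\cong\mathbb{R}^{2d^2}$ — carried out in the $d^2$‑dimensional real vector space $\mathcal{H}_d$ of Hermitian matrices, via the exponential parametrization $U=e^{iH}$.

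\emph{Upper bound.} First I would record the Lipschitz property of the exponential map: from the Duhamel identity $e^{iH}-e^{iK}=i\int_0^1 e^{itH}(H-K)e^{i(1-t)K}\,dt$ together with $\|AXB\|'\le\|A\|_\infty\|X\|'\|B\|_\infty$ (valid for any unitarily invariant norm), one gets $\|e^{iH}-e^{iK}\|'\le\|H-K\|'$ with no restriction on $H,K$. Since every $U\in U(d)$ equals $e^{iH}$ for some Hermitian $H$ with eigenvalues in $(-\pi,\pi]$, the map $\exp(i\,\cdot)$ is a surjective $1$‑Lipschitz map from $B:=\{H\in\mathcal{H}_d:\|H\|_\infty\le\pi\}$ onto $U(d)$; and by monotonicity of the symmetric gauge function defining $\|\cdot\|'$, $B$ lies inside the $\|\cdot\|'$‑ball of radius $\pi\|I\|'$ in the $d^2$‑dimensional space $\mathcal{H}_d$. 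A standard volumetric covering estimate then gives $\mathcal{N}(B,\|\cdot\|',\epsilon\|I\|')\le(c_2/\epsilon)^{d^2}$ for $\epsilon\le 2$, and pushing this cover forward through $\exp(i\,\cdot)$ yields the upper bound, using that the image of a cover under a $1$‑Lipschitz surjection is a cover. For the clean spectral‑norm constant (at the weaker exponent $2d^2$) I would instead argue directly in $\mathbb{C}^{d\times d}$: a maximal $\epsilon$‑separated subset of $U(d)\subseteq\{\|M\|_\infty\le 1\}$ is an $\epsilon$‑cover, and disjointness of the radius‑$\epsilon/2$ balls around it — all contained in $\{\|M\|_\infty\le 1+\epsilon/2\}$ — forces cardinality $\le(1+2/\epsilon)^{2d^2}\le(6/\epsilon)^{2d^2}$ for $\epsilon\le 2$.

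\emph{Lower bound.} Here I would build a large packing net of $U(d)$ and convert it to a covering‑number bound via \Cref{lem:equi-cov-pack}. Near the identity the exponential map is also Lipschitz \emph{from below}: for $\|H\|_\infty,\|K\|_\infty\le\delta$ with $\delta$ a small absolute constant, expanding $e^{iH}-e^{iK}-i(H-K)=\sum_{n\ge 2}\frac{i^n}{n!}(H^n-K^n)$ and using $\|H^n-K^n\|'\le n\delta^{n-1}\|H-K\|'$ gives $\|e^{iH}-e^{iK}\|'\ge(2-e^{\delta})\|H-K\|'\ge\tfrac12\|H-K\|'$. Thus any $\|\cdot\|'$‑separated family inside $S_\delta:=\{H\in\mathcal{H}_d:\|H\|_\infty\le\delta\}$ exponentiates to a family of unitaries separated by half as much. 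A volume‑ratio packing estimate in $\mathcal{H}_d$ then shows $S_\delta$ contains a $2\epsilon\|I\|'$‑separated set of size at least $(\delta/2\epsilon)^{d^2}\cdot\big(\mathrm{vol}(S_1)/(\|I\|'^{\,d^2}\,\mathrm{vol}(\{H\in\mathcal{H}_d:\|H\|'\le 1\}))\big)$, which exponentiates to an $\epsilon\|I\|'$‑packing net of $U(d)$ of at least that cardinality; \Cref{lem:equi-cov-pack} together with a rescaling $\epsilon\mapsto 2\epsilon$ then gives the stated covering‑number lower bound, and the Frobenius case follows by substituting $\|I\|_F=\sqrt d$.

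\emph{Main obstacle.} Everything above is routine except the volume ratio $\mathrm{vol}(S_1)/(\|I\|'^{\,d^2}\,\mathrm{vol}(\{\|H\|'\le 1\}))$: the crude bound from $\|H\|'\ge\|P_1\|'\|H\|_\infty$ only gives $(\|P_1\|'/\|I\|')^{d^2}$, which degrades to $d^{-d^2}$ for the trace norm and would destroy the claim of a \emph{universal} base $c_1$ (uniform both in $d$ and in the choice of unitarily invariant norm). Establishing that this ratio is nevertheless $\ge c^{d^2}$ for an absolute constant $c$ requires the sharp asymptotics of the volumes of Schatten unit balls — equivalently, a precise computation of the Riemannian volume of $U(d)$ in the metric induced by $\|\cdot\|'$ — which is exactly the content of Szarek's estimates. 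For this step I would invoke \cite{szarek1982nets} rather than re‑derive it; the remaining bookkeeping then assembles both stated inequalities.
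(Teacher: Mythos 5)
The paper does not prove this lemma at all: it is imported wholesale by citation to Szarek's nets paper and its descendants, so there is no internal proof to compare against. Your reconstruction is sound and, if anything, more informative than the paper's treatment. The upper bound is complete: the Duhamel identity plus $\norm{AXB}'\leq\norm{A}_\infty\norm{X}'\norm{B}_\infty$ does give $1$-Lipschitzness of $H\mapsto e^{iH}$ in any unitarily invariant norm, the containment of $\{\norm{H}_\infty\leq\pi\}$ in the $\pi\norm{I}'$-ball follows from monotonicity of the symmetric gauge function, and working in the $d^2$-dimensional real space of Hermitian matrices is exactly what recovers the exponent $d^2$ instead of $2d^2$; the separate $(6/\epsilon)^{2d^2}$ spectral-norm bound via the ambient volumetric argument is also standard and correct. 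For the lower bound, your bi-Lipschitz estimate $\norm{e^{iH}-e^{iK}}'\geq(2-e^{\delta})\norm{H-K}'$ near the identity is the same telescoping computation the paper itself uses elsewhere (in the proof of the quotient-Frobenius lemma), and the reduction to a packing of a small Hermitian ball followed by \Cref{lem:equi-cov-pack} is the right skeleton. You have also correctly isolated the one genuinely non-routine ingredient: the uniform-in-$d$ and uniform-in-norm lower bound on the volume ratio $\mathrm{vol}(\{\norm{H}_\infty\leq 1\})/\bigl(\norm{I}'^{\,d^2}\,\mathrm{vol}(\{\norm{H}'\leq 1\})\bigr)\geq c^{d^2}$, which is precisely Szarek's contribution and cannot be obtained from the crude norm comparison (as you note, that degrades to $d^{-d^2}$ for the trace norm). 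Deferring to \cite{szarek1982nets} there is exactly what the paper does for the entire statement, so your proposal closes the gap between the citation and a self-contained argument everywhere except at the one point where the citation is genuinely load-bearing. One small remark: for the two norms the paper actually uses downstream (spectral and Frobenius), the needed volume ratio can alternatively be read off from known Schatten-ball volume asymptotics without the full generality of Szarek's theorem, which would make those two special cases fully self-contained.
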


We can use this result to bound the covering number for $n$-qubit unitaries consisting of $G$ two-qubit gates.

\begin{theorem}[Covering number of $G$-gate unitaries]
\label{thm:covering}
    Let $U^G \subseteq U(2^n)$ be the set of $n$-qubit unitaries that can be implemented by $G$ two-qubit gates. 
    Then for any $\epsilon \in (0, 1]$, there exist universal constants $c_1, c_2, C>0$ such that for $1\leq G/C\leq 4^{n+1}$, the metric entropy of $U^G$ with respect to the normalized Frobenius distance $d_F$ can be bounded as
    \begin{equation}
        \frac{G}{4C}\log(\frac{c_1}{\epsilon})\leq \log\mathcal{N}(U^G, d_F, \epsilon) \leq 16G\log(\frac{c_2G}{\epsilon}) + 2G\log n.
    \end{equation}
    Moreover, the metric entropy with respect to diamond distance $\dworst$ can be explicitly upper bounded by
    \begin{equation}
        \log\mathcal{N}(U^G, \dworst, \epsilon) \leq 32G\log(\frac{12G}{\epsilon}) + 2G\log n.
    \end{equation}
\end{theorem}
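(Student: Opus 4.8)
The plan is to obtain the upper bounds by a \emph{cover-each-gate-then-union-over-architectures} argument and the lower bound by embedding a full unitary group on $\Theta(\log_4 G)$ qubits into $U^G$. For the upper bounds I first fix a \emph{circuit architecture}, i.e.\ a choice for each of the $G$ gates of which (unordered) pair of qubits it acts on; there are at most $\binom{n}{2}^G\le n^{2G}$ architectures. Any $U\in U^G$ with a given architecture factors as $U=P_G(W_G\otimes I)P_G^\dagger\cdots P_1(W_1\otimes I)P_1^\dagger$ with $W_i\in U(4)$ and fixed qubit-permutation unitaries $P_i$. Two facts drive the argument: (i) $d_F$, the spectral distance $d_2$, and $\dworst$ are subadditive along such a product — for $d_F$ and $d_2$ by a standard telescoping argument, for $\dworst$ by \Cref{lem:subadd-diamond}; and (ii) changing a single gate changes the whole-circuit distance by exactly the single-gate distance on $U(4)$, since $d_F\big(P_i(W_i\otimes I)P_i^\dagger,\,P_i(\tilde W_i\otimes I)P_i^\dagger\big)=\tfrac{1}{\sqrt{2^n}}\|(W_i-\tilde W_i)\otimes I_{2^{n-2}}\|_F=\tfrac12\|W_i-\tilde W_i\|_F$, which is the normalized Frobenius distance on $U(4)$, while $\dworst$ is invariant under conjugation by $P_i$ and under tensoring with the identity channel. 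Hence it suffices to approximate each $W_i$ by an element of a net over $U(4)$ of $d_F$-resolution $\epsilon/G$ (for the $d_F$ bound), respectively of spectral resolution $\epsilon/(2G)$ (for the $\dworst$ bound, where I also use $\dworst\le 2\,d_2$ from \Cref{lem:dist-spectral-diamond}). \Cref{lem:2q-unitaries} with $d=4$ supplies such nets of sizes $(c_2G/\epsilon)^{16}$ and $(12G/\epsilon)^{32}$; raising to the $G$-th power, multiplying by the at most $n^{2G}$ architectures (the resulting set is clearly a subset of $U^G$), and taking logarithms yields the two claimed upper bounds.

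For the lower bound I would use the standard circuit-synthesis fact that every $k$-qubit unitary is realizable with at most $C\,4^k$ two-qubit gates for a universal constant $C$. Set $k:=\min\{n,\lfloor\log_4(G/C)\rfloor\}$; the hypothesis $1\le G/C\le 4^{n+1}$ guarantees both $0\le k\le n$ and $4^k\ge G/(4C)$. Then the embedded group $U(2^k)\otimes I_{2^{n-k}}$ lies inside $U^G$, and the same tensor computation as above shows $d_F$ is unchanged under this embedding, so the $d_F$-covering numbers of $U(2^k)\otimes I$ and of $U(2^k)$ coincide. Monotonicity of covering numbers then gives $\mathcal{N}(U^G,d_F,\epsilon)\ge \mathcal{N}(U(2^k)\otimes I,\, d_F,\, 2\epsilon)=\mathcal{N}(U(2^k),d_F,2\epsilon)\ge (c_1/(2\epsilon))^{4^k}$ by the Frobenius lower bound of \Cref{lem:2q-unitaries} (applicable since $2\epsilon\in(0,2]$ when $\epsilon\in(0,1]$). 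Taking logarithms and using $4^k\ge G/(4C)$, then absorbing the factor $2$ into $c_1$, gives $\log\mathcal{N}(U^G,d_F,\epsilon)\ge \tfrac{G}{4C}\log(c_1/\epsilon)$ whenever $\epsilon<c_1$, and the bound holds trivially otherwise since covering numbers are at least $1$.

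The step requiring the most care is the bookkeeping around the embeddings and constants: confirming that the normalized Frobenius distance is genuinely invariant when a two-qubit (or $k$-qubit) unitary is padded with identities and conjugated by a permutation, that the diamond norm is stable under tensoring with $\mathrm{id}$ so a per-gate diamond bound lifts to the whole circuit, and that the single choice of $k$ together with the two-sided hypothesis on $G/C$ simultaneously delivers $0\le k\le n$ and $4^k=\Omega(G)$. I note that the refinement removing the $\log n$ factor discussed around \Cref{thm:state-learning} is \emph{not} needed here: \Cref{thm:covering} keeps the $2G\log n$ architecture term, and that term is only eliminated later, for the sample complexity of \Cref{thm:state-learning}, via junta-learning ideas.
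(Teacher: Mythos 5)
Your proposal is correct and follows essentially the same route as the paper: per-gate covering nets over $U(4)$ (spectral for $\dworst$, Frobenius for $d_F$) combined via subadditivity and multiplied by the $\le n^{2G}$ architecture count for the upper bounds, and an embedding of the full group $U(2^k)$ with $k\approx\log_4(G/C)$ together with the covering-number lower bound of \Cref{lem:2q-unitaries} for the lower bound. The only differences are cosmetic — you factor out the architecture choice up front and work directly with the normalized $d_F$ (using that padding by identities is a $d_F$-isometry), whereas the paper carries unnormalized Frobenius norms and rescales at the end.
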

\begin{proof}
    The proof of the upper bounds is similar to the proof of Theorem C.1 in \cite{caro2022generalization}.
    We first prove the upper bound for diamond distance.
    
    Let $\epsilon\in (0, 1]$, and define $\epsilon' = \epsilon/2G$.
    Then by Lemma~\ref{lem:2q-unitaries}, there exists an $\epsilon'$-covering net $\tilde{\mathcal{N}}_{\epsilon'}$ of the set of two-qubit unitaries $U(2^2)$ with respect to the spectral norm $\norm{\cdot}$ of size 
    \begin{equation}
        |\tilde{\mathcal{N}}_{\epsilon'}| \leq \left(\frac{6}{\epsilon'}\right)^{32} = \left(\frac{12G}{\epsilon}\right)^{32}.
    \end{equation}
    This bound applies when the two-qubit unitary acts on a fixed set of two qubits.
    We can consider two-qubit unitaries that act on any of the $n$ qubits.
    Let $U^{2q} \subset U(2^n)$ denote this set of two-qubit unitaries that can act on any pair of the $n$ qubits of the system.
    Because there are $\binom{n}{2}$ pairs of qubits that the unitary could act on, the size of the covering net $\tilde{\mathcal{N}}_{\epsilon',n}$ of $U^{2q}$ is bounded by
    \begin{equation}
        |\tilde{\mathcal{N}}_{\epsilon',n}| \leq \binom{n}{2}\left(\frac{12G}{\epsilon}\right)^{32}.
    \end{equation}
    
    Recall that we want to find a covering net for the set $U^G$ of $n$-qubit unitaries consisting of $G$ two-qubit gates.
    Any unitary $U \in U^G$ can be written as $U_GU_{G-1}...U_1$ for $U_i \in U^{2q}$, where we suppress the tensor product with identity for readability.
    We consider the set of unitaries obtained by multiplying elements of the covering net $\tilde{\mathcal{N}}_{\epsilon',n}$ of $U^{2q}$. Namely, we define
    \begin{equation}
        \mathcal{N}_{\epsilon} \triangleq \{U_GU_{G-1}...U_1|U_i \in \tilde{\mathcal{N}}_{\epsilon',n}, 1\leq i \leq G\}\, .
    \end{equation}
    Let $U \in U^G$ be any arbitrary unitary that can be implemented by $G$ two-qubit gates, i.e., it can be written as $U = U_GU_{G-1}...U_1$ for $U_i \in U^{2q}$.
    As $\tilde{\mathcal{N}}_{\epsilon',n}$ is an $\epsilon'$-covering net of the set $U^{2q}$ of two-qubit unitaries, for each $U_i$ comprising the circuit $U$, we can find a $\Tilde{U}_i\in \Tilde{\mathcal{N}}_{\epsilon',n}$ such that $\lVert U_i - \Tilde{U}_i \rVert \leq \epsilon'$ for all $1 \leq i \leq G$, where $\norm{\cdot}$ denotes the spectral norm. Then, the unitary $\Tilde{U} \triangleq \Tilde{U}_G\Tilde{U}_{G-1}...\Tilde{U}_1 \in \mathcal{N}_{\epsilon_1}$ satisfies
    \begin{equation}
        \dworst(U, \tilde{U}) \leq \sum_{i=1}^G \dworst(U_i, \tilde{U}_i) \leq 2\sum_{i=1}^G \norm{U_i - \tilde{U}_i} \leq 2G \epsilon'=\epsilon,
    \end{equation}
    where we have employed the subadditivity of the diamond distance (Lemma~\ref{lem:subadd-diamond}) in the first inequality and then used the relationship between the diamond norm and spectral norm in the second inequality (Lemma~\ref{lem:dist-spectral-diamond}). In the last inequality, we used that $\norm{U_i - \tilde{U}_i} \leq \epsilon'$ and $\epsilon' = \nicefrac{\epsilon}{2G}$.

    Thus, $\mathcal{N}_{\epsilon}$ is an $\epsilon$-covering net of the set $U^G$ of $n$-qubit unitaries that can be implemented by $G$ two-qubit gates with respect to the diamond distance.
    By definition of $\mathcal{N}_{\epsilon}$, we have $|\mathcal{N}_{\epsilon}| = |\Tilde{\mathcal{N}}_{\epsilon',n}|^G$, since each unitary in the length $G$ strings of unitaries comprising elements of $\mathcal{N}_{\epsilon}$ are chosen from $\Tilde{\mathcal{N}}_{\epsilon',n}$. Then 
    \begin{equation}
        |\mathcal{N}_{\epsilon}|\leq \binom{n}{2}^G\left(\frac{12G}{\epsilon}\right)^{32G} \leq n^{2G}\left(\frac{12G}{\epsilon}\right)^{32G}.
    \end{equation}
    Taking the logarithm gives the desired result for diamond distance.

    We can argue similarly for the normalized Frobenius distance $d_F$.
    Specifically, we make use of the subadditivity of $\|\cdot\|_F$: $\forall U_1, V_1, U_2, V_2\in U(2^n)$, we have
    \begin{equation}
        \|U_2U_1 - V_2V_1\|_F \leq \|U_2U_1 - U_2V_1\|_F + \|U_2V_1 - V_2V_1\|_F = \|U_1 - V_1\|_F + \|U_2 - V_2\|_F,
    \end{equation}
    where we have used triangle inequality and $\|\cdot\|_F$ being unitary invariant.

    Consider any $U\in U^G, U = U_G\cdots U_1$, where $U_i, 1\leq i\leq T$ are 2-qubit unitaries acting on some pair of qubits. 
    Take $\epsilon'=\epsilon/G$ and let $\mathcal{N}_{\epsilon'}$ be an $\epsilon'$-covering net of $U(2^2)$ with respect to $\|\cdot\|_F$. Then there exist $V_i\in \mathcal{N}, 1\leq i\leq G$, such that $\|U_i - V_i\|\leq \epsilon/G$ when the $V_i$ are placed on the corresponding qubits. Let $V=V_G\cdots V_1$. By sub-additivity, we have
    \begin{equation}
        \|U-V\|_F \leq \sum_{i=1}^G \sqrt{2^{n-2}}\|U_i - V_i\|_F \leq \sqrt{2^{n-2}}G\epsilon' = \sqrt{2^{n-2}}\epsilon,
    \end{equation}
    where we have used the facts that the Frobenius norm is multiplicative w.r.t.~tensor products and that an ($n-2$)-qubit identity has Frobenius norm equal to $\sqrt{2^{n-2}}$.
    Therefore, the set of $V=V_G\cdots V_1$, where $V_i\in U(2^2)$ and acting on all possible pair of qubits is a $(2^{n-2}\epsilon)$-covering net of $U^G$. Since the number of choices for qubits to act on is $\binom{n}{2}$ for each $V_i$, we have
    \begin{equation}
        \mathcal{N}(U^G, \|\cdot\|_F, \sqrt{2^{n-2}}\epsilon) \leq \left[\binom{n}{2} \mathcal{N}(U(2^2), \|\cdot\|_F, \epsilon/G)\right]^G \leq n^{2G}\left(\frac{c_2 G \sqrt{2^2}}{\epsilon}\right)^{16G},
    \end{equation}
    where we have used \Cref{lem:2q-unitaries}. Redefining $\epsilon$ to be $\epsilon/\sqrt{2^2}$ and switching to the normalized $d_F$, we obtain
    \begin{equation}
        \log\mathcal{N}(U^G, d_F, \epsilon) \leq 16G\log \left(\frac{c_2G}{\epsilon}\right) + 2G\log n.
    \end{equation}

    Finally, we prove the lower bound.
    For this, we consider a particular set of circuit structures where all the $G$ gates are placed on the first $k\leq n$ qubits. 
    The set of unitaries that can be implemented by such circuits is denoted by $U^{\leq k}_G \subseteq U^G$. From the theory of universal quantum gates (see \cite{vartiainen2004efficient}), we know that to implement an arbitrary $k$-qubit unitary, we only need $G_k=\bigo(4^k)$ two-qubit gates that can implement single-qubit gates and CNOT. 
    That is, there exists a universal constant $C>0$, such that $C4^k\geq G_k$. 
    Therefore, for any integer $k\leq n$ satisfying $C4^k \leq G$, we have $G \geq G_k$. 
    Then all possible $k$-qubit unitaries can be implemented with these $G$ gates: $U^n(2^k)=\{U\otimes I_{2^{n-k}}: U\in U(2^k)\} \subseteq U^{\leq k}_G \subseteq U^G$, where $U^n(2^k)$ denotes the set obtained by embedding the $k$-qubit unitaries into the $n$-qubit unitaries via tensor-multiplication with the identity. 
    Thus $\mathcal{N}(U^G, \|\cdot\|_F, \epsilon) \geq \mathcal{N}(U^n(2^k), \|\cdot\|_F, 2\epsilon)$ by monotonicity.

    Next, we prove that $\mathcal{N}(U^n(2^k), \|\cdot\|_F, 2\epsilon)\geq \mathcal{N}(U(2^k), \|\cdot\|_F, 2\epsilon/\sqrt{2^{n-k})}$. 
    To do this, we take a minimal $2\epsilon$-covering net $\mathcal{N}$ of $U^n(2^k)$ with $|\mathcal{N}|=\mathcal{N}(U^n(2^k), \|\cdot\|_F, 2\epsilon)$. 
    Hence $\forall U\in U(2^k)$, $U\otimes I_{2^{n-k}}\in U^n(2^k), \exists V\otimes I_{2^{n-k}} \in \mathcal{N}$, such that $\|U-V\|_F =  \|U\otimes I_{2^{n-k}}-V\otimes I_{2^{n-k}}\|_F /\sqrt{2^{n-k}} \leq {2\epsilon}/\sqrt{2^{n-k}}$. 
    Therefore, $\{V: V\otimes I_{2^{n-k}}\in \mathcal{N}\}$ forms a $2\epsilon/\sqrt{2^{n-k}}$-covering net of $U(2^k)$, and we have $\mathcal{N}(U^n(2^k), \|\cdot\|_F, 2\epsilon)\geq\mathcal{N}(U(2^k), \|\cdot\|_F, 2\epsilon/\sqrt{2^{n-k}})$.

    Combining the above inequalities, we have
    \begin{equation}
        \log\mathcal{N}(U^G, \|\cdot\|_F, \epsilon) \geq \log\mathcal{N}(U^n(2^k), \|\cdot\|_F, 2\epsilon) \geq \log\mathcal{N}(U(2^k), \|\cdot\|_F, 2\epsilon/\sqrt{2^{n-k}}) \geq 2^{2k}\log\frac{c_1 \sqrt{2^{n}}}{2\epsilon},
    \end{equation}
    where the last inequalities follow from \Cref{lem:2q-unitaries}. 
    The largest possible $k$ is given by $k=\floor{\log_4(G/C)}\geq \log_4{G/(4C)}$. Thus, by redefining $\epsilon$ to be $\epsilon/\sqrt{2^n}$ and switching to $d_F$, we arrive at
    \begin{equation}
        \log\mathcal{N}(U^G, d_F, \epsilon) \geq \frac{G}{4C}\log\frac{c_1}{2\epsilon}.
    \end{equation}
    This completes the proof of \Cref{thm:covering}.
\end{proof}

The $d_F$ covering number bounds in \Cref{thm:covering} do not yet properly take into account the global $U(1)$ phase.
To obtain covering number for the average-case distance $\davg$, which is equivalent to the quotient normalized Frobenius distance $d_F'$ (\Cref{lem:dist-df'} Item 1), we need to quotient out the global phase.
This is formalized in the following lemma.

\begin{lemma}[Packing number of quotient distance metric, variant of {\cite[Lemma 4]{barthel2018fundamental}}]
\label{lem:quotient-packing}
    For any $d$-dimensional unitaries $U$ and $V$, let $d_F(U, V)=\norm{U-V}_F/\sqrt{d}$ be the normalized Frobenius distance, and $d_F'(U, V) = \min_{W\in U(1)}d_F(U, VW)$ be the corresponding quotient distance. Then there exists a universal constant $c_2>0$ such that the packing number of any set $\mathcal{U}\subseteq U(d)$ with respect to $d_F$ and $d_F'$ satisfies
    \begin{equation}
        \log\mathcal{M}(\mathcal{U}, d_F, 4\epsilon) - \log(c_2/\epsilon) 
        \leq 
        \log\mathcal{M}(\mathcal{U}, d_F', \epsilon) 
        \leq 
        \log\mathcal{M}(\mathcal{U}, d_F, \epsilon).
    \end{equation}
\end{lemma}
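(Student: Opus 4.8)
The plan is to treat the two inequalities separately; the right-hand one is essentially free, and the left-hand one is a fiber-counting argument over the circle $U(1)$.

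For the upper bound $\log\mathcal{M}(\mathcal{U}, d_F', \epsilon) \leq \log\mathcal{M}(\mathcal{U}, d_F, \epsilon)$, I would note that $d_F'(U,V) = \min_{W\in U(1)} d_F(U, VW) \leq d_F(U,V)$ by taking $W=I$. Hence any finite subset of $\mathcal{U}$ on which all pairwise $d_F'$-distances exceed $\epsilon$ also has all pairwise $d_F$-distances exceeding $\epsilon$, so a maximum-cardinality $\epsilon$-packing with respect to $d_F'$ is in particular an $\epsilon$-packing with respect to $d_F$, and the inequality follows.

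For the lower bound, I would fix a maximum-cardinality $4\epsilon$-packing $\mathcal{P}\subseteq\mathcal{U}$ with respect to $d_F$, so that $|\mathcal{P}| = \mathcal{M}(\mathcal{U}, d_F, 4\epsilon)$, together with a maximum-cardinality $\epsilon$-packing $\mathcal{Q}\subseteq\mathcal{U}$ with respect to $d_F'$, so that $|\mathcal{Q}| = \mathcal{M}(\mathcal{U}, d_F', \epsilon)$. Since $\mathcal{Q}$ is inclusion-maximal it is also an $\epsilon$-covering of $\mathcal{U}$ with respect to $d_F'$, so for every $U\in\mathcal{P}$ there are some $V(U)\in\mathcal{Q}$ and a phase $W_U\in U(1)$ with $d_F(U, V(U)W_U)\leq\epsilon$; this yields a map $U\mapsto V(U)$ from $\mathcal{P}$ to $\mathcal{Q}$ whose fibers I then bound. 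If $U_1\neq U_2$ lie in the fiber over $V$, then by the triangle inequality for $d_F$ and the packing property of $\mathcal{P}$,
\[
d_F(VW_{U_1}, VW_{U_2}) \;\geq\; d_F(U_1, U_2) - d_F(U_1, VW_{U_1}) - d_F(U_2, VW_{U_2}) \;>\; 4\epsilon - \epsilon - \epsilon \;=\; 2\epsilon .
\]
Because $\|\cdot\|_F$ is unitarily invariant and $\|V\|_F=\sqrt{d}$, one has $d_F(VW_{U_1}, VW_{U_2}) = |W_{U_1}-W_{U_2}|$, so the phases appearing in a single fiber form a $2\epsilon$-packing of the unit circle under the Euclidean chord metric. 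Using $|e^{i\alpha}-e^{i\beta}| = 2|\sin\tfrac{\alpha-\beta}{2}|$, a short computation (summing consecutive angular gaps to $2\pi$) shows that the $2\epsilon$-packing number of the unit circle is at most $c_2/\epsilon$ for a universal constant $c_2$, uniformly over the only nontrivial range $\epsilon\in(0,2)$. Therefore every fiber has at most $c_2/\epsilon$ elements, giving $|\mathcal{P}|\leq (c_2/\epsilon)\,|\mathcal{Q}|$; taking logarithms yields the left-hand inequality.

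The step I expect to need the most care is the claim that a maximum-cardinality $d_F'$-packing is a $d_F'$-covering, which relies on $d_F'$ being a bona fide pseudometric — symmetric and obeying the triangle inequality — because it is the quotient of the metric $d_F$ by the isometric right $U(1)$-action $V\mapsto VW$. I would record this explicitly; alternatively, to avoid it, I could replace $\mathcal{Q}$ by a minimum-cardinality $\epsilon$-covering of $\mathcal{U}$ with respect to $d_F'$ and invoke $\mathcal{N}(\mathcal{U}, d_F', \epsilon)\leq\mathcal{M}(\mathcal{U}, d_F', \epsilon)$ from \Cref{lem:equi-cov-pack}. The remaining ingredients — unitary invariance and the triangle inequality for $d_F$, and the explicit circle-packing bound fixing $c_2$ — are routine.
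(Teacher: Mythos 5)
Your proof is correct. The upper bound is verbatim the paper's argument ($d_F'\le d_F$, so a maximal $d_F'$-packing is a $d_F$-packing). For the lower bound you take a genuinely different, though closely related, route: the paper builds a $2\epsilon$-covering of $\mathcal{U}$ under $d_F$ as the product of a minimal $d_F'$-covering of $\mathcal{U}$ with a minimal covering of $U(1)$, obtaining $\mathcal{N}(\mathcal{U},d_F,2\epsilon)\le\mathcal{N}(\mathcal{U},d_F',\epsilon)\,\mathcal{N}(U(1),d_A,\epsilon)$, and then converts everything to packing numbers via \Cref{lem:equi-cov-pack} and bounds the circle factor via \Cref{lem:2q-unitaries}; you instead run a direct pigeonhole on packings, mapping each element of a maximal $4\epsilon$-packing under $d_F$ to a nearby element of a maximal $\epsilon$-packing under $d_F'$ and bounding each fiber by the $2\epsilon$-packing number of $U(1)$ in the chord metric. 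The two arguments encode the same fact — that the $U(1)$ fiber of the quotient contributes only a multiplicative $O(1/\epsilon)$ — but yours avoids one invocation of the covering–packing equivalence and makes the constant $c_2$ explicit ($c_2=\pi$ suffices from $\sin x\le x$), at the price of having to verify that $d_F'$ is a pseudometric so that a maximum-cardinality packing is a covering; you flag exactly this point and also note the clean workaround via $\mathcal{N}(\mathcal{U},d_F',\epsilon)\le\mathcal{M}(\mathcal{U},d_F',\epsilon)$, so nothing is missing. All the individual steps (reverse triangle inequality, unitary invariance giving $d_F(VW_1,VW_2)=|W_1-W_2|$, and the circle-packing count) check out.
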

\begin{proof}
    We focus on the lower bound first.
    Take a minimal $\epsilon$-covering $\mathcal{N}_1$ of $\mathcal{U}$ with respect to $d_F'$ and a minimal $\epsilon$-covering $\mathcal{N}_2$ of $U(1)$ with respect to the absolute value distance $d_A(e^{i\phi}, e^{-i\phi'})=|e^{i\phi}-e^{-i\phi'}|$. Then, for any $U\in\mathcal{U}$, there exists $V\in \mathcal{N}_1$ such that $d_F'(U, V)\leq \epsilon$. Let $e^{i\phi^\star} = \argmin_{e^{i\phi}} d_F(U, Ve^{i\phi})$. Then $d_F(U, Ve^{i\phi^\star})\leq \epsilon$, and there exists $e^{i\phi'} \in \mathcal{N}_2$ such that $d_A(e^{i\phi^\star}, e^{i\phi'})\leq \epsilon$. Therefore,
    \begin{equation}
        d_F(U, V e^{i\phi'}) \leq d_F(U, Ve^{i\phi^\star}) + d_F(Ve^{i\phi^\star}, Ve^{i\phi'}) = d_F(U, Ve^{i\phi^\star}) + d_F(Ie^{i\phi^\star}, Ie^{i\phi'}) \leq 2\epsilon,
    \end{equation}
    where we have used the triangle inequality, $d_F$ being unitary invariant, and $d_F(Ie^{i\phi^\star}, Ie^{i\phi'}) = \frac{1}{\sqrt{d}}\|Ie^{i\phi^\star}-Ie^{i\phi'}\|_F=\frac{\|I\|_F}{\sqrt{d}}|e^{i\phi^\star}-e^{i\phi'}|=d_A(e^{i\phi^\star}, e^{i\phi'})\leq \epsilon$.
    Hence, the set $\{Ve^{i\phi'}: V\in \mathcal{N}_1, e^{i\phi'}\in \mathcal{N}_2\}$ is a $(2\epsilon)$-covering of $\mathcal{U}$ with respect to $d_F$. Then 
    \begin{equation}
        \mathcal{N}(\mathcal{U}, d_F, 2\epsilon) \leq \mathcal{N}(\mathcal{U}, d_F', \epsilon)\mathcal{N}(U(1), d_A, \epsilon).
    \end{equation}
    Therefore, using the equivalence of covering and packing (\Cref{lem:equi-cov-pack}) and the covering number bound for $U(1)$ (\Cref{lem:2q-unitaries}), we arrive at
    \begin{equation}
        \log \mathcal{M}(\mathcal{U}, d_F', \epsilon) \geq \log \mathcal{M}(\mathcal{U}, d_F', 4\epsilon) - \log(c_2/\epsilon).
    \end{equation}

    For the upper bound, note that $\forall U, V \in \mathcal{U}$, we have
    \begin{equation}
        d_F'(U, V) = \min_{e^{i\phi}\in U(1)}d_F(U, Ve^{i\phi}) \leq d_F(U, V).
    \end{equation}
    Therefore, a maximal $\epsilon$-packing net with respect to $d_F'$ is an $\epsilon$-packing net with respect to $d_F$. Therefore,
    \begin{equation}
        \mathcal{M}(\mathcal{U}, d_F', \epsilon) \leq \mathcal{M}(\mathcal{U}, d_F, \epsilon).
    \end{equation}
    This concludes the proof of \Cref{lem:quotient-packing}.
\end{proof}

With \Cref{lem:quotient-packing}, we can obtain the covering number of $G$-gate unitaries with respect to the average-case distance.

\begin{corollary}[Covering number with average-case distance]
\label{cor:covering-avg}
    Let $U^G \subseteq U(2^n)$ be the set of $n$-qubit unitaries that can be implemented by $G$ two-qubit gates. 
    Then for any $\epsilon \in (0, 1]$, there exist universal constants $c_1, c_2, C>0$ such that for $1\leq G/C\leq 4^{n+1}$, the metric entropy of $U^G$ with respect to the average-case distance $\davg(U, V) = \sqrt{\E_{\ket{\psi}}[\dtr(U\ket{\psi}, V\ket{\psi})^2]}$, where the expectation value is over Haar measure, can be bounded as
    \begin{equation}
        \frac{G}{4C}\log(\frac{c_1}{8\epsilon}) - \log(\frac{c_2}{2\epsilon})\leq \log\mathcal{N}(U^G, \davg, \epsilon) \leq 16G\log(\frac{c_2G}{\epsilon}) + 2G\log n.
    \end{equation}
\end{corollary}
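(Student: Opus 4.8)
The plan is to obtain both bounds by transporting the already-established $d_F$ metric-entropy estimates of \Cref{thm:covering} to the average-case distance, using the two-sided comparison $\tfrac{1}{2} d_F'(U,V)\le\davg(U,V)\le d_F'(U,V)$ from Item~1 of \Cref{lem:dist-df'}, the global-phase bookkeeping of \Cref{lem:quotient-packing}, and the covering/packing duality of \Cref{lem:equi-cov-pack}. Throughout we inherit the parameter constraint $1\le G/C\le 4^{n+1}$ from \Cref{thm:covering}.

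\textbf{Upper bound.} Since $d_F'$ is a minimum over global phases that includes the trivial phase, we have $\davg(U,V)\le d_F'(U,V)\le d_F(U,V)$. Hence any $\epsilon$-covering net of $U^G$ with respect to $d_F$ is also an $\epsilon$-covering net with respect to $\davg$, so $\mathcal{N}(U^G,\davg,\epsilon)\le\mathcal{N}(U^G,d_F,\epsilon)$, and the $d_F$ upper bound of \Cref{thm:covering} gives exactly the stated bound.

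\textbf{Lower bound.} The plan is to chain the following inequalities. First, by \Cref{lem:equi-cov-pack}, $\mathcal{N}(U^G,\davg,\epsilon)\ge\mathcal{M}(U^G,\davg,2\epsilon)$. Second, because $\davg\ge\tfrac{1}{2} d_F'$, a maximal $4\epsilon$-packing net of $U^G$ with respect to $d_F'$ is a $2\epsilon$-packing net with respect to $\davg$, so $\mathcal{M}(U^G,\davg,2\epsilon)\ge\mathcal{M}(U^G,d_F',4\epsilon)$. Third, the lower bound of \Cref{lem:quotient-packing} applied with $\epsilon\mapsto 4\epsilon$ yields $\log\mathcal{M}(U^G,d_F',4\epsilon)\ge\log\mathcal{M}(U^G,d_F,16\epsilon)-\log(c_2/(4\epsilon))$. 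Fourth, \Cref{lem:equi-cov-pack} again gives $\mathcal{M}(U^G,d_F,16\epsilon)\ge\mathcal{N}(U^G,d_F,16\epsilon)$. Fifth, the lower bound of \Cref{thm:covering} gives $\log\mathcal{N}(U^G,d_F,16\epsilon)\ge\tfrac{G}{4C}\log(c_1/(16\epsilon))$. Composing these and absorbing the factors $16$ and $4$ into freshly renamed universal constants (replacing $c_1$ by $c_1/2$ and $c_2$ by $c_2/2$) produces $\log\mathcal{N}(U^G,\davg,\epsilon)\ge\tfrac{G}{4C}\log(c_1/(8\epsilon))-\log(c_2/(2\epsilon))$, as claimed. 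For the values of $\epsilon$ near $1$ where the right-hand side is non-positive the inequality is vacuous, so it suffices to run the chain in the regime $\epsilon\le 1/16$, which keeps every invocation of \Cref{thm:covering} (whose accuracy argument must lie in $(0,1]$) within range.

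I expect no serious obstacle here: every ingredient is already proved in the excerpt and the argument is just a sequence of monotonicity and comparison steps. The one place demanding care is tracking the directions in \Cref{lem:equi-cov-pack} (the covering number at a given scale is squeezed between the packing numbers at that scale and at half that scale) and remembering that the global $U(1)$ phase must be quotiented out — via \Cref{lem:quotient-packing} — before comparing a $d_F$-net to a $\davg$-net. Arranging the constant factors $2,4,16$ so that the final universal constants can be renamed cleanly is the most fiddly part, but it is purely bookkeeping.
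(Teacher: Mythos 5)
Your proposal is correct and follows exactly the route the paper takes: the paper's own proof is a one-line citation of \Cref{thm:covering}, \Cref{lem:quotient-packing}, and Item 1 of \Cref{lem:dist-df'}, and your argument is simply that chain written out in full (with the covering/packing duality of \Cref{lem:equi-cov-pack} made explicit). The constant bookkeeping and the remark that the bound is vacuous (after shrinking $c_1$ if needed) for $\epsilon>1/16$ are both fine.
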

\begin{proof}
    The corollary follows directly from \Cref{thm:covering}, \Cref{lem:quotient-packing}, and the equivalence of $d_F'$ and $\davg$ (\Cref{lem:dist-df'} Item 1).
\end{proof}

\subsection{Classical shadows and hypothesis selection}
\label{app:hypothesis}

Our proofs of the sample complexity upper bounds crucially rely on a known algorithm for quantum hypothesis selection~\cite{badescu2020improved}.
The high-level idea is to find a covering net over all unitaries consisting of only $G$ two-qubit gates and to then use quantum hypothesis selection to identify a candidate in the covering net close to the unknown target state/unitary.
A similar idea has previously appeared in~\cite{huang2022quantum}.
In this section, we discuss the quantum hypothesis selection algorithm from~\cite{badescu2020improved} and prove a performance guarantee when basing it on classical shadow tomography~\cite{huang2020predicting}.

The quantum hypothesis selection algorithm takes as input (classical descriptions of) a set of hypothesis states $\sigma_1,\dots,\sigma_m$ and quantum copies of an unknown state $\rho$.
Using these copies, the algorithm identifies a hypothesis state $\sigma_k$ that is close to the unknown state $\rho$ in trace distance.
Importantly, quantum hypothesis selection black-box reduces to shadow tomography~\cite{aaronson2018shadow}, i.e., one can use the shadow tomography protocol as a black-box to solve quantum hypothesis selection.
To obtain a better sample complexity scaling, we instead utilize classical shadow tomography~\cite{huang2020predicting}.

Recall that a classical shadow is a succinct classical description of a quantum state that allows us to predict many expectation values accurately.
One can construct this classical shadow description by applying a random unitary to the quantum state and measuring in the computational basis.
The most prevalent examples are random Clifford measurements, where the random unitary is chosen to be a random Clifford circuit, or random Pauli measurements, where the random unitary is chosen to be a tensor product of random Pauli gates.
Moreover, we have the following rigorous guarantee for using classical shadows to predict expectation values.

\begin{theorem}[Theorem 1 in~\cite{huang2020predicting}]
    \label{thm:classical-shadow}
    Let $O_1,\dots,O_M$ be Hermitian $2^n \times 2^n$ matrices, and let $\epsilon, \delta \in [0,1]$. Then,
    \begin{equation}
        N = \mathcal{O}\left(\frac{\log(M/\delta)}{\epsilon^2}\max_{1\leq i\leq M} \norm{O_i - \frac{\tr(O_i)}{2^n}\mathbb{I}}^2_{\mathrm{shadow}}\right)
    \end{equation}
    copies of an unknown quantum state $\rho$ suffice to predict $\hat{o}_i$ such that
    \begin{equation}
        |\hat{o}_i - \tr(O_i \rho)| \leq \epsilon
    \end{equation}
    for all $1 \leq i \leq M$, with probability at least $1-\delta$.
\end{theorem}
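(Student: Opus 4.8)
The plan is to realize the estimators $\hat o_i$ via the classical shadows protocol of~\cite{huang2020predicting} and then to control accuracy and confidence through a median-of-means argument. First I would fix a tomographically complete ensemble of unitaries (for concreteness, random $n$-qubit Clifford circuits, or tensor products of single-qubit Cliffords): one samples $U$ from the ensemble, applies it to $\rho$, measures in the computational basis to obtain $b \in \{0,1\}^n$, and forms the single-shot \emph{snapshot}
\begin{equation}
    \hat\rho \;=\; \mathcal{M}^{-1}\!\left(U^\dagger \ketbra{b} U\right), \qquad \mathcal{M}(X) \;=\; \E_{U}\sum_{b} \tr\!\left(U X U^\dagger \ketbra{b}\right) U^\dagger \ketbra{b} U,
\end{equation}
where the measurement channel $\mathcal{M}$ is trace-preserving and, by tomographic completeness, invertible on Hermitian matrices. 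A one-line averaging calculation gives $\E[\hat\rho]=\rho$, so for each $i$ the single-shot quantity $\tr(O_i\hat\rho)$ is an unbiased estimator of $\tr(O_i\rho)$. Moreover $\tr(\hat\rho)=1$ deterministically, so subtracting a multiple of the identity from $O_i$ merely shifts $\tr(O_i\hat\rho)$ by a known constant and leaves its variance unchanged; this is precisely why only the traceless part $O_i - \tfrac{\tr(O_i)}{2^n}\mathbb{I}$ enters the final bound.

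Second, I would bound the variance of one snapshot. Writing $\bar O_i := O_i - \tfrac{\tr(O_i)}{2^n}\mathbb{I}$ and using $\tr(\hat\rho)=1$, the variance of $\tr(O_i\hat\rho)$ equals that of $\tr(\bar O_i\hat\rho)$, which by a direct second-moment computation is at most $\E_{U,b}\bigl[\tr\!\left(\mathcal{M}^{-1}(\bar O_i)\,U^\dagger\ketbra{b}U\right)^2\bigr]$ and, after maximizing over the input state feeding the protocol, is by definition $\norm{\bar O_i}_{\mathrm{shadow}}^2 =: \sigma_i^2$. For the concrete ensembles one then evaluates this using Schur--Weyl / Weingarten identities (e.g.\ $\sigma_i^2 \le 3\tr(\bar O_i^2)$ for global random Cliffords, and it tensorizes over qubits for products of single-qubit Cliffords), but for the statement as phrased I would keep $\norm{\cdot}_{\mathrm{shadow}}$ abstract and just record $\Var[\tr(O_i\hat\rho)] \le \sigma_i^2$.

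Third, the accuracy/confidence boosting: split the $N$ snapshots into $K$ batches of size $N_0 = N/K$, let $\bar o_i^{(j)}$ be the empirical mean of $\tr(O_i\hat\rho)$ over batch $j$, and set $\hat o_i = \median_j \bar o_i^{(j)}$. Chebyshev's inequality gives $\Pr\!\left[\,\lvert \bar o_i^{(j)} - \tr(O_i\rho)\rvert \ge \epsilon\,\right] \le \sigma_i^2/(N_0\epsilon^2) \le 1/3$ as soon as $N_0 \ge 3\sigma_i^2/\epsilon^2$, and a Hoeffding bound on the number of batches landing within $\epsilon$ of $\tr(O_i\rho)$ upgrades this to $\Pr\!\left[\,\lvert \hat o_i - \tr(O_i\rho)\rvert \ge \epsilon\,\right] \le e^{-\Omega(K)} \le \delta/M$ once $K = \Omega(\log(M/\delta))$; a union bound over $i = 1,\dots,M$ then finishes the argument with total budget $N = K N_0 = \mathcal{O}\!\left(\epsilon^{-2}\log(M/\delta)\max_i \sigma_i^2\right)$, as claimed. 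The main obstacle is the second step: one must verify that the maximization over input states defining $\norm{\cdot}_{\mathrm{shadow}}$ is finite (which relies on $\mathcal{M}^{-1}$ being well-behaved for the chosen ensemble) and, when a concrete ensemble is used, actually evaluate or usefully upper bound it — this is where the representation-theoretic content of classical shadows resides. The median-of-means and union-bound steps are routine once the per-snapshot variance bound is in hand.
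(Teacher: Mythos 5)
The paper imports this statement directly from~\cite{huang2020predicting} without proof, and your proposal correctly reconstructs the standard argument from that reference: unbiased single-shot snapshots via the inverted measurement channel, variance control by the shadow norm of the traceless part (using $\tr(\hat\rho)=1$), and median-of-means with Chebyshev, Hoeffding, and a union bound over the $M$ observables. This matches the original proof, so there is nothing to add.
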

Here, $\norm{\cdot}_{\mathrm{shadow}}$ denotes the shadow norm, which depends on the ensemble of unitary transformations used to create the classical shadow.
For instance, in the case of random Cliffords, the shadow norm can be controlled via the (unnormalized) Frobenius norm, compare \cite[Proposition S1]{huang2020predicting}.

Now, we can prove a new guarantee for the quantum hypothesis selection by replacing shadow tomography with classical shadow in the proof in~\cite{badescu2020improved}.

\begin{prop}[Proposition 5.3 in \cite{badescu2020improved}; Classical Shadow Version]
\label{prop:hypothesis}
    Let $0 < \epsilon, \delta < 1/2$. Given access to unentangled copies of a pure quantum state $\rho$ and classical descriptions of $m$ fixed pure hypothesis states $\sigma_1,\dots, \sigma_m$, there exists a quantum algorithm that selects $\sigma_k$ such that $\dtr(\rho, \sigma_k) \leq 3\eta + \epsilon$ with probability at least $1-\delta$, where $\eta = \min_i \dtr(\rho,  \sigma_i)$. Moreover, this algorithm uses
    \begin{equation}
        N = \mathcal{O}\left(\frac{\log (m/\delta)}{\epsilon^2}\right)
    \end{equation}
    copies of the quantum state $\rho$.
\end{prop}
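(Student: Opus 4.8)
The plan is to run the quantum hypothesis selection (``Scheffé tournament'') strategy of \cite{badescu2020improved}, but to carry out its estimation subroutine with classical shadows \cite{huang2020predicting} in place of general shadow tomography, exploiting the fact that the relevant test operators are low rank when the hypotheses are pure. Concretely, for each pair $i<j$ let $E_{ij}$ be the projector onto the positive eigenspace of $\sigma_i-\sigma_j$ (the Helstrom operator), which can be computed from the given classical descriptions and which satisfies $\tr(E_{ij}(\sigma_i-\sigma_j))=\dtr(\sigma_i,\sigma_j)$. Since $\sigma_i=\ketbra{\psi_i}$ and $\sigma_j=\ketbra{\psi_j}$ are pure, $\sigma_i-\sigma_j$ is supported on $\mathrm{span}\{\ket{\psi_i},\ket{\psi_j}\}$ and is traceless there, so $E_{ij}$ has rank at most $2$ (in fact at most $1$), whence $\tr(E_{ij}^2)=\bigo(1)$.

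The first step is to estimate the statistics $o_{ij}\triangleq\tr(E_{ij}\rho)$. Forming random-Clifford classical shadows from $N$ copies of $\rho$ and invoking \Cref{thm:classical-shadow} with the $M=\binom{m}{2}$ observables $\{E_{ij}\}$, the Clifford shadow norm obeys $\lVert E_{ij}-\tr(E_{ij})\mathbb{I}/2^n\rVert_{\mathrm{shadow}}^2 \leq 3\tr(E_{ij}^2)=\bigo(1)$, so that $N=\bigo(\log(m/\delta)/\epsilon^2)$ copies suffice to output estimates $\hat o_{ij}$ with $|\hat o_{ij}-o_{ij}|\leq \epsilon'$ for all $i<j$ simultaneously with probability at least $1-\delta$, where $\epsilon'=\Theta(\epsilon)$ is chosen so that the tournament below returns accuracy $3\eta+\epsilon$. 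The key point is that this bound is independent of the dimension $2^n$, which is exactly what yields the stated copy complexity. Conditioning on this good event, the second step is the standard tournament: declare that $i$ beats $j$ when $\hat o_{ij}\geq \tr(E_{ij}\sigma_j)+\tfrac12\dtr(\sigma_i,\sigma_j)$ (and $j$ beats $i$ otherwise), and output any hypothesis that is never badly beaten. Because $0\leq E_{ij}\leq\mathbb{I}$ and $\rho-\sigma_i$ is traceless, $|o_{ij}-\tr(E_{ij}\sigma_i)|\leq\dtr(\rho,\sigma_i)$; the Devroye--Lugosi / B\u{a}descu--O'Donnell minimum-distance analysis then shows verbatim that a hypothesis $\sigma_{i^\star}$ with $\dtr(\rho,\sigma_{i^\star})=\eta$ beats every $\sigma_j$ with $\dtr(\rho,\sigma_j)>3\eta+\epsilon$, so the selected $\sigma_k$ satisfies $\dtr(\rho,\sigma_k)\leq 3\eta+\epsilon$.

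The only genuinely new ingredient relative to \cite{badescu2020improved} is the observation that the Helstrom/Scheffé operators between pure states have $\bigo(1)$ rank, which is precisely what makes the Clifford shadow norm --- and hence the copy complexity --- independent of $n$; the tournament bookkeeping is imported wholesale. Accordingly, the step requiring the most care is the interface between the two: tracking how the per-observable shadow accuracy $\epsilon'$ propagates through the comparison rule to give exactly the $3\eta+\epsilon$ bound, and confirming that the rank-$\leq 2$ bound on $E_{ij}$ --- rather than any quantity depending on $\mathrm{rank}(\rho)$ --- is all that the shadow-norm estimate of \cite{huang2020predicting} requires. Everything else is routine.
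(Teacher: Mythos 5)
Your proposal is correct and follows essentially the same route as the paper: run the B\u{a}descu--O'Donnell hypothesis-selection tournament, replace the shadow-tomography subroutine with random-Clifford classical shadows, and observe that the Helstrom operators $E_{ij}$ between pure hypotheses are projectors of rank at most $2$, so $\tr(E_{ij}^2)=\bigo(1)$ and the shadow norm (hence the copy complexity) is dimension-independent. The only difference is cosmetic --- you spell out the tournament comparison rule and the inequality $|\tr(E_{ij}(\rho-\sigma_i))|\leq\dtr(\rho,\sigma_i)$ explicitly, whereas the paper imports the tournament analysis wholesale from \cite{badescu2020improved}.
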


In~\cite{badescu2020improved}, they prove the guarantee on the quantum hypothesis selection algorithm using Helstrom's Theorem.
We follow a similar proof.
Thus, we first state Helstrom's Theorem and recall a corollary of it, which will be useful in the proof of Prop.~\ref{prop:hypothesis}.

\begin{theorem}[Helstrom's Theorem~\cite{Helstrom1969}]
\label{theorem:helstrom}
    Consider two $d$-dimensional quantum states $\rho$ and $\sigma$. Then, the trace distance between $\rho$ and $\sigma$ can be written as
    \begin{equation}
        \frac{1}{2}\norm{\rho - \sigma}_1 = \max_{\norm{O}_\infty \leq 1} |\tr(O\rho) - \tr(O\sigma)|,
    \end{equation}
    where the maximum is taken over all observables $O \in \mathbb{C}^{d\times d}$.
\end{theorem}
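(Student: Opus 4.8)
The plan is to reduce the whole statement to a spectral analysis of the single Hermitian operator $\Delta \triangleq \rho - \sigma$. Since $\rho$ and $\sigma$ are density matrices, $\Delta$ is Hermitian and \emph{traceless}, $\tr(\Delta) = 0$. First I would diagonalize $\Delta = \sum_j \lambda_j \ketbra{v_j}$ with real eigenvalues $\lambda_j$ and form its Jordan decomposition $\Delta = \Delta_+ - \Delta_-$, where $\Delta_\pm \geq 0$ collect the positive and the negated-negative eigenvalues and have mutually orthogonal supports. Then $\norm{\Delta}_1 = \tr(\Delta_+) + \tr(\Delta_-)$, and the constraint $\tr(\Delta) = \tr(\Delta_+) - \tr(\Delta_-) = 0$ forces $\tr(\Delta_+) = \tr(\Delta_-) = \tfrac{1}{2}\norm{\rho-\sigma}_1$. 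This tracelessness bookkeeping is precisely the origin of the factor $\tfrac12$ on the left-hand side, so I would flag it early.

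For the achievability direction ($\geq$ for the right-hand maximum), I would exhibit the explicit optimizer, namely the Helstrom measurement. Let $P_+$ denote the orthogonal projector onto the support of $\Delta_+$, i.e.\ onto $\mathrm{span}\{\ket{v_j} : \lambda_j > 0\}$. Because $\Delta_+$ and $\Delta_-$ have orthogonal supports, $\tr(P_+ \Delta) = \tr(\Delta_+) = \tfrac12\norm{\rho-\sigma}_1$, and $P_+$ is an admissible observable since $\norm{P_+}_\infty \leq 1$. I would also record the equivalent sign-operator form $O^\star = P_+ - P_-$, which makes the operational reading (the optimal two-outcome measurement distinguishing $\rho$ from $\sigma$) transparent and connects directly to how the theorem is invoked in \Cref{prop:hypothesis}.

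For the optimality direction, I would bound an arbitrary admissible $O$ by writing $\tr(O\Delta) = \tr(O\Delta_+) - \tr(O\Delta_-)$ and using $\Delta_\pm \geq 0$ together with the admissibility of $O$: passing to the effect $E = \tfrac12(I+O)$ with $0 \leq E \leq I$ gives $\tr(E\Delta_+) \leq \tr(\Delta_+)$ and $\tr(E\Delta_-) \geq 0$, so that $\tr(E\Delta) \leq \tr(\Delta_+) = \tfrac12\norm{\rho-\sigma}_1$; tracelessness $\tr(\Delta)=0$ converts this back to a bound on $\tr(O\Delta)$. Combining the two directions yields the claimed equality. The main obstacle is conceptual rather than computational: one must be careful about the admissible set and the accompanying normalization, verifying that \emph{both} inequalities are simultaneously saturated by the Helstrom projector, since this is exactly what pins down the factor $\tfrac12$ and distinguishes optimizing over effects $0\leq E\leq I$ (equivalently two-outcome measurements) from optimizing the unnormalized quantity $|\tr(O\Delta)|$ over all $\norm{O}_\infty \leq 1$.
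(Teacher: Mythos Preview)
The paper does not prove this theorem; it is cited to Helstrom and only Corollary~\ref{coro:helstrom} (the existence of an explicit maximizer, namely the projector $A=\sum_{v:\lambda_v>0}\ketbra{v}$ onto the positive eigenspace of $\rho-\sigma$) is argued there. Your achievability construction with $P_+$ reproduces exactly that argument, so on the one thing the paper does establish you agree.

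The genuine issue is that the displayed identity is misnormalized, and your own argument reveals it without quite saying so. Your ``equivalent sign-operator form'' $O^\star=P_+-P_-$ satisfies $\|O^\star\|_\infty\le 1$ yet gives $\tr(O^\star\Delta)=\tr(\Delta_+)+\tr(\Delta_-)=\|\rho-\sigma\|_1$, twice what the left-hand side claims; so $P_+$ and $O^\star$ are \emph{not} equivalent optimizers for this problem. Likewise, finishing your optimality computation---$\tr(E\Delta)\le\tfrac12\|\Delta\|_1$ together with $\tr(O\Delta)=2\tr(E\Delta)$ from tracelessness---yields only $|\tr(O\Delta)|\le\|\Delta\|_1$. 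Hence the maximum over all $O$ with $\|O\|_\infty\le 1$ is $\|\rho-\sigma\|_1$, not $\tfrac12\|\rho-\sigma\|_1$; the factor $\tfrac12$ belongs to the maximization over effects $0\le O\le I$, which is what Helstrom's theorem actually asserts. You gesture at this distinction in your closing sentence but stop short of stating that the equality cannot hold as written. Fortunately the paper only invokes Corollary~\ref{coro:helstrom} downstream (in Proposition~\ref{prop:hypothesis}), and the projector $A$ constructed there satisfies both $\|A\|_\infty\le 1$ and $0\le A\le I$, so the misnormalization is inconsequential for the rest of the argument.
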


\begin{corollary}
    \label{coro:helstrom}
    Consider two $d$-dimensional quantum states $\rho$ and $\sigma$. Then, there exists an observable $A$ achieving the maximum such that
    \begin{equation}
        \tr(A\rho) - \tr(A\sigma) = \frac{1}{2}\norm{\rho - \sigma}_1.
    \end{equation}
    
\end{corollary}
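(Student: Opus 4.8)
The plan is to exhibit the optimal observable $A$ explicitly from the spectral structure of $\rho-\sigma$ and then check that it attains the maximum in Helstrom's Theorem (\Cref{theorem:helstrom}). First I would take the Jordan decomposition of the Hermitian operator $\rho-\sigma = (\rho-\sigma)_+ - (\rho-\sigma)_-$ into its positive and negative parts $(\rho-\sigma)_\pm \geq 0$, which have mutually orthogonal supports, and let $P$ be the orthogonal projector onto the support of $(\rho-\sigma)_+$. I would then define $A := P$. Since $P$ is a projector, $\norm{A}_\infty \leq 1$, so $A$ is feasible for the maximization in \Cref{theorem:helstrom}.

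Next I would compute the bias. Because $P$ acts as the identity on the range of $(\rho-\sigma)_+$ and annihilates the range of $(\rho-\sigma)_-$, we get $\tr(A\rho) - \tr(A\sigma) = \tr\!\big(P(\rho-\sigma)\big) = \tr\!\big((\rho-\sigma)_+\big)$. It then remains to identify $\tr((\rho-\sigma)_+)$ with $\tfrac{1}{2}\norm{\rho-\sigma}_1$: since $\rho$ and $\sigma$ are density operators, $\tr(\rho-\sigma)=0$, hence $\tr((\rho-\sigma)_+) = \tr((\rho-\sigma)_-)$, and combining this with $\norm{\rho-\sigma}_1 = \tr((\rho-\sigma)_+) + \tr((\rho-\sigma)_-)$ gives $\tr((\rho-\sigma)_+) = \tfrac12\norm{\rho-\sigma}_1$. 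Finally, \Cref{theorem:helstrom} guarantees that no feasible observable exceeds this value, so $A$ achieves the maximum, as claimed.

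There is essentially no obstacle here; the statement merely makes explicit the extremal structure already contained in Helstrom's Theorem. If one wishes to avoid re-deriving the Jordan-decomposition argument, a purely soft alternative also works: the feasible set $\{O : \norm{O}_\infty \leq 1\}$ is compact and $O \mapsto |\tr(O\rho) - \tr(O\sigma)|$ is continuous, so the maximum in \Cref{theorem:helstrom} is attained at some $O^\star$; replacing $O^\star$ by $-O^\star$ if necessary to make the bias nonnegative, we obtain $\tr(O^\star\rho) - \tr(O^\star\sigma) = \tfrac12\norm{\rho-\sigma}_1$ and may take $A = O^\star$.
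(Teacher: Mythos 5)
Your proof is correct and follows essentially the same route as the paper's: both take $A$ to be the projector onto the positive part of $\rho-\sigma$ (the paper writes it as $\sum_{v:\lambda_v>0}\ketbra{v}$, you via the Jordan decomposition). You additionally spell out the step $\tr((\rho-\sigma)_+)=\tfrac12\norm{\rho-\sigma}_1$ using $\tr(\rho-\sigma)=0$, which the paper leaves implicit.
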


\begin{proof}[Proof of Corollary~\ref{coro:helstrom}]
We will construct an observable $A$ that maximizes $\tr(O(\rho - \sigma))$ over all observables $O$ with $\norm{O}_\infty \leq 1$. Choose a representation of $\rho - \sigma$ in terms of eigenstates $\ket{v}$. Suppose the eigenvalues are discrete:
\begin{equation}
    (\rho - \sigma)\ket{v} = \lambda_v \ket{v}.
\end{equation}
Then, we can write the quantity we wish to maximize as
\begin{equation}
    \tr(O(\rho - \sigma)) = \sum_v \lambda_v \expval{O}{v}.
\end{equation}
We can maximize this by choosing $A$ such that
\begin{equation}
    \expval{A}{v} = \begin{cases}
        1 & \text{if } \lambda_v > 0\\
        0 & \text{if } \lambda_v \leq 0.
    \end{cases}
\end{equation}
In this way, we can write $A$ as a sum of projectors
\begin{equation}
    A = \sum_{v : \lambda_v > 0} \ketbra{v}.
\end{equation}
This maximizes $\tr(O(\rho-\sigma))$, so the corollary has been proven.
\end{proof}

\noindent With this, we can now prove Prop.~\ref{prop:hypothesis}.

\begin{proof}[Proof of Prop.~\ref{prop:hypothesis}]
The proof of Proposition 5.3 in~\cite{badescu2020improved} uses shadow tomography as a black box.
We follow the same strategy but use classical shadow tomography~\cite{huang2020predicting} instead of shadow tomography. Recall that in~\cite{badescu2020improved}, they run the shadow tomography algorithm from~\cite{aaronson2018shadow} with observables given by Helstrom's Theorem~\cite{Helstrom1969}.
This is the key step that uses samples of the unknown quantum state $\rho$, so we need to analyze it when using classical shadow instead of shadow tomography.
In our setting, Corollary~\ref{coro:helstrom} states that for any $i \neq j$, there exists an observable $A_{ij}$ such that
\begin{equation}
    \tr(A_{ij}\sigma_i) - \tr(A_{ij}\sigma_j) = \frac{1}{2}\norm{\sigma_i - \sigma_j}_1.
\end{equation}
Thus, the algorithm in~\cite{badescu2020improved} uses $M = \binom{m}{2} = \mathcal{O}(m^2)$ observables $\{A_{ij}\}$ to select the hypothesis state, where $m$ is the size of the hypothesis set. Using classical shadow instead of shadow tomography requires
\begin{equation}
    N = \tilde{\mathcal{O}}\left(\frac{\log(M/\delta)}{\epsilon^2}\max_{i,j}\norm{A_{ij} - \frac{\tr(A_{ij})}{2^n}\mathbb{I}}_{\mathrm{shadow}}^2\right)
\end{equation}
copies of $\rho$ by Theorem~\ref{thm:classical-shadow}, where $M$ is the number of observables $A_{ij}$ that we want to predict. Here, $M = \mathcal{O}(m^2)$ so that we require
\begin{equation}
    N = \tilde{\mathcal{O}}\left(\frac{\log(m/\delta)}{\epsilon^2}\max_{i,j}\norm{A_{ij} - \frac{\tr(A_{ij})}{2^n}\mathbb{I}}_{\mathrm{shadow}}^2\right)
\end{equation}
copies of $\rho$.
We claim that
\begin{equation}
    \max_{i,j}\norm{A_{ij} - \frac{\tr(A_{ij})}{2^n}\mathbb{I}}_{\mathrm{shadow}}^2 = \mathcal{O}(1).
\end{equation}
The lemma then follows from this claim.
We can prove this bound on the shadow norm using the construction of the observables $A_{ij}$ from Helstrom's Theorem, as seen in Corollary~\ref{coro:helstrom}. In our case, the states $\sigma_i$ are pure, and hence of rank $1$. Thus, the rank of $\sigma_i - \sigma_j$ is at most $2$ so that $A_{ij}$ is a projector of rank at most $2$.
Thus, the Frobenius norm of every $A_{ij}$ is $\mathcal{O}(1)$ and, by \cite[Proposition S1]{huang2020predicting}, the same holds for the shadow norm of the centered version of $A_{ij}$.
\end{proof}

\subsection{Characterizing the complexity of function classes}

In the proof of \Cref{thm:approx-functions} (in \Cref{app:learning-function}), we will need to characterize the complexity of certain function classes. 
The following definitions will be useful.
Throughout the work, we use $\mathcal{Y}^\mathcal{X}$ to denote the set of functions from $\mathcal{X}$ to $\mathcal{Y}$.

\begin{definition}[Growth function, {\cite{vapnik1971uniform}}]
Let $\mathcal{F}\subseteq \mathcal{Y}^\mathcal{X}$ be a class of functions with finite target space $\mathcal{Y}$.
For every subset $\Xi \subseteq X$, define the restriction of $\mathcal{F}$ to $\Xi$ as $\mathcal{F}|_\Xi = \{f\in \mathcal{Y}^\Xi: \exists F\in \mathcal{F}, \forall x\in \Xi, f(x)=F(x)\}$.
We define the growth function $\Gamma$ of $\mathcal{F}$ as
\begin{equation}
    \Gamma(\mu) = \max_{\Xi\subseteq \mathcal{X}: |\Xi|\leq \mu} |\mathcal{F}|_\Xi|
\end{equation}
for any $\mu\in\mathbb{N}$.
\end{definition}

The growth function characterizes the size of $\mathcal{F}$ when restricted to a domain of $\mu$ points. 
With growth function, we can define the VC dimension that characterizes the complexity of binary functions.

\begin{definition}[VC dimension, {\cite{vapnik1971uniform}}]
    The Vapnik-Chervonenkis (VC) dimension of a function class $\mathcal{F}\subseteq \{0, 1\}^\mathcal{X}$ is defined as
    \begin{equation}
        \mathrm{VCdim}(\mathcal{F}) = \max\{\mu\in \mathbb{N}: \Gamma(\mu)=2^\mu\},
    \end{equation}
    or $\infty$ if the maximum does not exist. Here $\Gamma(\mu)$ is the growth function of $\mathcal{F}$.
    Or equivalently, $\mathrm{VCdim}(\mathcal{F})$ is the largest $D\in \mathbb{N}\cup\{\infty\}$ such that there exists a set of points $\{x_i\}_{i=1}^D\subseteq \mathcal{X}$ that for all $C\subseteq[D]$, there is a function $f\in \mathcal{F}$ satisfying
    \begin{equation}
        f(x_i)=1 \iff i\in C.
    \end{equation}
    These points are said to be shattered by $\mathcal{F}$.
\end{definition}

To go beyond binary functions, we can use pseudo dimension defined below.

\begin{definition}[Pseudo dimension, {\cite{pollard1984convergence}}]
\label{def:pdim}
    The pseudo dimension of a real-valued function class $\mathcal{F}\subseteq \mathbb{R}^\mathcal{X}$ is defined as
    \begin{equation}
        \mathrm{Pdim}(\mathcal{F}) = \mathrm{VCdim}(\{\mathcal{X}\times \mathbb{R}\ni (x, y) \to \mathrm{sgn}[f(x)-y]: f\in\mathcal{F}\}).
    \end{equation}
    Or equivalently, $\mathrm{Pdim}(\mathcal{F})$ is the largest $D\in \mathbb{N}\cup\{\infty\}$ such that there exists a set of points $\{(x_i, y_i)\}_{i=1}^D\subseteq \mathcal{X}\times\mathbb{R}$ that for all $C\subseteq[D]$, there is a function $f\in \mathcal{F}$ satisfying
    \begin{equation}
        f(x_i)\geq y_i \iff i\in C.
    \end{equation}
    These points are said to be pseudo-shattered by $\mathcal{F}$.
\end{definition}

We will also use the fat-shattering dimension, a scale-sensitive variant of the pseudo-dimension.

\begin{definition}[Fat-shattering dimension, {\cite{kearns1994efficient}}]
\label{def:fat}
    Let $\alpha>0$. The $\alpha$-fat-shattering dimension $\mathrm{fat}(\mathcal{F}, \alpha)$ of a real-valued function class $\mathcal{F}\subseteq \mathbb{R}^\mathcal{X}$ is defined as the largest $D\in \mathbb{N}\cup\{\infty\}$ such that there exists a set of points $\{(x_i, y_i)\}_{i=1}^D\subseteq \mathcal{X}\times\mathbb{R}$ that for all $C\subseteq[D]$, there is a function $f\in \mathcal{F}$ satisfying
    \begin{equation}
    \begin{split}
        f(x_i)&\geq y_i + \alpha \quad \mathrm{if} \quad i\in C, \\
        f(x_i)&\leq y_i - \alpha \quad \mathrm{if} \quad i\notin C.
    \end{split}
    \end{equation}
    Such a set of points is said to be $\alpha$-fat-shattered by $\mathcal{F}$.
\end{definition}

\subsection{Cryptography}

Our computational complexity lower bounds rely on cryptographic primitives such as pseudorandom functions~\cite{goldreich1986construct} and pseudorandom quantum states~\cite{ji2018pseudorandom,brakerski2019pseudo}.
A family of pseudorandom functions is a set of functions such that sampling from this family is indistinguishable from a uniformly random function.
We present the formal definition below, following the presentation in~\cite{arunachalam2021quantum}.

\begin{definition}[Pseudorandom functions (PRFs)~\cite{goldreich1986construct}]
    \label{def:PRF}
    Let $\lambda$ denote the security parameter.
    Let $\mathcal{K} = \{\mathcal{K}_\lambda\}_{\lambda\in \mathbb{N}}$ be an efficiently sampleable key space.
    Let $\mathcal{X} = \{\mathcal{X}_\lambda\}_{\lambda\in \mathbb{N}}, \{\mathcal{Y}_\lambda\}_{\lambda\in\mathbb{N}}$ be collections of finite sets.
    Let $\mathcal{F} = \{f_\lambda\}_{\lambda\in\mathbb{N}}$ be a family of efficiently-computable keyed functions $f_\lambda: \mathcal{K}_\lambda \times \mathcal{X}_\lambda \to \mathcal{Y}_\lambda$.
    $\mathcal{F}$ is a \emph{pseudorandom function} if for every polynomial-time probabilistic algorithm $\mathsf{Adv}$, there exists a negligible function $\mathsf{negl}(\cdot)$ such that for every security parameter $\lambda\in \mathbb{N}$
    \begin{equation}
        \left|\Pr_{\mathbf{k} \leftarrow \mathcal{K}_\lambda}[\mathsf{Adv}^{f_\lambda (\mathbf{k}, \cdot)}(\cdot) = 1] - \Pr_{g \in \mathcal{U}_\lambda}[\mathsf{Adv}^{g}(\cdot) = 1]\right| \leq \mathsf{negl}(\lambda),
    \end{equation}
    where the key $\mathbf{k}$ is picked uniformly at random from the key space $\mathcal{K}_\lambda$ and $g$ is picked uniformly at random from $\mathcal{U}_\lambda$, the set of all functions from $\mathcal{X}_\lambda$ to $\mathcal{Y}_\lambda$.
    Here, $\mathsf{negl}(\lambda)$ denotes a negligible function, i.e., a function that grows more slowly than any inverse polynomial in $\lambda$.
\end{definition}

Concretely, it is common to take the input and output spaces to be $\mathcal{X}_\lambda = \{0,1\}^{m}$ and $\mathcal{Y}_\lambda = \{0,1\}$ for some input length $m = m(\lambda)$ that depends on the security parameter $\lambda$.
We consider this setting throughout the work.

\begin{definition}[Quantum secure PRFs~\cite{arunachalam2021quantum}]
    Let $\lambda$ denote the security parameter.
    A pseudorandom function is \emph{quantum secure against $t(\lambda)$ adversaries} if it satisfies \Cref{def:PRF} where $\mathsf{Adv}$ is a $t(\lambda)$-time quantum algorithm with quantum query access to $f_{\mathbf{k}}$ and $g$.
    When $t(\lambda)=\poly(\lambda)$, we say that the PRF is \emph{quantum secure}.
\end{definition}

There are several constructions for implementing PRFs with low-depth circuits~\cite{banerjee2012pseudorandom,naor1999synthesizers,arunachalam2021quantum}.
We will focus on the construction of~\cite{arunachalam2021quantum}, which relies on the assumption that the Ring Learning with Errors (\textsf{RingLWE}) problem~\cite{lyubashevsky2010ideal} is hard even for quantum computers.
Specifically, we assume that \textsf{RingLWE} cannot be solved by a quantum computer in sub-exponential time, which is a commonly believed cryptographic assumption~\cite{aggarwal2023lattice, arunachalam2021quantum, ananth2023revocable, diakonikolas2022cryptographic, regev2009lattices}.
Here, \textsf{RingLWE} is a variant of the more well-known Learning with Errors problem~\cite{regev2009lattices} over polynomial rings.
The \textsf{RingLWE} problem is to find a secret ring element $s \in R_q \triangleq \mathbb{Z}_q[x]/\langle x^\lambda - 1 \rangle$ given pairs $(a, a \cdot s + e \bmod R_q)$, where $\lambda$ denotes the security parameter, $e$ is some error, $q$ is a parameter of the problem.
We only state this informally here and refer the reader to~\cite{lyubashevsky2010ideal} for a formal definition and discussion.
In~\cite{arunachalam2021quantum}, assuming that \textsf{RingLWE} cannot be solved by quantum computers in $t(\lambda)$ time, their construction produces a PRF secure against $\bigo(t(\lambda))$ quantum adversaries that is implementable by constant-depth, polynomial-size circuits.
We state the precise result below.

\begin{theorem}[Lemmas 3.15 and 3.16 in~\cite{arunachalam2021quantum}]
    \label{thm:PRF}
    Let $\lambda$ denote the security parameter.
    Let the input size be $m = m(\lambda) =  \omega(\log \lambda)$ and set the parameter $q =\lambda^{\omega(1)}$ to be a power of two such that $\log(q) \leq \mathcal{O}(\poly(\lambda))$.
    Let $\mathcal{K} = \{\mathcal{K}_\lambda\}_{\lambda \in \mathbb{N}}$, where $\mathcal{K}_\lambda = R_q^{m+1}$.
    There exists a PRF $\mathcal{RF} = \{f_\lambda\}_{\lambda \in \mathbb{N}}$, where $f_\lambda: R_q^{m+1} \times \{0,1\}^m \to \{0,1\}$, satisfying the following two properties.
    \begin{enumerate}
        \item Every $f_\lambda(\mathbf{k},\cdot) \in \mathcal{RF}$ with $\mathbf{k} \in \mathcal{K}_\lambda$ can be computed by a $\mathsf{TC}^0$ circuit.
        \item Suppose there exists a distinguisher $\mathcal{D}$ for $\mathcal{RF}$, i.e., there exists an $\bigo(t(\lambda))$-time quantum algorithm $\mathcal{D}$ that satisfies
        \begin{equation}
            \left|\Pr_{\mathbf{k} \leftarrow \mathcal{K}_\lambda}[\mathcal{D}^{\ket{f_\lambda(\mathbf{k}, \cdot)}}(\cdot) = 1] - \Pr_{g \in \mathcal{U}}[\mathcal{D}^{\ket{g}}(\cdot) = 1]\right| > \mathsf{negl}(\lambda),
        \end{equation}
        where the key $\mathbf{k}$ is picked uniformly at random from the key space $\mathcal{K}_\lambda$, $g$ is picked uniformly at random from $\mathcal{U}$, the set of all functions from $\mathcal{X}_\lambda$ to $\mathcal{Y}_\lambda$, and $\mathcal{D}^{\ket{f_\lambda(\mathbf{k}, \cdot)}}$ indicates that $\mathcal{D}$ has quantum oracle access to the function $f_\lambda(\mathbf{k}, \cdot)$. Then, there exists a $t(\lambda)$-time quantum algorithm that solves \textsf{RingLWE}.
    \end{enumerate}
\end{theorem}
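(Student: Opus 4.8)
The plan is to instantiate a ring variant of the Banerjee--Peikert--Rosen learning-with-rounding pseudorandom function and verify its two claimed properties separately: shallow ($\mathsf{TC}^0$) evaluability and security reducing to \textsf{RingLWE}. Concretely, I would take the key $\mathbf{k}$ to consist of a uniformly random $a \in R_q$ together with $m$ independent ring elements $s_1,\dots,s_m \in R_q$ (drawn from the appropriate short/invertible distribution), and define $f_\lambda(\mathbf{k},x) = \big\lfloor a \cdot \prod_{i \,:\, x_i = 1} s_i \big\rceil$, the coefficient-wise rounding of the ring product down to $R_2$, with a fixed coordinate (or parity) of the result taken as the one-bit output. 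Here $R_q = \mathbb{Z}_q[x]/\langle x^\lambda - 1\rangle$ with $q$ a power of two, so the product is ordinary polynomial multiplication reduced modulo $x^\lambda - 1$ and modulo $q$. The parameter constraints imposed in the statement ($q = \lambda^{\omega(1)}$ a power of two, $\log q \le \poly(\lambda)$, $m = \omega(\log\lambda)$) are exactly what simultaneously keeps the evaluation circuit polynomial-size and leaves the noise budget needed for the security argument.

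For Property~1, I would prove $\mathsf{TC}^0$-computability by decomposing the evaluation into a constant number of primitives each known to lie in (DLOGTIME-uniform) $\mathsf{TC}^0$: (a) the iterated product $\prod_{i:x_i=1} s_i$ of at most $m$ degree-$<\lambda$ polynomials with $\log q$-bit coefficients reduces, via a standard Kronecker-substitution encoding of ring elements as long integers, to iterated integer multiplication, which is in $\mathsf{TC}^0$ by the Hesse--Allender--Barrington circuits for division and iterated multiplication; (b) reduction modulo $x^\lambda - 1$ is a fixed rearrangement plus a single round of coefficient addition, hence in $\mathsf{TC}^0$; (c) reduction modulo the power of two $q$ is truncation of low-order bits, trivially in $\mathsf{AC}^0$; (d) the final multiplication by $a$ followed by rounding to the top bit is again one integer multiplication plus bit extraction. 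Composing a constant number of $\mathsf{TC}^0$ layers remains in $\mathsf{TC}^0$, giving the desired uniform circuit family.

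For Property~2, I would run the standard GGM/BPR-style hybrid argument, processing the bits of $x$ one at a time: in the $j$-th hybrid the ``partial subset product'' associated with the first $j$ coordinates is replaced by a fresh uniform ring element, so that in the last hybrid the function value is rounded uniform, hence statistically close to a uniformly random function by regularity of rounding over $R_q \to R_2$. Each neighboring pair of hybrids is distinguishable only by an algorithm that distinguishes \textsf{RingLWE} samples $(a, a\cdot s + e)$ from uniform, after absorbing the deterministic rounding error into the LWE noise term. The reduction from a distinguisher $\mathcal{D}$ to a \textsf{RingLWE} solver is uniform, uses only $\poly(\lambda)$ hybrids, and runs in time $\poly(\lambda)\cdot t(\lambda)$; thus an $\mathcal{O}(t(\lambda))$-time quantum distinguisher yields a $t(\lambda)\cdot\poly(\lambda)$-time quantum \textsf{RingLWE} algorithm. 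Taking $t(\lambda)$ sub-exponential makes this polynomial blow-up harmless, which is precisely how the sub-exponential hardness assumption is transferred to the PRF.

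The main obstacle I anticipate is Property~1: keeping the evaluation depth constant while retaining security. Since $\mathsf{TC}^0$ is extremely shallow, one cannot afford, for instance, a superconstant-depth iterated modular-reduction circuit; the Kronecker-substitution/carry-save encoding of ring elements together with the special choices $x^\lambda - 1$ and $q$ a power of two is what collapses all the arithmetic into a constant number of known-$\mathsf{TC}^0$ operations. The secondary, more routine, technical point is to check that the rounding-error analysis in the hybrid step still closes under exactly these parameter constraints, i.e. that the error introduced by rounding can be simulated inside the \textsf{RingLWE} noise distribution without degrading the advantage by more than a negligible amount.
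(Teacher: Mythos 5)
This statement is not proved in the paper at all: it is imported verbatim as Lemmas 3.15 and 3.16 of~\cite{arunachalam2021quantum}, so your proposal is being measured against the proof in that reference (which itself builds on Banerjee--Peikert--Rosen and Zhandry) rather than against anything in this manuscript. Your reconstruction of the construction and of Property~1 matches that source: the subset-product learning-with-rounding PRF $f(\mathbf{k},x)=\lfloor a\prod_{i:x_i=1}s_i\rceil$ with key in $R_q^{m+1}$, and $\mathsf{TC}^0$ evaluability via Kronecker substitution plus the Hesse--Allender--Barrington iterated-multiplication circuits, with the reductions modulo $x^\lambda-1$ and modulo a power-of-two $q$ handled as you describe. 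That part of the plan is sound.

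The genuine gap is in Property~2. The theorem demands security against a distinguisher with \emph{quantum oracle access} $\mathcal{D}^{\ket{f_\lambda(\mathbf{k},\cdot)}}$, i.e.\ superposition queries, and the ``standard GGM/BPR-style hybrid argument, processing the bits of $x$ one at a time'' does not survive this access model as you have stated it. In the $j$-th hybrid there are $2^j$ prefixes; a classical reduction only needs to handle the polynomially many prefixes the adversary actually queries, but a quantum adversary touches all of them in a single superposition query, so the per-query bookkeeping that makes the classical hybrid go through breaks down, and the number of \textsf{RingLWE} samples the reduction must consume is no longer bounded by the query count. Closing this requires an additional idea that your proposal omits: either Zhandry's machinery for quantum-secure PRFs (oracle indistinguishability / small-range distributions), or the observation specific to the BPR direct construction that the hybrids can be phrased as indistinguishability of distributions over \emph{entire oracles} --- the noise-flooding step must hold simultaneously for all $2^m$ inputs via a union bound, which is exactly where the constraints $q=\lambda^{\omega(1)}$ with $\log q\le\poly(\lambda)$ and $m=\omega(\log\lambda)$ enter --- so that computational indistinguishability of the oracle distributions lifts to indistinguishability under quantum queries. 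Your final paragraph flags the rounding-error analysis as the ``routine'' point, but for the quantum statement it is the union bound over all inputs, not over queried inputs, that has to close, and the superposition-query issue is the entire technical content separating this theorem from its classical analogue. A secondary, smaller discrepancy: the theorem draws the key uniformly from $R_q^{m+1}$, whereas your sketch draws the $s_i$ from a ``short/invertible'' distribution; whichever variant you use, the noise-flooding analysis must be carried out for the distribution actually specified.
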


In Property 2, this is equivalent to saying that the PRF is quantum secure against $\bigo(t(\lambda))$ adversaries, assuming that \textsf{RingLWE} cannot be solved by a $t(\lambda)$-time quantum algorithm.
Also, note that in Property 1, $\mathsf{TC}^0$ circuits refer to constant-depth, polynomial-size circuits with unbounded fan-in \textsf{AND}, \textsf{OR}, \textsf{NOT}, and \textsf{MAJORITY} gates.
We claim that every $\mathsf{TC}^0$ circuit has a quantum circuit computing the same function with poly-logarithmic overhead in depth.

\begin{prop}[Quantum circuits for $\mathsf{TC}^0$]
    Let $C$ be a $\mathsf{TC}^0$ circuit on $m$ inputs computing some Boolean function $f:\{0,1\}^m \to \{0,1\}$.
    Then, there exists a quantum circuit $C'$ on $n = \mathcal{O}(\poly(m))$ qubits of size $\mathcal{O}(n\polylog(n))$ and depth $\mathcal{O}(\polylog(n))$ that implements $f$.
\end{prop}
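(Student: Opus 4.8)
The plan is to replace the classical $\mathsf{TC}^0$ circuit $C$ gate by gate with reversible (hence quantum) subcircuits, processing $C$ layer by layer. The only nontrivial gate type is the threshold/\textsf{MAJORITY} gate: \textsf{NOT} is a Pauli $X$, while an unbounded fan-in \textsf{AND} (resp.\ \textsf{OR}) on $k$ bits is the threshold function $\mathrm{Th}^k_k$ (resp.\ $\mathrm{Th}^k_1$), so it suffices to simulate threshold gates. First I would allocate one fresh ancilla qubit to hold the output wire of each gate of $C$, together with some reusable workspace; after evaluating each gate I uncompute its workspace, and at the very end I copy out the designated output bit and run the entire construction in reverse (Bennett's trick) so that the circuit acts as $\ket{x}\ket{0\cdots0}\mapsto\ket{x}\ket{f(x)}\ket{0\cdots0}$ with no residual garbage. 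A wire of $C$ that feeds into $r$ subsequent gates is first duplicated onto $r$ qubits by a depth-$\mathcal{O}(\log r)$ binary tree of \textsf{CNOT}s, since a generic quantum circuit has no unbounded fan-out primitive.

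The heart of the argument is evaluating a threshold gate on $k\le\poly(m)$ input qubits reversibly in depth $\mathcal{O}(\log k)$. I would compute the integer sum $\sum_i b_i\in\{0,\dots,k\}$ of the input bits into a $\lceil\log_2(k{+}1)\rceil$-qubit register using carry-save (Wallace-tree) addition: a reduction tree of reversible full adders with $\mathcal{O}(\log k)$ levels, followed by one carry-propagating add (e.g.\ a quantum carry-lookahead adder) of depth $\mathcal{O}(\polylog k)$, giving total depth $\mathcal{O}(\polylog k)$ and size $\mathcal{O}(k\,\polylog k)$. Then a reversible $\mathcal{O}(\log k)$-depth comparator writes the bit $[\sum_i b_i\ge t]$ onto the gate's output ancilla, and finally I uncompute the sum register and all adder workspace by running the adder network backwards, leaving only the clean output bit. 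All these arithmetic primitives are standard.

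Composing these pieces, the depth is $d$ layers, each contributing $\mathcal{O}(\polylog m)$ for fan-out plus $\mathcal{O}(\polylog m)$ for threshold evaluation and uncomputation; since $d=\mathcal{O}(1)$ and $\log m=\mathcal{O}(\log n)=\mathcal{O}(\polylog n)$, the total depth is $\mathcal{O}(\polylog n)$. The size of $C$ is $s=\poly(m)$ gates, each expanding into $\mathcal{O}(k\,\polylog k)=\mathcal{O}(\poly(m))$ elementary quantum gates (Toffolis are in turn $\mathcal{O}(1)$ two-qubit gates), so the whole circuit uses $\poly(m)$ elementary gates and $\poly(m)$ qubits (inputs, one output ancilla per gate, and per-gate workspace that is uncomputed and reused). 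Choosing $n$ to be a sufficiently large fixed polynomial in $m$ then makes $n=\mathcal{O}(\poly(m))$ and the gate count $\mathcal{O}(n\,\polylog n)$, as claimed.

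The main obstacle is the bookkeeping needed to keep the depth polylogarithmic \emph{and} the ancilla count polynomial \emph{and} the computation reversible and garbage-free all at once: a naive reversibilization of even a constant-depth threshold circuit can inflate the depth to polynomial, so the argument genuinely depends on having logarithmic-depth reversible adders/comparators and on duplicating wires with \textsf{CNOT} trees rather than assuming an unbounded-fan-out gate. Once those low-depth reversible arithmetic primitives are in place, the remainder is careful accounting of depth, size, and qubit count across the constantly many layers of $C$, together with the final Bennett-style uncomputation to restore the workspace.
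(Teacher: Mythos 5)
Your proposal is correct and follows essentially the same route as the paper: simulate the $\mathsf{TC}^0$ circuit gate by gate, with the crux being a polylogarithmic-depth reversible evaluation of each threshold/\textsf{MAJORITY} gate via a logarithmic-depth tree of reversible adders computing $\sum_i b_i$, followed by a threshold test (you use a comparator where the paper divides by $m$ and takes a floor, an inessential difference). If anything, you are more careful than the paper about fan-out duplication via \textsf{CNOT} trees and about garbage-free uncomputation à la Bennett, details the paper's proof leaves implicit.
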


Here, when we say that $C'$ implements the function $f$, we mean that $C'\ket{x}\ket{z} = C'\ket{x}\ket{z \oplus f(x)}$.

\begin{proof}
    Note that the number of qubits is $n = \mathcal{O}(\poly(m))$ because after each gate in the classical circuit $C$, we must store the result in an ancilla qubit to maintain unitarity.
    Recall that $\mathsf{TC}^0$ circuits are constant-depth, polynomial-size circuits with unbounded fan-in \textsf{AND}, \textsf{OR}, \textsf{NOT}, and \textsf{MAJORITY} gates.
    Thus, it suffices to find the depth of implementing each of these gates quantumly.
    The size then follows because a circuit of depth $d$ on $n$ qubits can have at most $nd$ gates.
    \textsf{NOT} gates can clearly be implemented in constant depth since this is just an $X$ gate.
    An \textsf{AND} gate with $m$ inputs can be completed in logarithmic depth by computing \textsf{AND} pairwise with \textsf{CNOT}.
    Similarly, we can compute an \textsf{OR} gate with the same logarithmic depth.
    It remains to analyze the depth needed for computing a \textsf{MAJORITY} gate.
    Recall that the \textsf{MAJORITY} gate is defined as
    \begin{equation}
        \textsf{MAJ}(x_1,\dots,x_m) = \left\lfloor \frac{1}{2} + \frac{\left(\sum_{i=1}^m x_i\right) - 1/2}{m} \right\rfloor = \left\lfloor \frac{1}{2} + \frac{\sum_{i=1}^m x_i}{m} - \frac{1}{2m} \right\rfloor.
    \end{equation}
    Here, addition is done over the integers and $x_i \in \{0,1\}$.
    We first analyze the depth/size required for the addition $\sum_{i=1}^m x_i$.
    Note that the maximum value of this sum is $m$, which can be stored in $\mathcal{O}(\log m)$ bits.
    Thus, we can write each of the $x_i$ in binary using $\log m$ bits by padding with zeros and perform addition in this way.
    We can perform the addition of the $m$ inputs pairwise, parallelized to $\mathcal{O}(\log m)$ depth and requiring $\mathcal{O}(m)$ addition operations.
    Moreover, one can perform these addition operations using quantum circuits of size and depth $\mathcal{O}(\log m)$~\cite{takahashi2009quantum}.
    The construction in~\cite{takahashi2009quantum} uses Toffoli gates, but these can be decomposed into two-qubit gates with constant overhead~\cite{yu2013five}.
    In total, we have that $\sum_{i=1}^m x_i$ can be implemented by a quantum circuit of depth $\mathcal{O}(\log^2m)$.

    To divide this sum by $m$, note that there exist classical Boolean circuits for integer division of depth $\mathcal{O}(\log \log m)$ since our inputs can be represented in binary using $\log m$ bits~\cite{beame1986log}.
    These Boolean circuits use only standard \textsf{AND}, \textsf{OR}, and  \textsf{NOT} gates.
    As explained previously, these can be implemented quantumly, and for fan-in-$2$ \textsf{AND} and \textsf{OR} gates, this can be done with constant overhead.
    Thus, this division step requires depth $\mathcal{O}(\log \log m)$ in total.

    Finally, we need to compute the remaining addition/subtraction and floor operations.
    The addition/subtraction can be ignored since  they only occur once so that the depth is dominated by the other additions.
    For the floor, because the quantity inside can only be less than or equal to $1$, then this is the same as deciding whether the quantity inside is less than $1$ or not.
    This can be done in a constant number of operations. 

    Putting everything together, we see that the circuit depth for implementing a \textsf{MAJORITY} gate is dominated by $\mathcal{O}(\log^2 m)$.

    Recall again that $\mathsf{TC}^0$ describes constant-depth, polynomial-size circuits with unbounded fan-in \textsf{AND}, \textsf{OR}, \textsf{NOT}, and \textsf{MAJORITY} gates.
    We just analyzed the depth for each of these gates individually.
    In summary, we computed that $\mathcal{O}(\log m)$ quantum depth is sufficient for \textsf{AND} and \textsf{OR}.
    Constant $\mathcal{O}(1)$ depth is sufficient for \textsf{NOT}.
    Finally, $\mathcal{O}(\log^2 m)$ depth is sufficient for \textsf{MAJORITY}.
    In the overall circuit, this totals to $\mathcal{O}(\polylog(m))$ depth.
    Because a circuit of depth $d$ on $n$ qubits can have at most $nd$ gates, then the size of this circuit is $\mathcal{O}(n\polylog(m))$ gates.
    Then, because $n = \mathcal{O}(\poly(m))$, we obtain the claim.
\end{proof}

Alternatively, we note that one can obtain a similar result using~\cite{rosenthal2021query}.
As a simple corollary of this along with \Cref{thm:PRF}, we can bound the depth/size of a quantum circuit for computing a PRF.

\begin{corollary}
    \label{coro:PRF-size}
    Let $\lambda=n$ denote the security parameter.
    Assuming that \textsf{RingLWE} cannot be solved in $t(n)$ time by a quantum computer, there exists a PRF $\mathcal{F} = \{f_\lambda\}_{\lambda \in \mathbb{N}}$ that is secure against $\mathcal{O}(t(n))$ quantum adversaries such that for keys $\mathbf{k} \in \mathcal{K}_\lambda$ (for the same key space as in \Cref{thm:PRF}), $f_\lambda(\mathbf{k},\cdot) : \{0,1\}^m \to \{0,1\}$ is computable by an $n$-qubit quantum circuit of size $\mathcal{O}(n\polylog(n))$ and depth $\mathcal{O}(\polylog(n))$.
\end{corollary}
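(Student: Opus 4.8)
The plan is to chain \Cref{thm:PRF} with the preceding proposition on shallow quantum simulation of $\mathsf{TC}^0$ circuits, with essentially all of the work being a careful identification of parameters. First I would invoke \Cref{thm:PRF} with its hardness hypothesis instantiated at the time bound $t(\lambda)$: it furnishes a keyed family $\mathcal{RF} = \{f_\lambda\}_{\lambda\in\mathbb{N}}$ with $f_\lambda\colon R_q^{m+1}\times\{0,1\}^m\to\{0,1\}$ such that (i) for every key $\mathbf{k}\in\mathcal{K}_\lambda = R_q^{m+1}$ the function $f_\lambda(\mathbf{k},\cdot)$ is computed by a $\mathsf{TC}^0$ circuit, and (ii) any $\mathcal{O}(t(\lambda))$-time quantum algorithm distinguishing oracle access to $f_\lambda(\mathbf{k},\cdot)$ from oracle access to a uniformly random Boolean function would yield a $t(\lambda)$-time quantum algorithm for \textsf{RingLWE}. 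Under the assumed hardness of \textsf{RingLWE} against $t(\lambda)$-time quantum algorithms, the contrapositive of (ii) states precisely that $\mathcal{RF}$ is quantum secure against $\mathcal{O}(t(\lambda))$ adversaries.

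Next I would feed the $\mathsf{TC}^0$ circuit computing $f_\lambda(\mathbf{k},\cdot)$ (on $m$ input bits) into the proposition ``Quantum circuits for $\mathsf{TC}^0$''. This converts it into a reversible quantum circuit $C'$ on $\mathcal{O}(\poly(m))$ qubits, of size $\mathcal{O}(n\polylog n)$ and depth $\mathcal{O}(\polylog n)$ where $n$ denotes that qubit count, satisfying $C'\ket{x}\ket{z}=\ket{x}\ket{z\oplus f_\lambda(\mathbf{k},x)}$ — which is exactly the notion of ``$C'$ implements $f_\lambda(\mathbf{k},\cdot)$'' used both in the corollary and in the quantum-secure PRF definition. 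The key space and the security guarantee from the previous step are untouched by this transformation.

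It then only remains to fix the parameter dictionary and rename. Since \Cref{thm:PRF} asks merely for $m = m(\lambda) = \omega(\log\lambda)$, I am free to take $m$ to be a fixed polynomial in $\lambda$; the qubit count is then $n = \mathcal{O}(\poly(m)) = \mathcal{O}(\poly(\lambda))$, and conversely $\lambda = \poly(n)$, so $\polylog n$, $\polylog m$ and $\polylog\lambda$ all agree up to constant factors. Consequently the bounds $\mathcal{O}(n\polylog n)$ on size and $\mathcal{O}(\polylog n)$ on depth are insensitive to which quantity we designate the security parameter. Setting $\lambda = n$ then yields the corollary: an $n$-qubit quantum circuit of size $\mathcal{O}(n\polylog n)$ and depth $\mathcal{O}(\polylog n)$ that computes, for keys in $\mathcal{K}_n$, a PRF secure against $\mathcal{O}(t(n))$ quantum adversaries. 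I would finish by noting that one may alternatively substitute the $\mathsf{TC}^0$-to-quantum simulation of~\cite{rosenthal2021query} for the proposition.

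There is no real mathematical obstacle here: all the substance is already in \Cref{thm:PRF} and in the $\mathsf{TC}^0$ proposition. The only point requiring care is exactly the bookkeeping of the last paragraph — ensuring that the qubit count $n$ of the simulated circuit, the PRF input length $m$ (which must stay $\omega(\log\lambda)$), and the security parameter $\lambda$ can be chosen polynomially related, so that the claimed size and depth bounds and the $\mathcal{O}(t(n))$-adversary security guarantee hold simultaneously under one consistent choice of parameters.
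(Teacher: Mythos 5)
Your proposal is correct and follows essentially the same route as the paper, which treats this as an immediate consequence of \Cref{thm:PRF} combined with the $\mathsf{TC}^0$-to-quantum-circuit proposition, plus the same parameter identification $\lambda = n$ (with $m = \omega(\log\lambda)$ and $n = \mathcal{O}(\poly(m))$) that you spell out. Your bookkeeping in the final paragraph is, if anything, more careful than the paper's one-line remark on this point.
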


Note here that by the above analysis, we have $m = \omega(\log(\lambda))$.
Since $\bigo(\poly(m))$ qubits suffice to implement these PRFs, we can take $n=\lambda$, similar to \cite{brakerski2019pseudo}.

Our proofs also require the notion of pseudorandom quantum states.
Informally, pseudorandom quantum states are ensembles of quantum states that are indistinguishable from Haar-random states to any efficient (quantum) algorithm.
Moreover, it is known how to construct these states using efficient quantum circuits.
Recently, pseudorandom quantum states have been of great interest in quantum cryptography~\cite{ananth2022cryptography,kretschmer2023quantum, morimae2022quantum} and complexity theory~\cite{kretschmer2021quantum}.
We define them formally below, following the presentation in~\cite{ji2018pseudorandom,brakerski2019pseudo}.

\begin{definition}[Pseudorandom quantum states (PRS)~\cite{ji2018pseudorandom}]
    \label{def:PRS}
  Let $\lambda=n$ denote the security parameter. Let $\mathcal{K} = \{\mathcal{K}_\lambda\}_{\lambda \in \mathbb{N}}$ be the key space. A keyed family of pure quantum states $\left\{\left|\phi_k\right\rangle\right\}_{k \in \mathcal{K}_\lambda}$ is \emph{pseudorandom against $t(n)$ adversaries} if the following two conditions hold:
\begin{enumerate}
    \item (Efficient generation). There is a polynomial-time quantum algorithm $\mathsf{Gen}$ that generates state $\left|\phi_k\right\rangle$ on input $k$. That is, for all $\lambda \in \mathbb{N}$ and for all $k \in \mathcal{K_\lambda}, \mathsf{Gen}(1^\lambda, k)=\left|\phi_k\right\rangle$.
    \item (Pseudorandomness). Any polynomially many copies of $\left|\phi_k\right\rangle$ with the same random $k \in \mathcal{K}_\lambda$ are computationally indistinguishable from the same number of copies of a Haar-random state. More precisely, for any $t(n)$-time quantum algorithm $\mathcal{D}$ and any $N = \poly(\lambda)$, there exists a negligible function $\mathsf{negl}(\cdot)$ such that for all $\lambda \in \mathbb{N}$,
    \begin{equation}
        \left|\Pr_{k \leftarrow \mathcal{K}_\lambda}\left[\mathcal{D}\left(\left|\phi_k\right\rangle^{\otimes N}\right)=1\right]-\Pr_{|\psi\rangle \leftarrow \mu}\left[\mathcal{D}\left(|\psi\rangle^{\otimes N}\right)=1\right] \right|\leq\operatorname{negl}(\lambda),
    \end{equation}
    where $\mu$ is the Haar measure over pure states on $n$ qubits.
\end{enumerate}
When $t(n)=\poly(n)$, we simply say that the states are \emph{pseudorandom}.
\end{definition}

There exist efficient procedures to generate pseudorandom quantum states under standard cryptographic assumptions.
In particular, we consider the construction by~\cite{brakerski2019pseudo}, which assumes the existence of quantum-secure pseudorandom functions.

\begin{prop}[Corollary of Claims 3 and 4 in~\cite{brakerski2019pseudo}]
    \label{thm:PRS}
    Let $\lambda=n$ denote the security parameter and $t(n) \geq \poly(n)$.
    Assuming that \textsf{RingLWE} cannot be solved by a quantum computer in $t(n)$ time, pseudorandom quantum states secure against $\mathcal{O}(t(n))$ adversaries with key space $\mathcal{K}$ (for the same key space as in \Cref{thm:PRF}) can be prepared using $n$-qubit quantum circuits of depth $\mathcal{O}(\polylog(n))$ and size $\mathcal{O}(n\polylog(n))$.
\end{prop}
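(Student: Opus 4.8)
The plan is to instantiate the binary-phase-state construction of Brakerski and Shmueli with the $\mathsf{TC}^0$-implementable PRF of \Cref{coro:PRF-size}, and then to read off the circuit bound from that corollary while inheriting pseudorandomness from the security reduction of \cite{brakerski2019pseudo}. Concretely, take the quantum-secure PRF $\mathcal{F} = \{f_\lambda\}_{\lambda\in\mathbb{N}}$ of \Cref{coro:PRF-size} with input length $m = n$ (legitimate since its key space and security are governed by $\lambda = n$ and it only needs $\mathcal{O}(\poly(m))$ qubits), and define the keyed family
\[
    |\phi_{\mathbf{k}}\rangle \;=\; \frac{1}{\sqrt{2^n}}\sum_{x\in\{0,1\}^n}(-1)^{f_\lambda(\mathbf{k},x)}\,|x\rangle, \qquad \mathbf{k}\in\mathcal{K}_\lambda = R_q^{m+1},
\]
so the key space is literally the one from \Cref{thm:PRF}. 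This is exactly the PRS family whose pseudorandomness is Claim 3 of \cite{brakerski2019pseudo} and whose efficient generation is Claim 4.

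\textbf{Generation circuit.} First I would apply $H^{\otimes n}$ to $|0^n\rangle$ (depth $1$, size $n$) to obtain the uniform superposition, adjoin one ancilla in state $|-\rangle$ (constant cost), and run the quantum circuit $C'$ from \Cref{coro:PRF-size} that maps $|x\rangle|b\rangle \mapsto |x\rangle|b\oplus f_\lambda(\mathbf{k},x)\rangle$; by phase kickback this multiplies the $x$-register by $(-1)^{f_\lambda(\mathbf{k},x)}$. Then I run $C'^\dagger$ to restore the ancilla to $|-\rangle$ and discard it. The depth and size are dominated by the two invocations of $C'$, which \Cref{coro:PRF-size} bounds by $\mathcal{O}(\polylog(n))$ and $\mathcal{O}(n\polylog(n))$ respectively, giving the claimed circuit complexity.

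\textbf{Pseudorandomness.} Suppose an $\mathcal{O}(t(n))$-time quantum algorithm $\mathcal{D}$ distinguishes $|\phi_{\mathbf{k}}\rangle^{\otimes N}$ (with $\mathbf{k}$ uniform, $N = \poly(n)$) from $|\psi\rangle^{\otimes N}$ with $|\psi\rangle$ Haar-random, with non-negligible advantage. Claim 3 of \cite{brakerski2019pseudo} supplies the two ingredients needed: (i) $N$ copies of the binary-phase state built from a uniformly random $g:\{0,1\}^n\to\{0,1\}$ are within $\mathsf{negl}(n)$ total variation of $N$ copies of a Haar-random state (a second-moment / small-range-distribution argument), so $\mathcal{D}$ also distinguishes the $g$-phase state from the $f_\lambda(\mathbf{k},\cdot)$-phase state; and (ii) given oracle access to a function $h$, one can prepare $N$ copies of its phase state using $N\cdot\poly(n) = \poly(n)$ quantum queries to $h$, so $\mathcal{D}$ yields a $\poly(n)$-query, $\mathcal{O}(t(n))$-time adversary $\mathsf{Adv}$ distinguishing $f_\lambda(\mathbf{k},\cdot)$ from uniform. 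This contradicts the $\mathcal{O}(t(n))$-quantum-security of $\mathcal{F}$ in \Cref{coro:PRF-size}, which holds under the assumed $t(n)$-time quantum hardness of \textsf{RingLWE}. Hence $\{|\phi_{\mathbf{k}}\rangle\}$ is pseudorandom against $\mathcal{O}(t(n))$ adversaries.

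\textbf{Main obstacle.} Everything except Step 3 is bookkeeping on top of \Cref{coro:PRF-size} and the standard phase-kickback gadget; the genuinely delicate point is the security reduction — in particular verifying that the Haar-to-random-phase-state swap survives even when the adversary holds $N = \poly(n)$ copies (this needs the Brakerski–Shmueli multi-copy statistical bound, not merely a single-copy calculation) and that simulating $N$ phase-state copies costs only $\poly(n)$ quantum queries so that $\mathsf{Adv}$ stays within the $\mathcal{O}(t(n))$ time budget.
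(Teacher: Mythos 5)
Your proposal is correct and follows essentially the same route as the paper: instantiate the Brakerski–Shmueli binary-phase-state construction with the PRF of \Cref{coro:PRF-size}, observe that the circuit cost is one layer of Hadamards plus the PRF evaluation (hence $\mathcal{O}(\polylog(n))$ depth and $\mathcal{O}(n\polylog(n))$ size), and inherit security against $\mathcal{O}(t(n))$ adversaries by tracing through Claims 3 and 4 of \cite{brakerski2019pseudo}. You simply spell out the phase-kickback gadget and the two ingredients of the security reduction in more detail than the paper does.
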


\begin{proof}
    Note that using the PRF from~\cite{arunachalam2021quantum} and tracing through the proof of Claim 3 in~\cite{brakerski2019pseudo}, one can clearly see that security holds for $\mathcal{O}(t(n))$ adversaries rather than only efficient adversaries.
    We need to prove that the size and depth are as stated for the construction of pseudorandom quantum states in~\cite{brakerski2019pseudo} using the PRF from~\cite{arunachalam2021quantum}.
    To obtain the depth and size bounds, we analyze the construction in~\cite{brakerski2019pseudo}.
    In Claim 3 of~\cite{brakerski2019pseudo}, they show that their constructed states can be prepared by applying a single layer of Hadamard gates followed by applying a quantum-secure PRF.
    First, the layer of Hadamards has depth $1$ and size $n$.
    Using the construction from~\Cref{coro:PRF-size}, applying the PRF can then be implemented in $\mathcal{O}(\polylog(n))$ depth and $\mathcal{O}(n\polylog(n))$ size.
    Thus, overall, the depth and size are dominated by the cost of evaluating the PRF.
    Moreover, in Claim 4 of~\cite{brakerski2019pseudo}, they prove that this is indeed constructs a pseudorandom quantum state.
\end{proof}

Note again that the number of qubits $n$ in the quantum circuit depends on the security parameter $\lambda$.
In fact, due to the construction used, the $n$ depends on $\lambda$ in the same way as for the PRF construction.
Also note that the above constructions of PRF/PRS can be implemented using a number of Clifford and T gates of the same order.
This is because the $\mathsf{TC^0}$ circuits in the PRF constructions are classical circuits which can be implemented exactly by Toffoli gates, and Toffoli gates can be constructed using a constant number of Clifford and T gates.
Also in the PRS construction, the remaining gates are Hadamard gates which are Clifford gates.
Therefore, the computational hardness results in \Cref{app:states-comp-complexity,app:unitary-comp-complexity} also apply to Clifford+T circuits of the same gate complexity.

\section{Learning quantum states}

Recall that, given copies of a pure state of bounded circuit complexity, we wish to find a classical description for a quantum circuit that approximately implements this state.
It is natural to require the learner to output a circuit description since this ensures that the output of the learner can indeed be used to prepare (approximate) copies of the unknown state. 
This model is similar-in-spirit to learning an (approximate) generator for an unknown classical probability distribution \cite{kearns1994learnability}.
Nevertheless, our sample complexity results hold for learning classical descriptions beyond circuit descriptions, and our computational complexity results immediately extend to learners that output classical descriptions from which a circuit description can be derived efficiently (e.g., matrix product states/operators with constant bond dimension~\cite{foss2021holographic, schon2005sequential}, stabilizer descriptions, etc.).

Specifically, let $\ket{\psi} = U\ket{0}^{\otimes n}$, where $U$ is a unitary consisting of $G$ two-qubit gates.
Throughout this section, we denote $\rho \triangleq \ketbra{\psi}$.
Suppose we are given $N$ identically prepared copies of $\rho$.
The goal is to learn a classical circuit description of a quantum state $\hat{\rho}$ that is $\epsilon$-close to $\rho$ in trace distance, i.e., $\dtr(\hat{\rho}, \rho)=\norm{\hat{\rho} - \rho}_1/2 \leq \epsilon$.

In this appendix, we provide a proof of \Cref{thm:state-learning}, which characterizes the sample complexity for this task.
We restate the theorem below.
\begin{theorem}[State learning, detailed restatement of \Cref{thm:state-learning}]
    \label{thm:state-learning-detail}
    Let $\epsilon,\delta > 0$.
    Suppose we are given $N$ copies of a pure $n$-qubit state density matrix $\rho = \ketbra{\psi}$, where $\ket{\psi} = U\ket{0}^{\otimes n}$ is generated by a unitary $U$ consisting of $G$ two-qubit gates.
    Then, any algorithm that can output $\hat{\rho}$ such that $\dtr(\hat{\rho}, \rho) \leq \epsilon$ with probability at least $1-\delta$ requires at least
    \begin{equation}
        N = \Omega\left(\min\left(\frac{2^n}{\epsilon^2},\frac{G(1-\delta)}{\epsilon^2\log(G/\epsilon)}\right) + \frac{\log(1/\delta)}{\epsilon^2}\right).
    \end{equation}
    Meanwhile, there exists such an algorithm using
    \begin{equation}
        N = \mathcal{O}\left(\min\left(\frac{2^n\log(1/\delta)}{\epsilon^2}, \frac{G\log(G/\epsilon) + \log(1/\delta)}{\epsilon^2}\right)\right).
    \end{equation}
\end{theorem}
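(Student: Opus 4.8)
The plan is a covering-net plus hypothesis-selection strategy. For the $\tfrac{G\log(G/\epsilon)+\log(1/\delta)}{\epsilon^2}$ branch, I would first invoke \Cref{thm:covering} to build an $\epsilon$-covering net $\mathcal{N}$ of the $G$-gate unitary class $U^G$ with respect to $\dworst$, of log-cardinality $\log|\mathcal{N}| \le 32G\log(12G/\epsilon)+2G\log n$. Applying every element of $\mathcal{N}$ to $\ket0^{\otimes n}$ yields a set of candidate states that is an $\epsilon$-covering net of all $G$-gate states in trace distance, since $\dtr(U\ket0^{\otimes n}, V\ket0^{\otimes n})\le \tfrac12\dworst(U,V)$. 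Feeding these candidates and $N$ copies of $\rho$ into the classical-shadow hypothesis-selection routine of \Cref{prop:hypothesis} (with random Clifford shadows, where purity keeps the relevant Helstrom observables rank $\le 2$ and hence the shadow norm $\mathcal{O}(1)$) uses $N=\mathcal{O}(\log(|\mathcal{N}|/\delta)/\epsilon^2)$ copies and returns $\hat\rho$ with $\dtr(\hat\rho,\rho)\le 3\eta+\epsilon\le 4\epsilon$, since the covering property forces $\eta=\min_i\dtr(\rho,\sigma_i)\le\epsilon$; rescaling $\epsilon$ absorbs the constant. This already gives $N=\mathcal{O}\big(\tfrac{G\log(Gn/\epsilon)+\log(1/\delta)}{\epsilon^2}\big)$, so the only remaining task for this branch is to remove the $\log n$.

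\textbf{Removing the $\log n$ factor.} When $G<n/2$ the unknown state is a $(\le 2G)$-qubit state tensored with $\ket0$ on the untouched qubits. I would run a junta-learning step \cite{chen2023testing} to identify a set $S$ of at most $2G$ qubits outside of which the state is (approximately) $\ket0^{\otimes(n-|S|)}$, at a sample cost subleading to the main bound, then measure the qubits outside $S$ in the computational basis and postselect on the all-zero outcome (which succeeds with probability $\ge 1-\mathcal{O}(\epsilon)$), collapsing the state to one supported on $S$. Applying \Cref{thm:covering} with $n$ replaced by $|S|\le 2G$ then gives a candidate net of log-cardinality $\mathcal{O}(G\log(G/\epsilon))$ with no $n$-dependence, and the hypothesis selection proceeds exactly as before. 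The $\tfrac{2^n\log(1/\delta)}{\epsilon^2}$ branch is just standard pure-state tomography \cite{haah2017sample}; one runs whichever of the two algorithms is cheaper, yielding the claimed minimum.

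\textbf{Lower bound.} The plan is to reduce learning to discriminating a packing net of $G$-gate states. Put $k=\lfloor\log_2(G/C')\rfloor$ so that any $k$-qubit pure state has circuit complexity $\le G$ (as $\mathcal{O}(2^k)$ two-qubit gates implement any $k$-qubit state \cite{shende2005synthesis}), and consider states $\ket{\psi_v}=\sqrt{1-\epsilon^2}\,\ket0^{\otimes k}+\epsilon\ket v$, tensored with $\ket0$ on the remaining $n-k$ qubits, where $\ket v$ ranges over a maximal net of unit vectors orthogonal to $\ket0^{\otimes k}$ with pairwise inner product $\le 1/2$; such a net has size $e^{\Omega(2^k)}=e^{\Omega(G)}$. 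Every $\ket{\psi_v}$ is within trace distance $\epsilon$ of $\ket0^{\otimes n}$, while pairwise trace distances are $\ge c\epsilon$ for an absolute constant $c$. Hence an $(\epsilon')$-accurate, success-$\ge 1-\delta$ learner with $\epsilon'\le c\epsilon/3$ identifies $v$ with probability $\ge 1-\delta$, because the returned state is then strictly closest to $\ket{\psi_v}$ among the net. By Fano's inequality the mutual information between the index and the learner's classical output is $\ge (1-\delta)\log|\mathcal{M}|-1$, and by Holevo's bound this is at most $N$ times the per-copy information; since the $\ket{\psi_v}$ all lie in an $\epsilon$-ball around a common reference and have effective support dimension $\sim G$, each copy carries only $\tilde{\mathcal{O}}(\epsilon^2)$ --- concretely $\mathcal{O}(\epsilon^2\log(G/\epsilon))$ --- bits \cite{haah2016sample,haah2017sample}. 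Combining gives $N=\Omega\big(\tfrac{(1-\delta)G}{\epsilon^2\log(G/\epsilon)}\big)$; repeating with the packing net of all $n$-qubit pure states produces the $\tfrac{2^n}{\epsilon^2}$ term, and a two-state instance (two $\epsilon$-separated states near $\ket0^{\otimes n}$) produces the $\tfrac{\log(1/\delta)}{\epsilon^2}$ term via the standard binary-hypothesis-testing lower bound; taking the best of these and using $a+b\le 2\max(a,b)$ yields the stated form.

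\textbf{Main obstacles.} Given \Cref{thm:covering} and \Cref{prop:hypothesis}, the covering-net core of the upper bound is routine; the delicate point is the junta-learning plus measure-and-postselect reduction to an $n$-independent candidate net, where one must check that identifying the $\le 2G$ active qubits is itself cheap and robust for pure states that are only approximately product across the cut. In the lower bound, the crux is the information-theoretic accounting --- establishing that each copy of a state drawn from a net concentrated in an $\epsilon$-ball of effective dimension $G$ carries only $\mathcal{O}(\epsilon^2\log(G/\epsilon))$ bits, which is precisely the source of the $\log(G/\epsilon)$ in the denominator and where the sample-optimal-tomography-style entropy estimate does the real work.
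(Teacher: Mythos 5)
Your proposal is correct and follows essentially the same route as the paper: the upper bound via a covering net of $G$-gate unitaries combined with classical-shadow hypothesis selection, with the $\log n$ factor removed by junta identification plus measure-and-postselect (the paper additionally handles the technicality that the identified set may miss some active qubits via a permutation argument), and the lower bound via a packing net of $\log_2 G$-qubit states concentrated in an $\epsilon$-ball together with Fano and a Holevo-information bound of $\mathcal{O}(\epsilon^2\log(G/\epsilon))$ per copy, plus a two-state Helstrom argument for the $\log(1/\delta)/\epsilon^2$ term. The only cosmetic difference is that you construct the packing net explicitly, whereas the paper imports it as a black box from Lemma 5 of Haah et al.
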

Here, the minimum with $2^n/\epsilon^2$ corresponds to the sample-optimal approaches for full quantum state tomography~\cite{haah2017sample,o2016efficient}.
The theorem in the main text corresponds to $\delta = \mathcal{O}(1)$ so that the upper and lower bounds are equal up to logarithmic factors.

In \Cref{app:states-upper} we prove the sample complexity upper bound, and in \Cref{app:states-lower}, we show the sample complexity lower bound.
Moreover, in \Cref{app:states-comp-complexity}, we prove \Cref{thm:state-comp-complex}, which gives a lower bound on the computational complexity required for this task.

\subsection{Sample complexity upper bound}
\label{app:states-upper}

In this section, we prove the sample complexity upper bound for \Cref{thm:state-learning-detail}.
We provide an algorithm for learning the unknown quantum state within trace distance $\epsilon$ by constructing a covering net over the space of all unitaries consisting of $G$ two-qubit gates.
We can then obtain a covering net over all pure quantum states generated by $G$ two-qubit gates by applying each element of the unitary covering net to the zero state.
With this covering net, we can use quantum hypothesis selection~\cite{badescu2020improved} based on classical shadows~\cite{huang2020predicting} (discussed in \Cref{app:hypothesis}) to identify a state in the covering net that is close to the unknown target state.
We note that this strategy may be adapted to other restricted state/unitary classes as long as we can construct a covering net with bounded cardinality.

\begin{prop}[State learning upper bound]
    \label{prop:states-upper}
    Let $\epsilon,\delta > 0$.
    Suppose we are given $N$ copies of a pure $n$-qubit state density matrix $\rho = \ketbra{\psi}$, where $\ket{\psi} = U\ket{0}^{\otimes n}$ is generated by a unitary $U$ consisting of $G$ two-qubit gates.
    Then, there exists an algorithm that can output $\hat{\rho}$ such that $\dtr(\hat{\rho}, \rho) \leq \epsilon$ with probability at least $1-\delta$ using
    \begin{equation}
        \label{eq:state-upper}
        N = \mathcal{O}\left(\min\left(\frac{2^n\log(1/\delta)}{\epsilon^2}, \frac{G\log(G/\epsilon) + \log(1/\delta)}{\epsilon^2}\right)\right)
    \end{equation}
    samples of $\ket{\psi}$.
\end{prop}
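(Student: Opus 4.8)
The plan is to present two algorithms and return whichever uses fewer samples, matching the two terms in the claimed minimum. The first, \emph{dimension-limited}, algorithm runs a sample-optimal pure-state tomography routine~\cite{haah2017sample,o2016efficient} at accuracy $\epsilon/3$ and constant success probability, using $\mathcal{O}(2^n/\epsilon^2)$ copies and returning a pure-state (hence, after synthesis, circuit) estimate; it repeats this $\mathcal{O}(\log(1/\delta))$ times independently, so that with probability $\ge 1-\delta/2$ some output is within trace distance $\epsilon/3$ of $\rho$, and finally applies the classical-shadow hypothesis selector of Prop.~\ref{prop:hypothesis} to these $\mathcal{O}(\log(1/\delta))$ candidates (a further $\mathcal{O}(\log(\log(1/\delta)/\delta)/\epsilon^2)$ copies) to output one within $\epsilon$ with probability $\ge 1-\delta$. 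This costs $\mathcal{O}(2^n\log(1/\delta)/\epsilon^2)$ copies in total.

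The second, \emph{complexity-limited}, algorithm is the covering-net plus hypothesis-selection approach (we may assume $1\le G/C\le 4^{n+1}$, since otherwise $2^n$ is small and the first algorithm already wins). By Theorem~\ref{thm:covering} there is an $(\epsilon/5)$-covering net $\mathcal{N}$ of the $G$-gate unitaries $U^G$ with respect to $\dworst$ with $\log|\mathcal{N}| = \mathcal{O}(G\log(G/\epsilon) + G\log n)$. Applying each net element to $\ket{0}^{\otimes n}$ produces pure candidate states $\sigma_1,\dots,\sigma_m$ with $m=|\mathcal{N}|$, and by Lemma~\ref{lem:dist-spectral-diamond} the net element $\tilde U$ with $\dworst(U,\tilde U)\le \epsilon/5$ yields $\dtr(\ket{\psi},\tilde U\ket{0}^{\otimes n})\le \tfrac12\dworst(U,\tilde U)\le \epsilon/10$, since $\tfrac12\dworst$ upper bounds the output trace distance on the fixed input $\ket{0}^{\otimes n}$. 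Feeding these hypotheses with accuracy $\epsilon/5$ into Prop.~\ref{prop:hypothesis} outputs $\hat\rho$ with $\dtr(\rho,\hat\rho)\le 3(\epsilon/10)+\epsilon/5\le\epsilon$ with probability $\ge 1-\delta$, using $\mathcal{O}(\log(m/\delta)/\epsilon^2)=\mathcal{O}((G\log(G/\epsilon)+G\log n+\log(1/\delta))/\epsilon^2)$ copies; the Helstrom observables inside Prop.~\ref{prop:hypothesis} are rank-$\le 2$ projectors, so their shadow norm is $\mathcal{O}(1)$ with no hidden $n$-dependence.

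It remains to strip the $G\log n$ term, which dominates when $n\gg\mathrm{poly}(G)$. The structural fact to exploit is that $\ket{\psi}$ is supported on a set $S$ of at most $2G$ qubits: the $G$ gates touch at most $2G$ qubits, and a pure state whose reduced state on the untouched qubits equals $\ketbra{0\cdots0}$ necessarily factorizes as $\ket{\chi}_S\otimes\ket{0\cdots0}_{\bar{S}}$. First run the junta-learning subroutine of~\cite{chen2023testing} to locate a set $\hat S$ of $\mathcal{O}(G)$ qubits containing $S$, using a number of copies within the target budget and, crucially, free of $\log n$; then measure the qubits outside $\hat S$ in the computational basis and postselect on the all-zero outcome, reducing the task to learning a pure state on the $\mathcal{O}(G)$ qubits of $\hat S$. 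Re-running the covering-net plus hypothesis-selection step on $G$-gate circuits over only these $\mathcal{O}(G)$ qubits, the per-gate ``which pair of qubits'' factor now contributes $\log\binom{\mathcal{O}(G)}{2}=\mathcal{O}(\log G)$ instead of $\mathcal{O}(\log n)$, so $\log|\mathcal{N}|=\mathcal{O}(G\log(G/\epsilon))$ and the total becomes $\mathcal{O}((G\log(G/\epsilon)+\log(1/\delta))/\epsilon^2)$ copies. Returning the better of the two algorithms gives~\eqref{eq:state-upper}.

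The covering-net cardinality bound (Theorem~\ref{thm:covering}) and the classical-shadow hypothesis-selection guarantee (Prop.~\ref{prop:hypothesis}) are already in hand, and combining them directly yields everything up to the $G\log n$ overhead, so the delicate part — and the main obstacle — is the $\log n$ removal. One must verify that junta learning pins down the $\mathcal{O}(G)$ relevant qubits within the sample budget without a harmful $\log n$ blow-up (intuitively possible because each computational-basis outcome on a copy of $\ket{\psi}$ already carries $\Omega(\log n)$ bits), that the measure-and-postselect reduction composes cleanly with this step, and that the error from treating near-trivial qubits — those in $S$ whose single-qubit marginal is only slightly away from $\ketbra{0}$ — as exactly $\ket{0}$ accumulates to at most $\mathcal{O}(\epsilon)$ over the $\mathcal{O}(G)$ qubits; controlling this accumulation is where the analysis needs care.
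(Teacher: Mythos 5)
Your proposal follows essentially the same route as the paper: covering net over $G$-gate unitaries (Theorem~\ref{thm:covering}) pushed forward to states, classical-shadow hypothesis selection (Prop.~\ref{prop:hypothesis}), and a junta-learning-plus-postselection step to eliminate the $\log n$ factor when $G < n/2$. The one place your framing goes wrong is the direction of containment in the junta step: you claim to locate a set $\hat S \supseteq S$ of relevant qubits, but no measurement procedure can guarantee this, since a qubit can be acted on by gates yet have marginal exactly (or nearly) $\ketbra{0}$. The achievable guarantee — and the one the paper proves — is $\hat S \subseteq S$ together with a \emph{global} overlap bound $\expval{\rho_{\hat B}}{0_{\hat B}} \geq 1-\epsilon_1$ for $\hat B = [n]\setminus\hat S$, obtained via Bernstein's inequality and a union bound over the at most $2^{2G}$ candidate output sets; the gentle measurement lemma then bounds the postselection error in one shot, so there is no per-qubit error accumulation of the kind you worry about in your closing paragraph. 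The remaining wrinkle, which you do not address, is that the covering net must be placed on a subsystem that also contains the missed qubits $S\setminus\hat S$; the paper handles this with a permutation argument (permuting only qubits in $\hat B$, which leaves the postselected state invariant, to gather $S\setminus\hat S$ into a fixed set $C$ with $|C\cup\hat S|\le 2G$). These are exactly the "delicate parts" you flag but leave unresolved, so your write-up is a correct outline of the paper's proof with the key technical lemmas of the small-$G$ case still to be supplied.
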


Here, we take the minimum with $2^n/\epsilon^2$, as this is the upper bound achieved for full quantum state tomography on an arbitrary $n$-qubit pure state~\cite{haah2017sample,o2016efficient}.
Thus, we focus on proving the second term in the minimum.
We prove this upper bound by considering two cases: (1) $G \geq n/2$ and (2) $G < n/2$.
The upper bounds for each case agree and are given by \Cref{eq:state-upper}.
We first prove the proposition for Case (1) and indicate what changes for Case (2).

\begin{proof}[Proof of Case (1)]
    As previously described, this follows by first creating a covering net over all unitaries consisting of $G$ two-qubit gates and then using quantum hypothesis selection~\cite{badescu2020improved}.
    
    By \Cref{thm:covering}, we know that there exists an $(\epsilon/6)$-covering net $\mathcal{N}_{\epsilon/6}$ of the space of unitaries implemeted by $G$ two-qubits gates with respect to the diamond distance $\dworst$ with metric entropy bounded by
    \begin{equation}
    \label{eq:covering-size}
        \log(|\mathcal{N}_{\epsilon/6}|) \leq 32G \log\left(\frac{72G}{\epsilon}\right) + 2G\log(n).
    \end{equation}
    Applying each unitary $V' \in \mathcal{N}_{\epsilon/6}$ to the zero state, we obtain a new covering net
    \begin{equation}
        \mathcal{N}'_{\epsilon/6} =\{V'\ketbra{0}^{\otimes n} V'^\dagger : V' \in \mathcal{N}_{\epsilon/6}\}
    \end{equation}
    for the set of pure quantum states generated by $G$ two-qubit gates with respect to trace distance.
    We argue that this is true as follows.
    Any pure quantum state generated by $G$ two-qubit gates can be written as $\ket{\phi} = V\ket{0}^{\otimes n}$ for some unitary $V$ implemented by $G$ two-qubit gates and let $\sigma = \ketbra{\phi}$.
    Using the definition of the covering net $\mathcal{N}_{\epsilon/6}$, there exists a unitary $V' \in \mathcal{N}_{\epsilon/6}$ such that $\dworst(V, V') < \epsilon/6$.
    Consider $\ket{\phi'} = V'\ket{0}^{\otimes n}$ and let $\sigma' = \ketbra{\phi'} \in \mathcal{N}'_{\epsilon/6}$.
    By the definition of the diamond distance in terms of a worst case over input states, we also have $\dtr(\sigma, \sigma') \leq \dworst(V, V') \leq \epsilon/12 < \epsilon/6$.
    Thus, $\mathcal{N}_{\epsilon/6}'$ satisfies the definition of a covering net over the pure quantum states generated by $G$ two-qubit gates with respect to trace distance $\dtr$.
    Moreover, we clearly see that $|\mathcal{N}'_{\epsilon/6}| \leq  |\mathcal{N}_{\epsilon/6}|$.

    We can consider this covering net $\mathcal{N}'_{\epsilon/6}$ as the set of hypothesis states in \Cref{prop:hypothesis}. 
    Let $\rho = \ketbra{\psi}$ be the unknown quantum state that we have copies of.
    By \Cref{prop:hypothesis}, there exists an algorithm to learn $\tilde{\rho}$ such that
    \begin{equation}
        \dtr(\rho, \tilde{\rho}) \leq 3\cdot \frac{\epsilon}{6} + \frac{\epsilon}{2} = \epsilon
    \end{equation}
    with probability at least $1-\delta$.
    Here, note that we used $\eta = \epsilon/6$ in \Cref{prop:hypothesis} by definition of an $(\epsilon/6)$-covering net.
    Furthermore, we may choose $\epsilon_2 = \epsilon/2$ and $\delta_1 = \delta/2$. In this way, we obtain $\tilde{\rho}$ such that $\dtr(\rho, \tilde{\rho}) \leq \epsilon$
    with probability at least $1-\delta$. Moreover, by \Cref{prop:hypothesis}, this algorithm to find $\tilde{\rho}$ requires at most
    \begin{equation}
        N = \mathcal{O}\left(\frac{\log(|\mathcal{N}'_{\epsilon/6}|/\delta)}{\epsilon^2}\right) = \mathcal{O}\left(\frac{G\log(G/\epsilon) + G\log(n) + \log(1/\delta)}{\epsilon^2}\right)
    \end{equation}
    copies of $\rho$, where the second equality follows from Eq.~\eqref{eq:covering-size}.
    Because we are considering $G \geq n/2$ in this case, then we have
    \begin{equation}
        N = \mathcal{O}\left(\frac{G\log(G/\epsilon) + \log(1/\delta)}{\epsilon^2}\right),
    \end{equation}
    as claimed.
\end{proof}

Notice in the above proof that we used $G \geq n/2$ in the last step to remove the extra $\log(n)$ factor.
However, in Case (2), we can no longer execute this step and must consider a more careful strategy to remove the dependence on system size $n$.
The key observation is that if $G < n/2$, some qubits in the system will be left in the zero state because no gate has acted upon them (for $G$ two-qubit gates, at most $2G < n$ qubits are acted upon nontrivially).
Notice that we only need to learn the quantum state on these $2G$ qubits rather than the whole system, since we can simply tensor product with the zero state for the remaining qubits.
Thus, we require the ability to discern which qubits have been acted upon by the $G$ two-qubit gates.
Once we find this set of qubits, the idea is to consider a covering net for the set of pure quantum states generated by $G$ two-qubit gates \emph{on this restricted system}.
Then, we can follow a similar argument to the above proof of Case (1).

We prove Case (2) of \Cref{prop:states-upper} in the following sections.
For the rest of this section, let $\rho = \ketbra{\psi}$.
In \Cref{app:states-upper-post}, we discuss an algorithm that identifies the qubits acted on nontrivially by the $G$ two-qubit gates with high probability and show that restricting to these identified qubits does not cause much error.
In \Cref{app:states-upper-permute}, we resolve a technical issue for defining the covering net on the restricted system, which stems from the algorithm possibly not identifying all qubits.
Finally, in \Cref{app:states-upper-proof}, we combine these pieces to provide the full proof of Case (2).

\subsubsection{Postselection}
\label{app:states-upper-post}

First, we present an algorithm to determine which qubits of the unknown quantum state $\rho = \ketbra{\psi}$ have been acted upon nontrivially by the $G$ two-qubit gates.
We then prove a guarantee about the number of samples of $\rho$ needed to determine these qubits with high probability.
We also show that considering $\rho$ to be the zero state on the rest of the qubits does not incur much error.

Suppose that the true set of qubits acted upon by the $G$ two-qubit gates is denoted as $A$.
To determine which qubits are in the set $A$, consider the procedure given in Algorithm \ref{algo:nonzero-qubits}.
\begin{algorithm}
    \label{algo:nonzero-qubits}
    \caption{Identify qubits acted upon nontrivially (state version)}
    \KwIn{Copies of unknown $n$-qubit quantum state $\rho$.}
    \KwOut{List $\hat{A} \subseteq [n]$ of qubits.}
    Initialize $\hat{A} = \emptyset$.\\
    Repeat the following $N = \mathcal{O}\left(\frac{G + \log (1/\delta_1)}{\epsilon_1}\right)$ times: \linebreak
    (a) Measure all qubits of the unknown state $\rho$ in the computational basis.
    \linebreak 
    (b) Given the measurement outcome $\ket{x}$, set $\hat{A} \leftarrow \hat{A} \cup \mathrm{supp}(x)$, where $\mathrm{supp}(x) = \{i \in [n] : x_i \neq 0\}$.
\end{algorithm}
The idea behind this algorithm is simple.
If we measure a qubit in the computational basis and receive a nonzero measurement outcome, then it must have been acted upon by one of the $G$ two-qubit gates because the quantum state is assumed to have been initialized in the zero state.
We prove that $\mathcal{O}\left(\frac{G + \log(1/\delta_1)}{\epsilon_1}\right)$ copies of $\rho$ suffice to obtain, with high probability $1-\delta_1$, the desired property that measuring the qubits in $\hat{B} \triangleq [n] \setminus \hat{A}$ of $\rho$ yields the all zero bit string with high probability $1-\epsilon_1$.

\begin{lemma}
    \label{lemma:good-overlap}
    Let $\epsilon_1, \delta_1 > 0$. Suppose we are given copies of a pure $n$-qubit quantum state $\rho=\ketbra{\psi}$ generated by $G$ two-qubit gates acting on a subset of the qubits $A \subseteq [n]$. Then, Algorithm~\ref{algo:nonzero-qubits} uses $N = \mathcal{O}\left(\frac{G + \log(1/\delta_1)}{\epsilon_1}\right)$ copies of $\rho$ and outputs with probability at least $1-\delta_1$ a list $\hat{A} \subset [n]$ such that
    \begin{equation}
        \expval{\rho_{\hat{B}}}{0_{\hat{B}}} \geq 1 - \epsilon_1,
    \end{equation}
    where $\rho_{\hat{B}}$ denotes the reduced density matrix of $\rho$ when tracing out all qubits other than those in the set $\hat{B} = [n] \setminus \hat{A}$ and $\ket{0_{\hat{B}}}$ denotes the zero state on all qubits in ${\hat{B}}$.
\end{lemma}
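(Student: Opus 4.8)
The plan is to analyze Algorithm~\ref{algo:nonzero-qubits} directly through the distribution $p(x) = |\braket{x}{\psi}|^2$ induced by a computational-basis measurement. Two preliminary observations set up the argument. First, since $\ket{\psi} = U\ket{0}^{\otimes n}$ with $U$ supported only on $A$ and $|A| \leq 2G$, every measurement outcome $x$ satisfies $\mathrm{supp}(x) \subseteq A$; hence the running list $\hat{A}$ is always a subset of $A$, and $\hat{B} = [n]\setminus\hat{A}$ consists of the deterministically-zero qubits $[n]\setminus A$ together with qubits in $A$ that simply have not yet been observed to be nonzero. Second, the quantity we must control is exactly a marginal probability: $\langle 0_{\hat{B}}|\rho_{\hat{B}}|0_{\hat{B}}\rangle = \Pr_{x\sim p}[\,x|_{\hat{B}} = 0\,]$, by the consistency of marginal and full computational-basis measurements. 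So it suffices to show that, with probability at least $1-\delta_1$ over the $N$ measurement outcomes, the ``bad mass'' $q := \Pr_{x\sim p}[\,x|_{\hat{B}}\neq 0\,]$ is at most $\epsilon_1$.

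Next I would set up a stochastic-domination argument over the rounds of the algorithm. Let $\hat{A}_t$ denote the list after $t$ samples and $q_t := \Pr_{x\sim p}[\,x|_{\hat{A}_t^c}\neq 0\,]$, a non-increasing random sequence with $q_N = q$. The structural facts are: (i) conditioned on the history through round $t-1$, sample $t$ is ``bad'' — i.e.\ $\mathrm{supp}(x^{(t)}) \not\subseteq \hat{A}_{t-1}$, which is precisely the event that $\hat{A}$ grows — with probability exactly $q_{t-1}$; and (ii) each bad sample strictly enlarges $\hat{A}$, which can happen at most $|A| \leq 2G$ times in total. On the event $\{q_N > \epsilon_1\}$, monotonicity gives $q_{t-1} \geq q_N > \epsilon_1$ for all $t \leq N$, so coupling the per-round ``bad'' indicators to i.i.d.\ $\mathrm{Bernoulli}(\epsilon_1)$ variables shows that the number of bad samples stochastically dominates $\mathrm{Binomial}(N,\epsilon_1)$ while being at most $2G$. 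Therefore $\Pr[q_N > \epsilon_1] \leq \Pr[\mathrm{Binomial}(N,\epsilon_1) \leq 2G]$.

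A multiplicative Chernoff bound then closes the argument: taking $N$ with $N\epsilon_1 \geq \max\{4G,\, c\log(1/\delta_1)\}$ for a suitable constant $c$ forces $2G \leq \tfrac12 N\epsilon_1$ and $\Pr[\mathrm{Binomial}(N,\epsilon_1) \leq \tfrac12 N\epsilon_1] \leq e^{-N\epsilon_1/8} \leq \delta_1$, which is exactly the sample count $N = \mathcal{O}\!\left((G + \log(1/\delta_1))/\epsilon_1\right)$ used in Algorithm~\ref{algo:nonzero-qubits}; unwinding the two preliminary observations yields $\langle 0_{\hat{B}}|\rho_{\hat{B}}|0_{\hat{B}}\rangle = 1 - q_N \geq 1 - \epsilon_1$ with probability at least $1-\delta_1$. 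The main obstacle — everything else being routine — is making precise the interaction of facts (i) and (ii): that a bad sample forces a strict increase of $\hat{A}$, and that this, combined with the monotonicity of $q_t$, legitimately converts the adaptively-varying per-round success probabilities into a single clean binomial tail. I would handle this via the explicit coupling described above (realizing all indicators from common uniform random variables), which sidesteps any need for a more delicate martingale argument.
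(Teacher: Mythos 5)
Your proof is correct, but it takes a genuinely different route from the paper's. The paper fixes each of the at most $2^{2G}$ candidate subsets $A' \subseteq A$ that the algorithm could output, applies the Bhatia--Davis and Bernstein inequalities to show that for any fixed $A'$ with true overlap below $1-\epsilon_1$ the empirical overlap falls below $1-\epsilon_1/2$ with probability $1-\delta_1/2^{2G}$, union-bounds over all candidates, and then observes that the algorithm's actual output $\hat A$ has empirical overlap exactly $1$, so the contrapositive applies; the $G$ in the sample complexity comes from the $\log 2^{2G}$ in the union bound. You instead analyze the rounds adaptively: each ``bad'' round (occurring with conditional probability $q_{t-1}$) strictly enlarges $\hat A \subseteq A$ and hence can occur at most $2G$ times, and on the failure event $\{q_N > \epsilon_1\}$ the monotonicity of $q_t$ lets you couple the bad-round indicators from below by i.i.d.\ $\mathrm{Bernoulli}(\epsilon_1)$ variables, reducing everything to the lower tail of a single $\mathrm{Binomial}(N,\epsilon_1)$; here the $G$ enters as the cap on the number of informative rounds. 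Both arguments yield $N = \mathcal{O}((G+\log(1/\delta_1))/\epsilon_1)$. Your version avoids the union bound and the detour through empirical estimates entirely, directly controls the true overlap of the output set, and makes the coupon-collector-style intuition (``each sample either reveals a new qubit or certifies the residual mass is small'') explicit; the paper's version is more mechanical and generalizes more readily to settings where one wants a uniform guarantee over all candidate subsets rather than just the one the algorithm happens to output. The one step you rightly flag as delicate --- converting adaptively varying conditional probabilities into a clean binomial tail --- is handled correctly by your explicit coupling via common uniforms, since on the event $\{q_N>\epsilon_1\}$ one has $q_{t-1}\geq q_N>\epsilon_1$ for every $t$ by monotonicity, so the domination holds pointwise on that event.
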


\begin{proof}
    Let $A'$ be any possible set that could be output by Algorithm~\ref{algo:nonzero-qubits}.
    Let $B' \triangleq [n] \setminus A'$.
    We first define some random variables to state our claim more precisely.
    Let $E_{i, A'}$ be the event that round $i$ of measurement of the qubits in $B' = [n] \setminus A'$ in Algorithm~\ref{algo:nonzero-qubits} yields the all zero bitstring.
    Let $X_{i, A'}$ be the indicator random variable corresponding to the event $E_{i,A'}$.
    Then, we have that $\bar{X}_{A'} \triangleq \frac{1}{N}\sum_{i=1}^N X_{i, A'}$ is the number of times the qubits in $B'$ are all measured to be zero divided by the total number of measurements.
    In other words, $\bar{X}_{A'}$ is an empirical estimate for the overlap that the state $\rho_{B'}$ on qubits in $B'$ has with the all zero state.
    Moreover, we have
    \begin{equation}
    \mathbb{E}[X_{A'}] \triangleq \mathbb{E}[X_{i,A'}] = \expval{\rho_{B'}}{0_{B'}}
    \end{equation}
    for all $A'$. Note that the first definition makes sense because for any $i$, the $X_{i,A'}$ are identically distributed.
    This says that the true expectation of our random variables is the true overlap of the state $\rho_{B'}$ with the all zero state.
    
    We claim that for any $A'$, if the true overlap is less than $1-\epsilon_1$, then the estimated overlap is less than $1 - \epsilon_1/2$ with high probability.
    Formally, in terms of our random variables, this is the following statement: 
    \begin{claim}
        \label{claim:concentrate}
        For any set $A'$ that could be output by Algorithm~\ref{algo:nonzero-qubits}, if $\mathbb{E}[X_{A'}] < 1 - \epsilon_1$, then $\bar{X}_{A'} < 1-\epsilon_1/2$ with probability at least $1-\delta_1$.
    \end{claim}
    Thus, we have reduced our task to a concentration problem.
    Note that it suffices to prove this because the set $\hat{A}$ actually identified by Algorithm~\ref{algo:nonzero-qubits} has $\bar{X}_{\hat{A}} = 1$.
    This is true because a qubit is only added to the set $\hat{A}$ in the algorithm if it measured and observed a nonzero outcome.
    Thus, all qubits in $\hat{B} = [n] \setminus \hat{A}$ must have given zero when measured throughout all rounds of measurement.
    By definition, this gives us that $\bar{X}_{\hat{A}} = 1$.
    Then, by the contrapositive of Claim~\ref{claim:concentrate}, we see that $\mathbb{E}[X_{\hat{A}}] = \expval{\rho_{\hat{B}}}{0_{\hat{B}}} \geq 1 - \epsilon_1$, with probability at least $1-\delta_1$.
    We now prove this claim using classical concentration inequalities.

    \begin{proof}[Proof of Claim~\ref{claim:concentrate}]
        First, fix some set $A'$ that could be output by Algorithm~\ref{algo:nonzero-qubits}.
        Suppose 
        \begin{equation}
            \mathbb{E}[X_{A'}] \triangleq 1-a < 1-\epsilon_1,
        \end{equation}
        where $a > \epsilon_1$.
        Recall the Bhatia-Davis Inequality, which states that for $X \in [b,d]$ that
        \begin{equation}
            \mathrm{Var}(X) \leq (d - \mathbb{E}[X])(\mathbb{E}[X] - b).
        \end{equation}
        In our case, we have $X_{A'} \in [0,1]$ since they are indicator random variables so that the inequality gives us
        \begin{equation}
            \mathrm{Var}(X_{A'}) \leq (1 - \mathbb{E}[X])\mathbb{E}[X] \leq 1 - \mathbb{E}[X] = a.
        \end{equation}
        Now, recall Bernstein's Inequality, which states that for independent random variables $X_i$ with $|X_i| \leq c$ and $\sigma^2 = \frac{1}{N}\sum_{i=1}^N\mathrm{Var}(X_i)$, we have for any $t > 0$,
        \begin{equation}
            \Pr\left(\frac{1}{N}\sum_{i=1}^N X_i - \mathbb{E}[X] > t\right) \leq \exp\left(-\frac{Nt^2}{2\sigma^2 + 2ct/3}\right).
        \end{equation}
        In our case, $c = 1, \sigma^2 \leq a$, and $t = a/2$. Then, Bernstein's Inequality results in
        \begin{equation}
            \Pr\left(\bar{X}_{A'} - \mathbb{E}[X] > \frac{a}{2}\right) \leq \exp\left(-\frac{Na^2/4}{2a+a/3}\right).
        \end{equation}
        Plugging in $\mathbb{E}[X] = 1-a$ and simplifying, we have
        \begin{equation}
            \Pr\left(\bar{X}_{A'} > 1 - \frac{a}{2}\right) \leq \exp\left(-\frac{3Na}{28}\right) \leq \exp\left(-\frac{3N\epsilon_1}{28}\right).
        \end{equation}
        Since $a > \epsilon_1$, then $1 - a/2 < 1 - \epsilon_1/2$ so that we have
        \begin{equation}
            \Pr\left(\bar{X}_{A'} > 1 - \frac{\epsilon_1}{2}\right) \leq \exp\left(-\frac{3N\epsilon_1}{28}\right).
        \end{equation}
        Plugging in $N = \frac{28\log(2^{2G}/\delta_1)}{3\epsilon_1}$, we have
        \begin{equation}
            \label{eq:fixed-B'}
            \Pr\left(\bar{X}_{A'} > 1 - \frac{\epsilon_1}{2}\right) \leq \frac{\delta_1}{2^{2G}}.
        \end{equation}
        Recall that this inequality was for a single fixed set $A'$, but we want our claim to hold for any set $A'$.
        Thus, we need to union bound overall possible sets $A'$ output by Algorithm~\ref{algo:nonzero-qubits}.
        
        We claim that the number of such sets is at most $2^{2G}$.
        This is clear because if $A'$ is output by the algorithm, then $A' \subseteq A$, where $A$ is the true set of qubits that the $G$ gates act nontrivially on.
        This is true by construction because in order for a qubit to be added to the set output by Algorithm~\ref{algo:nonzero-qubits}, its result upon measurement must have yielded a nonzero outcome so that a gate must have acted upon this qubit.
        Hence $A' \subseteq A$, and because $|A| \leq 2G$, the number of possible subsets $A'$ of $A$ is at most $2^{2G}$.

        Thus, applying a union bound to Eq.~\eqref{eq:fixed-B'}, we see that the probability that, for any $A'$, $\bar{X}_{A'}$ is greater than $1-\epsilon_1/2$ is at most $\delta_1$.
        In other words, $\bar{X}_{A'}$ is less than $1-\epsilon_1/2$ with probability at least $1-\delta_1$.
        Moreover, here we used
        \begin{equation}
            N = \frac{28\log(2^{2G}/\delta_1)}{3\epsilon_1} = \mathcal{O}\left(\frac{G + \log(1/\delta_1)}{\epsilon_1}\right).
        \end{equation}
        This concludes the proof of the claim, which gives the result in \Cref{lemma:good-overlap} as explained previously.
    \end{proof}
\end{proof}

With this, we know that measuring qubits in $\hat{B} = [n] \setminus \hat{A}$ of $\rho$ yields the all zero bistring with high probability.
We want to show that, in fact, we can consider $\rho_{\hat{B}}$ as being the zero state without incurring much error.
In particular, we want to show the following lemma.

\begin{lemma}
\label{lem:postselect}
    Let $\epsilon, \delta_1 > 0$.
    Suppose we are given $N = \mathcal{O}\left(\frac{G + \log (1/\delta_1)}{\epsilon^2}\right)$ copies of an $n$-qubit quantum state $\rho$ generated by $G$ gates. Let $\hat{A} \subset [n]$ be as in Algorithm~\ref{algo:nonzero-qubits} and $\hat{B} = [n] \setminus \hat{A}$. Then, for $\Lambda = \ketbra{0_{\hat{B}}} \otimes I_{\hat{A}}$ (where $\ket{0_{\hat{B}}}$ denotes the zero state on all qubits in ${\hat{B}}$) and for the post-measurement state
    \begin{equation}
        \rho' \triangleq \frac{\sqrt{\Lambda}\rho \sqrt{\Lambda}}{\Tr(\Lambda\rho)},
    \end{equation}
    we have
    \begin{equation}
        \dtr(\rho, \rho') \leq \frac{\epsilon}{24}
    \end{equation}
    with probability at least $1-\delta_1$.
\end{lemma}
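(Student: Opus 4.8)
The plan is to derive \Cref{lem:postselect} directly from \Cref{lemma:good-overlap} together with the gentle measurement (gentle operator) lemma. First I would apply \Cref{lemma:good-overlap} with its accuracy parameter chosen as $\epsilon_1 := \epsilon^2/576$ (any sufficiently small constant multiple of $\epsilon^2$ works). This guarantees that with $N = \mathcal{O}\big((G + \log(1/\delta_1))/\epsilon_1\big) = \mathcal{O}\big((G + \log(1/\delta_1))/\epsilon^2\big)$ copies of $\rho$, Algorithm~\ref{algo:nonzero-qubits} outputs a set $\hat A$ for which, with probability at least $1-\delta_1$, the reduced state on $\hat B = [n]\setminus \hat A$ satisfies $\expval{\rho_{\hat B}}{0_{\hat B}} \geq 1 - \epsilon_1$.

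Next I would rewrite this overlap as the success probability of the two-outcome projective measurement $\{\Lambda, I-\Lambda\}$ applied to $\rho$: since $\Lambda = \ketbra{0_{\hat B}} \otimes I_{\hat A}$ acts as the identity on the qubits that are traced out, $\Tr(\Lambda\rho) = \Tr\big(\ketbra{0_{\hat B}}\, \rho_{\hat B}\big) = \expval{\rho_{\hat B}}{0_{\hat B}} \geq 1 - \epsilon_1$. Because $\Lambda$ is a projector we have $\sqrt{\Lambda} = \Lambda$, so $\rho' = \Lambda\rho\Lambda/\Tr(\Lambda\rho)$ is precisely the normalized post-measurement state conditioned on outcome $\Lambda$.

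Finally I would invoke the gentle measurement lemma: for any $0 \preceq \Lambda \preceq I$ with $\Tr(\Lambda\rho) \geq 1 - \epsilon_1$, the normalized post-measurement state obeys $\norm{\rho - \rho'}_1 \leq 2\sqrt{\epsilon_1}$, hence $\dtr(\rho,\rho') = \tfrac12\norm{\rho-\rho'}_1 \leq \sqrt{\epsilon_1} = \epsilon/24$. All of this holds on the event of probability at least $1-\delta_1$ furnished by \Cref{lemma:good-overlap}, which gives the claim. I do not expect a genuine obstacle here: the argument is essentially bookkeeping on top of \Cref{lemma:good-overlap} plus a standard gentle-measurement estimate. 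The only mild care needed is (i) matching constants so that $\sqrt{\epsilon_1}$ lands at $\epsilon/24$ (one would simply rescale $\epsilon_1$ by a constant if one uses a slightly different form of the gentle lemma), and (ii) observing that the $1/\epsilon_1$ sample scaling coming from \Cref{lemma:good-overlap} becomes the desired $1/\epsilon^2$ after the substitution $\epsilon_1 = \Theta(\epsilon^2)$.
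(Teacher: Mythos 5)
Your proposal is correct and follows essentially the same route as the paper: apply \Cref{lemma:good-overlap} with accuracy parameter $(\epsilon/24)^2 = \epsilon^2/576$, identify $\Tr(\Lambda\rho)$ with $\expval{\rho_{\hat{B}}}{0_{\hat{B}}}$, and conclude via the gentle measurement lemma. The constants and the $1/\epsilon^2$ sample scaling match the paper's argument exactly.
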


In other words, we want to show that our original state $\rho$ is not far in trace distance from the new state $\rho'$, where $\rho'$ is the state $\rho$ with the qubits in $\hat{B}$ projected to the zero state. In this way, we can effectively only consider the system on qubits in $\hat{A}$ when defining the covering net and using hypothesis selection. This turns out to be a bit more nuanced, but this is the general idea. To show this, we will use the Gentle Measurement Lemma, following the presentation in~\cite{wildebook}.

\begin{lemma}[Lemma 9.4.1 in~\cite{wildebook}]
    \label{lemma:gentle-meas}
    Consider a density operator $\rho$ and a measurement operator $\Lambda$, where $0 \leq \Lambda \leq I$. The measurement operator could be an element of a POVM. Suppose that the measurement operator $\Lambda$ has a high probability of detecting the state $\rho$:
    \begin{equation}
        \Tr(\Lambda\rho) \geq 1- \epsilon,
    \end{equation}
    where $\epsilon \in [0,1]$ (the probability of detection is high if $\epsilon$ is close to zero). Then the post-measurement state
    \begin{equation}
        \rho' \triangleq \frac{\sqrt{\Lambda}\rho \sqrt{\Lambda}}{\Tr(\Lambda \rho)}
    \end{equation}
    is $\sqrt{\epsilon}$-close to the original state $\rho$ in trace distance:
    \begin{equation}
        \dtr(\rho, \rho') \leq \sqrt{\epsilon}.
    \end{equation}
    Thus, the measurement does not disturb the state $\rho$ by much if $\epsilon$ is small.
\end{lemma}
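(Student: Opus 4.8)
The plan is to route through the fidelity and the Fuchs--van de Graaf inequality, which yields the stated bound with the correct constant. Set $p \triangleq \Tr(\Lambda\rho) \geq 1-\epsilon$, and recall that the (root) fidelity $F(\rho,\sigma) = \norm{\sqrt{\rho}\sqrt{\sigma}}_1$ satisfies the Fuchs--van de Graaf bound $\dtr(\rho,\sigma) \leq \sqrt{1 - F(\rho,\sigma)^2}$. Hence it suffices to show $F(\rho,\rho') \geq \sqrt{1-\epsilon}$, since then $\dtr(\rho,\rho') \leq \sqrt{1-(1-\epsilon)} = \sqrt{\epsilon}$, exactly as claimed.

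To lower bound the fidelity I would invoke Uhlmann's theorem, which expresses $F(\rho,\rho')$ as the maximum overlap $|\braket{\psi_\rho}{\psi_{\rho'}}|$ over all purifications $\ket{\psi_\rho},\ket{\psi_{\rho'}}$; exhibiting one convenient pair therefore produces a lower bound. Fix any purification $\ket{\psi_\rho}$ of $\rho$ on $\mathcal{H}\otimes\mathcal{H}_E$. The key observation is that $(\sqrt{\Lambda}\otimes I)\ket{\psi_\rho}$ is an (unnormalized) purification of $\tilde{\rho} \triangleq \sqrt{\Lambda}\rho\sqrt{\Lambda}$: tracing out the purifying system returns $\sqrt{\Lambda}\rho\sqrt{\Lambda}$, and its squared norm is $\expval{\Lambda\otimes I}{\psi_\rho} = \Tr(\Lambda\rho) = p$. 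Consequently $\ket{\psi_{\rho'}} \triangleq (\sqrt{\Lambda}\otimes I)\ket{\psi_\rho}/\sqrt{p}$ is a genuine normalized purification of $\rho' = \tilde{\rho}/p$.

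With this pair of purifications, Uhlmann's theorem gives
\begin{equation}
    F(\rho,\rho') \geq |\braket{\psi_\rho}{\psi_{\rho'}}| = \frac{1}{\sqrt{p}}\expval{\sqrt{\Lambda}\otimes I}{\psi_\rho} = \frac{\Tr(\sqrt{\Lambda}\rho)}{\sqrt{p}}.
\end{equation}
The final ingredient is the operator inequality $\sqrt{\Lambda} \geq \Lambda$, which holds because $0 \leq \Lambda \leq I$ forces every eigenvalue $\lambda \in [0,1]$ to obey $\sqrt{\lambda} \geq \lambda$. Pairing this against the positive semidefinite $\rho$ yields $\Tr(\sqrt{\Lambda}\rho) \geq \Tr(\Lambda\rho) = p$, so that $F(\rho,\rho') \geq p/\sqrt{p} = \sqrt{p} \geq \sqrt{1-\epsilon}$. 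Substituting into Fuchs--van de Graaf finishes the argument.

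I expect the main subtlety to be the bookkeeping in the purification step: verifying that $(\sqrt{\Lambda}\otimes I)\ket{\psi_\rho}/\sqrt{p}$ really reduces to $\rho'$ and is correctly normalized, and invoking Uhlmann in the right direction (a fixed purification pair lower bounds, never upper bounds, the fidelity). As a purification-free alternative, one can bound the trace norm directly: writing $\rho - \sqrt{\Lambda}\rho\sqrt{\Lambda} = \tfrac12[(I-\sqrt{\Lambda})\rho(I+\sqrt{\Lambda}) + \mathrm{h.c.}]$ and applying the H\"older bound $\norm{AB}_1 \leq \norm{A}_2\norm{B}_2$ gives $\norm{\rho - \sqrt{\Lambda}\rho\sqrt{\Lambda}}_1 \leq \norm{(I-\sqrt{\Lambda})\sqrt{\rho}}_2\,\norm{(I+\sqrt{\Lambda})\sqrt{\rho}}_2$; the operator inequality $(I-\sqrt{\Lambda})^2 \leq I-\Lambda$ (again equivalent to $\Lambda \leq \sqrt{\Lambda}$) bounds the first factor by $\sqrt{\epsilon}$ and $\norm{I+\sqrt{\Lambda}} \leq 2$ bounds the second, while the renormalization contributes $\norm{\sqrt{\Lambda}\rho\sqrt{\Lambda} - \rho'}_1 = 1-p \leq \epsilon$. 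This route reaches the same conclusion up to the leading constant, and I would prefer the fidelity argument precisely because it lands on $\sqrt{\epsilon}$ without slack.
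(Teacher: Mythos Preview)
The paper does not give its own proof of this lemma; it is simply quoted from Wilde's book and used as a black box. Your fidelity/Uhlmann argument is correct and is in fact the standard proof found there: purify $\rho$, note that $(\sqrt{\Lambda}\otimes I)\ket{\psi_\rho}/\sqrt{p}$ purifies $\rho'$, use $\sqrt{\Lambda}\geq\Lambda$ to get $F(\rho,\rho')\geq\sqrt{p}$, and finish with Fuchs--van de Graaf. So there is nothing to compare --- your proposal matches the original source.
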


With this, we can now prove~\Cref{lem:postselect}.

\begin{proof}[Proof of~\Cref{lem:postselect}]
    As stated above, let $\hat{A} \subset [n]$ be as in Algorithm~\ref{algo:nonzero-qubits}, and let $A \subset [n]$ be true the set of qubits acted non-trivially on by the $G$ gates. Let $\hat{B} \triangleq [n] \setminus \hat{A}$ and let $B \triangleq [n] \setminus A$.

    In order to apply the Gentle Measurement Lemma, we need to show that
    \begin{equation}
        \Tr(\Lambda\rho) \geq 1- \left(\frac{\epsilon}{24}\right)^2.
    \end{equation}
    Since $\Lambda = \ketbra{0_{\hat{B}}} \otimes I_{\hat{A}}$, where $\ket{0_{\hat{B}}}$ denotes the zero state on all qubits in ${\hat{B}}$, we have
    \begin{equation}
        \Tr(\Lambda\rho) = \Tr((\ketbra{0_{\hat{B}}} \otimes I_{\hat{A}})\rho) = \Tr(\ketbra{0_{\hat{B}}}\rho_{\hat{B}}) = \expval{\rho_{\hat{B}}}{0_{\hat{B}}},
    \end{equation}
    where $\rho_{\hat{B}}$ denotes the reduced density matrix obtained by tracing out all qubits in $[n]\setminus {\hat{B}}$.
    Thus, it suffices to show that
    \begin{equation}
        \expval{\rho_{\hat{B}}}{0_{\hat{B}}} \geq 1 - \left(\frac{\epsilon}{24}\right)^2.
    \end{equation}
    Intuitively, this makes sense because in Algorithm~\ref{algo:nonzero-qubits}, we identified the qubits in ${\hat{B}}$ as those being close to the zero state.
    Indeed, this holds by~\Cref{lemma:good-overlap} when choosing $\epsilon_1 = (\epsilon/24)^2$.
    Thus, the result follows.
\end{proof}

\subsubsection{Permutation}
\label{app:states-upper-permute}
Before we can prove \Cref{prop:states-upper}, we must resolve a technical issue. Namely, ideally, we would like to consider a covering net on the subsystem of qubits in the set $A$ (the true set of qubits that the $G$ gates generating the unknown state $\rho$ act nontrivially on). In this way, because $\hat{A} \subseteq A$, where $\hat{A}$ is the set of qubits identified by Algorithm~\ref{algo:nonzero-qubits}, then our postselected state $\rho'$ from \Cref{lem:postselect} should be close to some state in this covering net on the subsystem.
This nearby state in the covering net can then be identified via quantum hypothesis selection~\cite{bădescu2023improved}.
By \Cref{lem:postselect}, this state from hypothesis selection is also close to the original unknown state $\rho$.

However, the problem with the above is that we do not know the true set of qubits $A$; we only know the identified set of qubits $\hat{A}$.
Moreover, it is possible that $\hat{A} \subsetneq A$, i.e., Algorithm~\ref{algo:nonzero-qubits} may not have been able to detect certain qubits as having been acted upon nontrivially by the $G$ gates.
For example, suppose that when preparing the unknown state $\rho$, certain qubits are used as workspace ancillas and are reset to the zero state at the end of the computation.

In order to define a covering net on a system on which the $G$ gates act (the setting of \Cref{lem:2q-unitaries}), we need to somehow identify the qubits in $A \setminus \hat{A}$ that are undetected by the algorithm.
To do so, we argue that we can permute the qubits outside of the set $\hat{A}$ and not deviate much from the original state $\rho$.
In this way, without loss of generality, we can permute the qubits such that those in $A\setminus \hat{A}$ are grouped together in some fixed set of qubits. 
Then, we can define a covering net on the system of qubits defined by this fixed set containing the qubits in $A \setminus \hat{A}$ and our identified set $\hat{A}$.
By construction, we know that the $G$ gates act on this subset of qubits, so this is the correct setting of \Cref{lem:2q-unitaries}.
We note that the permutations used in the proof are a mathematical tool for the analysis, but the learner has to neither know nor perform these permutations.

To formalize this, we first define a permutation and claim that permuting the qubits outside of the set $\hat{A}$ does not change the post selected state $\rho'$.

\begin{definition}[Permutation]
    \label{def:permute}
    A unitary $W \in \mathrm{U}(2^n)$ is a \emph{permutation unitary} if it satisfies the following property: $W$ corresponds to a permutation $\sigma_W \in S_n$ of order $2$, where $S_n$ is the symmetric group of size $n$, and $W$ acts as
    \begin{equation}
        W\ket{x_1\dots x_n} = \ket{x_{\sigma_W(1)}\cdots x_{\sigma_W(n)}},
    \end{equation}
    where $x = x_1\cdots x_n \in \{0,1\}^n$. Moreover, we use $W_S$ for a set $S \subseteq \{1,\dots, n\}$ to denote a permutation unitary where the corresponding permutation $\sigma_{W_S}$ is such that $\sigma_{W_S}|_{\overline{S}} = \mathrm{id}$, where $\overline{S} = [n] \setminus S$. In other words, $\sigma_{W_S}$ only permutes the elements in $S$.
\end{definition}

It is easy to see here that because the corresponding permutation is of order $2$, $W$ is Hermitian. Our next lemma shows that such permutations when acting only on $\hat{B}$ do not change our post selected state.

\begin{lemma}
    \label{lem:permute-post}
    Let $\rho'$ be as in \Cref{lem:postselect}.
    Explicitly, let $\hat{A} \subset [n]$ be as in Algorithm~\ref{algo:nonzero-qubits} and $\hat{B} = [n]\setminus \hat{A}$. Then, for $\Lambda = \ketbra{0_{\hat{B}}}\otimes I_{\hat{A}}$ (where $\ket{0_{\hat{B}}}$ denotes the zero state on all qubits in ${\hat{B}}$), define
    \begin{equation}
        \rho' = \frac{\sqrt{\Lambda}\rho \sqrt{\Lambda}}{\Tr(\Lambda \rho)}.
    \end{equation}
    Then, we have
    \begin{equation}
        \rho'' \triangleq W_{\hat{B}} \rho' W_{\hat{B}} = \rho',
    \end{equation}
    where $W_{\hat{B}}$ is any permutation unitary which only permutes qubits in ${\hat{B}}$.
\end{lemma}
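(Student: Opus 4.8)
The plan is to show that the permutation unitary $W_{\hat{B}}$ commutes with everything in sight, so conjugating $\rho'$ by $W_{\hat{B}}$ leaves it unchanged. First I would observe that $W_{\hat{B}}$ acts as the identity on the qubits in $\hat{A}$ by \Cref{def:permute}, so we may write $W_{\hat{B}} = \tilde{W}_{\hat{B}} \otimes I_{\hat{A}}$ where $\tilde{W}_{\hat{B}}$ is a permutation unitary on the $\hat{B}$-qubits alone. The key elementary fact is then that $\tilde{W}_{\hat{B}}$ fixes the all-zero state on $\hat{B}$: indeed $\tilde{W}_{\hat{B}}\ket{0_{\hat{B}}} = \ket{0_{\hat{B}}}$ since permuting the bits of the all-zero string gives back the all-zero string. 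Consequently $\tilde{W}_{\hat{B}}\ketbra{0_{\hat{B}}}\tilde{W}_{\hat{B}} = \ketbra{0_{\hat{B}}}$ (using that $W_{\hat B}$ is Hermitian and unitary, hence its own inverse), and therefore
\begin{equation}
    W_{\hat{B}} \Lambda W_{\hat{B}} = (\tilde{W}_{\hat{B}}\ketbra{0_{\hat{B}}}\tilde{W}_{\hat{B}}) \otimes I_{\hat{A}} = \ketbra{0_{\hat{B}}} \otimes I_{\hat{A}} = \Lambda,
\end{equation}
and the same holds for $\sqrt{\Lambda}$ since $\sqrt{\Lambda} = \Lambda$ is a projector.

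Next I would use this to push $W_{\hat B}$ through the definition of $\rho'$. The trace in the denominator satisfies $\Tr(\Lambda \rho) = \Tr(W_{\hat{B}}\Lambda W_{\hat{B}} \rho)$ by the previous display, and by cyclicity of the trace together with $W_{\hat{B}}^2 = I$ this equals $\Tr(\Lambda W_{\hat{B}}\rho W_{\hat{B}})$. But wait — this only shows a relation if $\rho$ is invariant under the permutation, which need not hold. The cleaner route is to directly compute the conjugated state: since $\sqrt{\Lambda} W_{\hat{B}} = W_{\hat{B}}\sqrt{\Lambda}$ (which follows from $W_{\hat{B}}\sqrt{\Lambda}W_{\hat{B}} = \sqrt{\Lambda}$ and $W_{\hat{B}}^2 = I$), we get
\begin{equation}
    W_{\hat{B}}\rho' W_{\hat{B}} = \frac{W_{\hat{B}}\sqrt{\Lambda}\rho\sqrt{\Lambda}W_{\hat{B}}}{\Tr(\Lambda\rho)} = \frac{\sqrt{\Lambda}W_{\hat{B}}\rho W_{\hat{B}}\sqrt{\Lambda}}{\Tr(\Lambda\rho)}.
\end{equation}
This is \emph{not} literally $\rho'$ unless $W_{\hat B}\rho W_{\hat B}$ behaves like $\rho$ after the projection. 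So the actual argument must use the physical structure: since $\rho = \ketbra{\psi}$ and the $G$ gates act only on $\hat A$ together with possibly some qubits that are reset to zero, and crucially the qubits in $\hat B$ were only ever observed to be zero, the reduced state $\rho_{\hat{B}}$ concentrates on $\ket{0_{\hat{B}}}$; the projection $\Lambda$ with $\Tr(\Lambda\rho)$ close to $1$ forces $\sqrt{\Lambda}\rho\sqrt{\Lambda}$ to have $\hat B$-marginal essentially $\ketbra{0_{\hat B}}$, which is $W_{\hat B}$-invariant.

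The cleanest way to make this rigorous — and what I expect to be the main technical point — is to argue that $\sqrt{\Lambda}\,\rho\,\sqrt{\Lambda} = \ketbra{0_{\hat{B}}}\otimes \tau_{\hat{A}}$ for some (unnormalized) positive operator $\tau_{\hat A}$ on the $\hat A$ qubits. Indeed, $\sqrt{\Lambda}\rho\sqrt{\Lambda} = (\ketbra{0_{\hat B}}\otimes I_{\hat A})\,\rho\,(\ketbra{0_{\hat B}}\otimes I_{\hat A}) = \ketbra{0_{\hat B}} \otimes \big(\bra{0_{\hat B}}\rho\ket{0_{\hat B}}\big)$, where the last factor is the partial matrix element, an operator on $\hat A$. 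Since $W_{\hat{B}}$ acts trivially on $\hat A$ and fixes $\ket{0_{\hat B}}$, conjugating this product form by $W_{\hat{B}}$ does nothing:
\begin{equation}
    W_{\hat{B}}\big(\ketbra{0_{\hat{B}}}\otimes \tau_{\hat{A}}\big)W_{\hat{B}} = \big(\tilde{W}_{\hat{B}}\ketbra{0_{\hat{B}}}\tilde{W}_{\hat{B}}\big)\otimes \tau_{\hat{A}} = \ketbra{0_{\hat{B}}}\otimes \tau_{\hat{A}},
\end{equation}
and the normalizing denominator $\Tr(\Lambda\rho) = \Tr(\tau_{\hat{A}})$ is manifestly unchanged. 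Dividing through gives $W_{\hat{B}}\rho'W_{\hat{B}} = \rho'$, which is exactly the claim. The only subtlety to be careful about is the identification $\sqrt{\Lambda} = \Lambda$ (valid because $\Lambda$ is an orthogonal projector) and making sure the tensor-factorization of $\sqrt{\Lambda}\rho\sqrt{\Lambda}$ across the $\hat B$–$\hat A$ cut is stated correctly; everything else is a short manipulation.
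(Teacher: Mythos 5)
Your final argument is correct and is essentially the paper's own proof: both use $\sqrt{\Lambda}=\Lambda$ and the fact that $\Lambda\rho\Lambda = \ketbra{0_{\hat{B}}}\otimes\bigl(\bra{0_{\hat{B}}}\rho\ket{0_{\hat{B}}}\bigr)$ is manifestly invariant under conjugation by $W_{\hat{B}}$, since $W_{\hat{B}}$ fixes $\ket{0_{\hat{B}}}$ and acts trivially on $\hat{A}$ (your version just makes the tensor factorization explicit where the paper states it in words). The detour in your middle paragraph about commuting $W_{\hat{B}}$ past $\sqrt{\Lambda}$ and appealing to "physical structure" is unnecessary — no property of $\rho$ beyond the algebraic factorization is needed — but you correctly abandon it, so the proof stands.
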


\begin{proof}
    To see the claim, we can simply expand the expression for $\rho''$:
    \begin{align}
        \rho'' &= W_{\hat{B}} \rho' W_{\hat{B}}\\
        &= W_{\hat{B}} \frac{\sqrt{\Lambda} \rho \sqrt{\Lambda}}{\Tr(\Lambda \rho)} W_{\hat{B}}\\
        &= W_{\hat{B}} \frac{\Lambda \rho \Lambda}{\Tr(\Lambda \rho)} W_{\hat{B}}\\
        &= \frac{W_{\hat{B}}(\ketbra{0_{\hat{B}}} \otimes I) \rho (\ketbra{0_{\hat{B}}} \otimes I)W_{\hat{B}}}{\Tr(\Lambda \rho)}\\
        &= \frac{(\ketbra{0_{\hat{B}}} \otimes I) \rho (\ketbra{0_{\hat{B}}} \otimes I)}{\Tr(\Lambda \rho)}\\
        &= \rho',
    \end{align}
    where in the third line we used that $\Lambda$ is a projector so that $\sqrt{\Lambda} = \Lambda$, and in the fifth line, we used the $W_{\hat{B}}$ only permutes the qubits in ${\hat{B}}$, which does not have any effect because here all qubits in ${\hat{B}}$ are in the zero state.
\end{proof}

\begin{lemma}
    \label{lem:permute-orig}
    Let $\epsilon, \delta_2 > 0$.
    The trace distance between $\rho$ and the permuted state $\tilde{\rho} = W_{\hat{B}} \rho W_{\hat{B}}$, where $W_{\hat{B}}$ is any permutation unitary which only permutes qubits in ${\hat{B}}$, is less than $\epsilon/24$:
    \begin{equation}
        \dtr(\rho, \tilde{\rho}) \leq \frac{\epsilon}{12}
    \end{equation}
    with probability at least $1-\delta_2$.
\end{lemma}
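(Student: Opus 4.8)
The plan is to derive this directly from \Cref{lem:postselect,lem:permute-post} by inserting the post-selected state as an intermediate point and using the unitary invariance of the trace distance. Concretely, I would set $\delta_2 = \delta_1$ and run Algorithm~\ref{algo:nonzero-qubits} with the parameters from \Cref{lem:postselect}, so that with probability at least $1-\delta_2$ the post-selected state $\rho' = \sqrt{\Lambda}\rho\sqrt{\Lambda}/\Tr(\Lambda\rho)$ (with $\Lambda = \ketbra{0_{\hat{B}}}\otimes I_{\hat{A}}$) satisfies $\dtr(\rho,\rho')\leq \epsilon/24$. Condition on this event for the rest of the argument.

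First I would apply the triangle inequality for $\dtr$ with $\rho'$ as the middle point:
\begin{equation}
    \dtr(\rho, W_{\hat{B}}\rho W_{\hat{B}}) \leq \dtr(\rho, \rho') + \dtr(\rho', W_{\hat{B}}\rho W_{\hat{B}}).
\end{equation}
For the second term, I would invoke \Cref{lem:permute-post}, which states $\rho' = W_{\hat{B}}\rho' W_{\hat{B}}$ — here it is essential that $W_{\hat{B}}$ permutes only qubits in $\hat{B} = [n]\setminus\hat{A}$, all of which are set to $\ket{0}$ by the projector $\Lambda$ — together with the invariance of the trace distance under conjugation by the unitary $W_{\hat{B}}$:
\begin{equation}
    \dtr(\rho', W_{\hat{B}}\rho W_{\hat{B}}) = \dtr(W_{\hat{B}}\rho' W_{\hat{B}}, W_{\hat{B}}\rho W_{\hat{B}}) = \dtr(\rho', \rho) \leq \frac{\epsilon}{24}.
\end{equation}
Combining the two displays gives $\dtr(\rho, \tilde{\rho}) \leq 2\cdot\frac{\epsilon}{24} = \frac{\epsilon}{12}$, which holds with probability at least $1-\delta_2$ over the randomness of Algorithm~\ref{algo:nonzero-qubits}, as claimed.

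There is no substantial obstacle in this argument; it is essentially a two-line consequence of the preceding two lemmas. The only point deserving care is confirming that $W_{\hat{B}}$ acts trivially on $\rho'$ — which is exactly \Cref{lem:permute-post} and hinges on the fact that every qubit in $\hat{B}$ has been projected onto $\ket{0}$. The factor-of-two loss incurred by the triangle inequality is already absorbed into the stated bound $\epsilon/12$ (versus the $\epsilon/24$ of \Cref{lem:postselect}).
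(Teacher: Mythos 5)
Your proof is correct, and it takes a genuinely shorter route than the paper's. The paper reaches the same bound by applying the gentle-measurement argument \emph{twice}: once to $\rho$ (giving $\dtr(\rho,\rho')\leq\epsilon/24$) and once to the permuted state $\tilde{\rho}$ (giving $\dtr(\tilde{\rho},\tilde{\rho}')\leq\epsilon/24$, after checking $\Tr(\Lambda\tilde{\rho})=\Tr(\Lambda\rho)$), then showing the three post-selected states coincide, $\rho'=\rho''=\tilde{\rho}'$, and finally applying the triangle inequality through that common state; it also splits the failure probability as $\delta_1=\delta_2/2$ for each application. You instead observe that once $\rho'=W_{\hat{B}}\rho'W_{\hat{B}}$ is known (\Cref{lem:permute-post}), the second leg of the triangle inequality collapses by unitary invariance of the trace norm: $\dtr(\rho',W_{\hat{B}}\rho W_{\hat{B}})=\dtr(W_{\hat{B}}\rho'W_{\hat{B}},W_{\hat{B}}\rho W_{\hat{B}})=\dtr(\rho',\rho)$, so only one application of \Cref{lem:postselect} and one probabilistic event are needed (hence $\delta_1=\delta_2$ suffices rather than $\delta_2/2$). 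This is a cleaner argument that avoids re-deriving the gentle-measurement bound for $\tilde{\rho}$ and the auxiliary claim $\rho''=\tilde{\rho}'$; it buys a two-line proof at no cost to the constants in the statement. The one point you correctly flag as essential — that $W_{\hat{B}}$ fixes $\rho'$ because every qubit in $\hat{B}$ has been projected onto $\ket{0}$ — is exactly the content of \Cref{lem:permute-post}, so no gap remains.
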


\begin{proof}
    This proof combines \Cref{lem:permute-post,lem:postselect}. The idea is the following.
    We know from \Cref{lem:postselect} that $\rho$ and the post selected state $\rho'$ are close in trace distance.
    Moreover, by \Cref{lem:permute-post}, we know that the post selected state $\rho'$ and the permuted post selected state $\rho''$ are the equal (without error).
    We can also show similarly to \Cref{lem:postselect} that the permuted state $\tilde{\rho}$ is close to the post selected state $\tilde{\rho}'$, where this postselection is done in the same way as \Cref{lem:postselect} by replacing $\rho$ with $\tilde{\rho}$. Moreover, we can see that $\rho'' = \tilde{\rho}'$, so the claim then follows by triangle inequality.

    Now, let us formalize this. By \Cref{lem:postselect}, we have
    \begin{equation}
        \label{eq:orig-post}
        \dtr(\rho, \rho') \leq \frac{\epsilon}{24},
    \end{equation}
    with probability at least $1-\delta_2/2$ (choosing $\delta_1 = \delta_2/2$) where
    \begin{equation}
        \rho' = \frac{\sqrt{\Lambda}\rho \sqrt{\Lambda}}{\Tr(\Lambda\rho)}
    \end{equation}
    for $\Lambda = \ketbra{0_{\hat{B}}} \otimes I_{{\hat{A}}}$. By \Cref{lem:permute-post}, we know that
    \begin{equation}
        \label{eq:permute-post}
        \rho'' \triangleq W_{\hat{B}}\rho' W_{\hat{B}} = \rho',
    \end{equation}
    where $W_{\hat{B}}$ is a permutation that only affects qubits in ${\hat{B}}$. 
    Now, consider the permuted state $\tilde{\rho} = W_{\hat{B}} \rho W_{\hat{B}}$.
    Recall that in the proof of \Cref{lem:postselect}, to obtain Eq.~\eqref{eq:orig-post}, it sufficed to show that $\Tr(\Lambda \rho) \geq 1 - (\epsilon/24)^2$, and the result followed by the Gentle Measurement Lemma (Lemma~\ref{lemma:gentle-meas}). Thus, by the same proof, as long as $\Tr(\Lambda \tilde{\rho}) \geq 1- (\epsilon/24)^2$, then we also have
    \begin{equation}
        \label{eq:post-permute}
        \dtr(\tilde{\rho}, \tilde{\rho}') \leq \frac{\epsilon}{24},
    \end{equation}
    with probability at least $1-\delta_2/2$, where
    \begin{equation}
        \tilde{\rho}' \triangleq \frac{\sqrt{\Lambda}\tilde{\rho}\sqrt{\Lambda}}{\Tr(\Lambda \tilde{\rho})}.
    \end{equation}
    We can clearly see that this condition holds:
    \begin{equation}
        \Tr(\Lambda \tilde{\rho}) = \Tr((\ketbra{0_{\hat{B}}} \otimes I_{{\hat{A}}})W_{\hat{B}} \rho W_{\hat{B}}) = \Tr((\ketbra{0_{\hat{B}}} \otimes I_{\hat{A}}) \rho) = \Tr(\Lambda \rho) \geq 1- (\epsilon/24)^2,
    \end{equation}
    where the second equality follows because $W_{\hat{B}}$ only permutes qubits in ${\hat{B}}$, which (rearranging with the trace) does not have any effect on $\ketbra{0_{\hat{B}}}$ because all qubits in ${\hat{B}}$ are in the zero state. Thus, Eq.~\eqref{eq:post-permute} holds.

    We also claim that $\rho'' = \tilde{\rho}'$. This follows by effectively the same proof as \Cref{lem:permute-post}.

    Putting everything together, we have that $\rho' = \rho'' = \tilde{\rho}'$. Thus, by Eq.~\eqref{eq:orig-post},
    \begin{equation}
        \dtr(\rho, \tilde{\rho}') \leq \frac{\epsilon}{24}
    \end{equation}
    with probability at least $1-\delta_2/2$. By triangle equality with Eq.~\eqref{eq:post-permute}, we then obtain the claim:
    \begin{equation}
        \dtr(\rho, \tilde{\rho}) \leq \frac{\epsilon}{12}
    \end{equation}
    with probability at least $1-\delta_2$.
\end{proof}

\subsubsection{Proof of Case (2) of \Cref{prop:states-upper}}
\label{app:states-upper-proof}

With this, we can prove Case (2) of \Cref{prop:states-upper}.
Recall that in Case (2), we require that $G < n/2$.
We provided a sketch of the argument throughout the previous sections, so we put everything together here.

\begin{proof}[Proof of Case (2) of \Cref{prop:states-upper}]
Let $\epsilon, \delta > 0$.
Consider $G < n/2$.
Because $G$ is small compared to $n$, there exist some qubits that have not been acted upon by the $G$ gates used to generate the state $\rho = \ketbra{\psi}$.
Thus, since we assume that the unknown quantum state $\rho$ is constructed by applying a unitary to the all zero state, then these qubits not acted upon by the $G$ gates remain in the zero state.
Using the techniques in \Cref{app:states-upper-post}, we can find the qubits that are acted on nontrivially by the $G$ gates.
Then, we want to consider the covering net on only this set of qubits.
However, because our algorithm does not necessarily find \emph{all} qubits acted on nontrivially by the $G$ gates, we argue in \Cref{app:states-upper-permute} that we can permute the qubits in the system without significantly affecting the original state $\rho$.
In this way, we can consider a permutation which gathers those qubits acted upon nontrivially that our algorithm did not find into some fixed set.
We can then define the covering net on the subsystem consisting of this fixed set along with the identified set of qubits.

Let us now formalize these ideas.
Let $\hat{A}$ be the set of qubits identified by Algorithm~\ref{algo:nonzero-qubits}, and let $A$ be the true set of qubits acted on nontrivially by the $G$ gates.
Let $W_{\hat{B}}$ be a permutation only affecting the qubits in ${\hat{B}} \triangleq [n] \setminus A$ (\Cref{def:permute}) which gathers the qubits in $A \setminus {\hat{A}}$ into some fixed set of qubits $C$. Since $|C| + |{\hat{A}}| = |A \setminus {\hat{A}}| + |{\hat{A}}| = |A| \leq 2G$, then $C \cup {\hat{A}}$ has at most $2G$ qubits and these qubits are acted upon by $G$ gates.

By \Cref{thm:covering} we know that there exists an $(\epsilon/12)$-covering net $\mathcal{N}_{\epsilon/12}$ of the space of unitaries implemented by $G$ two-qubit gates on the permuted system consisting of only qubits in $C \cup \hat{A}$ with respect to the diamond distance $\dworst = \max_\rho \norm{(U\otimes I) \rho (U\otimes I)^\dagger - (V\otimes I)\rho(V\otimes I)^\dagger}_1$.
Moreover, this covering net has metric entropy bounded by
\begin{equation}
    \label{eq:covering-size2}
    \log(|\mathcal{N}_{\epsilon/12}|) \leq 32G \log\left(\frac{144G}{\epsilon}\right) + 2G\log(2G) = \mathcal{O}\left(G \log(G/\epsilon)\right).
\end{equation}
We can instead consider
\begin{equation}
    \mathcal{N}'_{\epsilon/12} = \{V' \ketbra{0_{C \cup \hat{A}}}V'^\dagger : V' \in \mathcal{N}_{\epsilon/12}\},
\end{equation}
where $\ket{0_{C \cup {\hat{A}}}}$ denotes the zero state on all qubits in our subsystem $C \cup \hat{A}$.
By the same argument as in Case (1), $\mathcal{N}'_{\epsilon/12}$ defines a covering net over the set of pure quantum states on the subsystem $C \cup \hat{A}$ generated by $G$ two-qubit gates with respect to trace distance.
Moreover, $|\mathcal{N}'_{\epsilon/12}| \leq |\mathcal{N}_{\epsilon/12}|$.

Since this covering net $\mathcal{N}'_{\epsilon/12}$ is only for states on at most $2G$ qubits, let $\mathcal{N}_{\epsilon/12}''$ be the set of states where each state in $\mathcal{N}_{\epsilon/12}'$ is tensored with the zero state for qubits in $[n] \setminus (C \cup \hat{A})$.
Let $\tilde{\rho} = W_{\hat{B}} \rho W_{\hat{B}}$ be the original state on this permuted system.
By definition of a covering net, we know that there exists some $\sigma_i \in \mathcal{N}_{\epsilon/12}''$ such that
\begin{equation}
    \dtr(\tilde{\rho}, \sigma_i) \leq \frac{\epsilon}{12}.
\end{equation}
We justify this further in the following.
By definition, the only qubits in the state $\tilde{\rho}$ that are acted on nontrivially by the $G$ gates are those in $C \cup \hat{A}$. Since no gates act on qubits outside of $C \cup \hat{A}$, then the other qubits in $\tilde{\rho}$ must be in the zero state.
Hence, we can write $\tilde{\rho} = \tilde{\rho}_{C\cup \hat{A}} \otimes \ketbra{0}^{\otimes (n-|C\cup \hat{A}|)}$, where $\tilde{\rho}_{C \cup \hat{A}}$ denotes the state of the qubits in $C \cup \hat{A}$ which are acted upon by the $G$ gates.
Moreover, by definition of a covering net, then there exists some $\sigma_{i, C\cup \hat{A}} \in \mathcal{N}_{\epsilon/12}'$ such that
\begin{equation}
    \dtr(\tilde{\rho}_{C\cup \hat{A}}, \sigma_{i, C\cup \hat{A}}) \leq \frac{\epsilon}{12},
\end{equation}
where similarly $\sigma_{i, C \cup \hat{A}}$ is a state on the qubits in $C \cup \hat{A}$ which are acted upon by $G$ gates.
Taking the tensor product with the zero state on the remaining qubits does not affect the trace distance.
Thus, we can write $\sigma_i = \sigma_{i, C \cup \hat{A}} \otimes \ketbra{0}^{\otimes (n-|C\cup \hat{A}|)} \in \mathcal{N}_{\epsilon/12}''$, where this satisfies
\begin{equation}
    \dtr(\tilde{\rho},\sigma_i) = \dtr(\tilde{\rho}_{C\cup \hat{A}}, \sigma_{i, C\cup \hat{A}}) \leq \frac{\epsilon}{12},
\end{equation}
as claimed.
Moreover, by \Cref{lem:permute-orig}, choosing $\delta_2 = \delta/2$, we know that
\begin{equation}
    \dtr(\rho, \tilde{\rho}) \leq \frac{\epsilon}{12}
\end{equation}
with probability at least $1-\delta/2$.
Recall that this approximation requires only
\begin{equation}
    N_1 = \mathcal{O}\left(\frac{G + \log(1/\delta)}{\epsilon^2}\right)
\end{equation}
copies of $\rho$ (from \Cref{lem:postselect}) for identifying the set $\hat{A}$.
By triangle inequality, we have that there exists some $\sigma_i \in \mathcal{N}_{\epsilon/12}''$ such that
\begin{equation}
    \dtr(\rho,\sigma_i) \leq \frac{\epsilon}{6}
\end{equation}
with probability at least $1-\delta/2$.

Using hypothesis selection on the covering net $\mathcal{N}_{\epsilon/12}''$ and the unknown state $\rho$, by \Cref{prop:hypothesis}, there exists an algorithm to learn $\sigma$ such that
\begin{equation}
    \dtr(\rho, \sigma) \leq \epsilon
\end{equation}
with probability at least $1- \delta$, where we chose $\eta = \epsilon/6$ and  $\epsilon/2, \delta/2$ for the parameters in \Cref{prop:hypothesis}. Moreover, by \Cref{prop:hypothesis} and \Cref{eq:covering-size2}, this algorithm requires only
\begin{equation}
    N_2 = \mathcal{O}\left(\frac{G\log(G/\epsilon) + \log(1/\delta)}{\epsilon^2}\right)
\end{equation}
copies of $\rho$.
Putting everything together, we have that
\begin{equation}
    \dtr(\rho, \sigma) \leq \epsilon
\end{equation}
with probability at least $1-\delta$, where our algorithm to find $\sigma$ requires only
\begin{equation}
    N = N_1 + N_2 = \mathcal{O}\left(\frac{G\log(G/\epsilon) + \log(1/\delta)}{\epsilon^2}\right).
\end{equation}
This matches our upper bound for Case (1) and thus concludes the proof of \Cref{prop:states-upper}.
\end{proof}

\subsection{Sample complexity lower bound}
\label{app:states-lower}

In this section, we prove the sample complexity lower bound for \Cref{thm:state-learning-detail}.

\begin{prop}[State learning lower bound]
    \label{prop:states-lower}
    Let $\epsilon,\delta > 0$.
    Suppose we are given $N$ copies of an $n$-qubit pure state density matrix $\rho = \ketbra{\psi}$, where $\ket{\psi} = U\ket{0}^{\otimes n}$ is generated by a unitary $U$ consisting of $G$ two-qubit gates.
    Then, any algorithm that can output $\hat{\rho}$ such that $\dtr(\hat{\rho}, \rho) \leq \epsilon$ with probability at least $1-\delta$ requires at least
    \begin{equation}
        N = \Omega\left(\min\left(\frac{2^n}{\epsilon^2},\frac{G(1-\delta)}{\epsilon^2\log(G/\epsilon)}\right) + \frac{\log(1/\delta)}{\epsilon^2}\right)
    \end{equation}
    samples of $\ket{\psi}$.
\end{prop}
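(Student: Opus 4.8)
The plan is to prove the lower bound against the hardest instances inside the class $U^G\ket{0}^{\otimes n}$ of $G$-gate states, via the standard reduction from learning to state discrimination, handling the three summands $2^n/\epsilon^2$, $\tilde\Omega(G(1-\delta)/\epsilon^2)$, and $\log(1/\delta)/\epsilon^2$ separately and combining them through $\max\{a,b\}\ge\tfrac12(a+b)$. The reduction is routine: any learner returning $\hat\rho$ with $\dtr(\hat\rho,\rho)\le\epsilon$ with probability $\ge1-\delta$ yields, for any $3\epsilon$-packing net $\mathcal{M}$ (in trace distance) contained in $U^G\ket{0}^{\otimes n}$, a procedure that with probability $\ge1-\delta$ identifies which element of $\mathcal{M}$ the unknown state equals, namely the unique net element within $\epsilon$ of $\hat\rho$; hence copy-complexity lower bounds for identifying a uniformly random element of $\mathcal{M}$ lower-bound $N$.

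For the main term, let $C$ be the universal constant from \Cref{thm:covering} such that $C4^k$ two-qubit gates suffice to realize an arbitrary $k$-qubit unitary, and set $k=\min\{\lfloor\log_2(G/C)\rfloor,n\}$, so that every $n$-qubit pure state of the form $\ket{\phi}\otimes\ket{0}^{\otimes(n-k)}$, with $\ket{\phi}$ an arbitrary $k$-qubit pure state, is generated by $G$ two-qubit gates. To keep each copy only mildly informative I would not take a packing of the whole projective space but restrict to the spherical cap of states $\ket{\psi_\phi}=\sqrt{1-c^2\epsilon^2}\,\ket{0^k}+c\epsilon\,\ket{\phi}$ for $\ket{\phi}\perp\ket{0^k}$ a unit vector and a suitable constant $c$; there $\dtr(\ket{\psi_\phi},\ket{\psi_{\phi'}})=\Theta\!\big(\epsilon\,\|\ket{\phi}-\ket{\phi'}\|\big)$ and every $\ket{\psi_\phi}$ lies within $O(\epsilon)$ trace distance of $\ketbra{0^k}$. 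A constant-resolution packing of the unit sphere of $\mathbb{C}^{2^k-1}$, via \Cref{lem:2q-unitaries} and the covering-packing equivalence \Cref{lem:equi-cov-pack}, then gives a $3\epsilon$-packing net $\mathcal{M}$ with $\log|\mathcal{M}|=\Omega(2^k)=\Omega(\min\{G,2^n\})$. Fano's inequality forces the mutual information between the random net label and the outcomes of $N$ copies to be at least $(1-\delta)\log|\mathcal{M}|-1$, while Holevo's theorem \cite{holevo} together with the small-ball information estimate of \cite{haah2016sample} bounds that mutual information by $N\cdot\tilde O(\epsilon^2)$ with the hidden logarithm $O(\log(G/\epsilon))$; rearranging yields $N=\Omega\!\big((1-\delta)\min\{G,2^n\}/(\epsilon^2\log(G/\epsilon))\big)$. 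This already dominates $\min\{2^n/\epsilon^2,\,G(1-\delta)/(\epsilon^2\log(G/\epsilon))\}$ except when $G\gtrsim 2^n$, where instead I invoke that $U^G\ket{0}^{\otimes n}$ then contains all $n$-qubit pure states, so the optimal pure-state tomography lower bound $N=\Omega(2^n/\epsilon^2)$ of \cite{haah2017sample,o2016efficient} applies verbatim and suffices since the displayed minimum is at most $2^n/\epsilon^2$.

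For the confidence term, I would use a two-point instance: fix single-qubit pure states $\rho_0,\rho_1$ with $\dtr(\rho_0,\rho_1)=3\epsilon$, realizable with one two-qubit gate and padded by $\ket{0}^{\otimes(n-1)}$, hence in $U^G\ket{0}^{\otimes n}$ for $G\ge1$. A learner with error $\epsilon$ and success probability $1-\delta$ distinguishes $\rho_0^{\otimes N}$ from $\rho_1^{\otimes N}$ with success probability $1-\delta$, so the quantum Chernoff / relative-entropy bound for nearly identical states forces $N=\Omega(\log(1/\delta)/\epsilon^2)$. Since $N$ is at least the main-term bound and at least this one, it is $\Omega$ of their sum, giving the stated inequality.

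I expect the delicate step to be the Holevo-information accounting in the main term: the packing net must be placed inside a cap of angular radius $\Theta(\epsilon)$ so that $N$ copies of a net state carry only $\tilde O(N\epsilon^2)$ bits, and the dimension dependence of the suppressed logarithm must be pinned down to produce exactly the $\log(G/\epsilon)$ factor in the denominator; by comparison the learning-to-discrimination reduction, the constant-resolution sphere packing, and the two-point $\log(1/\delta)$ bound are standard.
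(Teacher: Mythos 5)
Your proposal follows essentially the same route as the paper's proof: confine the $G$ gates to the first $k\sim\log_2(G/C)$ qubits so that \emph{every} pure state on that subsystem is reachable, take the Haah--Harrow--Ji--Wu--Yu spherical-cap packing net (whose states all lie within $O(\epsilon)$ of a fixed reference state, which is exactly what keeps the per-copy Holevo information at $\tilde O(\epsilon^2)$), apply Fano plus Holevo to get the $\Omega\bigl(G(1-\delta)/(\epsilon^2\log(G/\epsilon))\bigr)$ term, fall back on the full-tomography bound when $2^n\lesssim G$, and add a two-point argument for the $\log(1/\delta)/\epsilon^2$ term. The only substantive differences are cosmetic: you unpack the cap construction rather than citing it as a black box, and you use a Chernoff-type two-point bound where the paper uses Holevo--Helstrom; both give the same $\Omega(\log(1/\delta)/\epsilon^2)$.

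One slip to fix: you justify reachability of all $k$-qubit pure states by citing the fact that $C4^k$ gates synthesize an arbitrary $k$-qubit \emph{unitary}, yet you set $k=\lfloor\log_2(G/C)\rfloor$. With that choice $C4^k\approx G^2/C$, which exceeds the budget of $G$ gates, so the cited fact does not support the claim; taken literally it would force $k=\lfloor\log_4(G/C)\rfloor$ and degrade the packing to $e^{\Omega(\sqrt{G})}$, losing the linear-in-$G$ bound. What you actually need (and what the paper uses) is the \emph{state-preparation} bound of Shende et al.: $C\cdot 2^k$ two-qubit gates suffice to prepare an arbitrary $k$-qubit pure state, which is consistent with $k=\lfloor\log_2(G/C)\rfloor$ and makes the rest of your argument go through as written.
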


Here, similarly to the upper bound, we take the minimum with $\Omega(2^n/\epsilon^2)$, as this is the lower bound achieved for full quantum state tomography~\cite{haah2017sample,o2016efficient}.
We thus focus on the second term in the minimum.
We first consider the number of samples required to learn $n$-qubit pure quantum states generated by $G$ gates \emph{applied only to the first $\floor{\log_2 (G/C)}$ qubits} (for some constant $C$ specified later) of the $n \geq \floor{\log_2 (G/C)}$ qubits in total.
Denote this set of states as $S_1$.
Note that if $n \leq \floor{\log_2 (G/C)}$, then we can simply import the lower bound for full quantum state tomography~\cite{haah2017sample,o2016efficient}.
We later reduce the general case, where the $G$ gates can be applied on any of the qubits, to this case.
Namely, we prove the following proposition.

\begin{prop}
    \label{prop:states-lower-reduce}
    Let $\epsilon,\delta > 0$.
    Suppose we are given $N$ copies of an $n$-qubit pure state density matrix $\rho = \ketbra{\psi}$, where $\ket{\psi} = (U\otimes I)\ket{0}^{\otimes n} \in S_1$ is generated by a unitary $U$ consisting of $G$ two-qubit gates applied only to the first $\floor{\log_2 (G/C)}$ qubits for some constant $C$.
    Then, any algorithm that can output $\hat{\rho}$ such that $\dtr(\hat{\rho}, \rho) \leq \epsilon$ with probability at least $1-\delta$ requires at least
    \begin{equation}
        N = \Omega\left(\min\left(\frac{2^n}{\epsilon^2},\frac{G(1-\delta)}{\epsilon^2\log(G/\epsilon)}\right) + \frac{\log(1/\delta)}{\epsilon^2}\right)
    \end{equation}
    samples of $\ket{\psi}$.
\end{prop}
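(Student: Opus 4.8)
The plan is a reduction to a multi-hypothesis state-discrimination problem, built from a \emph{clustered} packing net of pure states supported on the first $k := \floor{\log_2(G/C)}$ qubits, followed by a Holevo--Fano information bound. If $n \le k$ the statement is immediate: every $n$-qubit pure state is prepared by $\mathcal{O}(2^n) \le \mathcal{O}(G/C) \le G$ two-qubit gates (for a suitable universal constant $C$, using \cite{shende2005synthesis}), so the $\Omega(2^n/\epsilon^2)$ lower bound for full pure-state tomography \cite{haah2017sample,o2016efficient} applies and dominates $\min(2^n/\epsilon^2,\dots)$. So I assume $n > k$ and set $d := 2^k \ge G/(2C)$; the goal reduces to proving $N = \Omega\!\big(G(1-\delta)/(\epsilon^2 \log(G/\epsilon))\big)$ together with the elementary bound $N = \Omega(\log(1/\delta)/\epsilon^2)$, whose maximum gives the claim up to constants.

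\emph{The packing net.} I would fix a reference state $\ket{\phi_0}$ on the first $k$ qubits and take a maximal family $\{\ket{\phi_1},\dots,\ket{\phi_M}\}$ of pure states inside the trace-distance ball of radius $10\epsilon$ around $\ket{\phi_0}$ that are pairwise at trace distance $> 3\epsilon$. Since the pure-state manifold has real dimension $2(d-1)$ and is locally Euclidean, a standard volume-packing estimate (equivalently, \Cref{lem:equi-cov-pack} together with the covering-number bounds behind \Cref{lem:2q-unitaries}, or directly the packing-net construction of \cite{haah2017sample}) gives $\log M = \Omega(d) = \Omega(G)$, with the essential extra feature that all $\ket{\phi_j}$ lie within $O(\epsilon)$ of one another. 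Tensoring each $\ket{\phi_j}$ with $\ket{0}^{\otimes(n-k)}$ produces states in $S_1$, since $\mathcal{O}(2^k) \le G$ gates acting on the first $k$ qubits suffice to prepare any $k$-qubit pure state \cite{shende2005synthesis}.

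\emph{Reduction and information bound.} A learner that on $N$ copies of $\rho = \ketbra{\psi}$ returns $\hat\rho$ with $\dtr(\hat\rho,\rho) \le \epsilon$ with probability $\ge 1-\delta$ also identifies a uniformly random $J \in [M]$ when fed $N$ copies of $\ket{\phi_J}\ket{0}^{\otimes(n-k)}$: it reports $\hat J = \argmin_j \dtr\!\big(\hat\rho,\, \ketbra{\phi_j}\otimes\ketbra{0}^{\otimes(n-k)}\big)$, which by the triangle inequality and the $3\epsilon$-separation equals $J$ whenever the learner succeeds, so the discrimination error is $\le \delta$. Fano's inequality then gives $I(J:\hat J) \ge (1-\delta)\log M - 1$, while data processing and Holevo's theorem \cite{holevo} bound $I(J:\hat J)$ by the Holevo quantity $\chi$ of the ensemble $\{(1/M,\, \ketbra{\phi_J}^{\otimes N})\}$ of $N$-copy states. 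Because all hypotheses lie within $O(\epsilon)$ of $\ket{\phi_0}$, the refined Holevo estimate of \cite{haah2016sample} (``$\tilde{\mathcal{O}}(\epsilon^2)$ of information per copy'') yields $\chi \le N \cdot O(\epsilon^2 \log(d/\epsilon)) = N \cdot O(\epsilon^2 \log(G/\epsilon))$. Combining with $\log M = \Omega(G)$ rearranges to $N = \Omega\!\big(G(1-\delta)/(\epsilon^2 \log(G/\epsilon))\big)$.

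\emph{The $\log(1/\delta)$ term, and the main obstacle.} For the remaining bound, fix two states $\ket{\phi_0},\ket{\phi_1} \in S_1$ at trace distance exactly $3\epsilon$; an $\epsilon$-accurate learner distinguishes their $N$-copy versions, and since $\dtr(\ketbra{\phi_0}^{\otimes N},\ketbra{\phi_1}^{\otimes N}) = \sqrt{1 - |\langle\phi_0|\phi_1\rangle|^{2N}}$ with $1 - |\langle\phi_0|\phi_1\rangle|^2 = 9\epsilon^2$, success probability $\ge 1-\delta$ forces $N = \Omega(\log(1/\delta)/\epsilon^2)$. Taking the maximum of the three bounds (the $n \le k$ case, the packing-net bound, and this one) proves \Cref{prop:states-lower-reduce}. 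I expect the main obstacle to be the Holevo-quantity estimate: the crude bound $\chi \le \log\dim\mathrm{Sym}^N(\mathbb{C}^d)$ loses all $\epsilon$-dependence, so one must genuinely exploit that the hypotheses are confined to an $O(\epsilon)$-ball to push the per-copy information down to $\tilde{\mathcal{O}}(\epsilon^2)$ with no dimension-dependent prefactor, and making the logarithmic factor come out as $\log(G/\epsilon)$ requires carefully balancing the packing radius, the separation $3\epsilon$, and the dimension $d = 2^k$ — this is precisely the content imported from \cite{haah2016sample}.
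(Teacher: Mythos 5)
Your proposal is correct and follows essentially the same route as the paper: the paper likewise builds a packing net of $k$-qubit pure states with $k=\floor{\log_2(G/C)}$ (importing Lemma~5 of \cite{haah2017sample}, which packages together the $\log M = \Omega(2^k)$ cardinality bound and the per-copy Holevo bound $\chi_0 = O(\epsilon^2\log(2^k/\epsilon))$ that you correctly identify as the crux), tensors with $\ket{0}^{\otimes(n-k)}$, applies the Fano/Holevo argument of \cite{haah2017sample}, and recovers the $\Omega(\log(1/\delta)/\epsilon^2)$ term via exactly the same two-state Holevo--Helstrom discrimination bound.
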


We note that for constant error $\epsilon$, the $\Omega(G/\log G)$ lower bound can be improved to $\Omega(G)$ using \cite{rosenthal2024efficient,yuen2023improved}.
We prove \Cref{prop:states-lower-reduce} by combining results from~\cite{haah2017sample,shende2005synthesis}.
Namely, the lower bound in~\cite{haah2017sample} works by lower bounding the sample complexity of learning any rank $r$, $d$-dimensional quantum state in terms of the packing number of this space of states.
We apply their results to our setting, where the space of states that the packing net is defined over is $S_1$ instead.
We first recall important results from~\cite{haah2017sample,shende2005synthesis} that we use throughout the proof.
\cite{haah2017sample} lower bounded the sample complexity of learning a $d$-dimensional pure state as follows.
\begin{theorem}[In Proof of Theorem 3 in~\cite{haah2017sample}]
    \label{thm:haah-lower}
    Let $\epsilon \in (0,1)$ and $\delta \in (0,1)$. Suppose there exists a POVM $\{M_\sigma d\sigma\}$ on $(\mathbb{C}^d)^{\otimes N}$ such that for a pure quantum state $\rho \in \mathbb{C}^{d\cross d}$, 
    \begin{equation}
        \int_{\dtr(\sigma, \rho) \leq \epsilon} d\sigma \Tr[M_\sigma \rho^{\otimes N}] \geq 1-\delta.
    \end{equation}
    Then
    \begin{equation}
    \label{eq:cover}
        N \geq \frac{(1-\delta)\ln m - \ln 2}{\chi_0},
    \end{equation}
    where $m$ is the size of an $(2\epsilon)$-packing net of the space of $d$-dimensional pure state density matrices, and 
    \begin{equation}
        \label{eq:chi0}
        \chi_0 \triangleq S(\mathbb{E}_U [U\rho_x U]) - S(\rho_x)
    \end{equation}
    is the Holevo information,    where $\rho_x$ is any element of the $(2\epsilon)$-packing net, $S$ is the von Neumann entropy, and the expectation is taken over the Haar measure.
\end{theorem}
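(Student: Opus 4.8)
The plan is to convert the accuracy guarantee into a lower bound on the classical information that $N$ copies of the unknown state can carry, and then to upper bound that information by the $N$-copy Holevo quantity. First I would fix a maximal $(2\epsilon)$-packing net $\{\rho_1,\dots,\rho_m\}$ of the pure-state manifold, draw an index $X$ uniformly from $[m]$, and encode it as the $N$-copy state $\rho_X^{\otimes N}$. Running the assumed POVM $\{M_\sigma\,d\sigma\}$ on $\rho_X^{\otimes N}$ produces an estimate $\sigma$, which I round to the nearest net element $\hat X = \argmin_Y \dtr(\sigma,\rho_Y)$. The separation of the net does the decoding: whenever $\dtr(\sigma,\rho_X)\le\epsilon$ --- which by hypothesis happens with probability at least $1-\delta$ --- the triangle inequality gives $\dtr(\sigma,\rho_Y)\ge \dtr(\rho_X,\rho_Y)-\epsilon > 2\epsilon-\epsilon=\epsilon$ for every $Y\ne X$, so $\hat X=X$. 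Hence $\Pr[\hat X=X]\ge 1-\delta$.

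Next I would apply Fano's inequality to the uniform source $X$ and its estimate $\hat X$. Since the error probability is at most $\delta$, one has $H(X\mid\hat X)\le \ln 2 + \delta\ln m$, and therefore the mutual information obeys $I(X;\hat X)=\ln m - H(X\mid\hat X)\ge (1-\delta)\ln m - \ln 2$, where I work in nats so as to match the von Neumann entropy $S$. The composite map $X\mapsto\rho_X^{\otimes N}\mapsto(\text{measure and round})\mapsto\hat X$ is a quantum-to-classical channel, so Holevo's bound together with the data-processing inequality yields $I(X;\hat X)\le \chi(\{\tfrac1m,\rho_X^{\otimes N}\})$, the Holevo information of the encoded ensemble.

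It then remains to show $\chi\le N\chi_0$. Writing $\bar\rho_N=\tfrac1m\sum_X\rho_X^{\otimes N}$, I have $\chi = S(\bar\rho_N) - \tfrac1m\sum_X S(\rho_X^{\otimes N})$. Purity of each net element makes the subtracted term vanish (in the general rank-$r$ restatement it equals $N\,S(\rho_x)$, by additivity of entropy over tensor factors together with the common spectrum of the net elements). Subadditivity of von Neumann entropy across the $N$ identical copies then gives $S(\bar\rho_N)\le N\,S(\bar\rho_1)$, where $\bar\rho_1=\tfrac1m\sum_X\rho_X$ is the single-copy average. Finally I would invoke $S(\bar\rho_1)\le \ln d$, which is precisely $\chi_0$: the full Haar twirl sends any state to the maximally mixed state, $\E_U[U\rho_x U^\dagger]=I/d$, so $\chi_0 = S(I/d)-S(\rho_x)=\ln d - S(\rho_x)$ and the chain collapses to $\chi\le N\chi_0$. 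Combining the Fano lower bound with this Holevo upper bound produces $(1-\delta)\ln m-\ln 2\le N\chi_0$, which is exactly the claimed inequality.

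The step I expect to be the main obstacle is the last one: identifying the single-copy reduction cleanly with $\chi_0$. I must ensure the subadditivity and tensor-additivity bookkeeping lands on $S(\E_U[U\rho_x U^\dagger])-S(\rho_x)$ rather than a weaker per-copy entropy, and that the rounding reduction is faithful, i.e.\ that the continuous POVM output already lies within $\epsilon$ with probability at least $1-\delta$ \emph{before} rounding. Some care is also needed to keep every information quantity in the same logarithmic base as the entropy $S$, so that the final constants $\ln m$ and $\ln 2$ appear as written.
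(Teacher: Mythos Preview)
Your argument is correct and is exactly the standard Fano--Holevo--subadditivity route used in the cited reference \cite{haah2017sample}; the paper itself does not reprove this theorem but merely imports it, so there is no alternative proof in the paper to compare against.

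One small remark on your last step: you reach $\chi \le N\bigl(S(\bar\rho_1)-S(\rho_x)\bigr)$, which is already the single-copy Holevo information of the \emph{packing-net} ensemble. Passing further to the trivial bound $S(\bar\rho_1)\le \ln d = S(\mathbb{E}_U[U\rho_x U^\dagger])$ is fine for the statement as literally written (with the Haar twirl), but note that the way the paper subsequently uses this theorem together with \Cref{lem:packing} --- where $\chi_0$ is bounded by $c\,\epsilon^2\ln(d/\epsilon)$ --- only makes sense if $\chi_0$ is the Holevo information of the specific packing-net ensemble rather than the Haar-twirled quantity $\ln d$. So your intermediate bound $\chi\le N(S(\bar\rho_1)-S(\rho_x))$ is in fact the one that is actually needed downstream, and you should stop there rather than loosening to $\ln d$.
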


This states that any measurement procedure which can identify a state $\rho$ up to $\epsilon$-trace distance requires at least $N$ copies of $\rho$, where $N$ is given by \Cref{eq:cover} and depends on the size of an $(2\epsilon)$-packing net of the space of $d$-dimensional pure state density matrices. 
Moreover,~\cite{haah2017sample} bounded the size of such a packing net.

\begin{lemma}[Lemma 5 in~\cite{haah2017sample}]
    \label{lem:packing}
    There exists an $\epsilon$-packing net $\{\rho_1,\dots, \rho_m\}$ of the space of $d$-dimensional pure state density matrices satisfying
    \begin{equation}
        c\ln m \geq d,
    \end{equation}
    for $c$ a sufficiently large constant and $d > 3$. This packing net also satisfies 
    \begin{equation}
        \frac{\chi_0}{c} \leq \epsilon^2 \ln\left(\frac{d}{\epsilon}\right)
    \end{equation} 
    for a sufficiently large constant $c > 0$, where $\chi_0$ is given by \Cref{eq:chi0}.
\end{lemma}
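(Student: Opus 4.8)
The plan is to exhibit an explicit ensemble of pure states, all clustered at trace distance $\Theta(\epsilon)$ around a fixed reference state, and to control its two competing features separately: the angular spread (which makes the packing large) and the radial amplitude (which makes the Holevo information small). First I would fix a reference $\ket{0}\in\mathbb{C}^d$, let $\Pi = I - \ketbra{0}$ be the projector onto its orthogonal complement $V\cong\mathbb{C}^{d-1}$, and define candidate states $\ket{\psi_\phi} = \sqrt{1-a^2}\,\ket{0} + a\,\ket{\phi}$, where $\ket{\phi}$ ranges over real unit vectors in $V$ and $a=\Theta(\epsilon)$ is a radial amplitude to be fixed. Each $\ket{\psi_\phi}$ sits at trace distance exactly $a$ from $\ket{0}$, so all candidates lie on a common sphere around the reference.

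For the packing bound I would compute $\braket{\psi_\phi}{\psi_{\phi'}} = (1-a^2) + a^2 z$ with $z = \braket{\phi}{\phi'}\in[-1,1]$, so that $\dtr(\ketbra{\psi_\phi},\ketbra{\psi_{\phi'}}) = \sqrt{1 - |\braket{\psi_\phi}{\psi_{\phi'}}|^2}$. A short estimate shows that whenever $z\le 1/2$ and $a$ is small, this distance is at least $a/2$; hence taking $a = 2\epsilon$ gives pairwise distance $\ge\epsilon$. It then suffices to pack the real sphere $S^{d-2}\subset V$ with vectors whose pairwise inner products satisfy $z\le 1/2$, i.e.\ angular separation bounded below by a constant. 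A standard volumetric argument (e.g.\ \cite[Sec.~4.2]{vershynin2018high}) yields such a packing of size $m\ge e^{\Omega(d)}$, since the packing number of a $(d-2)$-sphere at constant resolution is exponential in its dimension. This establishes $c\ln m\ge d$ for a suitable constant, and the associated states $\{\ketbra{\psi_{\phi_i}}\}$ form the desired $\epsilon$-packing net.

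For the Holevo bound I would exploit the symmetry of the construction: the averaged state in \Cref{eq:chi0} is the mixture of $U\rho_x U^\dagger$ over unitaries $U$ fixing $\ket{0}$, equivalently the Haar average of the perturbation direction $\ket{\phi}$ over $V$. By Schur-type symmetry this average is $\bar\rho = (1-a^2)\ketbra{0} + \tfrac{a^2}{d-1}\Pi$, whose spectrum is $1-a^2$ once and $\tfrac{a^2}{d-1}$ with multiplicity $d-1$. Since each $\rho_x$ is pure, $S(\rho_x)=0$, so $\chi_0 = S(\bar\rho) = -(1-a^2)\ln(1-a^2) - a^2\ln\!\tfrac{a^2}{d-1}$. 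Bounding the first term by $O(a^2)$ and writing the second as $2a^2\ln(1/a) + a^2\ln(d-1)$, I would conclude $\chi_0 = O\!\big(a^2\ln(d/a)\big) = O\!\big(\epsilon^2\ln(d/\epsilon)\big)$, which is the claimed inequality after absorbing constants into a single sufficiently large $c$.

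The main obstacle is conceptual rather than computational: the two requirements pull in opposite directions, since a large packing wants the states far apart while a small Holevo information wants them close together. The resolution — and the crux of the argument — is the separation of radial and angular scales: the amplitude $a\sim\epsilon$ alone sets the trace-distance scale and thus drives $\chi_0\sim\epsilon^2$, whereas the angular packing on $S^{d-2}$ is scale-invariant and therefore keeps $m$ exponential in $d$ independently of $\epsilon$. A secondary subtlety worth flagging is that the average in \Cref{eq:chi0} must be read as Haar averaging over the stabilizer of $\ket{0}$, not the full unitary group (which would give $\bar\rho=I/d$ and $S=\ln d$); confirming that this restricted average is exactly the quantity governing the mutual-information reduction inside \Cref{thm:haah-lower} is what ties the packing and entropy bounds together. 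The hypothesis $d>3$ enters only to guarantee that $S^{d-2}$ is large enough for the volumetric packing estimate.
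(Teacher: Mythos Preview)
The paper does not prove this lemma; it is cited as Lemma~5 of \cite{haah2017sample} and used as a black box in the proof of \Cref{prop:states-lower-reduce}. Your proposal reconstructs essentially the argument that appears there: states $\sqrt{1-a^2}\,\ket{0}+a\,\ket{\phi}$ with $a=\Theta(\epsilon)$ and $\ket{\phi}$ drawn from a constant-resolution packing of the orthogonal sphere, so that the angular freedom yields $\ln m=\Omega(d)$ while the small radial amplitude keeps the ensemble average concentrated on $\ket{0}$ with entropy $O(\epsilon^2\ln(d/\epsilon))$. The separation of radial and angular scales is exactly the right mechanism.

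The one point you flag but do not close is the identification of $\chi_0$ with the continuous Haar average over the stabilizer of $\ket{0}$. The quantity that actually governs \Cref{thm:haah-lower} is the Holevo information of the \emph{discrete} ensemble, namely $S\bigl(\tfrac{1}{m}\sum_i\rho_i\bigr)$, and there is no general inequality bounding this by the entropy of the continuous twirl. The fix is direct and bypasses the continuous average entirely: since $\bra{0}\bar\rho_{\mathrm{disc}}\ket{0}=1-a^2$, the largest eigenvalue of $\bar\rho_{\mathrm{disc}}$ is at least $1-a^2$, so the remaining eigenvalues sum to at most $a^2$ in dimension at most $d-1$, and hence
\[
S(\bar\rho_{\mathrm{disc}})\;\le\;-(1-a^2)\ln(1-a^2)\;+\;a^2\ln\!\frac{d-1}{a^2}\;=\;O\!\bigl(\epsilon^2\ln(d/\epsilon)\bigr).
\]
With this adjustment your argument is complete and matches the cited construction.
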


Finally, the last result we will need gives a bound on the number of gates needed to generate an arbitrary $n$-qubit pure state.

\begin{lemma}[Section 4 of \cite{shende2005synthesis}]
\label{lem:construct-state}
    Any $n$-qubit pure quantum state can be recursively defined as the result of a quantum circuit implemented by $\mathcal{O}(2^n)$ two-qubit gates applied to the $\ket{0}^{\otimes n}$ state.
    Explicitly, this quantum circuit has at most $C \cdot 2^n$ two-qubit gates for some constant $C$.
\end{lemma}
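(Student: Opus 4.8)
The plan is to prove this by induction on $n$, exhibiting at each step a recursive ``disentangling'' circuit that peels off one qubit at a cost of $\mathcal{O}(2^{n-1})$ two-qubit gates. Given a target $n$-qubit pure state, I would write it in the computational basis of the last $n-1$ qubits as $\ket{\psi} = \sum_{k} \bigl(a_k \ket{0} + b_k\ket{1}\bigr)\otimes\ket{k}$, where $k$ ranges over $\{0,1\}^{n-1}$ and $r_k^2 = |a_k|^2 + |b_k|^2$. For each $k$ with $r_k \neq 0$ there is a single-qubit unitary $U_k$ sending the normalized vector $(a_k\ket{0} + b_k\ket{1})/r_k$ to $\ket{0}$ (take $U_k = I$ otherwise). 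Applying the uniformly controlled (multiplexed) single-qubit gate $\sum_{k} U_k \otimes \ketbra{k}$, which uses the last $n-1$ qubits as controls and the first qubit as target, sends $\ket{\psi}$ to $\ket{0}\otimes\ket{\psi'}$ with $\ket{\psi'} = \sum_k r_k \ket{k}$ an $(n-1)$-qubit pure state (normalized since $\sum_k r_k^2 = 1$). Running the inductive construction for $\ket{\psi'}$ on the last $n-1$ qubits then yields $\ket{0}^{\otimes n}$; reversing the entire circuit prepares $\ket{\psi}$ from $\ket{0}^{\otimes n}$, and the circuit is manifestly recursive in $n$.

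The key quantitative input is that a uniformly controlled single-qubit gate with $m$ control qubits compiles into $\mathcal{O}(2^m)$ two-qubit gates (\textsf{CNOT}s interleaved with single-qubit rotations), via the standard Gray-code / recursive multiplexor decomposition \cite{shende2005synthesis, vartiainen2004efficient}. With $m = n-1$ controls this costs $\mathcal{O}(2^{n-1})$ gates, so if $T(n)$ denotes the number of two-qubit gates needed to prepare an arbitrary $n$-qubit pure state, we obtain the recurrence $T(n) \le T(n-1) + c\,2^{n-1}$ with base case $T(1) = \mathcal{O}(1)$. Unrolling gives $T(n) \le c\sum_{j=1}^{n} 2^{j-1} + \mathcal{O}(1) \le C\cdot 2^n$ for a universal constant $C$, which is exactly the claimed bound.

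I expect the main obstacle to be establishing (or carefully citing) the $\mathcal{O}(2^m)$-gate compilation of the $m$-controlled multiplexor, since this is itself proved by a non-trivial recursion: one splits the multiplexor on the most significant control qubit, which after absorbing diagonalizing single-qubit gates and a single \textsf{CNOT} reduces an $m$-control multiplexor to two $(m{-}1)$-control multiplexors up to cheap corrections, producing a geometric sum. Everything else — the column-normalization decomposition of $\ket{\psi}$, the existence of the single-qubit $U_k$, and the arithmetic of the recurrence — is routine, and the only subtlety worth flagging is that complex phases are handled automatically since each $U_k$ ranges over all of $\mathrm{U}(2)$.
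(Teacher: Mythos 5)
Your proof is correct and is essentially the argument of the cited reference (Shende–Bullock–Markov, Section 4), which the paper invokes without reproving: recursive disentangling of one qubit at a time via a uniformly controlled single-qubit gate, whose $\mathcal{O}(2^{m})$-CNOT compilation yields the geometric recurrence $T(n)\le T(n-1)+c\,2^{n-1}$ and hence the $C\cdot 2^n$ bound.
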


With these results, we can prove \Cref{prop:states-lower-reduce}.
The idea is that using \Cref{lem:construct-state}, any  pure state on the first $k\sim\log_2 G$ qubits can be generated by $G$ gates.
Then, we can use the same packing net construction as~\cite{haah2017sample} from \Cref{lem:packing}.
Plugging into \Cref{thm:haah-lower} then gives our lower bound.
We also add an additional term to account for expected asymptotic $\delta$ behavior.

\begin{proof}[Proof of \Cref{prop:states-lower-reduce}]
    We wish to construct a $(2\epsilon)$-packing net over the space $S_1$ of $n$-qubit pure quantum states generated by applying $G$ gates to the first $k = \floor{\log_2 (G/C)}$ qubits, where $C$ is taken to be the same constant as in \Cref{lem:construct-state}.
    First, consider only the subsystem consisting of the first $k$ qubits.
    Notice that by \Cref{lem:construct-state}, any $k$-qubit pure state can be generated by at most $G$ gates.
    Thus, the space of $k$-qubit pure states is the same as the space of $k$-qubit pure states generated by at most $G$ gates.
    In this way, we can construct a packing net for our subsystem of only the first $k$ qubits by constructing a packing net for all $k$-qubit pure states.
    By \Cref{thm:haah-lower}, there exists an $(2\epsilon)$-packing net $\mathcal{M}_{2\epsilon} = \{\sigma_1,\dots,\sigma_m\}$ of the space of $k$-qubit pure state density matrices satisfying
    \begin{equation}
        \label{eq:packing-size}
        \ln m \geq \frac{2^k}{c},\;\;\;\chi_0 \leq 4c\epsilon^2 \ln\left(\frac{2^{k-1}}{\epsilon}\right).
    \end{equation}
    From this, we can construct a packing net for our entire $n$-qubit system as follows.
    \begin{equation}
        \label{eq:packing-net-new}
        \mathcal{M}'_{2\epsilon} \triangleq \{ \sigma_i \otimes \ketbra{0}^{\otimes (n-k)}:\sigma_i \in \mathcal{M}_{2\epsilon}\};
    \end{equation}
    We claim that this is indeed a $(2\epsilon)$-packing net of $S_1$.
    Let $\ket{\psi} = (U \otimes I)\ket{0}^{\otimes n}\in S_1$ and let $\rho = \ketbra{\psi}$.
    Because $U$ only acts on the first $k$ qubits, then we can write $\rho = \rho_k \otimes \ketbra{0}^{\otimes (n -k)}$, where $\rho_k = U\ketbra{0}^{\otimes k} U$.
    Thus, we can see that $\mathcal{M}'_{2\epsilon} \subseteq S_1$.
    Importantly, all elements of $\mathcal{M}'_{2\epsilon}$ are $n$-qubit pure states generated by $G$ gates on the first $k$ qubits.
    Moreover, for any $\sigma_i', \sigma_j' \in \mathcal{M}'_{2\epsilon}$, we have
    \begin{equation}
        \dtr(\sigma_i', \sigma_j') = \dtr(\sigma_i \otimes \ketbra{0}^{\otimes (n-k)}, \sigma_j \otimes \ketbra{0}^{\otimes (n-k)}) = \dtr(\sigma_i, \sigma_j) > 2\epsilon,
    \end{equation}
    where the first equality follows by definition of $\mathcal{M}'_{2\epsilon}$ and the last inequality follows because $\sigma_i, \sigma_j \in \mathcal{M}_{2\epsilon}$.
    
    Hence, $\mathcal{M}'_{2\epsilon}$ is indeed a $(2\epsilon)$-packing net of $S_1$, which is the set of states we wish to learn.
    Moreover, it is of the same size as $\mathcal{M}_{2\epsilon}$, which had cardinality $m$ satisfying \Cref{eq:packing-size}.
    Plugging \Cref{eq:packing-size} into \Cref{thm:haah-lower}, we have that in order to learn $\rho$ up to $\epsilon$-trace distance, we require
    \begin{equation}
        N_1 \geq \frac{(1-\delta)\frac{2^k}{c} - \ln 2}{4c\epsilon^2 \ln(2^{k-1}/\epsilon)} \geq C_1\frac{(1-\delta)G - C_2}{\epsilon^2 \ln (G/(2\epsilon))} = \Omega\left(\frac{G(1-\delta)}{\epsilon^2\log(G/\epsilon)}\right),
    \end{equation}
    where in the second inequality, $C_1$ and $C_2$ are constants, where $C_1$ depends on $c$.

    This concludes the proof for the second term in the minimum in \Cref{prop:states-lower-reduce}.
    Again, for $n < \floor{\log_2 (G/C)}$, we can appeal to the full quantum state tomography lower bound of~\cite{haah2017sample,o2016efficient}.
    Thus, we obtain the lower bound
    \begin{equation}
        \label{eq:lower-part-1}
        N_1 = \Omega\left(\min \left(\frac{2^n}{\epsilon^2}, \frac{G(1-\delta)}{\epsilon
        ^2\log(G/\epsilon)}\right)\right).
    \end{equation}
    Notice, however, that in the limit as $\delta\to 0$ one should find $N\to\infty$. This behavior is not captured in \Cref{thm:haah-lower} due to the use of the classical Fano's inequality, which treats the measurement procedure as a classical random variable.
    This behavior is also not present in lower bounds from~\cite{haah2017sample,o2016efficient}, where they assume that $\delta = \Theta(1)$.
    In order to recover the dependence on $\delta$, we prove the following lemma.

    \begin{lemma}
        \label{lem:distinguish}
        Let $\ket{\psi_0}, \ket{\psi_1}$ be any two $n$-qubit pure quantum states.
        Suppose that $\ket{\psi_0}$ and $\ket{\psi_1}$ satisfy $\dtr(\ket{\psi_0}, \ket{\psi_1}) \geq \epsilon$.
        Then, for $\delta \in (0, 1]$,
        \begin{equation}
            N_2 = \Omega\left(\frac{\log(1/\delta)}{\epsilon^2}\right)
        \end{equation}
        copies of $\ket{\psi} \in \{\ket{\psi_0}, \ket{\psi_1}\}$ are needed to distinguish whether $\ket{\psi} = \ket{\psi_0}$ or $\ket{\psi} = \ket{\psi_1}$ with probability at least $1-\delta$.
    \end{lemma}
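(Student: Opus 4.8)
The plan is to recast the distinguishing task as a two-hypothesis quantum state discrimination problem and apply the Helstrom bound. Given $N$ copies of $\ket{\psi}$ and writing $\rho_b \triangleq \ketbra{\psi_b}$, deciding whether $\ket{\psi} = \ket{\psi_0}$ or $\ket{\psi} = \ket{\psi_1}$ with success probability at least $1-\delta$ is, under the uniform prior, precisely the problem of optimally discriminating $\rho_0^{\otimes N}$ from $\rho_1^{\otimes N}$. By Helstrom's theorem (cf.\ \Cref{theorem:helstrom}), the best achievable success probability for this discrimination is $\tfrac12 + \tfrac12\,\dtr(\rho_0^{\otimes N},\rho_1^{\otimes N})$, so such an algorithm can exist only if $\dtr(\rho_0^{\otimes N},\rho_1^{\otimes N}) \ge 1-2\delta$.

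Next I would evaluate this multi-copy trace distance in closed form via the fidelity. For pure states the fidelity equals the overlap, so with $F_0 \triangleq |\braket{\psi_0}{\psi_1}|$ and using that fidelity is multiplicative under tensor products we get $F(\rho_0^{\otimes N},\rho_1^{\otimes N}) = F_0^{N}$; together with the exact pure-state identity $\dtr = \sqrt{1-F^2}$ this gives $\dtr(\rho_0^{\otimes N},\rho_1^{\otimes N}) = \sqrt{1 - F_0^{2N}}$. The feasibility condition $\sqrt{1-F_0^{2N}} \ge 1-2\delta$ then rearranges to $F_0^{2N} \le 4\delta(1-\delta) \le 4\delta$, and taking logarithms yields $N \ge \frac{\ln(1/(4\delta))}{2\ln(1/F_0)}$.

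Finally I would turn $\ln(1/F_0)$ into the advertised $\epsilon$-dependence. The separation hypothesis reads $\dtr(\ket{\psi_0},\ket{\psi_1}) = \sqrt{1-F_0^2}$, so in the regime in which the lemma is used — a pair at trace distance $\Theta(\epsilon)$ with $\epsilon$ small — we have $F_0^2 = 1 - \Theta(\epsilon^2)$, and $\ln(1/F_0) = -\tfrac12\ln\!\big(1-\Theta(\epsilon^2)\big) = O(\epsilon^2)$ using $-\ln(1-x) \le x/(1-x)$. Substituting back gives $N_2 \ge \Omega(\log(1/\delta)/\epsilon^2)$. The only delicate point — and the one I would flag as the main obstacle — is this last conversion: the bare assumption $\dtr(\ket{\psi_0},\ket{\psi_1}) \ge \epsilon$ does not suffice on its own (far-apart states, e.g.\ orthogonal ones, are discriminated with a single copy), so the lemma must be invoked on a pair whose trace distance is a fixed multiple of $\epsilon$; in the application (\Cref{prop:states-lower-reduce}) one simply takes $\ket{\psi_0} = \ket{0}^{\otimes n}$ and $\ket{\psi_1}$ obtained by rotating its first qubit so that $\dtr(\ket{\psi_0},\ket{\psi_1})$ is a constant multiple of $\epsilon$, both lying in $S_1$.
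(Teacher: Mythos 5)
Your argument is essentially the paper's: reduce to binary discrimination of $\rho_0^{\otimes N}$ versus $\rho_1^{\otimes N}$, apply the Holevo--Helstrom bound to get the feasibility condition $1-\delta \leq \tfrac12 + \tfrac12\sqrt{1-|\braket{\psi_0}{\psi_1}|^{2N}}$, and rearrange to $N \geq \log\big(\tfrac{1}{4\delta(1-\delta)}\big)\big/\log\big(\tfrac{1}{|\braket{\psi_0}{\psi_1}|^2}\big)$. Where you add value is in the final conversion to an $\epsilon$-dependence, and your flag there is correct and important: the hypothesis $\dtr(\ket{\psi_0},\ket{\psi_1}) \geq \epsilon$ only gives $|\braket{\psi_0}{\psi_1}|^2 \leq 1-\epsilon^2$, i.e.\ a \emph{lower} bound on $\log(1/|\braket{\psi_0}{\psi_1}|^2)$, which makes the quotient \emph{smaller}, not larger. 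The paper's last substitution silently reverses this inequality, and indeed the lemma as literally stated is false (take $\ket{\psi_0},\ket{\psi_1}$ orthogonal: they satisfy the hypothesis for every $\epsilon\leq 1$ yet are distinguished perfectly with one copy). What the argument actually proves is the lower bound for a pair with $|\braket{\psi_0}{\psi_1}|^2 \geq 1-O(\epsilon^2)$, equivalently $\dtr(\ket{\psi_0},\ket{\psi_1}) \leq O(\epsilon)$ as well, and your proposed repair — instantiate the lemma on an explicit pair at trace distance $\Theta(\epsilon)$ inside the relevant state class (e.g.\ $\ket{0}^{\otimes n}$ and a single-qubit rotation of it), rather than on arbitrary elements of the packing net — is exactly the right fix and does not affect the downstream bound in \Cref{prop:states-lower-reduce}.
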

    
    \begin{proof}
    For pure states, we know that the relationship between fidelity and trace distance is given by
    \begin{equation}
        \dtr(\ket{\alpha}, \ket{\beta}) = \sqrt{1 - |\braket{\alpha}{\beta}|^2}.
    \end{equation}
    In our case, because $\dtr(\ket{\psi_0}, \ket{\psi_1}) \geq \epsilon$, then we have
    \begin{equation}
        \label{eq:fidelity-bound}
        |\braket{\psi_0}{\psi_1}|^2 \leq 1- \epsilon^2.
    \end{equation}
    Using the Holevo-Helstrom Theorem~\cite{Holevo1973, Helstrom1969}, in order to distinguish $\ket{\psi_0}$ from $\ket{\psi_1}$ with probability at least $1-\delta$, one requires at least $N_2$ copies of $\ket{\psi} \in \{\ket{\psi_0}, \ket{\psi_1}\}$ satisfying
    \begin{equation}
        1-\delta \leq \frac{1}{2} + \frac{1}{2}\sqrt{1-\lvert\braket{\psi_0}{\psi_1}\rvert^{2N_2}}
    \end{equation}
    Rearranging this inequality, we have
    \begin{equation}
        \label{eq:copies-bound}
        N_2 \geq \frac{\log(4\delta(1-\delta))}{\log(\lvert\braket{\psi_0}{\psi_1}\rvert^2)} = \frac{\log(\frac{1}{4\delta(1-\delta)})}{\log(\frac{1}{\lvert\braket{\psi_0}{\psi_1}\rvert^2})}
    \end{equation}
    By \Cref{eq:fidelity-bound}, this in particular requires
    \begin{equation}
        N_2 \geq \frac{\log(\frac{1}{4\delta(1-\delta)})}{\log(\frac{1}{1-\epsilon^2})} = \Omega\left(\frac{\log(1/\delta)}{\epsilon^2}\right).
    \end{equation}
    \end{proof} 

    In our case, note that the conditions of \Cref{lem:distinguish} hold by the existence of the packing net in \Cref{eq:packing-net-new}, where $\ket{\psi_0}, \ket{\psi_1}$ can be any two states in the packing net.
    Moreover, because approximating the unknown $\ket{\psi}$ to $(\epsilon/3)$-trace distance suffices to solve the distinguishing task in \Cref{lem:distinguish}, then this lower bound also applies for the task of learning a state $\ket{\psi}$.
    Thus, combining \Cref{lem:distinguish} with \Cref{eq:lower-part-1}, we have
    \begin{equation}
        N = \Omega\left(\max\left(N_1, \frac{\log(1/\delta)}{\epsilon^2} \right)\right) = \Omega\left(\min\left(\frac{2^n}{\epsilon^2},\frac{G(1-\delta)}{\epsilon^2\log(G/\epsilon)}\right) + \frac{\log(1/\delta)}{\epsilon^2}\right),
    \end{equation}
    as claimed.
\end{proof}

This concludes the proof of \Cref{prop:states-lower-reduce}.
Recall that we are seeking a sample complexity lower bound for states for which we allow our $G$ gates to act on any pair of the $n$ qubits rather than only the first $\floor{\log_2 (G/C)}$ qubits.
We complete the proof of \Cref{prop:states-lower} by reducing to the case of \Cref{prop:states-lower-reduce}.

\begin{proof}[Proof of \Cref{prop:states-lower}]
As before, denote the set of $n$-qubit quantum states generated by $G$ gates applied to only the first $\floor{\log_2 (G/C)}$ qubits as $S_1$.
Similarly, denote the set of $n$-qubit quantum states generated by $G$ gates (applied to any of the qubits) as $S_2$.
Our claim is that the sample complexity of learning states in $S_2$ is at least the sample complexity of learning states in $S_1$.

By \Cref{prop:states-lower-reduce}, we know that the sample complexity of learning states in $S_1$ is
\begin{equation}
    N = \Omega\left(\min\left(\frac{2^n}{\epsilon^2}, \frac{G(1-\delta)}{\epsilon^2\log(G/\epsilon)}\right) + \frac{\log(1/\delta)}{\epsilon^2}\right)
\end{equation}
By the definition of sample complexity, this means that there exists some state $\rho \in S_1$ requiring $N$ copies to learn within $\epsilon$ trace distance.
Then, because $S_1 \subseteq S_2$, then $\rho \in S_2$ as well.
Thus, there exists a state $\rho \in S_2$ that requires $N$ copies to learn, so the sample complexity of learning states within $S_2$ is at least $N$ as well.
\end{proof}

\subsection{Computational complexity}
\label{app:states-comp-complexity}

\Cref{thm:state-learning-detail} states that the sample complexity for learning a description of an unknown $n$-qubit pure quantum state is linear (up to logarithmic factors) in the number of gates $G$ used to generate the state.
Nevertheless, the algorithm described in \Cref{app:states-upper} is not computationally efficient, as it constructs and searches over an exponentially large (in $G$) covering net for all pure states generated by $G$ two-qubit gates.
This raises the question: Does there exist a computationally efficient algorithm?

In this section, we first show that there is no polynomial-time algorithm for learning states generated by $G = \mathcal{O}(n\polylog(n))$ gates, assuming $\mathsf{RingLWE}$ cannot be solved efficiently on a quantum computer.
This result also holds for states generated by a depth $d = \mathcal{O}(\polylog(n))$ circuit.
Then we invoke a stronger assumption that $\mathsf{RingLWE}$ cannot be solved by any sub-exponential-time quantum algorithm, and show that any quantum algorithm for learning states generated by $\tilde{\bigo}(G)$ gates must use $\exp(\Omega(G))$ time.
This means that the computational hardness already kicks in at $G=\tilde{\omega}(\log n)$.
Finally, we explicitly construct an efficient learning algorithm for $G=\mathcal{O}(\log n)$, thus establishing $\log n$ gate complexity as a transition point of computational efficiency.
Previous work~\cite{yang2023complexity,abbas2023quantum} arrives at similar hardness results for polynomial circuit complexity, but our detailed analysis allows us to sharpen the computational lower bound and obtain this transition point.

\begin{theorem}[State learning computational complexity lower bound assuming polynomial hardness of \textsf{RingLWE}]
\label{thm:states-comp-complexity}
Let $\lambda=n$ be the security parameter and $\mathcal{K}$ be the key space parametrized by $\lambda$.
Let $U$ be a unitary consisting of $G = \mathcal{O}(n\polylog(n))$ gates (or a depth $d = \mathcal{O}(\polylog(n))$ circuit) that prepares a pseudorandom quantum state $\ket{\phi_k}$ for some randomly chosen key $k \in \mathcal{K}$.
Such a unitary $U$ exists by \Cref{thm:PRS} assuming that \textsf{RingLWE} cannot be solved by polynomial-time quantum algorithms.
Suppose we are given $N = \poly(\lambda)$ copies of $\ket{\phi_k} = U\ket{0}^{\otimes n}$.
There does not exist a polynomial-time algorithm for learning a circuit description of $\ket{\phi_k}$ to within $\epsilon\leq 1/8$ trace distance with success probability at least $2/3$.
\end{theorem}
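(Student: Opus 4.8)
The plan is to reduce distinguishing pseudorandom quantum states from Haar-random states to the state-learning task, so that a polynomial-time learner would contradict the pseudorandomness guarantee of \Cref{thm:PRS} and hence the assumed hardness of \textsf{RingLWE}. Suppose, for contradiction, that there is a polynomial-time algorithm $\mathcal{L}$ which, given $N=\poly(\lambda)$ copies of an $n$-qubit pure state, outputs with probability at least $2/3$ a polynomial-size circuit description of a state $\hat{\rho}$ with $\dtr(\hat{\rho},\ketbra{\psi})\le 1/8$. I would build an efficient distinguisher $\mathcal{D}$ that receives $\poly(\lambda)$ copies of an unknown state, promised to be either $\ket{\phi_k}$ for a uniformly random key $k\in\mathcal{K}$ or a Haar-random state: $\mathcal{D}$ splits its copies, runs $\mathcal{L}$ on the first part to obtain a circuit for $\hat{\rho}$, uses that circuit to prepare fresh copies of $\hat{\rho}$, performs SWAP tests~\cite{barenco1997stabilization,buhrman2001quantum} between these and the remaining copies of the unknown state to estimate $\tr(\hat{\rho}\,\ketbra{\psi})$ to additive error $1/16$ (boosting confidence by polynomially many repetitions), and outputs ``PRS'' iff the estimate exceeds $3/4$.

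Two checks complete the argument. For completeness, in the PRS case $\mathcal{L}$ succeeds with probability at least $2/3$, and then $\tr(\hat{\rho}\,\ketbra{\phi_k})\ge 1-\dtr(\hat{\rho},\ketbra{\phi_k})\ge 7/8$, so $\mathcal{D}$ outputs ``PRS'' with probability at least $2/3-o(1)$. For soundness, in the Haar-random case $\hat{\rho}$ is the output of some channel applied to $\ket{\psi}^{\otimes N}$, so by the optimality of the universal covariant cloner (equivalently, the quantitative no-cloning bound underlying PRS security~\cite{ji2018pseudorandom}),
\begin{equation}
    \E_{\ket{\psi}\sim\mu}\!\left[\bra{\psi}\hat{\rho}\ket{\psi}\right]\ \le\ \frac{N+1}{N+2^{n}}\ =\ \mathsf{negl}(\lambda),
\end{equation}
since $N=\poly(\lambda)$ and $n=\lambda$; Markov's inequality then gives $\Pr_{\ket{\psi}\sim\mu}[\tr(\hat{\rho}\,\ketbra{\psi})>3/4-1/16]\le\mathsf{negl}(\lambda)$, so $\mathcal{D}$ outputs ``Haar-random'' except with negligible probability (plus the arbitrarily small constant SWAP-test error). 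Hence $\mathcal{D}$ has constant, and in particular non-negligible, distinguishing advantage.

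It remains to observe that $\mathcal{D}$ is polynomial-time: $\mathcal{L}$ runs in polynomial time and thus outputs a polynomial-size circuit, preparing $\hat{\rho}$ and running SWAP tests is efficient, and only $\poly(\lambda)$ copies and repetitions are used. This contradicts \Cref{thm:PRS}, whose pseudorandomness guarantee holds under the assumption that \textsf{RingLWE} is not solvable in polynomial time by quantum computers; the same reduction applies verbatim to the depth-$d=\mathcal{O}(\polylog n)$ regime, since \Cref{thm:PRS} provides a PRS with both $G=\mathcal{O}(n\polylog n)$ gates and depth $\mathcal{O}(\polylog n)$. I expect the main obstacle to be the soundness step: one must rule out that the learned (possibly mixed) $\hat{\rho}$, which is correlated with the secret $\ket{\psi}$ through the copies fed to $\mathcal{L}$, has large overlap with a Haar-random $\ket{\psi}$; this needs the tight cloning/overlap bound $\tfrac{N+1}{N+2^n}$ rather than a crude union bound over circuit descriptions, together with careful bookkeeping so that all estimation and failure-probability terms stay comfortably below the constant trace-distance gap $1/8$.
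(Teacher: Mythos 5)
Your reduction has the same top-level shape as the paper's: run the hypothetical learner on most of the copies, reprepare $\hat{\rho}$ from its circuit description, SWAP-test against reserved copies of the unknown state, and threshold the estimated overlap; completeness in the PRS case is handled identically. Where you genuinely diverge is the soundness step. The paper rules out large overlap in the Haar-random case by a covering-net argument: since a polynomial-time learner can only output $\poly(n)$-size circuit descriptions, the set of its possible outputs has a covering net of cardinality $(1/\theta)^{\poly(n)}$, and the Haar concentration bound $\Pr_{\ket{\phi}\sim\mu}[\expval{\hat{\rho}}{\phi} > 1-\theta/2] \leq 2\exp(-\tfrac{d}{2}(1-\tfrac{\theta}{2}))$, union-bounded over this net, gives a negligible failure probability. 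You instead invoke the optimal state-estimation fidelity $\tfrac{N+1}{N+2^n}$ plus Markov. Your route is shorter, quantitatively tight, and does not need any bound on the size of the learner's output; the paper's route is more elementary (only Haar moment identities) and makes explicit which structural feature of the learner it is exploiting.

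That structural feature is exactly where your justification needs repair. As written, ``$\hat{\rho}$ is the output of some channel applied to $\ket{\psi}^{\otimes N}$, so by the optimality of the universal covariant cloner, $\E_{\ket{\psi}\sim\mu}[\bra{\psi}\hat{\rho}\ket{\psi}]\leq \tfrac{N+1}{N+2^n}$'' is false: the channel that traces out $N-1$ copies and outputs the remaining one achieves fidelity $1$, and the optimal $N\to 1$ ``cloner'' is trivial. The bound $\tfrac{N+1}{N+d}$ is the optimal mean fidelity of \emph{state estimation}, i.e., of measure-and-prepare channels, and it applies in your reduction only because $\hat{\rho}$ is reprepared from the learner's \emph{classical} circuit description, so the composite map $\ket{\psi}^{\otimes N}\to\hat{\rho}$ factors through a classical register. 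Your construction does guarantee this (you explicitly prepare fresh copies from the learned circuit), so the proof goes through once you cite the state-estimation bound rather than a cloning bound for arbitrary channels; but note that this classicality of the intermediate description is the same fact the paper's covering-net argument is built on, just used in a different way. With that correction, and your (correct) bookkeeping showing the thresholds $7/8$, $3/4$, and $11/16$ leave constant margins, the argument is sound.
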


\begin{proof}
   Suppose for the sake of contradiction that there is an efficient algorithm $\mathcal{A}_0$ that can learn a description of $\ket{\phi_k}$ to within $\epsilon$ trace distance.
   Then by standard boosting of success probability (see e.g, \cite[Proposition 2.4]{haah2023query}), there is an efficient algorithm $\mathcal{A}$ that can learn $\ket{\phi_k}$ to the same accuracy with probability at least $p=1-1/128$ with only a constant factor overhead in time complexity.
   Note that this boosting requires the distance metric to be efficiently computable, which is guaranteed by the SWAP test elaborated below.
   We will construct a polynomial-time quantum distinguisher $\mathcal{D}$ that invokes $\mathcal{A}$ to distinguish between $\ket{\phi_k}$ and a Haar-random state $\ket{\phi}$.
   This contradicts \Cref{def:PRS}.

   The distinguisher $\mathcal{D}$ operates according to Algorithm~\ref{algo:distinguish}.
   \begin{algorithm}
        \label{algo:distinguish}
        \caption{Distinguisher $\mathcal{D}$ for PRS}
        \KwIn{$\rho^{\otimes N} = \ketbra{\psi}^{\otimes N}$}
        \KwOut{$b \in \{0,1\}$}
        Store one copy of $\rho$ in quantum memory.\\
        Run $\mathcal{A}$ on inputs $\rho^{\otimes (N-1)}$, receiving $\hat{\rho}$.\\
        Run the SWAP test on the remaining copy $\rho$ and $\hat{\rho}$, receiving a bit $b \in \{0,1\}$.\\
        Output $b$.
    \end{algorithm}

    Recall that the SWAP test~\cite{barenco1997stabilization,buhrman2001quantum} takes two quantum states $\sigma_1, \sigma_2$ as input and outputs $1$ with probability $(1 + \tr(\sigma_1 \sigma_2))/2$.
    We denote this algorithm as $\mathsf{SWAP}(\sigma_1, \sigma_2)$.
    Note that here we have switched the labels of $0$ and $1$ compared to the canonical SWAP test presented in~\cite{barenco1997stabilization,buhrman2001quantum}.

    Notice that the hypothetical efficient learner $\mathcal{A}$ always produces the circuit description of the output state $\hat{\rho}$ in polynomial time.
    This means that the circuit description and thus the state $\hat{\rho}$ must also be efficiently implementable.
    As the SWAP test is also efficient, Step 3 of Algorithm~\ref{algo:distinguish} can thus indeed be performed efficiently on a quantum computer.
    Hence, the distinguisher is indeed an efficient quantum algorithm.

    Throughout this section, we denote $\rho = \ketbra{\psi}$.
    We analyze the probability that the distinguisher $\mathcal{D}$ outputs $1$ when given the pseudorandom state $\ket{\phi_k}$ versus the Haar-random state $\ket{\phi}$.

    \textbf{Case 1: $\ket{\psi} = \ket{\phi_k}$}, for a randomly chosen $k \in \mathcal{K}$.
    We have $\rho = \ketbra{\psi} = \ketbra{\phi_k}$.
    By the guarantees of $\mathcal{A}$, with probability at least $p$, we have $\dtr(\hat{\rho}, \rho) \leq \epsilon$, where $\hat{\rho}$ is the (potentially mixed) quantum state learned by algorithm $\mathcal{A}$.
    We can rewrite this as
    \begin{equation}
        \label{eq:goodest}
        \expval{\hat{\rho}}{\psi} \geq 1-\epsilon
    \end{equation}
    where we used the relationship between fidelity and trace distance (when one state is pure)
    \begin{equation}
        \dtr(\rho, \hat{\rho}) \geq 1 - \expval{\hat{\rho}}{\psi}.
    \end{equation}
    Then it immediately follows from \Cref{eq:goodest} that
    \begin{equation}
    \begin{split}
        \label{eq:case1}
        \Pr_{\substack{k \leftarrow \mathcal{K}\\\mathcal{A},\mathsf{SWAP}}}\left[\mathcal{D}\left(\ket{\phi_k}^{\otimes N}\right) = 1\right] &= \Pr_{\substack{k \leftarrow \mathcal{K}\\\mathcal{A},\mathsf{SWAP}}}\left[\mathsf{SWAP}\left(\ketbra{\phi_k}, \hat{\rho}\right) = 1\right] \\
        &= \E_{k \leftarrow \mathcal{K}}\left[\Pr_{\mathcal{A}, \mathsf{SWAP}}\left[\mathsf{SWAP}(\ketbra{\phi_k}, \hat{\rho})=1|\ket{\phi_k}\right]\right] \\
        &\geq p\E_{k \leftarrow \mathcal{K}}\left[\frac{1}{2}+\frac{1}{2}(1-\epsilon)\right]
        = p\left(1-\frac{\epsilon}{2}\right),
    \end{split}
    \end{equation}
    where the probability is taken over the random choice of the key $k \in \mathcal{K}$, the randomness in the learning algorithm $\mathcal{A}$ when run on samples $\ket{\phi_k}^{\otimes N}$, and the randomness in the SWAP test.
    In the inequality, we split the probability into two terms conditioned on the success and failure of $\mathcal{A}$, and we lower bound the term conditioned on the failure of $\mathcal{A}$ by zero.

    \textbf{Case 2: $\ket{\psi} = \ket{\phi} \sim \mu$}, where $\mu$ is the Haar measure over pure quantum states.
    We have $\rho = \ketbra{\psi} = \ketbra{\phi}$.
    We want to upper bound the probability that the distinguisher $\mathcal{D}$ outputs $1$ when given copies of $\ket{\phi}$.
    The intuition is that a Haar-random state is likely to be far from any state generated by a circuits with a polynomial-sized description, the space in which output of $\mathcal{A}$ lie.
    Let $S_{\mathcal{A}}(\ket{\phi})$ be the set of quantum states corresponding to all possible outputs of the algorithm $\mathcal{A}$ when run on $N$ copies of $\ket{\phi}$.
    We follow a similar reasoning as in \Cref{eq:case1} and obtain
    \begin{align}
        \Pr_{\substack{\ket{\phi} \sim \mu\\\mathcal{A},\mathsf{SWAP}}}\left[\mathcal{D}\left(\ket{\phi}^{\otimes N}\right) = 1\right] 
        &\leq \operatornamewithlimits{\mathbb{E}}_{\ket{\phi} \sim \mu} \left[\max_{\hat{\rho} \in S_\mathcal{A}(\ket{\phi})} \left(\frac{1}{2} + \frac{1}{2}\expval{\hat{\rho}}{\phi} \right)\right] + (1-p)\\
        &= \frac{1}{2} + \frac{1}{2}\operatornamewithlimits{\mathbb{E}}_{\ket{\phi} \sim \mu}\left[\max_{\hat{\rho} \in S_\mathcal{A}(\ket{\phi})}\expval{\hat{\rho}}{\phi}\right] + (1-p)\\
        &\triangleq \frac{1}{2} + \frac{1}{2}\operatornamewithlimits{\mathbb{E}}_{\ket{\phi} \sim \mu}  [O_\phi] + (1-p),
    \end{align}
    where in the first line we split the probability according to whether $\mathcal{A}$ succeeds or fails, and we upper bound the failing term by $(1-p)$, and in the last line we define the random variable
    \begin{equation}
        O_\phi \triangleq \max_{\hat{\rho} \in S_\mathcal{A}(\ket{\phi})}\expval{\hat{\rho}}{\phi}.
    \end{equation}
    Furthermore, we can split $\operatornamewithlimits{\mathbb{E}}_{\ket{\phi} \sim \mu} [O_\phi]$ into two parts by introducing a cut-off $\theta$:
    \begin{equation}
        \operatornamewithlimits{\mathbb{E}}_{\ket{\phi} \sim \mu}  [O_\phi] \leq \Pr \left[O_\phi \leq 1 - \frac{\theta}{2}\right] \cdot \left(1 - \frac{\theta}{2}\right) + \Pr\left[O_\phi > 1 - \frac{\theta}{2}\right] \cdot 1 \leq 1 - \frac{\theta}{2} + \Pr\left[O_\phi > 1 - \frac{\theta}{2}\right],
    \end{equation}
    where in the first inequality, we used that $O_\phi \leq 1$.
    Plugging this into our previous expression, we have
    \begin{equation}
        \label{eq:distinguish-output-1}
        \Pr_{\substack{\ket{\phi} \sim \mu\\\mathcal{A},\mathsf{SWAP}}}\left[\mathcal{D}\left(\ket{\phi}^{\otimes N}\right) = 1\right] \leq 1 - \frac{\theta}{4} + \frac{1}{2}\Pr\left[O_\phi > 1 - \frac{\theta}{2}\right] + (1-p)
    \end{equation}
    We aim to upper bound the probability $\Pr[O_\phi > 1 - \theta/2]$.
    Notice that we have
    \begin{equation}
        \label{eq:ophi-bound}
        \Pr\left[O_\phi > 1 - \frac{\theta}{2}\right] \leq \sum_{\hat{\rho} \in \mathcal{N}_{\sqrt{\theta/2}}} \Pr_{\ket{\phi} \sim \mu}\left[\expval{\hat{\rho}}{\phi} > 1 - \frac{\theta}{2}\right],
    \end{equation}
    where $N_{\sqrt{\theta/2}}$ be a minimal $(\sqrt{\theta/2})$-covering net with respect to trace distance of the set $S_\mathcal{A}(\ket{\phi})$ of quantum states corresponding to all possible outputs of the algorithm $\mathcal{A}$ when run on $N$ copies of $\ket{\phi}$.
    We can bound this probability using concentration results. Let $d = 2^n$.
    \begin{align}
        \Pr_{\ket{\phi} \sim \mu}\left[\expval{\hat{\rho}}{\phi} > 1 - \frac{\theta}{2}\right] &\leq \Pr_{\ket{\phi} \sim \mu} \left[\exp\left(\frac{d}{2}\expval{\hat{\rho}}{\phi}\right) \geq \exp\left(\frac{d}{2}\left(1 - \frac{\theta}{2}\right)\right)\right]\\
        &\leq \exp\left(-\frac{d}{2}\left(1 - \frac{\theta}{2}\right)\right)\E_{\ket{\phi} \sim \mu}\left[\exp\left(\frac{d}{2}\expval{\hat{\rho}}{\phi}\right)\right]\\
        &= \exp\left(-\frac{d}{2}\left(1 - \frac{\theta}{2}\right)\right)\sum_{k=0}^\infty \frac{1}{k!}\frac{d^k}{2^k} \E_{\ket{\phi} \sim \mu}[\expval{\hat{\rho}}{\phi}^k]\\
        &= \exp\left(-\frac{d}{2}\left(1 - \frac{\theta}{2}\right)\right)\sum_{k=0}^\infty \frac{1}{k!}\frac{d^k}{2^k}\frac{1}{\binom{k+d-1}{k}}\tr(\hat{\rho}^{\otimes k}P_{\mathrm{sym}}^{(d,k)})\\
        &\leq \exp\left(-\frac{d}{2}\left(1 - \frac{\theta}{2}\right)\right)\sum_{k=0}^\infty \frac{1}{2^k}\tr(\hat{\rho}^{\otimes k}P_{\mathrm{sym}}^{(d,k)})\\
        &\leq 2\exp\left(-\frac{d}{2}\left(1 - \frac{\theta}{2}\right)\right).
    \end{align}
    Here, the first two lines follow from the following inequality, which holds for $\alpha > 0$ and a random variable $X$:
    \begin{equation}
        \Pr[X \geq \epsilon] \leq \Pr[\exp(\alpha X) \geq \exp(\alpha \epsilon)] \leq \exp(-\alpha X)\E[\exp(\alpha X)].
    \end{equation}
    The third line follows from the Taylor expansion of $\exp(x)$.
    The fourth line follow from the identity
    \begin{equation}
        \E_{\ket{\phi} \sim \mu} \expval{O}{\phi}^k = \frac{1}{\binom{k+d-1}{k}} \tr(O^{\otimes k} P_{\mathrm{sym}}^{(d,k)}),
    \end{equation}
    where we chose $O = \hat{\rho}$ and $P_{\mathrm{sym}}^{(d,k)}$ is the orthogonal projector onto the symmetric subspace of $(\mathbb{C}^d)^{\otimes k}$. See, e.g., Example 50 in~\cite{mele2023introduction} for a proof of this identity.
    The fifth line follows from the inequality $1/\binom{k+d-1}{k} \leq k!/d^k$.
    Finally, the last line is true by the following inequalities:
    \begin{align}
        \tr(\hat{\rho}^{\otimes k}P_{\mathrm{sym}}^{(d,k)}) &\leq \left\lvert\tr(\hat{\rho}^{\otimes k}P_{\mathrm{sym}}^{(d,k)})\right\rvert\\
        &\leq \norm{\hat{\rho}^{\otimes k}P_{\mathrm{sym}}^{(d,k)}}_1\\
        &\leq \norm{P_{\mathrm{sym}}^{(d,k)}}_\infty \norm{\hat{\rho}}_1^k\\
        &\leq 1,
    \end{align}
    which follows via properties of the trace norm and because $P_{\mathrm{sym}}^{(d,k)}$ is a projector.
    Plugging this back into \Cref{eq:ophi-bound}, we have
    \begin{align}
        \Pr\left[O_\phi > 1- \frac{\theta}{2}\right] &\leq \sum_{\hat{\rho} \in \mathcal{N}_{\sqrt{\theta/2}}} \Pr_{\ket{\phi} \sim \mu}\left[\expval{\hat{\rho}}{\phi} > 1 - \frac{\theta}{2}\right]\\
        &\leq 2\mathcal{N}(S_\mathcal{A}(\ket{\phi}),\dtr, \sqrt{\theta/2}) \exp\left(-\frac{2^n}{2}\left( 1- \frac{\theta}{2}\right)\right)
        \label{eq:O-phi-prob}
    \end{align}
    
    Moreover, since $S_\mathcal{A}(\ket{\phi})$ is the set of quantum states corresponding to all possible outputs of the algorithm $\mathcal{A}$ when run on $\ket{\phi}^{\otimes N}$, then all states in $S_\mathcal{A}(\ket{\phi})$ must have a $\poly(n)$-size circuit description (because $\mathcal{A}$ is assumed to be efficient).
    Thus our covering number upper bound (setting $G=\poly(n)$ in \Cref{app:states-upper}) implies
    \begin{equation}
    \label{eq:covering-poly-state}
        \mathcal{N}(S_\mathcal{A}(\ket{\phi}),\dtr,\sqrt{\theta/2}) = \mathcal{O}\left((1/\theta)^{\poly(n)}\right).
    \end{equation}
    Thus, the above bounds along with \Cref{eq:O-phi-prob} gives us,
    \begin{equation}
        \Pr\left[O_\phi > 1 - \frac{\theta}{2}\right] = \mathrm{negl}(n),
    \end{equation}
    where $\mathrm{negl}(n)$ denotes a negligible function in $n$. 
    Putting everything together with \Cref{eq:distinguish-output-1}, we have
    \begin{equation}
        \Pr_{\substack{\ket{\phi} \sim \mu\\\mathcal{A}, \mathsf{SWAP}}}\left[\mathcal{D}\left(\ket{\phi}^{\otimes N}\right) = 1\right] \leq 
        1 - \frac{\theta}{4} + \mathrm{negl}(n) + (1-p).
    \end{equation}
    Combining with \Cref{eq:case1}, we conclude that
    \begin{align}
        \left|\Pr_{\substack{k \leftarrow \mathcal{K}\\\mathcal{A}, \mathsf{SWAP}}} \left[\mathcal{D}\left(\ket{\phi_k}^{\otimes N}\right) =1\right] - \Pr_{\substack{\ket{\phi} \sim \mu\\\mathcal{A}, \mathsf{SWAP}}} \left[\mathcal{D}\left(\ket{\phi}^{\otimes N}\right) =1\right] \right| &\geq p\left(1-\frac{\epsilon}{2}\right)-2+ \frac{\theta}{4} +p - \mathrm{negl}(n)\\
        &\geq \frac{1}{16} - \mathrm{negl}(n) \\
        &\geq \frac{1}{32},
    \end{align}
    where we have taken $\theta = 1/2, \epsilon \leq 1/8, p=1-1/128$, and the last inequality follows by taking $n$ large enough.
    This contradicts the assumption that $\{\ket{\phi_k}\}_{k \leftarrow \mathcal{K}}$ are pseudorandom quantum states under the assumption that \textsf{RingLWE} cannot be solved by polynomial-time quantum algorithms.
\end{proof}

Next, we invoke the stronger assumption that \textsf{RingLWE} cannot be solved by any sub-exponential-time quantum algorithm and show that learning states generated by $\tilde{\bigo}(G)$ gates requires exponential-in-$G$ time.

\begin{theorem}[State learning computational complexity lower bound assuming sub-exponential hardness of \textsf{RingLWE}, restatement of lower bound in \Cref{thm:state-comp-complex}]
\label{thm:states-comp-complexity-subexp}
Let $\lambda=l=\Theta(G)$ with $l\leq n$ be the security parameter and $\mathcal{K}$ be the key space parametrized by $\lambda$.
Let $U$ be an $l$-qubit unitary consisting of $\mathcal{O}(l\polylog(l))=\bigo(G\polylog(G))$ gates (or a depth $d = \mathcal{O}(\polylog(G))$ circuit) that prepares an $l$-qubit pseudorandom quantum state $\ket{\phi_k}$ against sub-exponential adversaries for some randomly chosen key $k \in \mathcal{K}$.
Such a unitary $U$ exists by \Cref{thm:PRS} assuming that \textsf{RingLWE} cannot be solved by sub-exponential quantum algorithms.
Suppose we are given $N = \poly(\lambda)$ copies of $\ket{\psi_k} = \ket{\phi_k} \otimes \ket{0}^{\otimes (n-l)} = U\ket{0}^{\otimes n}$.
Any quantum algorithm for learning a circuit description of $\ket{\psi_k}$ to within $\epsilon\leq 1/8$ trace distance with success probability at least $2/3$ must use $\exp( \Omega(\min\{G, n\}))$ time.
\end{theorem}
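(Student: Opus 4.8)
The plan is to follow the proof of \Cref{thm:states-comp-complexity} almost verbatim, making only two substitutions: use the security parameter $l=\Theta(\min\{G,n\})$ (which is $\Theta(G)$ in the stated regime $l\le n$) in place of $n$, and replace the ``efficient'' time budget $\poly(n)$ by $T=\exp(o(\min\{G,n\}))$ throughout. Concretely, I would suppose toward a contradiction that a quantum algorithm $\mathcal{A}_0$, given $N=\poly(l)$ copies of $\ket{\psi_k}=\ket{\phi_k}\otimes\ket{0}^{\otimes(n-l)}$, outputs with probability $\ge 2/3$ and in time $T=\exp(o(\min\{G,n\}))$ a circuit description of some $\hat\rho$ with $\dtr(\hat\rho,\ketbra{\psi_k})\le\epsilon\le 1/8$. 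As in \Cref{thm:states-comp-complexity}, the first step is to boost the success probability to $p=1-1/128$ by repeating $\mathcal{A}_0$ and using the SWAP test against fresh copies of $\ket{\psi_k}$ to pick the best candidate; this is valid because $\mathcal{A}_0$ returns an efficiently implementable circuit description, and it costs only a constant factor. Call the boosted learner $\mathcal{A}$, running in time $\mathcal{O}(T)$.

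Next I would run the distinguisher $\mathcal{D}$ of Algorithm~\ref{algo:distinguish}, now on an unknown $l$-qubit state $\ket{\chi}$ promised to be either $\ket{\phi_k}$ for a uniformly random key or an $l$-qubit Haar-random state: append $n-l$ fresh qubits in $\ket{0}$, run $\mathcal{A}$ on $N-1$ copies of $\ket{\chi}\otimes\ket{0}^{\otimes(n-l)}$ to obtain $\hat\rho$, and SWAP-test $\hat\rho$ against the held-out copy. Because $\mathcal{A}$ and the SWAP test run in time $\mathcal{O}(T)=\exp(o(\min\{G,n\}))=\exp(o(l))$, the distinguisher is a sub-exponential-in-$l$ quantum algorithm, so by \Cref{thm:PRS} (under sub-exponential hardness of \textsf{RingLWE}) it cannot have non-negligible advantage against $\{\ket{\phi_k}\}$.

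The remaining work is to show that $\mathcal{D}$ nonetheless has constant advantage, reproducing the two-case analysis of \Cref{thm:states-comp-complexity}. In the pseudorandom case, with probability $\ge p$ we get $\expval{\hat\rho}{\psi_k}\ge 1-\epsilon$, so $\mathcal{D}$ accepts with probability $\ge p(1-\epsilon/2)$. In the Haar case, writing $\ket{\psi_k}=\ket{\phi}\otimes\ket{0}^{\otimes(n-l)}$ with $\ket{\phi}$ Haar-random on $l$ qubits, $\expval{\hat\rho}{\psi_k}$ equals $\expval{\tilde\rho}{\phi}$ for the positive $l$-qubit operator $\tilde\rho$ of trace at most $1$ obtained by projecting $\hat\rho$ onto $\ket{0}^{\otimes(n-l)}$ on the last $n-l$ qubits; the Haar concentration bound from \Cref{thm:states-comp-complexity} then gives, for each fixed $\hat\rho$, $\Pr_{\ket{\phi}}[\expval{\tilde\rho}{\phi}>1-\theta/2]\le 2\exp(-\tfrac{2^l}{2}(1-\tfrac{\theta}{2}))$, and I would union-bound over a $\sqrt{\theta/2}$-covering net of the set $S_\mathcal{A}$ of possible outputs of $\mathcal{A}$. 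The quantitative crux is that each output circuit of $\mathcal{A}$ has at most $T$ gates, so by \Cref{thm:covering} together with the permutation device of \Cref{app:states-upper-permute} (which lets one treat these circuits as acting on only $\mathcal{O}(T)$ relevant qubits, keeping dimension-dependent logarithms at $\mathcal{O}(\log T)$ rather than $\log n$) the metric entropy of $S_\mathcal{A}$ in trace distance is $\mathcal{O}(T\polylog T)=\exp(o(l))$, which is dwarfed by $2^l$. Hence $\Pr[O_\phi>1-\theta/2]=\mathrm{negl}(l)$, and $\mathcal{D}$ accepts in the Haar case with probability $\le 1-\tfrac{\theta}{4}+\mathrm{negl}(l)+(1-p)$; choosing $\theta=1/2,\ \epsilon=1/8,\ p=1-1/128$ yields a distinguishing advantage $\ge 1/32-\mathrm{negl}(l)$, the desired contradiction. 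Since $U$ uses $\mathcal{O}(l\polylog l)=\tilde{\mathcal{O}}(G)$ gates, this forces $T\ge\exp(\Omega(\min\{G,n\}))$, and the depth version follows identically from the depth bound in \Cref{thm:PRS}.

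I expect the main obstacle to be this Haar-case metric-entropy estimate: one must verify that the covering number of the learner's output set remains negligible against $2^l=2^{\Theta(\min\{G,n\})}$ even when $n$ is vastly larger than $G$ — where a naive $\log n$ factor would be dangerous — and that the SWAP-test boosting is still legitimate when $\mathcal{A}$ runs for sub-exponential rather than polynomial time. Both are resolved by observing that a time-$T$ learner can only produce circuits on $\mathcal{O}(T)$ relevant qubits, so the permutation reduction of \Cref{app:states-upper-permute} applies and replaces $\log n$ by $\mathcal{O}(\log T)=o(l)$; once this is in place, every inequality in the proof of \Cref{thm:states-comp-complexity} closes with $\poly(n)$ replaced by $\exp(o(\min\{G,n\}))$ and $n$ replaced by $l$ as the security parameter, and the remainder is routine.
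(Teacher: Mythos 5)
Your proposal is correct and follows essentially the same route as the paper: the paper's proof simply invokes \Cref{thm:states-comp-complexity} at security parameter $l$ with polynomial hardness upgraded to sub-exponential hardness, and then reduces the $n$-qubit learning task to the $l$-qubit one by post-selecting the last $n-l$ qubits onto $\ket{0}$ (trace distance is non-increasing under this map), whereas you unfold that citation by re-running the distinguisher argument directly on the padded states $\ket{\chi}\otimes\ket{0}^{\otimes(n-l)}$ with time budget $\exp(o(l))$. Your extra care about the covering-number/metric-entropy issue when $n\gg G$ is sound (and in fact already handled by noting that in any non-trivial regime $\log n = O(l)$, so the $T\log n$ term stays $o(2^l)$ without needing the permutation device), so the two arguments are substantively identical.
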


\begin{proof}
    With polynomial hardness of \textsf{RingLWE} replaced by sub-exponential hardness, \Cref{thm:states-comp-complexity} asserts that there are no sub-exponential (in $l$) quantum algorithms that can learn the $l$-qubit pseudorandom state $\ket{\phi_k}$ to within trace distance $\epsilon<1/8$ with success probability at least $2/3$.
    That is, any such learning algorithms must use time at least $\exp(\Omega(l))=\exp(\Omega(\min\{G, n\}))$ time, since $l\leq n$.
    Meanwhile, a learning algorithm for the $n$-qubit state $\ket{\psi_k}$ can be used to learn the $l$-qubit state $\ket{\phi_k}$ in the same runtime by post-selecting on the last $(n-l)$ qubits being $\ket{0}$, because trace distance does not increase under such an operation.
    This implies the $\exp(\Omega(\min\{G, n\}))$ time lower bound for the $n$-qubit learning algorithm.
\end{proof}

Finally, we briefly show that learning becomes efficient when $G=\mathcal{O}(\log n)$.
The idea is that with $\bigo(\log n)$ gates, there can only be at most $\bigo(\log n)$ qubits affected.
Thus we can focus on these qubits and learning the states amounts to manipulating vectors of size at most $2^{\bigo(\log n)}=\poly(n)$, which is efficient.
Specifically, we have the following statement.

\begin{prop}[Learning states with logarithmic circuit complexity efficiently, restatement of upper bound in \Cref{thm:state-comp-complex}]
\label{prop:state-logn-efficiently}
    Let $\epsilon > 0$. Suppose we are given $N$ copies of a pure $n$-qubit state $\rho = \ketbra{\psi}$, where $\ket{\psi} = U\ket{0}^{\otimes n}$ is generated by a unitary $U$ consisting of $G=\mathcal{O}(\log n)$ two-qubit gates.
    There exists a learning algorithm that outputs a $\hat{\rho}$ such that $\dtr(\rho, \hat{\rho})\leq \epsilon$ with probability at least $2/3$ using $\poly(n, 1/\epsilon)$ copies and time.
\end{prop}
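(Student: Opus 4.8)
The plan is to exploit that $G = \mathcal{O}(\log n)$ two-qubit gates touch only $\mathcal{O}(\log n)$ qubits, reducing the task to ordinary, computationally efficient tomography on a $\poly(n)$-dimensional subsystem. Concretely, $U$ acts nontrivially on a set of qubits $A \subseteq [n]$ with $|A| \leq 2G = \mathcal{O}(\log n)$, so that $\rho = \rho_A \otimes \ketbra{0}^{\otimes(n-|A|)}$ for some pure state $\rho_A$ on the qubits in $A$. The one complication is that we can only \emph{detect} a qubit as ``active'' if a computational-basis measurement ever flips it, so rather than trying to recover $A$ exactly we will post-select the detected complement onto $\ket{0}$ and argue this barely perturbs $\rho$.

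First I would run \Cref{algo:nonzero-qubits} to obtain a set $\hat{A} \subseteq A$, hence $|\hat{A}| \leq 2G = \mathcal{O}(\log n)$, and set $\hat{B} = [n] \setminus \hat{A}$ and $\Lambda = \ketbra{0_{\hat B}} \otimes I_{\hat A}$. Applying \Cref{lemma:good-overlap} and \Cref{lem:postselect} with the error parameter there rescaled by a constant, $N_1 = \mathcal{O}((G + \log(1/\delta_1))/\epsilon^2) = \poly(n, 1/\epsilon)$ copies and polynomial time suffice so that, with probability at least $1 - \delta_1$, the post-selected state $\rho' \triangleq \sqrt{\Lambda}\rho\sqrt{\Lambda}/\Tr(\Lambda\rho)$ satisfies $\dtr(\rho, \rho') \leq \epsilon/4$. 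Since $\rho'$ is proportional to $(\Lambda\ket{\psi})(\Lambda\ket{\psi})^\dagger$ it is pure and has the product form $\rho' = \rho'_{\hat A} \otimes \ketbra{0_{\hat B}}$ for a pure state $\rho'_{\hat A}$ on the qubits in $\hat{A}$, and monotonicity of trace distance under partial trace gives $\dtr(\rho_{\hat A}, \rho'_{\hat A}) \leq \epsilon/4$ for the reduced state $\rho_{\hat A}$ of $\rho$ on $\hat{A}$. Note that this step needs no permutation argument (unlike the general upper bound): even if $\hat{A} \subsetneq A$, say because $U$ borrows and re-zeroes some ancilla qubits, projecting $\hat{B}$ onto $\ket{0}$ still leaves a state $\rho'$ supported on $\hat{A}$ and provably $\epsilon/4$-close to $\rho$.

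Next I would run full state tomography on the subsystem of the $|\hat{A}| \leq 2G$ qubits in $\hat{A}$, measuring only those qubits of fresh copies of $\rho$. Its dimension is $d = 2^{|\hat{A}|} \leq 4^{G} = \poly(n)$, so any computationally efficient tomography routine---random-Clifford classical shadows~\cite{huang2020predicting}, measuring a tomographically complete family of $\mathcal{O}(d^2)$ observables each to precision $\mathcal{O}(\epsilon/d)$, or the sample-optimal estimators of~\cite{haah2017sample, o2016efficient}---produces from $N_2 = \poly(d, 1/\epsilon) = \poly(n, 1/\epsilon)$ copies and in $\poly(n, 1/\epsilon)$ time an estimate $\hat{\rho}_{\hat A}$ with $\dtr(\hat{\rho}_{\hat A}, \rho_{\hat A}) \leq \epsilon/2$ with probability at least $1 - \delta_2$ (if a pure output is wanted, replace $\hat{\rho}_{\hat A}$ by its leading eigenvector and rescale constants). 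The algorithm outputs $\hat{\rho} = \hat{\rho}_{\hat A} \otimes \ketbra{0}^{\otimes(n - |\hat{A}|)}$ together with a preparation circuit, which by \Cref{lem:construct-state} needs only $\mathcal{O}(2^{|\hat A|}) = \poly(n)$ two-qubit gates for the $\mathcal{O}(\log n)$-qubit factor $\hat{\rho}_{\hat A}$ (or its purification) and is computable from the classical description of $\hat{\rho}_{\hat A}$ in $\poly(n)$ time. Because $\hat{\rho}$ and $\rho'$ agree with $\ketbra{0_{\hat B}}$ on $\hat{B}$, we have $\dtr(\hat{\rho}, \rho') = \dtr(\hat{\rho}_{\hat A}, \rho'_{\hat A})$, and the triangle inequality yields $\dtr(\rho, \hat{\rho}) \leq \dtr(\rho, \rho') + \dtr(\rho'_{\hat A}, \rho_{\hat A}) + \dtr(\rho_{\hat A}, \hat{\rho}_{\hat A}) \leq \epsilon/4 + \epsilon/4 + \epsilon/2 = \epsilon$; choosing $\delta_1 = \delta_2 = 1/9$ and union bounding gives success probability at least $2/3$, with total cost $N_1 + N_2 = \poly(n, 1/\epsilon)$ copies and $\poly(n, 1/\epsilon)$ time.

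The only genuinely delicate point is the interface between the detection step and the tomography step, i.e.\ coping with the case $\hat{A} \subsetneq A$; the resolution is precisely to post-select the detected complement onto $\ket{0}$ and invoke gentle measurement (\Cref{lem:postselect}) rather than naively tracing down to $\hat{A}$ and assuming $\hat{A} = A$. Once the unknown state is localized to $\mathcal{O}(\log n)$ qubits and certified $\mathcal{O}(\epsilon)$-close to $\rho$, the remaining pieces---small-dimensional efficient tomography and compiling an $\mathcal{O}(\log n)$-qubit state into a $\poly(n)$-gate circuit---are routine.
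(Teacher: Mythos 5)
Your proposal is correct and follows essentially the same route as the paper's proof: junta-learn the active qubits, post-select the detected complement onto $\ket{0}$ and invoke gentle measurement to certify closeness, run brute-force tomography on the resulting $\mathcal{O}(\log n)$-qubit (hence $\poly(n)$-dimensional) subsystem, round to the leading eigenvector, and compile a $\poly(n)$-gate preparation circuit. The only cosmetic difference is that you tomograph the reduced state of $\rho$ on $\hat{A}$ while the paper tomographs the post-selected state directly; both are handled by the same triangle-inequality bookkeeping.
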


\begin{proof}
    We prove this by explicitly constructing a learning algorithm based on junta learning (\Cref{app:states-upper}) and standard tomography methods as follows.

    Firstly, we execute Algorithm~\ref{algo:nonzero-qubits} on copies of $\rho$ and post-select on the trivial qubits being zero as in \Cref{app:states-upper}. 
    This step uses $\poly(n, 1/\epsilon)$ copies and time, and gives us post-selected states $\rho' = \rho'' \otimes (\ketbra{0})^{\otimes(n-2G)}$ that satisfies $\dtr(\rho, \rho')\leq \epsilon/4$ by appropriate choice of accuracy.
    Here $\rho''$ is a state on $2G=\mathcal{O}(\log n)$ qubits.

    Next, we carry out the most straightforward tomography method of measuring all the Pauli coefficients.
    Concretely, we can represent $\rho''=\sum_P \alpha_P P$ as a linear combination of all Pauli strings over the $2G$ qubits.
    Using this representation, we estimate all the coefficients $\alpha_P$ by measuring $\tr(\rho' P)$ and obtain a $\hat{\rho}=\hat{\rho}''\otimes (\ketbra{0})^{\otimes(n-2G)}$.
    By measuring all Pauli string expectation values $\tr(\rho' P)$ to accuracy $\mathcal{O}(\epsilon/4^{2G})$, we have $\dtr(\rho', \hat{\rho})\leq \epsilon/4$ and thus $\dtr(\rho, \hat{\rho})\leq \epsilon/2$.
    From standard Chernoff-Hoeffding concentration inequalities, this can be achieved with $\bigo(4^{2G}/(\epsilon/4^{2G})^2)=\poly(n, 1/\epsilon)$ copies.
    Finally, we diagonalize $\hat{\rho}''$ and calculate its eigenvector $\hat{\ket{\psi''}}$ with the largest eigenvalue, such that $\hat{\ket{\psi''}}$ is the pure state closest to $\hat{\rho}''$ in trace distance.
    Let $\hat{\ket{\psi}} = \hat{\ket{\psi''}}\otimes\ket{0}^{\otimes(n-2G)}$.
    Recall that $\dtr(\rho, \hat{\rho})\leq \epsilon/2$ and $\rho$ is a pure state.
    Therefore, $\dtr(\hat{\ket{\psi}}\hat{\bra{\psi}}, \hat{\rho})\leq \dtr(\rho, \hat{\rho})\leq \epsilon/2$ and thus $\dtr(\hat{\ket{\psi}}\hat{\bra{\psi}}, \rho)\leq \epsilon$.
    We output $\hat{\ket{\psi}}$ as the learning outcome whose circuit description can be found by finding a unitary with $\hat{\ket{\psi''}}$ as its first column using orthogonalization.
    Since we are manipulating matrices of size $\bigo(2^{2G})=\poly(n)$, the computational complexity is also $\mathcal{O}(n, 1/\epsilon)$.
\end{proof}

\section{Learning quantum unitaries}

In this appendix, we give detailed proofs of \Cref{thm:worst-case-unitary} for worst-case unitary learning, \Cref{thm:avg-case-unitary} for average-case unitary learning, and \Cref{thm:classical-description} for learning with classically described data.

\subsection{Worst-case learning}
\label{app:worst-case-unitary}
We begin with the worst-case unitary learning problem, which measures reconstruction error in terms of the diamond distance $\dworst(U, V) = \max_{\rho} \|(U\otimes I)\rho(U\otimes I)^\dagger - (V\otimes I)\rho(V\otimes I)^\dagger\|_1$.
In particular, we consider the task of using queries to an unknown unitary $U$ with bounded circuit complexity $G$ to output a classical circuit description $\hat{U}$ such that $\dworst(\hat{U}, U) \leq \epsilon$ with probability at least $2/3$.
The diamond distance has a similar operational meaning as the trace distance in state learning.
It characterizes the ability to distinguish two processes with arbitrary input states and measurements.
If we can learn the unitary with small error in the diamond distance, then we will only make small error even if we test $\hat{U}$ against $U$ on the worst choice of input states.
However, we find the following result stating that this task necessarily requires a number of queries exponential in $G$, indicating the hardness of worst-case unitary learning.

\begin{theorem}[Worst-case unitary learning, restatement of \Cref{thm:worst-case-unitary}]
Given query access to an $n$-qubit unitary $U$ composed of $G$ two-qubit gates, any algorithm that can output a unitary $\hat{U}$ such that $\dworst(\hat{U}, U)\leq \epsilon\in (0, 1/4]$ with probability at least $2/3$ must query $U$ at least $\Omega\left(2^{\min\{G/(2C), n/2\}}/{\epsilon}\right)$ times, where $C>0$ is a universal constant. 
Meanwhile, there exists such an algorithm using $\mathcal{O}(2^nG\log(\sqrt{2^n}G/\epsilon)/\epsilon)$ queries. 
\end{theorem}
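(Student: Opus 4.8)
The plan is to prove the upper and lower bounds separately. The upper bound is an essentially free consequence of the average-case learner \Cref{thm:avg-case-unitary}, while the lower bound requires a genuinely new construction analyzed with the adversary method.

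\textbf{Upper bound.} First I would convert diamond distance into root-mean-squared trace distance. Chaining \Cref{lem:dist-spectral-diamond}, \Cref{lem:dist-norm-conversion}, and Item 1 of \Cref{lem:dist-df'} gives, for any $d$-dimensional unitaries with $d = 2^n$,
\begin{equation}
    \dworst(U, V) \;\leq\; 2\, d_2'(U, V) \;\leq\; 2\sqrt{d}\, d_F'(U, V) \;\leq\; 4\sqrt{d}\, \davg(U, V).
\end{equation}
Hence it suffices to learn $U$ to accuracy $\delta = \epsilon/(4\sqrt{2^n})$ in $\davg$ and output the resulting circuit description. Since $\epsilon \leq 1/4$ we have $\delta < 1/\sqrt{2^n}$, so the minimum in the query complexity of \Cref{thm:avg-case-unitary} is attained at its $\sqrt{2^n}/\delta$ branch, giving $\tilde{\mathcal{O}}(G\sqrt{2^n}/\delta) = \tilde{\mathcal{O}}(2^n G/\epsilon)$ queries; feeding $\delta = \Theta(\epsilon/\sqrt{2^n})$ into the metric-entropy estimate of \Cref{cor:covering-avg} used inside the bootstrap subroutine makes the suppressed logarithmic factor explicit and yields $\mathcal{O}(2^n G \log(\sqrt{2^n}G/\epsilon)/\epsilon)$. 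The only thing to check is that the bootstrap of \Cref{thm:avg-case-unitary} still behaves well at this exponentially small target error, which it does precisely because on this branch the cost is polynomial (indeed linear) in $1/\delta$.

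\textbf{Lower bound: the hard instances.} Fix $m = \min\{\lfloor G/C\rfloor, n\}$ for a universal constant $C$ fixed below. For each $x \in \{0,1\}^m$ let $U_x = \exp(i\theta\,\ketbra{\tilde x})$, acting on the first $m$ qubits and trivially on the rest, where $\ket{\tilde x}$ is the computational basis state indexed by $x$ and $\theta = \Theta(\epsilon)$ is chosen so that $\dworst(U_x, U_{x'}) \geq 3\epsilon$ for all $x \neq x'$; this is possible for $\epsilon \leq 1/4$ since $U_x U_{x'}^\dagger$ has nontrivial eigenvalues $e^{\pm i\theta}$, whence $d_2'(U_x,U_{x'}) = 2\sin(\theta/2)$ and \Cref{lem:dist-spectral-diamond} gives $\dworst(U_x,U_{x'}) \in [\sqrt{2}\,d_2', 2d_2']$, a window of width $\Theta(\theta)$ bounded away from $\epsilon$ from above (so that $I$ is also not an $\epsilon$-approximation of any $U_x$). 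Each $U_x$ is a multiply-controlled phase gate up to $\mathcal{O}(m)$ single-qubit $X$ gates relabeling the marked string, and a multiply-controlled single-qubit gate on $m$ qubits is implementable with $\mathcal{O}(m)$ two-qubit gates using one ancilla; taking $C$ to be this hidden constant ensures circuit complexity at most $G$. Any algorithm that $\epsilon$-learns this class outputs $\hat U$ with $\dworst(\hat U, U_x) \leq \epsilon$, and since the family is pairwise $3\epsilon$-separated, the triangle inequality identifies $x$ as the unique closest family member — no further queries needed. Thus an $\epsilon$-learner solves, with probability $\geq 2/3$, the oracle-identification problem $x \mapsto U_x$.

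\textbf{Lower bound: the query bound, and the main obstacle.} It remains to lower-bound the quantum query complexity of identifying $x$ from queries to $U_x$ (and, for robustness, to $U_x^\dagger$ and controlled-$U_x$, which are again phase oracles of rotation angle $\theta$). Here the negative-weight adversary method \cite{ambainis2000quantum, hoyer2007negative, belovs2015variations, van2019quantum} enters: with a standard adversary matrix $\Gamma$ on the index pairs, the progress function changes by at most $\mathcal{O}(\theta)\,\|\Gamma\|$ per query, since every admissible oracle satisfies $\|U_x - I\| = 2\sin(\theta/2) = \mathcal{O}(\theta)$ — this is the same mechanism by which a $\theta$-fractional phase query carries only an $\mathcal{O}(\theta)$ fraction of a full phase query's distinguishing power. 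Because the accumulated progress must reach $\Omega(\mathrm{ADV})$ with $\mathrm{ADV} = \Omega(\sqrt{2^m})$ (the adversary bound for unstructured search on $2^m$ items), we conclude $T = \Omega(\sqrt{2^m}/\theta) = \Omega(2^{m/2}/\epsilon) = \Omega(2^{\min\{G/(2C), n/2\}}/\epsilon)$. I expect the main obstacle to be making this partial-oracle step fully rigorous: one must verify that the per-query perturbation of the chosen progress measure is genuinely $\mathcal{O}(\theta)\|\Gamma\|$ uniformly over all intermediate algorithm states and over queries to $U_x$, $U_x^\dagger$, and their controlled versions — rather than the $\mathcal{O}(1)\|\Gamma\|$ of the textbook argument — and that this bound hides no dependence on $n$ or $m$. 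An alternative route that sidesteps the bookkeeping is an explicit reduction of the $\theta$-phase oracle to the full phase oracle via the fractional-query simulation of \cite{haah2023query, berry2015hamiltonian, cleve2009efficient}, composed with the $\Omega(\sqrt{2^m})$ search lower bound; either way, this is the step that turns the heuristic $1/\epsilon$ amplification factor into the stated bound.
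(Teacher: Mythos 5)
Your proposal follows essentially the same route as the paper: the upper bound is the identical chain $\dworst \leq 2d_2' \leq 2\sqrt{d}\,d_F' \leq 4\sqrt{d}\,\davg$ feeding the average-case learner at accuracy $\epsilon/(4\sqrt{2^n})$, and the lower bound uses the same hard family of $\Theta(\epsilon)$-phase oracles marking a single computational basis state on $\min\{\lfloor G/C\rfloor, n\}$ qubits, analyzed by an adversary argument in which each partial-phase query contributes only an $\mathcal{O}(\epsilon)$ fraction of progress. The step you flag as the main obstacle is exactly the one the paper discharges by citing a ready-made "phase adversary" lemma ({\cite[Lemma 6.4]{van2019quantum}}), which bounds the query complexity of distinguishing two disjoint sets of phase oracles by $\Omega(\sqrt{mm'/l_{\max}})$ with the phase magnitudes entering through $l_{\max}$; so no bespoke progress-function bookkeeping is needed. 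The only structural difference is that you pose the task as identification among all $2^m$ marked-item oracles (requiring pairwise $3\epsilon$-separation), whereas the paper poses the easier decision problem of distinguishing the marked-item oracles $X$ from the singleton $Y=\{I\}$; the latter choice makes the lemma's parameters evaluate immediately to $m=1$, $m'=2^k$, $l_{\max}=\epsilon^2$, hence $\Omega(\sqrt{2^k}/\epsilon)$, while your identification version still works under the same lemma but requires picking a suitable bipartition (e.g., by one bit of $x$) rather than the all-pairs relation, since the lemma as stated needs two disjoint classes.
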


\begin{proof}
    The upper bound follows from the average-case learning algorithm (\Cref{thm:avg-case-unitary}, proved below) when working in the exponentially small error regime. 
    Specifically, \Cref{thm:avg-case-unitary} gives us an algorithm that uses $\bigo(G\sqrt{d}\log(G/\epsilon')/\epsilon')$ queries to output a $\hat{U}$ that satisfies $\davg(\hat{U}, U)\leq \epsilon'$.
    Meanwhile, from \Cref{lem:dist-spectral-diamond}, \Cref{lem:dist-norm-conversion} and \Cref{lem:dist-df'}, we know that $\dworst(\hat{U}, U)\leq 2d_2'(\hat{U}, U)\leq 2\sqrt{d}d_F'(\hat{U}, U)\leq 4\sqrt{d}\davg\leq 4\sqrt{d}\epsilon'$. 
    Setting $\epsilon=4\sqrt{d}\epsilon'$, we arrive at the desired worst-case learning query complexity.
    
    The proof of the lower bound is inspired by the adversary method~\cite[Chapter 6]{van2019quantum} and the optimality of Grover's algorithm~\cite{boyer1998tight}. 
    The idea is to construct a set of unitaries that can be distinguished by the worst-case learning algorithm, but only make minor difference when acting on states so that a minimal number of queries have to be made in order to distinguish them.

    Specifically, we consider all the length-$2^k$ bit-strings $x$ that have Hamming weight $1$, i.e.,  $x_i=1$ for some $i\in[2^k]$ and all the other bits are $0$. 
    We focus on the task of distinguishing this set of strings, denoted by $X$, from the all zero string $Y=\{0\ldots0\}$. 
    We access any such bit-strings $x$ through a phase oracle, which is defined as a $k$-qubit unitary $U_x$ that obeys $U_x\ket{j} = e^{i\epsilon' x_j}\ket{j}$ for all $j\in [2^k]$. 
    In other words, $U_x$ is diagonal and each diagonal element is $e^{i\epsilon'}$ if the corresponding bit is $1$ and is $1$ if the bit is $0$. 
    The unitary for the all zero string is the identity. 
    
    To implement such unitaries with $2$-qubit gates, we note that since the strings have Hamming weight at most one, each of the unitaries is equivalent to a $(k-1)$-controlled phase gate with proper control rule. 
    The control rule can be realized by $\bigo(k)$ pairs of 1-qubit gates acting on each qubit, and the $(k-1)$-controlled phase gate can be decomposed into $\bigo(k)$ 2-qubit gates~\cite{gidney2015constructing}. 
    Therefore, with $\bigo(k)$ gates, one can implement $U_x$ for any $2^k$-bit string $x$ with Hamming weight at most one. 
    
    Suppose $Ck$ gates suffice to implement these $U_x$. Set $k = \min\{\floor{G/C}, n\}$. 
    Then for any $x\in X\cup Y$, $U_x \otimes I_{n-k}$ is an $n$-qubit gate composed of at most $G$ gates. 
    Meanwhile, the unitaries for $X$ are far apart from that for $Y$, because for any $x\in X$, suppose $x_j=1$, we can take another $x'\neq x$ from $X$ with $x'_{j'}=1$, and let $\ket{\psi_{jj'}} = (\ket{j}+\ket{j'})/\sqrt{2}$. 
    Then we have 
    \begin{equation}
    \begin{split}
        \dworst (U_x, U_{0\ldots 0}) &\geq  \|U_x\ket{\psi_{jj'}}\bra{\psi_{jj'}}U_x^\dagger - U_{0\ldots 0}\ket{\psi_{jj'}}\bra{\psi_{jj'}}U_{0\ldots 0}^\dagger\|_1 \\
        &= \left\|\frac{e^{i\epsilon'}-1}{2}\ket{j}\bra{j'} + \frac{e^{-i\epsilon'}-1}{2}\ket{j'}\bra{j}\right\|_1 = 2\sin\frac{\epsilon'}{2} \geq \frac{\epsilon'}{2},
    \end{split}
    \end{equation}
    for $\epsilon'\in (0, 1]$.
    Therefore, if we have a learning algorithm that can learn $U_T^n$ using $m$ queries with accuracy $\epsilon=\epsilon'/4\in(0, 1/4]$ in diamond norm with probability $2/3$, it can also distinguish $X$ from $Y$ with the same probability. 
    Note that this also works if the learning algorithm is for (quotient) spectral distance, but not for $\davg$ because $\davg(U_{x}, U_{0\ldots 0})$ is exponentially small for every $x$ with Hamming weight one.

    In addition, we have the following query complexity lower bound from the adversary method.
    \begin{lemma}
    (Phase adversary method,~\cite[Lemma 6.4]{van2019quantum}).
    Let $D$ be a finite set of functions from a finite set $Q$ to $\mathbb{R}$. 
    To each function $x\in D$, assign an oracle $U_x$ of the form $U_x\ket{q}=e^{ix(q)}\ket{q}$. 
    Let $X$ and $Y$ be two disjoint subsets of $D$. Let $R\subseteq X\times Y$ be a binary relation on $X\times Y$. 
    For $x\in X$, we write $R(x)=\{y\in Y: (x, y)\in R\}$, and similarly $R(y)$ for $y\in Y$. Define
    \begin{align*}
        m = \min_{x\in X}|R(x)|, \quad m' = \min_{y\in Y}|R(y)|, \quad
        l_{q, x} = \sum_{y\in R(x)}|x(q)-y(q)|, \quad l_{q, y} = \sum_{x\in R(y)}|x(q)-y(q)|,
    \end{align*}
    and let $l_{\max} = \max_{q\in Q, x\in X, y\in Y}l_{q, x}l_{q, y}$. 
    Then to distinguish $X$ and $Y$ with success probability at least $2/3$, any algorithm needs at least 
    \begin{equation}
        \Omega\left(\sqrt{\frac{mm'}{l_{\max}}}\right)
    \end{equation}
    queries to the oracle.
    \end{lemma}
    
    For our problem, let $R=X\times Y$. For all bit-strings $x$, define $x(q)=\epsilon x_q$. 
    Then we have $m=|Y|=1, m'=|X|=2^k, l_{q, x}=\epsilon x_q, l_{q, y}=\epsilon$ because for a specific $q$, only one $x\in R(y)=X$ has $x_q=1$. Thus $l_{\max} = \epsilon^2$. 
    Plugging these into the above lemma, we obtain a query complexity lower bound of $\Omega(\sqrt{2^k}/\epsilon)$. 
    Since $k=\min\{\floor{G/C}, n\}$, we arrive at the final query complexity lower bound $\Omega\left(2^{\min\{G/(2C), n/2\}}/{\epsilon}\right)$.
\end{proof}

\subsection{Average-case query complexity upper bounds}
\label{app:avg-case-unitary-up}

Having seen that worst-case unitary learning is hard, we move on to the setting of average-case learning.
In particular, we consider the task of using queries to an unknown unitary $U$ with bounded circuit complexity $G$ to output the classical circuit description of a unitary $\hat{U}$ such that $\davg(\hat{U}, U) = \sqrt{\E_{\ket{\psi}}[\dtr(\hat{U}\ket{\psi}, U\ket{\psi})^2]} \leq \epsilon$ with probability at least $2/3$.
In the following, we give explicit algorithms that solve this learning task with linear-in-$G$ queries, using similar hypothesis selection techniques as in the state learning task (\Cref{app:states-upper}).

\begin{prop}[Average case unitary learning upper bounds, upper bounds in \Cref{thm:avg-case-unitary}]
There exists an algorithm that, given query access to an $n$-qubit unitary $U$ composed of $G$ two-qubit gates, can output a unitary $\hat{U}$ such that $\davg(\hat{U}, U)\leq \epsilon$ with probability at least $2/3$ using 
\begin{equation}
    \bigo\left(\min\left\{\frac{4^n}{\epsilon}, \frac{G\log(G/\epsilon)}{\epsilon^2}, \frac{\sqrt{2^n}G\log(G/\epsilon)}{\epsilon}\right\}\right)
\end{equation}
queries to the unknown unitary $U$. 
Moreover, there is another such algorithm that uses $\bigo(G\log(G/\epsilon)/\epsilon^4)$ queries without employing auxiliary quantum systems.
\label{prop:upper-bound-unitary}
\end{prop}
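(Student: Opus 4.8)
The plan is to establish the minimum term by term, since the three quantities in the $\min$ are achieved by three different algorithms, and then to handle the auxiliary-free bound separately. The $\bigo(4^n/\epsilon)$ term is nothing but the query complexity of process tomography of an arbitrary $n$-qubit unitary in diamond distance~\cite{haah2023query}, which upper bounds $\davg$ by \Cref{lem:dist-spectral-diamond,lem:dist-norm-conversion,lem:dist-df'}; no bounded-complexity structure is used there. So the genuinely $G$-dependent content is the term $\tilde{\bigo}(G/\epsilon^2)$, the Heisenberg-scaling term $\tilde{\bigo}(\sqrt{2^n}\,G/\epsilon)$, and the auxiliary-free term $\tilde{\bigo}(G/\epsilon^4)$.

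For $\tilde{\bigo}(G/\epsilon^2)$ I would reduce average-case unitary learning to the state learning of \Cref{thm:state-learning-detail} via the Choi--Jamio{\l}kowski isomorphism. Writing $\ket{\Phi_U}\triangleq(U\otimes I)\tfrac{1}{\sqrt d}\sum_j\ket{j}\ket{j}$, a single query to $U$ prepares one copy of $\ket{\Phi_U}$, and the overlap identity $\braket{\Phi_U}{\Phi_V}=\tfrac1d\tr(U^\dagger V)$ together with~\eqref{eq:davg} gives $\davg(U,V)^2=\tfrac{d}{d+1}\dtr(\ket{\Phi_U},\ket{\Phi_V})^2$, so the two distances agree up to a factor $\sqrt2$. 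Taking the $\davg$-covering net of $G$-gate unitaries from \Cref{cor:covering-avg} (cardinality $e^{\tilde{\bigo}(G)}$) and applying each element to $\tfrac1{\sqrt d}\sum_j\ket j\ket j$ yields a covering net, in trace distance, of the Choi states of $G$-gate unitaries; feeding this net and the copies of $\ket{\Phi_U}$ into the classical-shadow hypothesis selection of \Cref{prop:hypothesis} identifies an element $\ket{\Phi_{V^\star}}$ with $\davg(V^\star,U)\le\epsilon$ using $\bigo(\log|\mathcal N|/\epsilon^2)=\tilde{\bigo}(G/\epsilon^2)$ queries. The residual $\log n$ in \Cref{thm:covering} is stripped off exactly as in the state case: a unitary analogue of \Cref{algo:nonzero-qubits} (querying $U$ on computational-basis inputs and recording nonzero output coordinates) first pins down the $\le 2G$ qubits acted on nontrivially, after which the covering net only lives on an $\bigo(G)$-qubit subsystem.

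The auxiliary-free $\tilde{\bigo}(G/\epsilon^4)$ algorithm instead estimates $\davg^2$ directly against every candidate. Fix the $\davg$-covering net $\mathcal N$ of $G$-gate unitaries. Draw $\ket\psi$ from a product ensemble of random single-qubit stabilizer states (a locally scrambled ensemble), query $U$ once to obtain $U\ket\psi$, and take a single random Clifford measurement to form a classical shadow $\hat\sigma$ of $U\ketbra\psi U^\dagger$. Since $\davg(V,U)^2=\E_{\ket\psi}\big[1-\tr\big((V\ketbra\psi V^\dagger)(U\ketbra\psi U^\dagger)\big)\big]$ and each observable $V\ketbra\psi V^\dagger$ is a rank-one projector, its Clifford shadow norm is $\bigo(1)$ (cf.\ \Cref{thm:classical-shadow}), so averaging the shadow estimator over the drawn inputs and union-bounding over $\mathcal N$ lets us learn $\davg(V,U)^2$ for all $V\in\mathcal N$ to additive error $\epsilon'$ from $\bigo(\log|\mathcal N|/(\epsilon')^2)$ queries. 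Choosing $\epsilon'=\Theta(\epsilon^2)$ and outputting the minimizer gives $\davg(\hat U,U)\le\epsilon$ from $\tilde{\bigo}(G/\epsilon^4)$ queries; \Cref{lem:equi_local_scram} converts between the stabilizer-ensemble distance and the Haar $\davg$ up to a constant.

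Finally, the Heisenberg-scaling term $\tilde{\bigo}(\sqrt{2^n}\,G/\epsilon)$ comes from a bootstrap in the spirit of~\cite{haah2023query}. One starts from $\hat U_0$ with $\davg(\hat U_0,U)\le\epsilon_0\sim1/\sqrt d$ (the $\tilde{\bigo}(G/\epsilon^2)$ learner above, at cost $\tilde{\bigo}(Gd)$) and iteratively halves the error: given $\hat U_i$ at accuracy $\epsilon_i$, set $W_i=U\hat U_i^\dagger$, pick an amplification power $p_i=\Theta(1/(\sqrt d\,\epsilon_i))$, and learn $W_i^{p_i}$ — each query to it costing $p_i$ interleaved queries to $U$ and $\hat U_i^\dagger$ — by the Choi-plus-hypothesis-selection routine above, but over the reparametrized, $p_i$-\emph{independent} net $\{(V\hat U_i^\dagger)^{p_i}:V\in\mathcal N_G\}$, whose cardinality stays $e^{\tilde{\bigo}(G)}$ because $d_F'((V\hat U_i^\dagger)^{p_i},(U\hat U_i^\dagger)^{p_i})\le p_i\,d_F'(V,U)$ by Item~2 of \Cref{lem:dist-df'}. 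The selected element directly supplies $\hat U_{i+1}=V^\star$, and Items~2--3 of \Cref{lem:dist-df'} propagate the accuracy, $d_F'(V^\star,U)\lesssim\delta_i/p_i\le\epsilon_i/2$, once the $\davg$-accuracy $\delta_i$ with which $W_i^{p_i}$ is learned is taken $\Theta(1/\sqrt d)$. Each round costs $\tilde{\bigo}(Gd)$ Choi copies times $p_i$ queries each, i.e.\ $\tilde{\bigo}(\sqrt d\,G/\epsilon_i)$, and summing the geometric series down to $\epsilon$ gives $\tilde{\bigo}(\sqrt d\,G/\epsilon)$, which also absorbs the $\tilde{\bigo}(Gd)$ warm-up in the regime $\epsilon\lesssim1/\sqrt d$ where this term is the smallest of the three; the $\min$ with $\tilde{\bigo}(G/\epsilon^2)$ covers $\epsilon\gtrsim1/\sqrt d$. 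The main obstacle is precisely the bootstrap bookkeeping: taking $p_i$-th roots is legitimate only while $d_F'(W_i^{p_i},I)\lesssim1/\sqrt d$ (the smallness hypothesis of Item~3 of \Cref{lem:dist-df'}), and because $\davg$, equivalently $d_F'$, controls only the \emph{mean} of the eigenphases of $W_i$ rather than each one — the worst eigenphase can be as large as $\sqrt d\,\epsilon_i$ — the amplification power is capped at $\bigo(1/(\sqrt d\,\epsilon_i))$, which is exactly what forces the $\sqrt{2^n}$ factor; verifying this smallness invariant inductively across all rounds while keeping the covering-net cardinality under control is the technical heart of the argument.
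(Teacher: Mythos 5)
Your proposal follows essentially the same route as the paper for all four pieces: the $\bigo(4^n/\epsilon)$ term imported from the diamond-norm learner of~\cite{haah2023query}; the $\tilde{\bigo}(G/\epsilon^2)$ term via Choi states plus classical-shadow hypothesis selection over the $\davg$-covering net of \Cref{cor:covering-avg}; the ancilla-free $\tilde{\bigo}(G/\epsilon^4)$ term via single-shot Clifford shadows on random product-stabilizer inputs with a union bound over the net (the paper uses a median-of-means estimator to get the concentration, which is what \Cref{thm:classical-shadow} internally relies on); and the $\tilde{\bigo}(\sqrt{2^n}G/\epsilon)$ term via the bootstrap with amplification power capped at $\bigo(1/(\sqrt{d}\,\epsilon_i))$ by the smallness hypothesis of Item~3 of \Cref{lem:dist-df'}. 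You also correctly identify that the $\sqrt{d}$ factor is forced because $d_F'$ controls only the mean of the eigenphases, which is exactly the paper's explanation.

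There is one concrete step that would fail as you describe it: the junta-learning sub-step used to strip the $G\log n$ term from the covering-net bound when $G$ is small. You propose ``querying $U$ on computational-basis inputs and recording nonzero output coordinates.'' This cannot detect qubits acted on only by diagonal gates: a single $Z$ gate, or a controlled-$Z$, fixes every computational basis state up to phase, so no qubit is ever flagged, yet $\davg(Z\otimes I,\,I)=\Omega(1)$. Your restricted covering net would then contain no candidate close to $U$ and the hypothesis selection would output something $\Omega(1)$-far. The guarantee you actually need is that $U$ restricted to the unflagged qubits is close to the \emph{identity channel} (in the locally scrambled or Choi sense), not merely that it fixes $\ket{0}$. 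The paper handles this by probing with random tensor products of single-qubit stabilizer states and undoing the state preparation (apply $U_x^\dagger U U_x$ to $\ket{0}^{\otimes n}$ and measure, as in \Cref{algo:junta-unitary}), or equivalently by measuring the Choi state in the Pauli--Choi basis (\Cref{algo:junta-choi}); either variant reduces to the analysis of \Cref{lemma:good-overlap}. This is a local, fixable flaw, but as written that sub-step does not establish the $n$-independent bound claimed in the proposition.
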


The $\bigo(4^n/\epsilon)$ scaling comes from the diamond norm learning algorithm in \cite[Theorem 1.1]{haah2023query}, which directly implies an average-case learning algorithm because $\davg(U, V)\leq d_F'(U, V)\leq d_2'(U, V)\leq \frac{1}{\sqrt{2}}\dworst(U, V)$ from \Cref{lem:dist-df',lem:dist-norm-conversion,lem:dist-spectral-diamond}. 
Note that this part of the bound does not make use of the promise that the unknown unitary can be implemented with $G$ two-qubit gates.
In the following, we prove the $G$-dependent parts of the upper bound.

\subsubsection{Unitary learning without ancillary systems}\label{app:unitary-no-ancilla}

We begin by describing the learning algorithm without ancillary systems.
The algorithm works similarly to the state learning procedure.
It constructs a covering net over $G$-gate unitaries with respect to $\davg$, and regards them as candidates for the unknown unitary.
In contrast to our state learning procedure, where the algorithm estimates the trace distance between states, here the algorithm estimates the overlap between unitaries by inputting random states and apply single-shot Clifford classical shadow, which translates into $\davg$.
Then, we select the candidate closest to the unknown unitary as the learning outcome.

Specifically, we consider a $\sqrt{\epsilon'}$-covering net $\mathcal{N}$ of the set of $n$-qubit unitaries implemented by $G$ two-qubit gates with respect to $\davg$ as in \Cref{cor:covering-avg}, and regard the elements $U_i\in\mathcal{N}$ as potential candidates for the unknown unitary $U$. 
Our strategy is to use classical shadow to estimate the distances $\davg(U_i, U)$ for every $U_i$ in the covering net. 
Then we can find the one with minimal distance as the output of our learning algorithm.

To achieve this, consider a randomly sampled tensor product of 1-qubit stabilizer states 
\begin{equation}
    \ket{x} = U_x \ket{0}^{\otimes n} \sim Q = \mathrm{Uniform}[\{\ket{0}, \ket{1}, \ket{x+}, \ket{x-}, \ket{y+}, \ket{y-}\}^{\otimes n}],
\end{equation}
where $U_x = \otimes_{i=1}^n U_{x_i}$ is the state preparation unitary, and $x\in \mathbb{Z}_6^n$ labels the state.
We apply the unknown unitary $U$ to it and obtain $U\ket{x}$. 
Then we invoke a single use of the Clifford classical shadow protocol \cite{huang2020predicting}: We randomly sample an $n$-qubit Clifford gate $C$ and apply it to $U\ket{x}$, and then measure in the computational basis to get an outcome $\ket{b}, b\in\{0, 1\}^n$, with probability $|\braket{b}{CU|x}|^2$. 
Let $\hat{\rho} = (2^n+1)C^\dagger\ket{b}\bra{b}C - I$. From \cite{huang2020predicting}, we know that $\mathbb{E}_{C, b} [\hat{\rho}] = U\ket{x}\bra{x}U^\dagger$. Now we consider the observable $O_{i}=U_i\ket{x}\bra{x}U_i^\dagger$ and the estimator $\hat{o}_{i} = \tr(O_{i} \hat{\rho})$. Then we have the expectation value
\begin{equation}
    \underset{\ket{x}, C, b}{\mathbb{E}}[\hat{o}_i] 
    = \underset{\ket{x}}{\mathbb{E}}\left[\tr(O_i \underset{C, b}{\mathbb{E}}[\hat{\rho}]) \right]
    = \underset{\ket{x}}{\mathbb{E}} \left[|\matrixelement{x}{U_i^\dagger U}{x}|^2 \right]
    = 1-d_Q^2(U_i, U)\, ,
\end{equation}
where $d_Q(U_i, U) = \sqrt{\E_{\ket{\psi}\sim Q}[\dtr(U_i\ket{\psi}, U\ket{\psi})^2]}$ is the root mean squared trace distance with respect to $Q$ as defined in \Cref{lem:equi_local_scram}.
Next, we show that $\hat{o}_i$ has bounded variance. Note that
\begin{equation}
    \mathrm{Var}[\hat{o}_i] 
    = \underset{\ket{x}, C, b}{\mathbb{E}}[\hat{o}_i^2] - \left(\underset{\ket{x}, C, b}{\mathbb{E}}[\hat{o}_i]\right)^2 
    \leq \underset{\ket{x}}{\mathbb{E}}\left[\underset{C, b}{\mathbb{E}}[\hat{o}_i^2]\right] 
    \leq \underset{\ket{x}}{\mathbb{E}}[3\tr(O_i^2)]
    =3,
\end{equation}
where we have used the variance bound for Clifford shadows \cite[Lemma S1 and Proposition S1]{huang2020predicting} and the fact that $\tr(O_i^2)=\tr(O_i)=1$. 

To estimate the expectation values of $\hat{o}_i$, we can draw $m$ i.i.d. samples of such input states $\{\ket{x_j}\}_{j=1}^m$ from $Q$, construct the observables $O_{ij} = U_i\ket{x_j}\bra{x_j}U_i^\dagger$ and carry out the above protocol to get the estimators $\hat{o}_{ij}$ for $1\leq i\leq |\mathcal{N}|, 1\leq j\leq m$. 
Suppose we take $m=NK$ and construct a median-of-mean estimator
\begin{equation}
    \hat{o}_i(N, K) = \mathrm{median}\{\hat{o}_i^{(1)}, \ldots, \hat{o}_i^{(K)}\}, \quad \text{where} \quad \hat{o}_i^{(k)} = \frac{1}{N}\sum_{j=N(k-1)+1}^{Nk}\hat{o}_{ij}, \quad 1\leq k\leq K.
\end{equation}
Then, with the same reasoning as in \cite[Theorem S1]{huang2020predicting}, we have the following concentration guarantee: For any $0<\epsilon', \delta<1$, if $K = 2\log (2|\mathcal{N}|/\delta)$ and $N = 102/\epsilon'^2$, then 
\begin{equation}
    |\hat{o}_i(N, K) - (1-d_Q^2(U_i, U))| \leq \epsilon' \quad \text{for all} \quad 1\leq i\leq |\mathcal{N}|
\end{equation}
with probability at least $1-\delta$.

With $\hat{o}_i$ in hand, we can select $i^\star \in \mathrm{argmax}_i \hat{o}_i$, and output $U_{i^\star}$. 
Then we have
\begin{equation}
\begin{split}
    \davg(U_{i^\star}, U) &\leq \sqrt{2}d_Q(U_{i^\star}, U) \leq \sqrt{2(1 - \hat{o}_{i^\star} + \epsilon')} \\
    &= \sqrt{2(\epsilon' + \min_{i}(1-\hat{o}_i))}
    \leq \sqrt{2(\epsilon' + \min_i (d_Q^2(U_i, U)+\epsilon'))} \leq \sqrt{8\epsilon'}
\end{split}
\end{equation}
with probability at least $1-\delta$, where we have used the concentration guarantee, \Cref{lem:equi_local_scram}, and $\min_i d_Q^2(U_i, U) \leq \min_i 2\davg(U_i, U)^2 \leq 2\epsilon'$ because $\mathcal{N}$ is a $\sqrt{\epsilon'}$-covering net with respect to $\davg$. 
Setting $\epsilon'=\epsilon^2/8$, we arrive at a learning algorithm that uses
\begin{equation}
    m=NK=\bigo(\log(|\mathcal{N}|/\delta)/\epsilon^4)
\end{equation}
samples to learn the unknown unitary with accuracy $\epsilon$ and success probability at least $1-\delta$. 

If we plug in the covering number upper bound $\log\mathcal{N}\leq \bigo(G\log(G/\epsilon) + T\log n)$ from \Cref{cor:covering-avg}, we have sample complexity 
\begin{equation}
    \bigo\left(\frac{G\log(G/\epsilon)+\log(1/\delta)}{\epsilon^4}\right)
\end{equation} for large $G$, say $G\geq n/10$, as desired.

For $G<n/10$, a direct application of the above strategy will give us a suboptimal sample complexity of $\bigo(G\log (n/\epsilon)/\epsilon^4)$. 
To overcome this issue, we can carry out a junta learning step similar to Algorithm~\ref{algo:nonzero-qubits} and \cite{chen2023testing} to identify the subset of qubits $A\subset [n]$ that $U$ acts non-trivially on. Since $U$ only has $G$ $2$-qubit gates, we must have $|A| \leq 2G$. The specific procedure is listed in Algorithm~\ref{algo:junta-unitary}.

\begin{algorithm}
\label{algo:junta-unitary}
\caption{Identify qubits acted upon nontrivially (unitary version)}
\KwIn{Query access to the unknown unitary $U$ with $G$ two-qubit gates.}
\KwOut{List $\hat{A} \subseteq [n]$ of qubits.}
Initialize $\hat{A} = \emptyset$.\\
Repeat the following $N = \mathcal{O}\left(\frac{G + \log (1/\delta)}{\epsilon^2}\right)$ times: \linebreak
(a) Sample a random tensor product of 1-qubit stabilizer states $\ket{x}=U_x\ket{0}^{\otimes n}$, apply $U$ and $U_x^\dagger$, and obtain $U_x^\dagger U U_x\ket{0}^{\otimes n}$
\linebreak 
(b) Measure in the computational basis and obtain a bit string $\ket{b}, b\in\{0, 1\}^n$
\linebreak 
(b) Given the measurement outcome $\ket{b}$, set $\hat{A} \leftarrow \hat{A} \cup \mathrm{supp}(b)$, where $\mathrm{supp}(b) = \{i \in [n] : b_i \neq 0\}$.
\end{algorithm}

Similar to \Cref{app:states-upper-post}, we use Algorithm~\ref{algo:junta-unitary} to identify the non-trivial qubits with high probability.
Importantly, from \Cref{lemma:good-overlap}, we have the following guarantee that shows the expected state on the estimated trivial qubits is close to zero.

\begin{lemma}
    \label{lemma:good-overlap-unitary}
    Let $\epsilon, \delta > 0$. 
    Suppose we are given query access to an $n$-qubit unitary $U$ composed of $G$ two-qubit gates acting on a subset of the qubits $A \subseteq [n]$. 
    Let $\ket{x}=U_x\ket{0}^{\otimes n}$ be a random tensor product of $1$-qubit stabilizer states.
    Let $\rho^x = U_x^\dagger UU_x\ketbra{0}U_x^\dagger U^\dagger U_x$.
    Then, Algorithm~\ref{algo:junta-unitary} uses $N = \mathcal{O}\left(\frac{G + \log(1/\delta)}{\epsilon^2}\right)$ queries to $U$ and outputs, with probability at least $1-\delta$, a list $\hat{A} \subset [n]$ such that
    \begin{equation}
        \expval{\E_{x}[\rho^x_{\hat{B}}]}{0_{\hat{B}}} \geq 1 - \epsilon^2,
    \end{equation}
    where $\rho_{\hat{B}}$ denotes the reduced density matrix of $\rho$ when tracing out all qubits other than those in the set $\hat{B} = [n] \setminus \hat{A}$ and $\ket{0_{\hat{B}}}$ denotes the zero state on all qubits in ${\hat{B}}$.
\end{lemma}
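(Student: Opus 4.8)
The plan is to reduce Lemma~\ref{lemma:good-overlap-unitary} to Lemma~\ref{lemma:good-overlap} by recognizing that the measurement procedure in Algorithm~\ref{algo:junta-unitary} is, after averaging over the random input states, exactly a measurement of a fixed mixed state in the computational basis, so the concentration argument already carried out for the state version applies verbatim. Concretely, I would set $\bar\rho \triangleq \E_x[\rho^x] = \E_x[U_x^\dagger U U_x\ketbra{0}U_x^\dagger U^\dagger U_x]$, an $n$-qubit density matrix. The key observation is that each round of Algorithm~\ref{algo:junta-unitary} (sample $\ket x$, apply $U$ then $U_x^\dagger$, measure in the computational basis) produces a bitstring $b$ with probability $\tr((\ketbra{b})\rho^x)$ conditioned on $\ket x$, hence with probability $\tr((\ketbra{b})\bar\rho)$ when the draw of $\ket x$ is included. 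This is precisely the distribution of computational-basis outcomes of the single fixed state $\bar\rho$. Therefore Algorithm~\ref{algo:junta-unitary} is, as a random process, identical to running Algorithm~\ref{algo:nonzero-qubits} on copies of $\bar\rho$.

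With this identification in hand, I would invoke Lemma~\ref{lemma:good-overlap} with the state $\bar\rho$ in place of $\rho$, and with $\epsilon_1 \mapsto \epsilon^2$, $\delta_1 \mapsto \delta$. Two small points need checking before the black-box application goes through. First, the proof of Lemma~\ref{lemma:good-overlap} (specifically Claim~\ref{claim:concentrate} and the union bound) only used that any set $A'$ output by the algorithm satisfies $A' \subseteq A$, where $A$ is the support of the $G$ gates, so that there are at most $2^{2G}$ possible outputs; this remains true here because a qubit $i$ is added to $\hat A$ only if some outcome bit $b_i \neq 0$, and since $U$ acts trivially on qubits outside $A$ and $U_x$ is a product unitary, the state $U_x^\dagger U U_x\ket{0}^{\otimes n}$ is supported on $\ket{0}$ on every qubit outside $A$; hence $b_i = 0$ deterministically for $i \notin A$, giving $\hat A \subseteq A$ and $|A| \le 2G$. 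Second, the Bhatia--Davis/Bernstein argument in Claim~\ref{claim:concentrate} only uses that the per-round indicator variables are i.i.d.\ with mean equal to the true overlap $\expval{\bar\rho_{\hat B}}{0_{\hat B}}$, which holds since the rounds are independent draws. So the conclusion $\expval{\bar\rho_{\hat B}}{0_{\hat B}} \ge 1 - \epsilon^2$ with probability at least $1-\delta$, using $N = \mathcal{O}((G + \log(1/\delta))/\epsilon^2)$ queries, follows directly. Finally, since $\bar\rho_{\hat B} = \tr_{\hat A}[\bar\rho] = \E_x[\tr_{\hat A}[\rho^x]] = \E_x[\rho^x_{\hat B}]$ by linearity of the partial trace and expectation, the stated inequality $\expval{\E_x[\rho^x_{\hat B}]}{0_{\hat B}} \ge 1 - \epsilon^2$ is exactly what we obtained.

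I do not expect a serious obstacle here; the lemma is essentially a bookkeeping step that repackages the already-established state-version concentration bound. The one subtlety worth being careful about in the writeup is the order of expectations: the quantity controlled is $\expval{\E_x[\rho^x_{\hat B}]}{0_{\hat B}}$ (the overlap of the \emph{average} reduced state with zero), not $\E_x[\expval{\rho^x_{\hat B}}{0_{\hat B}}]$ --- but these are equal by linearity, so no issue arises. I would also note explicitly that the applied unitary in step (a), $U_x^\dagger U U_x$, is what makes the ``zero state is undisturbed on trivial qubits'' argument work: conjugating by $U_x$ rotates each single-qubit stabilizer input back to $\ket 0$, and $U$ leaves those qubits alone, so measuring them yields $0$. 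This is the analogue of the statement in the state version that qubits never touched by the circuit stay in $\ket 0$. With these remarks, the proof is a short paragraph invoking Lemma~\ref{lemma:good-overlap}.
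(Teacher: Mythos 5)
Your proposal is correct and follows essentially the same route as the paper: the paper's proof likewise observes that Algorithm~\ref{algo:junta-unitary} is identical to running Algorithm~\ref{algo:nonzero-qubits} on the mixed state $\E_x[\rho^x]$, with the conjugation $U_x^\dagger U U_x$ restoring the trivial qubits to $\ket{0}$, and then invokes Lemma~\ref{lemma:good-overlap} verbatim. Your explicit checks of the $\hat{A}\subseteq A$ containment and the parameter substitution $\epsilon_1\mapsto\epsilon^2$ are exactly the bookkeeping the paper leaves implicit.
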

\begin{proof}
    This follows directly from the proof of \Cref{lemma:good-overlap} because Algorithm~\ref{algo:junta-unitary} is the same as executing Algorithm~\ref{algo:nonzero-qubits} on the mixed state $\E_x [\rho^x]$, and for the trivial qubits, the $U_x^\dagger$ following $U_x$ and $U$ restores the state to $\ket{0}$. So the proof goes verbatim as in \Cref{lemma:good-overlap}.
\end{proof}

With this, we can show that ignoring the rest of the qubits ${\hat{B}}=[n]\setminus \hat{A}$ does not make much of a difference. 
Let $B=[n]\setminus A$. 
We again consider a randomly sampled $1$-qubit stabilizer state and apply $U$ to get $\ket{\psi_x}=UU_x\ket{0}^{\otimes n}$. Let $\rho_x = \ket{\psi_x}\bra{\psi_x}$ be the associated density matrix, and let $U_x^{\hat{B}} = \otimes_{j\in {\hat{B}}}U_{x_j}$ be the part of $U_x$ that acts on ${\hat{B}}$. 
Now we measure the qubits in ${\hat{B}}$ in the basis $U_x^{\hat{B}}\ket{b}_{\hat{B}}$, where $b\in\{0, 1\}^{|{\hat{B}}|}$. 
Note that for qubits in ${\hat{B}}$, the reduced density matrix in the basis $U_x^{\hat{B}}\ket{b}_{\hat{B}}$ is the same as the $\rho_{{\hat{B}}}^x$ from \Cref{lemma:good-overlap-unitary} in the junta learning step. 
So we have $\expval{\E_{x}[\rho^x_{\hat{B}}]}{0_{\hat{B}}} \geq 1 - \epsilon^2$. 
After the measurement of the qubits in ${\hat{B}}$, we do a post-selection on the observed measurement outcomes being $U_x^{\hat{B}}\ket{0}_{\hat{B}}$. 
This post-selection is represented by $\Lambda = I_A\otimes (U_x^{\hat{B}}\ket{0}_{\hat{B}}\bra{0}U_x^{{\hat{B}}\dagger})$, with $\Lambda^2 = \Lambda$. 
Let $\rho_x' = \frac{\sqrt{\Lambda}\rho_x\sqrt{\Lambda}}{\tr(\Lambda \rho_x)}$ be the post-selected state. 
Now we want to show $\rho'_x$ is close to $\rho_x$ on average. 
We invoke the following gentle measurement lemma for normalized ensembles.
\begin{lemma}[Gentle measurement lemma for normalized ensembles, variant of {\cite[Lemma 9.4.3]{wilde2013quantum}}]
    Let $\{x, \rho_x\}$ be an ensemble of states. If $\Lambda$ is a positive semi-definite operator with $\Lambda\leq I$ and $\tr(\Lambda \mathbb{E}_x [\rho_x])\geq 1-\epsilon$ where $\epsilon \in [0, 1]$, then
\begin{equation}
    \mathbb{E}_x \left\|\rho_x - \frac{\sqrt{\Lambda}\rho_x\sqrt{\Lambda}}{\tr(\Lambda \rho_x)}\right\|_1 
    \leq 3\sqrt{\epsilon}.
\end{equation}
\label{lem:gentle}
\end{lemma}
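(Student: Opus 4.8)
The plan is to reduce the ensemble statement to the single-state gentle measurement lemma (Lemma~\ref{lemma:gentle-meas}) by localizing the deficit parameter and then convexifying at the very end.

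First I would introduce, for each $x$, the local deficit $\epsilon_x \triangleq 1 - \tr(\Lambda\rho_x)$. Since $0 \le \Lambda \le I$, we have $0 \le \tr(\Lambda\rho_x) \le \tr(\rho_x) = 1$, so $\epsilon_x \in [0,1]$; and by linearity of the trace, $\E_x[\epsilon_x] = 1 - \tr(\Lambda\,\E_x[\rho_x]) \le \epsilon$. (If $\tr(\Lambda\rho_x)=0$ for some $x$, then $\rho_x'$ is undefined; as in the single-state version we tacitly restrict to the set of $x$ with $\tr(\Lambda\rho_x)>0$, which is the only regime in which the normalized post-measurement state is meaningful.) Next I would apply Lemma~\ref{lemma:gentle-meas} separately to each pair $(\rho_x,\Lambda)$ with its own parameter $\epsilon_x$: since $\tr(\Lambda\rho_x) = 1-\epsilon_x$ with $\epsilon_x\in[0,1]$, the lemma gives $\dtr(\rho_x,\rho_x')\le\sqrt{\epsilon_x}$, i.e.\ $\|\rho_x-\rho_x'\|_1 \le 2\sqrt{\epsilon_x}$. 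Finally I would average over $x$ and use concavity of the square root (Jensen's inequality):
\[
    \E_x\|\rho_x-\rho_x'\|_1 \;\le\; 2\,\E_x\!\big[\sqrt{\epsilon_x}\big] \;\le\; 2\sqrt{\E_x[\epsilon_x]} \;\le\; 2\sqrt{\epsilon}\;\le\;3\sqrt{\epsilon},
\]
which is the claim (with room to spare in the constant).

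If one prefers not to rely on the already-normalized form of Lemma~\ref{lemma:gentle-meas}, the same constant $3$ arises naturally from splitting the un-normalized and normalization parts: the gentle operator bound gives $\|\rho_x - \sqrt{\Lambda}\rho_x\sqrt{\Lambda}\|_1 \le 2\sqrt{\epsilon_x}$, while $\big\|\sqrt{\Lambda}\rho_x\sqrt{\Lambda} - \rho_x'\big\|_1 = \big|1-(1-\epsilon_x)^{-1}\big|\,\|\sqrt{\Lambda}\rho_x\sqrt{\Lambda}\|_1 = \big|1-(1-\epsilon_x)^{-1}\big|\,(1-\epsilon_x) = \epsilon_x \le \sqrt{\epsilon_x}$, since $\sqrt{\Lambda}\rho_x\sqrt{\Lambda}\succeq 0$ has trace $\tr(\Lambda\rho_x)=1-\epsilon_x$. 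Summing and applying Jensen as above again yields $\E_x\|\rho_x-\rho_x'\|_1 \le 3\sqrt{\epsilon}$.

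The argument is essentially routine, so there is no serious obstacle; the one point that genuinely deserves care is that the per-$x$ deficit $\epsilon_x$ need not be small even when the averaged deficit $\epsilon$ is — it is controlled only on average. Consequently one cannot extract a uniform bound before averaging, and must instead carry the $x$-dependence through the concave map $\sqrt{\cdot}$ and invoke Jensen at the end; this is also why the $\epsilon_x\in[0,1]$ bound (rather than "$\epsilon_x$ small") is the relevant hypothesis when applying Lemma~\ref{lemma:gentle-meas}.
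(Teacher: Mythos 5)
Your proof is correct, and it takes a mildly but genuinely different route from the paper's. The paper imports the \emph{ensemble} version of the unnormalized gentle measurement bound ($\mathbb{E}_x \|\rho_x - \sqrt{\Lambda}\rho_x\sqrt{\Lambda}\|_1 \leq 2\sqrt{\epsilon}$, Wilde's Lemma 9.4.3) as a black box and then passes to the normalized states via a reverse triangle inequality, using $\mathbb{E}_x\|\sqrt{\Lambda}\rho_x\sqrt{\Lambda} - \rho_x'\|_1 = \mathbb{E}_x[1-\tr(\Lambda\rho_x)] \leq \epsilon \leq \sqrt{\epsilon}$ to land on $2\sqrt{\epsilon}+\epsilon \leq 3\sqrt{\epsilon}$. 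You instead localize the deficit to $\epsilon_x = 1-\tr(\Lambda\rho_x)$, apply the single-state normalized lemma (\Cref{lemma:gentle-meas}) pointwise, and convexify at the end with Jensen; this is essentially re-deriving the cited ensemble lemma rather than quoting it, and your first variant actually yields the sharper constant $2\sqrt{\epsilon}$ because Jensen is applied to the full normalized distance in one shot. Your second variant reproduces the paper's $2\sqrt{\epsilon_x}+\epsilon_x$ decomposition pointwise and recovers the constant $3$. The points you flag are the right ones: the per-$x$ deficit is controlled only on average, so Jensen (not a uniform bound) is the correct mechanism, and the degenerate case $\tr(\Lambda\rho_x)=0$ is handled no less carefully than in the paper, which also tacitly assumes $\rho_x'$ is defined wherever it appears.
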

\begin{proof}
    Let $\rho_x' = \frac{\sqrt{\Lambda}\rho_x\sqrt{\Lambda}}{\tr(\Lambda \rho_x)}$.
    From \cite[Lemma 9.4.3]{wilde2013quantum}, we know that $\mathbb{E}_x \left\|\rho_x - \sqrt{\Lambda}\rho_x\sqrt{\Lambda}\right\|_1 \leq 2\sqrt{\epsilon}$. Note that the left hand side can be lower bounded by
    \begin{equation}
    \begin{split}
        \mathbb{E}_x \left\|\rho_x - \sqrt{\Lambda}\rho_x\sqrt{\Lambda}\right\|_1 &= \mathbb{E}_x \left\|\rho_x - \rho_x' + \rho_x' - \sqrt{\Lambda}\rho_x\sqrt{\Lambda}\right\|_1 \\
        &\geq \mathbb{E}_x \|\rho_x - \rho_x'\|_1 - \mathbb{E}_x \|\rho_x' - \sqrt{\Lambda}\rho_x\sqrt{\Lambda}\|_1 \\
        &= \mathbb{E}_x \|\rho_x - \rho_x'\|_1 - \mathbb{E}_x (1-\tr(\Lambda \rho_x)) \|\rho_x'\|_1 
        \geq \mathbb{E}_x \|\rho_x - \rho_x'\|_1 - \epsilon,
    \end{split}
    \end{equation}
    where we have used triangle inequality, $\|\rho_x'\|_1=1$, and $\tr(\Lambda \mathbb{E}_x [\rho_x])\geq 1-\epsilon$.
    Therefore, we arrive at
    \begin{equation}
        \mathbb{E}_x \|\rho_x - \rho_x'\|_1 \leq 2\sqrt{\epsilon}+\epsilon \leq 3\sqrt{\epsilon},
    \end{equation}
    because $\epsilon \in [0, 1]$, concluding the proof of \Cref{lem:gentle}.    
\end{proof}

Using \Cref{lem:gentle} for our scenario, we have $\mathbb{E}_x\|\rho_x-\rho_x'\|_1\leq 3\epsilon$ with probability at least $1-\delta$.
After the post-selection, we apply the same Clifford shadow strategy as in the $T\geq n/10$ case, with two differences. 
Firstly, note that after post-selection, the action on every qubit in ${\hat{B}}$ is identity. 
So we can without loss of generality pick an arbitrary subset $A'$ of those qubits in ${\hat{B}}$ as $A\setminus \hat{A}$, and consider an $\sqrt{\epsilon}$-covering net $\mathcal{N}$ of $G$ gate unitaries on qubits $\hat{A}\cup A'$ with respect to $\davg$, with $|\hat{A}\cup A'|= |A|\leq 2G$. 
Then we have $\min_{U_i \in \mathcal{N}}\davg(U_i, U)\leq \epsilon$, and $\log|\mathcal{N}|\leq \bigo(G\log(G/\epsilon)+G\log(|A\cup A'|))\leq \bigo(G\log(G/\epsilon))$.
Secondly, for each element $U_i$ in the covering net, we can construct an observable $O_i = U_i\ket{x}\bra{x}U_i^\dagger$ similar to before, but now the estimator will concentrate around a slightly different expectation value. 
Specifically, if we use a median-of-mean estimator $\hat{o}_i(N, K)$ with $K=2\log(2|\mathcal{N}|/\delta)$ and $N=102/\epsilon^2$, then we have
\begin{equation}
    |\hat{o}_i(N, K) - \mathbb{E}_x[\tr(\rho_x'O_i)]| \leq \epsilon \quad \text{for all} \quad 1\leq i\leq |\mathcal{N}|,
\end{equation}
with probability at least $1-\delta$.
Nevertheless, since $\rho_x'$ and $\rho_x$ are close on average, we have
\begin{equation}
\begin{split}
    |\hat{o}_i(N, K) - (1-d_Q^2(U_i\otimes I, U))| &= |\hat{o}_i(N, K) - \mathbb{E}_x [\tr(\rho_x'O_i)] + \mathbb{E}_x[\tr(\rho_x'O_i)] - \mathbb{E}_x[\tr(\rho_xO_i)]| \\
    &\leq |\hat{o}_i(N, K) - \mathbb{E}_x[\tr(\rho_x'O_i)]| + \mathbb{E}_x[|\tr(\rho_x'O_i) - \tr(\rho_xO_i)|] \\
    &\leq \epsilon + \mathbb{E}_x[\|\rho_x'-\rho_x\|_1 \|O_i\|]
    \leq \epsilon + 3\epsilon=4\epsilon, \quad \text{for all} \quad 1\leq i\leq |\mathcal{N}|,
\end{split}
\end{equation}
with probability at least $1-2\delta$, where we have used triangle inequality, $\|O_i\|=1$, and $\mathbb{E}_x\|\rho_x-\rho_x'\|_1\leq 3\epsilon$.
With this concentration guarantee, we can select the candidate with the largest $\hat{o}_i$: $i^\star \in \mathrm{argmax}_i \hat{o}_i$ and output $U_{i^\star}\otimes I$. As before, we have with probability at least $1-2\delta$,
\begin{equation}
    \davg(U_{i^\star}\otimes I, U) \leq \sqrt{2}d_Q(U_{i^\star}\otimes I, U) \leq \sqrt{2(4\epsilon+2\epsilon+4\epsilon)}=\sqrt{20\epsilon}.
\end{equation}
Redefining $20\epsilon$ to be $\epsilon^2$ and $2\delta$ to be $\delta$, we arrive at a learning algorithm that uses 
\begin{equation}
    m = \bigo\left(\frac{G + \log(1/\delta)}{\epsilon^4}\right) + NK = \bigo\left(\frac{G\log(G/\epsilon)+\log(1/\delta)}{\epsilon^4}\right)
\end{equation}
queries to the unitary to learn it with accuracy $\epsilon$ in $\davg$ and success probability at least $1-\delta$ when $G<n/10$.
Combined with the case of $G\geq n/10$, this concludes the learning algorithm without ancillary system in \Cref{prop:upper-bound-unitary}.

\subsubsection{Unitary learning with ancillary systems}
\label{app:unitary-choi}
The above $\mathcal{O}(1/\epsilon^4)$ scaling is suboptimal.
It arises from the fact that in the classical shadow estimation, the estimated quantity is the square of $\davg$ rather than $\davg$ itself.
To improve the $\epsilon$-dependence, we make use ancillary systems via the Choi–Jamio{\l}kowski duality~\cite{choi1975completely, jamiolkowski1972linear, jiang2013channel}.
Specifically, we consider the maximally entangled state over a pair of $n$-qubit systems $\ket{\Phi} = \frac{1}{\sqrt{d}}\sum_{i=1}^{2^n}\ket{i}\otimes\ket{i}$ and define the Choi state $\kket{U}$ corresponding to a unitary $U$ as $\kket{U} = (U\otimes I)\ket{\Phi}$.
That is, the Choi state $\kket{U}$ of an $n$-qubit unitary $U$ is a pure $(2n)$-qubit state constructed by applying $U$ on half of the qubits in $n$ EPR pairs.
For any subset $A\subseteq [n]$ of the qubits that are acted upon by $U$, we refer to the corresponding $|A|$ qubits in the EPR pairs as the entangled qubits corresponding to $A$.
We note the following fact, which relates the trace distance between Choi states to the average-case distance between the unitaries.

\begin{lemma}[Equivalence of trace distance between Choi states and average-case distance]
\label{lem:trace-dist-choi}
    Let $U, V\in U(2^n)$ be two $n$-qubit unitaries, $\ket{\Phi} = \frac{1}{\sqrt{d}}\sum_{i=1}^{2^n}\ket{i}\otimes\ket{i}$ be a maximally entangled state, and $\kket{U} = (U\otimes I)\ket{\Phi}, \kket{V} = (V\otimes I)\ket{\Phi}$ be the corresponding Choi states.
    Then we have
    \begin{equation}
        \frac{1}{\sqrt{2}}\dtr(\kket{U}, \kket{V}) \leq  \davg(U, V) \leq \dtr(\kket{U}, \kket{V}).
    \end{equation}
\end{lemma}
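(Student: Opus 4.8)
The plan is to reduce both sides to a single scalar, $|\tr(U^\dagger V)|$, and then compare the resulting expressions by elementary algebra. First I would use that $\kket{U}$ and $\kket{V}$ are pure states, so $\dtr(\kket{U}, \kket{V}) = \sqrt{1 - |\bbrakket{U}{V}|^2}$, and compute the overlap directly: since $\ket{\Phi} = \frac{1}{\sqrt d}\sum_i \ket{i}\otimes\ket{i}$, one has $\bbrakket{U}{V} = \bra{\Phi}(U^\dagger V\otimes I)\ket{\Phi} = \frac1d\sum_i \bra{i}U^\dagger V\ket{i} = \frac1d\tr(U^\dagger V)$. Hence
\begin{equation}
    \dtr(\kket{U}, \kket{V})^2 = 1 - \frac{|\tr(U^\dagger V)|^2}{d^2}.
\end{equation}

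Next I would recall the closed form for the Haar-averaged distance already established in \Cref{lem:dist-df'}, namely \eqref{eq:davg}: $\davg(U,V)^2 = 1 - \frac{d + |\tr(U^\dagger V)|^2}{d(d+1)}$. Writing $t \triangleq |\tr(U^\dagger V)|^2 \in [0, d^2]$, a one-line manipulation gives
\begin{equation}
    \davg(U,V)^2 = \frac{d^2 - t}{d(d+1)} = \frac{d}{d+1}\cdot\frac{d^2 - t}{d^2} = \frac{d}{d+1}\,\dtr(\kket{U}, \kket{V})^2 .
\end{equation}
Since $d = 2^n \geq 2$, we have $\tfrac12 \leq \tfrac{d}{d+1} < 1$, so taking square roots yields $\tfrac{1}{\sqrt2}\dtr(\kket{U}, \kket{V}) \leq \davg(U,V) \leq \dtr(\kket{U}, \kket{V})$, as claimed; in fact one even obtains the sharper left-hand constant $\sqrt{d/(d+1)}$.

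There is no substantial obstacle here. The only points requiring a little care are tracking the conjugates/transposes in the Choi-state overlap so as to see that only $|\tr(U^\dagger V)|$ survives (equivalently, one may invoke the ``ricochet'' identity $(A\otimes I)\ket{\Phi} = (I\otimes A^{\top})\ket{\Phi}$), and invoking the precise normalization of the $\davg^2$ formula from \Cref{lem:dist-df'}. Everything else is elementary algebra in the single variable $t$.
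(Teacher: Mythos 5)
Your proposal is correct and follows essentially the same route as the paper: both compute $\bbrakket{U}{V}=\tfrac{1}{d}\tr(U^\dagger V)$, apply the pure-state fidelity/trace-distance conversion, and combine with the Haar formula \eqref{eq:davg} to obtain the exact identity $\davg(U,V)^2=\tfrac{d}{d+1}\dtr(\kket{U},\kket{V})^2$, from which both bounds follow. Your observation that the left-hand constant can be sharpened to $\sqrt{d/(d+1)}$ is implicit in the paper's displayed identity as well.
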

\begin{proof}
    By the standard conversion between fidelity and the trace distance between pure states, we have
    \begin{equation}
        \dtr(\kket{U}, \kket{V}) = \sqrt{1-|\bbrakket{U}{V}|^2} = \sqrt{1-\frac{1}{d^2}|\tr(U^\dagger V)|^2}.\, ,
    \end{equation}
    where the last step used that $\bra{\Phi}A\otimes B\ket{\Phi} = \tfrac{1}{d}\tr[A^T B]$, compare for instance \cite[Example 1.2]{wolf2012quantumchannels}.
    On the other hand, from \Cref{eq:davg}, we have
    \begin{equation}
        \davg(U, V) = \sqrt{1-\frac{d+|\tr(U^\dagger V)|^2}{d^2+d}}.
    \end{equation}
    Combining these two equations, we get
    \begin{equation}
        \davg(U, V) = \sqrt{\frac{d}{d+1}}\dtr(\kket{U}, \kket{V}) \in \left[\frac{1}{\sqrt{2}}\dtr(\kket{U}, \kket{V}), \dtr(\kket{U}, \kket{V})\right].
        \qedhere
    \end{equation}
\end{proof}

With \Cref{lem:trace-dist-choi}, we construct a covering net over Choi states corresponding to $G$-gate unitaries as follows. 
From \Cref{cor:covering-avg}, we take an $\epsilon'$-covering net $\mathcal{N}$ of $G$-gate unitaries with respect to $\davg$ that has cardinality $|\mathcal{N}|\leq \bigo(G\log(G/\epsilon)+G\log n)$. 
Then for any $G$-gate unitary $U$, there exists a $U_i\in \mathcal{N}$ such that $\davg(U, U_i)\leq \epsilon'$.
Hence $\dtr(\kket{U}, \kket{U_i})\leq \sqrt{2}\davg(U, U_i)\leq \sqrt{2}\epsilon'$ by \Cref{lem:trace-dist-choi}.
Therefore, the Choi states of the unitaries in $\mathcal{N}$ form a $(\sqrt{2}\epsilon)$-covering net of the Choi states of $G$-gate unitaries.

Now, we can use these pure Choi states as candidates for hypothesis selection.
By \Cref{prop:hypothesis}, the hypothesis selection algorithm based on classical shadow uses $\bigo(\log(|\mathcal{N}|/\delta)/\epsilon'^2)$ samples of the Choi state $\kket{U}$ to output a candidate $\kket{\hat{U}}, \hat{U}\in\mathcal{N}$, such that $\dtr(\kket{U}, \kket{\hat{U}})\leq 3\sqrt{2}\epsilon' + \epsilon'$ with probability at least $1-\delta$.
Setting $(3\sqrt{2}+1)\epsilon'=\epsilon$, we find a $\hat{U}$ such that $\davg(\hat{U}, U)\leq \dtr(\kket{U}, \kket{\hat{U}})\leq \epsilon$ with probability at least $1-\delta$ using 
\begin{equation}
    \bigo\left(\frac{G\log(G/\epsilon)+G\log n+\log(1/\delta)}{\epsilon^2}\right)
\end{equation}
queries to the unknown unitary $U$.
When $G\geq n/10$, this gives the desired $\bigo((G\log(G/\epsilon)+\log(1/\delta))/\epsilon^2)$ query complexity.

For $G<n/10$, we again need a junta learning step to identify the set of qubits $A\subseteq [n]$ that are acted on non-trivially.
To do this, we follow the idea of Algorithm~\ref{algo:nonzero-qubits}, Algorithm~\ref{algo:junta-unitary} and \cite[Algorithm 8]{chen2023testing} and consider the following procedure that makes use of Choi states of Pauli matrices $\sigma_0=I, \sigma_1 = X, \sigma_2=Y, \sigma_3=Z$.

\begin{algorithm}
\label{algo:junta-choi}
\caption{Identify qubits acted upon nontrivially (Choi version)}
\KwIn{Query access to the unknown unitary $U$ with $G$ two-qubit gates.}
\KwOut{List $\hat{A} \subseteq [n]$ of qubits.}
Initialize $\hat{A} = \emptyset$.\\
Repeat the following $N = \mathcal{O}\left(\frac{G + \log (1/\delta)}{\epsilon^2}\right)$ times: \linebreak
(a) Prepare the Choi state $\kket{U}$ by applying $U\otimes I$ to the maximally entangled state $\ket{\Phi}$
\linebreak 
(b) Measure in the basis of Pauli Choi states $\kket{\sigma_x} = \otimes_{i=1}^n (\sigma_{x_i}\otimes I)\ket{\Phi}, x\in \mathbb{Z}_4^n$, and obtain a string $\ket{b}, b\in\{0, 1, 2, 3\}^n$
\linebreak 
(b) Given the measurement outcome $\ket{b}$, set $\hat{A} \leftarrow \hat{A} \cup \mathrm{supp}(b)$, where $\mathrm{supp}(b) = \{i \in [n] : b_i \neq 0\}$.
\end{algorithm}

Similarly to \Cref{lemma:good-overlap} and \Cref{lemma:good-overlap-unitary}, we have the following guarantee that the Choi state on the estimated trivial qubits is close to the Choi state of the identity.

\begin{lemma}
    \label{lemma:good-overlap-choi}
    Let $\epsilon, \delta > 0$. 
    Suppose we are given query access to an $n$-qubit unitary $U$ composed of $G$ two-qubit gates acting on a subset of the qubits $A \subseteq [n]$. 
    Let $\rho = \kket{U}\bbra{U}$ be the Choi state of $U$.
    Then, Algorithm~\ref{algo:junta-choi} uses $N = \mathcal{O}\left(\frac{G + \log(1/\delta)}{\epsilon^2}\right)$ queries to $U$ and outputs, with probability at least $1-\delta$, a list $\hat{A} \subset [n]$ such that
    \begin{equation}
        \bbra{I_{\hat{B}}}\rho_{\hat{B}}\kket{I_{\hat{B}}} \geq 1 - \epsilon^2,
    \end{equation}
    where $\rho_{\hat{B}}$ denotes the reduced density matrix for $\rho$ by tracing out all qubits other than those in the set $\hat{B} = [n] \setminus \hat{A}$ and the corresponding entangled qubits, and $\kket{I_{\hat{B}}}$ denotes the Choi state of the identity on qubits in ${\hat{B}}$.
\end{lemma}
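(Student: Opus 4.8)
The plan is to reduce this lemma to its state-learning analogue, \Cref{lemma:good-overlap}, by observing that the Pauli-Choi-basis measurement in \Cref{algo:junta-choi} plays exactly the role that the computational-basis measurement plays in \Cref{algo:nonzero-qubits}, only now applied to the $(2n)$-qubit pure Choi state $\rho = \kket{U}\bbra{U}$. The structural fact driving everything is this: if $U$ acts trivially on qubit $i$, i.e.\ $i\notin A$, then $U = U'\otimes I_i$ for some $U'$ on the remaining qubits, so $\kket{U} = \big[(U'\otimes I)\ket{\Phi_{\mathrm{rest}}}\big]\otimes\ket{\Phi_i}$, and the reduced state of $\rho$ on qubit $i$ together with its entangled partner is exactly the EPR pair $\kket{\sigma_0}\bbra{\sigma_0}$. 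Since $\kket{I_{\hat B}} = \bigotimes_{j\in\hat B}\kket{\sigma_0}_j$ is itself a (trivial) Pauli Choi state, the projective measurement in the orthonormal basis $\{\kket{\sigma_x}\}_{x\in\mathbb{Z}_4^n}$ returns the all-$I$ outcome $b_j = 0$ for all $j\in\hat B$ with probability one whenever $\hat B\cap A = \emptyset$. Hence any set $\hat A$ output by \Cref{algo:junta-choi} satisfies $\hat A\subseteq A$, so the number of possible outputs is at most $2^{|A|}\le 2^{2G}$, just as in the earlier proofs.

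First I would set up the random variables exactly as in the proof of \Cref{lemma:good-overlap}: for each candidate output set $A'$ with $B' = [n]\setminus A'$, let $X_{i,A'}$ be the indicator that round $i$'s Pauli-Choi measurement on the entangled-qubit pairs associated with $B'$ returns the all-$I$ outcome. Because the projector onto this outcome, restricted to the $B'$-pairs, is precisely $\kket{I_{B'}}\bbra{I_{B'}}$, we get $\mathbb{E}[X_{i,A'}] = \bbra{I_{B'}}\rho_{B'}\kket{I_{B'}}$, and the empirical mean $\bar X_{A'} = \frac{1}{N}\sum_{i=1}^N X_{i,A'}$ is an unbiased estimator of it. This is the exact analogue of the identity $\mathbb{E}[X_{A'}] = \expval{\rho_{B'}}{0_{B'}}$ used in \Cref{lemma:good-overlap}, with the all-zero computational basis state replaced by the identity Choi state.

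From here the argument is a verbatim copy of the concentration step (the analogue of \Cref{claim:concentrate}). Fix $A'$ with $\mathbb{E}[X_{A'}] = 1-a < 1-\epsilon^2$. The Bhatia-Davis inequality gives $\mathrm{Var}(X_{A'})\le a$ (as the $X_{i,A'}$ take values in $[0,1]$), and Bernstein's inequality with $t = a/2$ yields $\Pr[\bar X_{A'} > 1 - \epsilon^2/2]\le \exp(-3N\epsilon^2/28)$; choosing $N = \tfrac{28}{3\epsilon^2}\log(2^{2G}/\delta) = \mathcal{O}\big((G + \log(1/\delta))/\epsilon^2\big)$ makes this at most $\delta/2^{2G}$. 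A union bound over the $\le 2^{2G}$ possible output sets shows that, except with probability $\delta$, every such $A'$ with $\mathbb{E}[X_{A'}] < 1 - \epsilon^2$ has $\bar X_{A'} < 1 - \epsilon^2/2$. Since the set $\hat A$ actually produced by \Cref{algo:junta-choi} satisfies $\bar X_{\hat A} = 1$ by construction (a pair is added to $\hat A$ only when its measured outcome is nonzero), the contrapositive forces $\mathbb{E}[X_{\hat A}] = \bbra{I_{\hat B}}\rho_{\hat B}\kket{I_{\hat B}}\ge 1 - \epsilon^2$ with probability at least $1-\delta$, which is the claim; here $\epsilon^2$ plays the role of $\epsilon_1$ in \Cref{lemma:good-overlap}.

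The only genuinely new ingredient relative to \Cref{lemma:good-overlap,lemma:good-overlap-unitary} is identifying the Pauli-Choi-basis outcome statistics with the reduced-Choi-state overlap, so I expect the main (though still elementary) point to verify carefully to be that $\{\kket{\sigma_x}\}_{x\in\mathbb{Z}_4^n}$ is an orthonormal basis whose all-$I$ outcome projector on a subsystem equals $\kket{I}\bbra{I}$, together with the factorization of $\kket{U}$ over EPR pairs on trivial qubits; everything downstream reuses the state-case analysis unchanged.
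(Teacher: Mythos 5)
Your proposal is correct and follows essentially the same route as the paper's proof: reinterpret the Pauli--Choi-basis measurement as a per-pair measurement in $\{\kket{I},\kket{X},\kket{Y},\kket{Z}\}$, replace $\ket{0}$ by $\kket{I}$, and rerun the concentration-plus-union-bound argument of \Cref{lemma:good-overlap} verbatim. Your explicit verification that trivial qubits yield the all-$I$ outcome with probability one (hence $\hat{A}\subseteq A$ and at most $2^{2G}$ candidate outputs) is a detail the paper leaves implicit, but it is the same argument.
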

\begin{proof}
    The proof goes similarly to that of \Cref{lemma:good-overlap} except that $\ket{0}$ is replaced by $\kket{I}$.
    The measurement over Pauli Choi states in Algorithm~\ref{algo:junta-choi} can be understood as measuring each entangled pair of qubits in the basis $\{\kket{I}, \kket{X}, \kket{Y}, \kket{Z}\}$ and gives an element from $\{0, 1, 2, 3\}=\mathbb{Z}_4$.
    Specifically, let $A'$ be any set that could be output by Algorithm~\ref{algo:junta-choi}.
    We want to identify $A'$ with the actual identified set $\hat{A}$.
    Let $B' \triangleq [n] \setminus A'$.
    Let $E_{i, A'}$ be the event that round $i$ of measurement of the qubits in $B' = [n] \setminus A'$ in Algorithm~\ref{algo:junta-choi} yields the all zero $\mathbb{Z}_4$ string.
    Let $X_{i, A'}$ be the indicator random variable corresponding to the event $E_{i,A'}$.
    Then, we have that $\bar{X}_{A'} \triangleq \frac{1}{N}\sum_{i=1}^N X_{i, A'}$ is the number of times the entangled pair in $B'$ are all measured to be zero divided by the total number of measurements.
    In other words, $\bar{X}_{A'}$ is the estimated overlap that the state $\rho_{B'}$ on qubits in $B'$ has with the identity Choi state on $B'$.
    Moreover, we have
    \begin{equation}
    \mathbb{E}[X_{A'}] \triangleq \mathbb{E}[X_{i,A'}] = \bbra{I_{B'}}\rho_{B'}\kket{I_{B'}}
    \end{equation}
    for all $A'$.
    This says that the true expectation of our random variables is the true overlap of the state $\rho_{B'}$ with the identity Choi state on $B'$.
    Then we have the same \Cref{claim:concentrate} as in \Cref{lemma:good-overlap} and \Cref{lemma:good-overlap-choi} follows.
\end{proof}

With this, we can again show that ignoring the rest of the qubits ${\hat{B}}=[n]\setminus \hat{A}$ does not make much difference. 
Let $B=[n]\setminus A$. 
We prepare the Choi state $\rho = \kket{U}\bbra{U}, \kket{U} = (U\otimes I)\ket{\Phi}$, and measure in the basis of Pauli Choi states over the qubits in $\hat{B}$: $\{\kket{\sigma_x}_{\hat{B}}: x\in\mathbb{Z}_4^{|\hat{B}|}\}$.
After the measurement, we do a post-selection on the observed measurement outcomes being $\kket{I_{\hat{B}}}$. 
This post-selection is represented by $\Lambda = I\otimes \kket{I_{\hat{B}}}\bbra{I_{\hat{B}}}$, with $\Lambda^2 = \Lambda$, and the first identity over the entangled pairs outside $\hat{B}$.
Let $\rho' = \frac{\sqrt{\Lambda}\rho\sqrt{\Lambda}}{\tr(\Lambda \rho)}$ be the post-selected state. 
Now we want to show $\rho'$ is close to $\rho$. 
From \Cref{lemma:good-overlap-choi}, we know that $\tr(\Lambda\rho)\geq 1-\epsilon^2$ with probability at least $1-\delta$.
Then by the gentle measurement lemma (\Cref{lemma:gentle-meas}), we have $\dtr(\rho', \rho)\leq \epsilon$ with the same probability.

Now we can apply the hypothesis selection protocol to $\rho'$ as in the $G>n/10$ case, but with a different covering net.
Specifically, note that after post-selection, the action on every entangled pair in ${\hat{B}}$ is identity. 
So we can with loss of generality pick an arbitrary subset $A'$ of those qubits in ${\hat{B}}$ as $A\setminus\hat{A}$, and consider an $\epsilon$-covering net $\mathcal{N}$ of $G$ gate unitaries on qubits $\hat{A}\cup A'$ with respect to $\davg$, with $|\hat{A}\cup A'|= |A|\leq 2G$, with each element tensor product with identity over the rest qubits. 
Then we have $\min_{U_i \in \mathcal{N}}\dtr(\kket{U_i}, \kket{U})\leq \sqrt{2}\davg(U_i, U)\leq \sqrt{2}\epsilon$, and $\log|\mathcal{N}|\leq \bigo(G\log(G/\epsilon)+G\log(|A\cup A'|))\leq \bigo(G\log(G/\epsilon))$.
Since $\dtr(\rho', \rho)\leq \epsilon$, we also have $\min_{U_i \in \mathcal{N}}\dtr(\kket{U_i}\bbra{U_i}, \rho')\leq \epsilon+\sqrt{2}\epsilon = (\sqrt{2}+1)\epsilon$.

With this covering net, we apply the hypothesis selection based on classical shadow (\Cref{prop:hypothesis}) to $\rho'$. 
This procedure uses $\bigo(\log(|\mathcal{N}|/\delta)/\epsilon^2)$ copies of $\rho'$ (each prepared using one query to $U$) and output a $\kket{U_{i^\star}}$ such that $\dtr(\kket{U_{i^\star}}\bbra{U_{i^\star}}, \rho')\leq 3(\sqrt{2}+1)\epsilon+\epsilon=(3\sqrt{2}+4)\epsilon$ with probability at least $1-\delta$.
This means that 
\begin{equation}
\label{eq:shadow-acc}
    \davg(U_{i^\star}, U)\leq \dtr(\kket{U_{i^\star}}, \kket{U})\leq \dtr(\kket{U_{i^\star}}\bbra{U_{i^\star}}, \rho') + \dtr(\rho', \rho) \leq (3\sqrt{2}+4)\epsilon+\epsilon = 4(\sqrt{2}+1)\epsilon
\end{equation} 
with a total probability at least $1-2\delta$ (considering both the junta learning and hypothesis selection).

Therefore, by redefining $4(\sqrt{2}+1)\epsilon$ to be $\epsilon$ and $2\delta$ to be $\delta$, we arrive at a desired algorithm for $G\leq n/10$ that uses in total 
\begin{equation}
    \bigo\left(\frac{G+\log(1/\delta)}{\epsilon^2}\right) + \bigo\left(\frac{G\log(G/\epsilon)+\log(1/\delta)}{\epsilon^2}\right) = \bigo\left(\frac{G\log(G/\epsilon)+\log(1/\delta)}{\epsilon^2}\right)
\end{equation}
queries to the unknown unitary $U$.
Combined with the $G\geq n/10$ case, we conclude the learning algorithm with ancillary system that achieves the $\bigo((G\log(G/\epsilon)+\log(1/\delta)/\epsilon^2)$ query complexity in \Cref{prop:upper-bound-unitary}.

\subsubsection{Bootstrap to improve $\epsilon$-dependence}

To further improve the $\epsilon$-dependence, we modify the bootstrap method in \cite{haah2023query} and achieve a Heisenberg scaling $\tilde{\bigo}(1/\epsilon)$.
However, with our average case distance, which can only control the average behavior of the eigenvalues of the unitaries, we are not able to perform the bootstrap for general $\epsilon$.
Instead, the bootstrap works only when the error is exponentially small, $\epsilon=O(1/\sqrt{d})$, and achieves the Heisenberg scaling at the cost of a dimensional factor, leading to a query complexity of
\begin{equation}
    \bigo\left(\frac{\sqrt{2^n}(G\log(G/\epsilon)+\log(1/\delta))}{\epsilon}\right).
\end{equation}
Whether a general Heisenberg scaling without dimension-dependent scaling is achievable remains open.

Now we state the bootstrap method in Algorithm~\ref{alg:bootsrap}, which uses the unitary learning algorithm with ancillary systems (\Cref{app:unitary-choi}) as a sub-routine.
We need to prove two things about Algorithm~\ref{alg:bootsrap}: (1) it outputs a $\hat{U}$ that satisfies $\davg(\hat{U}, U)\leq \epsilon$ with probability at least $1-\delta$; (2) the query complexity is $\bigo\left({\sqrt{d}(G\log(G/\epsilon)+\log(1/\delta))}/{\epsilon}\right)$.

\begin{algorithm}
\label{alg:bootsrap}
  \caption{Bootstrapping to Heisenberg scaling}
    \KwIn{
        Query access to the unknown $n$-qubit $G$-gate unitary $U$.\newline
        An error parameter $\epsilon \in (0, 1/\sqrt{d})$.
    }
    \KwOut{A unitary $\hat{U}$.}
  \Let{$t$}{$\ceil{\log_2(1/(\epsilon\sqrt{d}))}$.} \\
  \Let{$V_0$}{$I$.} \\
  \Let{$\mathcal{N}$}{an $(\epsilon/10^5)$-covering net of $G$-gate unitaries with respect to $\davg$.} \\
  \For{$j \gets 0$ to $t$}{
    \Let{$p_j$}{$2^j$.} \\
    \Let{$\eta_j$}{$8^{j-t-1}\delta$.} \\
    Use the algorithm $\mathcal{A}$ in \Cref{app:unitary-choi} with success probability $1-\eta_j$ and accuracy $1/(25000\sqrt{d})$ to find a candidate $R_j$ in $\{(U_i V_j^\dagger)^{p_j}~|~ U_i\in\mathcal{N}\}$ that is closest to $(UV_j^\dagger)^{p_j}$ in $\davg$. \\
    \Let{$V_{j+1}$}{$R_j^{1/p_j}V_j$.}
  }
  \Return{$\hat{U}\gets V_{t+1}$}
\end{algorithm} 

We first prove (1) by induction. 
Before doing so, we need to show that the learning algorithm $\mathcal{A}$ can indeed learn $(UV_j^\dagger)^{p_j}$ well for all $j$. 
Let $c=10^{-5}$. Note that with the definition of $\mathcal{N}$, we know that for any $G$-gate unitary $U$, $\exists U_i\in \mathcal{N}$ such that $\davg(U, U_i)\leq c\epsilon$, and therefore
\begin{equation}
    \davg((U_iV_j^\dagger)^{p_j}, (UV_j^\dagger)^{p_j}) \leq d_F'((U_iV_j^\dagger)^{p_j}, (UV_j^\dagger)^{p_j}) \leq p_j d_F'(U_i, U) \leq 2p_j\davg(U_i, U)\leq 4c/\sqrt{d},
\end{equation}
where we have used Item 1 and 2 in \Cref{lem:dist-df'}, unitary invariance of $d_F'$, and $p_j=2^j\leq 2^t\leq 2/(\epsilon\sqrt{d})$. Thus $\{(U_iV_j^\dagger)^{p_j}, U_i\in\mathcal{N}\}$ forms an $4c/\sqrt{d}$-covering net of $\{(UV_j^\dagger)^{p_j}~|~ \text{$U$ is a $G$-gate unitary}\}$, which can be used by the hypothesis selection algorithm $\mathcal{A}$ as set of candidates. 
The output $R_j$ of $\mathcal{A}$ satisfies $d_F'(R_j, (UV_j^\dagger)^{p_j})\leq 2\davg(R_j, (UV_j^\dagger)^{p_j})\leq 4(\sqrt{2}+1)\cdot 4c/\sqrt{d} < 40c/\sqrt{d}$ (see \Cref{eq:shadow-acc}). 
The number of queries to $U$ that this procedure uses is $\bigo\left(p_j\frac{G\log(G/c\epsilon)+\log(1/\eta_j)}{(4c/\sqrt{d})^2}\right) = \bigo\left(p_j d(G\log(G/\epsilon)+\log(1/{\eta_j}))\right)$.

Now we proceed to prove (1) by induction. Let's assume that the learning algorithm succeeds for all $j=1, \ldots, t$. Let $\delta_j = d_F'(U, V_j) = d_F'(UV_j^\dagger, I)$ be the error after iteration $j-1$. 
We will prove that $\delta_k \leq 2^{-k-5}/\sqrt{d}$. 
For iteration $0$, we have $p_0=1$, and by the accuracy of $\mathcal{A}$, we know $\delta_1=d_F'(U, V_1)<40c/\sqrt{d}<2^{-6}/\sqrt{d}$. 
Now we assume $\delta_k \leq 2^{-k-5}/\sqrt{d}$ and prove $\delta_{k+1}\leq 2^{-k-6}/\sqrt{d}$. 
Note that $(UV_k^\dagger)^{p_k}$ and $R_k$ are sufficiently close to identity in the sense that
\begin{equation}
    d_F'((UV_k^\dagger)^{p_k}, I)\leq p_k d_F'(UV_k^\dagger, I) = p_k\delta_k \leq \frac{2^{-5}}{\sqrt{d}} < \frac{4/(25\pi)}{\sqrt{d}},
\end{equation}
and
\begin{equation}
    d_F'(R_k, I)\leq d_F'(R_k, (UV_k^\dagger)^{p_k}) + d_F'((UV_k^\dagger)^{p_k}, I) \leq \frac{40c}{\sqrt{d}} + \frac{2^{-5}}{\sqrt{d}} < \frac{4/(25\pi)}{\sqrt{d}}.
\end{equation}
Thus, we can invoke Item 3 of \Cref{lem:dist-df'} and obtain
\begin{equation}
\begin{split}
    \delta_{k+1} &= d_F'(U, V_{k+1}) = d_F'(UV_k^\dagger, R_k^{1/p_k}) \leq \frac{2}{p_k}d_F'((UV_k^\dagger)^{p_k}, R_k) \leq \frac{80c}{p_k\sqrt{d}} < \frac{2^{-k-6}}{\sqrt{d}}.
\end{split}
\end{equation}
Therefore, by induction, we have shown that $\delta_k \leq 2^{-k-5}/\sqrt{d}$. 
At the end of the iteration, when $k=t=\ceil{\log_2(1/(\sqrt{d}\epsilon))}$, we have
\begin{equation}
    \delta_{t+1} = d_F'(U, V_{t+1}) \leq \frac{2^{-t-6}}{\sqrt{d}}<\epsilon.
\end{equation}

The above accuracy is conditioned on the success of all executions of the learning algorithm. 
By the union bound, the failure probability is upper bounded by
\begin{equation}
    \sum_{j=0}^t \eta_j = \delta\sum_{j=0}^t 8^{-(t-j)-1} = \delta\sum_{j=0}^t 8^{-j-1}<\delta.
\end{equation}
This concludes the proof of (1).

Next we move on to (2) and count the overall number of queries to the unknown unitary. Summing over all iterations, the number of queries is
\begin{equation}
\begin{split}
    &\bigo\left(\sum_{j=0}^t p_j d(G\log(G/\epsilon)+\log(1/\eta_j)) \right) \\
    &= \bigo\left(dG\log(G/\epsilon)\sum_{j=0}^t 2^j + d\log(1/\delta)\sum_{j=0}^t 2^j (t-j+1) \right) \\
    &= \bigo\left(d(G\log(G/\epsilon)+\log(1/\delta))2^t \right)=\bigo\left(\frac{\sqrt{d}(G\log(G/\epsilon)+\log(1/\delta))}{\epsilon} \right).
\end{split}
\end{equation}
This concludes the proof of the $\bigo(1/\epsilon)$ scaling algorithm in \Cref{prop:upper-bound-unitary}.

Finally, we note that an analogous bootstrap method can also be applied to improve the $\epsilon$-dependence for our unitary learning procedure without auxiliary systems, albeit again incurring a dimension factor.
Namely, a variant of Algorithm \ref{alg:bootsrap} relying on the algorithm of \Cref{app:unitary-no-ancilla} as a subroutine succeeds at outputting a $\hat{U}$ that satisfies $\davg(\hat{U}, U)\leq \epsilon$ with probability at least $1-\delta$ using $\bigo\left({d^{3/2}(G\log(G/\epsilon)+\log(1/\delta))}/{\epsilon}\right)$ queries to the unknown unitary $U$, assuming $\epsilon < \nicefrac{1}{d^{3/2}}$.

\subsection{Average-case query complexity lower bounds}
\label{app:avg-case-unitary-lo}
For the lower bound, we construct a packing net consisting of $G$-gate unitaries that are pairwise sufficiently far apart, so that an average-case learning algorithm can discriminate them. Meanwhile, the success probability of distinguishing a set of unitaries is upper bounded by the number of queries made~\cite{bavaresco2022unitary}. This gives us an $\Omega(G)$ query complexity lower bound. To incorporate $\epsilon$-dependence, we follow \cite{haah2023query} and map the problem to a fractional query problem \cite{berry2015hamiltonian, cleve2009efficient}. This way, we arrive at the following result.

\begin{prop}[Average case unitary learning lower bound, lower bound in \Cref{thm:avg-case-unitary}]
    Let $U$ be an $n$-qubit unitary composed of $G$ two-qubit gates.
    Any algorithm that, given query access to $U$, $U^\dagger$, $cU=\ketbra{0}\otimes I + \ketbra{1}\otimes U$ and $cU^\dagger=\ketbra{0}\otimes I + \ketbra{1}\otimes U^\dagger$, can output a unitary $\hat{U}$ such that $\davg(\hat{U}, U)\leq \epsilon\in (0, 1/32)$ with probability at least $2/3$, must use at least $\Omega(G/\epsilon)$ queries.
\label{prop:lower-bound-unitary}
\end{prop}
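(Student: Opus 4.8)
The plan is to prove the lower bound by reduction to a unitary discrimination task, following the two‑step strategy of mapping to a \emph{fractional‑query} problem as in \cite{haah2023query}: first establish an $\Omega(G)$ bound at constant accuracy, and then amplify the $\epsilon$‑dependence to $\Omega(G/\epsilon)$ by simulating many ``weak'' queries with few ``strong'' ones.

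\emph{Building a packing net near the identity.} I would set $k=\lfloor\log_4(G/C)\rfloor$ so that every $k$‑qubit unitary embeds into $U^G$ after tensoring with the identity, using that $\bigo(4^k)$ two‑qubit gates suffice to implement an arbitrary $k$‑qubit unitary \cite{vartiainen2004efficient}. Fix a maximal net $\{H_1,\dots,H_m\}$ of \emph{traceless} $k$‑qubit Hermitian operators in a fixed constant‑radius ball, pairwise $\Theta(1)$‑separated in the normalized Frobenius distance; a standard volume argument on this $(4^k{-}1)$‑real‑dimensional set gives $m=e^{\Theta(4^k)}=e^{\Theta(G)}$. For a parameter $\epsilon'=\Theta(\epsilon)$ to be tuned, define $V_i\triangleq e^{iH_i}$ and $W_i\triangleq e^{i\epsilon' H_i}$, both viewed as $G$‑gate $n$‑qubit unitaries. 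Using $e^{i\epsilon'H}=I+i\epsilon'H+\bigo(\epsilon'^2)$, the tracelessness (which removes the leading‑order global‑phase ambiguity), and the $\davg$–$d_F'$ equivalence of \Cref{lem:dist-df'} together with \Cref{lem:quotient-packing}, one checks that the $W_i$ are pairwise $d_F'$‑separated by at least $5\epsilon$ once $\epsilon'$ is chosen appropriately, while the $V_i$ are pairwise $d_F'$‑separated by a universal constant.

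\emph{Constant accuracy and the reduction.} By the success‑probability bound for unitary discrimination of \cite{bavaresco2022unitary}, any strategy (with access to $V_i,V_i^\dagger$ and their controlled versions) that discriminates the $m=e^{\Theta(G)}$ constant‑separated channels $\{V_i\}$ with probability $\ge 2/3$ must use $\Omega(\log m)=\Omega(G)$ queries, since the attainable success probability grows at most exponentially in the number of queries. Now suppose $\mathcal{A}$ is an $\epsilon$‑accurate average‑case learner for $G$‑gate unitaries using $N$ queries to $U,U^\dagger,cU,cU^\dagger$. Given an unknown $W_i$, running $\mathcal{A}$ outputs a classical description $\hat U$ with $\davg(\hat U,W_i)\le\epsilon$ with probability $\ge 2/3$; since $d_F'\le 2\davg$ and the $W_j$ are $5\epsilon$‑separated in $d_F'$, the index $j$ minimizing $d_F'(\hat U,W_j)$ over the (public) net equals $i$, so $\mathcal{A}$ discriminates $\{W_i\}$. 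Finally, a query to $cW_i=\exp\!\big(i\epsilon'\,\ketbra{1}\otimes H_i\big)$ is precisely evolution for time $\epsilon'$ under a query Hamiltonian of norm $\le 1$, so the $N$ queries of $\mathcal{A}$ form a continuous‑time query algorithm of total evolution time $\bigo(\epsilon' N)=\bigo(\epsilon N)$ against the Hamiltonian oracle for $H_i$; by the discrete‑time simulation of continuous‑time query algorithms \cite{cleve2009efficient} and its fractional‑query refinement \cite{berry2015hamiltonian}, this can be realized with $\bigo(\epsilon N)$ queries to $V_i,V_i^\dagger,cV_i,cV_i^\dagger$ (the logarithmic overhead being absorbed into constants). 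Hence $\bigo(\epsilon N)=\Omega(G)$, i.e.\ $N=\Omega(G/\epsilon)$, for $\epsilon\in(0,1/32)$.

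\emph{Main obstacle.} The delicate step is the fractional‑query simulation: converting an arbitrary adaptive $N$‑query protocol against the ``weak'' oracles $W_i=e^{i\epsilon'H_i}$ (including inverses and controlled versions) into a bona fide $\bigo(\epsilon N)$‑query protocol against the ``strong'' oracles $V_i=e^{iH_i}$ with controlled error, and doing so without spoiling the clean $1/\epsilon$ scaling (a naive packing‑count only yields $\Omega(G\log(1/\epsilon))$). This is exactly where the promised improvement is generated, and it is the part imported from \cite{haah2023query}; by contrast, the volume bound for the generator net, the $\davg$/$d_F'$ conversions, and the discrimination lower bound of \cite{bavaresco2022unitary} are comparatively routine given the tools already assembled in the preliminaries.
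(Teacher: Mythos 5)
Your high-level architecture is the same as the paper's: an exponentially large packing net on $k=\lfloor\log_4(G/C)\rfloor$ qubits, the discrimination success-probability bound of \cite{bavaresco2022unitary} (\Cref{lem:max_prob_disting}), and a fractional-query reduction to buy the $1/\epsilon$. However, the step you flag as the "main obstacle" is exactly where your instantiation breaks, in two concrete ways. First, the simulation lemma that makes the reduction work (\Cref{lem:frac_query}, imported from \cite{haah2023query}) applies only to \emph{Hermitian unitaries} $R$ with $R^2=I$ and to the specific interpolation $R^\alpha=(I+R)/2+e^{-i\pi\alpha}(I-R)/2$, i.e., to exponentials of scaled \emph{projectors}. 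Your weak oracles $W_i=e^{i\epsilon' H_i}$ for generic traceless Hermitian $H_i$ with continuous spectrum do not have this form, and the continuous-to-discrete query simulations of \cite{cleve2009efficient,berry2015hamiltonian} are likewise formulated for reflection-type (diagonal $0/1$ or $\pm1$) oracles; no off-the-shelf result converts an adaptive protocol querying $e^{i\epsilon' H}$ into one querying $e^{iH}$ with $\bigo(\epsilon' N)$ queries for arbitrary $H$. This is precisely why the paper's \Cref{lem:packing_r} constructs the packing net directly out of reflections (embedding the rank-$r$ projector packing of \cite{haah2016sample}) rather than out of generic generators: the net must consist of Hermitian unitaries for the reduction to apply.

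Second, your accounting of the success probability after simulation is not justified: the $\bigo(\alpha Q)$-query simulation in \Cref{lem:frac_query} succeeds only with probability $\exp(-\alpha\pi Q)/2$, not $2/3$, and the variant that preserves constant success probability costs $\bigo(\alpha Q\log(\alpha Q)/\log\log(\alpha Q))$ queries — the logarithmic overhead cannot be "absorbed into constants," and keeping it would degrade the bound below $\Omega(G/\epsilon)$. The paper closes this by comparing the exponentially small post-simulation success probability \emph{directly} against the bound $\binom{Q'+d^2-1}{Q'}/|\mathcal{P}|$ with $\log|\mathcal{P}|=\Omega(d^2)$ and invoking a technical lemma from \cite{haah2023query} to still extract $Q'=\Omega(d^2)$; without this step your final implication "$\bigo(\epsilon N)=\Omega(G)$" does not follow. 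A smaller but fixable point: rather than identifying the unknown by nearest-neighbor search among the weak oracles $W_j$ at scale $\epsilon$, the paper raises the learned unitary to the power $1/\alpha$ and uses $d_F'(U^{1/\alpha},R)\le \frac{2}{\alpha}\davg(U,R^\alpha)$ (via \Cref{lem:dist-df'}) to decode against the constant-separated reflections, which avoids your second-order Taylor and global-phase bookkeeping for $e^{i\epsilon' H_i}$.
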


Note that the lower bound holds even for learning algorithms that have a stronger form of access to $U$ than considered for our upper bounds. There, we only assumed query access to $U$. In contrast, the lower bound holds even assuming query access to $U$ and $U^\dagger$ as well as controlled versions thereof.

\begin{proof}[Proof of \Cref{prop:lower-bound-unitary}]
    The proof builds on the following lemma that maps the problem to a fractional query one \cite{haah2023query}.
    
    \begin{lemma}[Reduction to fractional query algorithms, {\cite[Lemma 4.5 and proof of Theorem 1.2]{haah2023query}}]
    Let $R\in U(d)$ be a Hermitian unitary (i.e., $R^2=I$). 
    Define $R^\alpha = (I+R)/2+e^{-i\pi\alpha}(I-R)/2$ for some $\alpha\in (0, 1]$. 
    Suppose there exists an algorithm $\mathcal{A}$ that uses $Q$ queries to $R^\alpha$ or $R^{\alpha\dagger}$ and produces some output with probability at least $2/3$. 
    Then there exists another algorithm $\mathcal{A}'$ that uses $50+100\alpha Q$ queries to controlled-$R$ and produces the same output with probability at least $\exp(-\alpha\pi Q)/2$.
    \label{lem:frac_query}
    \end{lemma}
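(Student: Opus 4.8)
The plan is to simulate the fractional query $R^{\alpha}$ by genuine controlled-$R$ queries through a linear-combination-of-unitaries (LCU) construction with postselection, following the fractional-to-discrete query reductions of \cite{cleve2009efficient, berry2015hamiltonian}; this is exactly the content of \cite[Lemma 4.5]{haah2023query}, which one could also invoke directly. The starting point is the spectral structure of $R$. Since $R$ is a Hermitian unitary, $R^{2}=I$, so $R=P_{+}-P_{-}$ with $P_{\pm}=(I\pm R)/2$ orthogonal projectors. Feeding this into the definition of $R^{\alpha}$ gives the single clean identity
\begin{equation}
    R^{\alpha}=e^{-i\pi\alpha/2}\bigl(\cos(\tfrac{\pi\alpha}{2})\,I+i\sin(\tfrac{\pi\alpha}{2})\,R\bigr),
\end{equation}
with $R^{\alpha\dagger}$ obtained by conjugating the coefficients. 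Writing $c=\cos(\pi\alpha/2)$ and $s=\sin(\pi\alpha/2)$, each forward or adjoint fractional query is, up to a global phase, an LCU of $I$ and $R$ with nonnegative weights $c$ and $s$.

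Next I would expand the whole algorithm. Write $\mathcal{A}=U_{Q}(R^{\alpha})^{\pm}U_{Q-1}\cdots U_{1}(R^{\alpha})^{\pm}U_{0}$ for query-free unitaries $U_{i}$, substitute the identity above, and multiply out. This produces an LCU over ``firing strings'' $\sigma\in\{0,1\}^{Q}$,
\begin{equation}
    \mathcal{A}=\sum_{\sigma}a_{\sigma}\,W_{\sigma},\qquad |a_{\sigma}|=c^{\,Q-|\sigma|}s^{\,|\sigma|},\qquad \textstyle\sum_{\sigma}|a_{\sigma}|=(c+s)^{Q}=:\lambda,
\end{equation}
where $W_{\sigma}$ is the fixed unitary interleaving the $U_{i}$ with a copy of $R$ precisely at the positions where $\sigma_{i}=1$. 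The key quantitative observation is that, after normalizing the LCU weights, $|a_{\sigma}|/\lambda$ is exactly the probability that a $\mathrm{Binomial}(Q,p)$ variable with $p=s/(c+s)=\Theta(\alpha)$ realizes the string $\sigma$, so the total weight on strings of Hamming weight greater than $K$ equals $\Pr[\mathrm{Binomial}(Q,p)>K]$. Since the mean is $Qp=\Theta(\alpha Q)$, a Chernoff bound lets me choose $K=50+100\alpha Q$ so that this tail stays smaller than any fixed constant even after multiplication by $\lambda=(c+s)^{Q}\le e^{\pi\alpha Q/2}$; this is the source of both the query count and the constants in the statement.

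Then I would implement the truncated block $M=\sum_{|\sigma|\le K}a_{\sigma}W_{\sigma}$ by the standard $\mathrm{PREP}$--$\mathrm{SEL}$--$\mathrm{PREP}^{\dagger}$ gadget on a compressed control register that stores only the at most $K$ firing positions rather than all $Q$ of them. Because each retained $W_{\sigma}$ calls $R$ at most $K$ times, $\mathrm{SEL}$ makes only $O(K)=50+100\alpha Q$ controlled-$R$ queries, while $\mathrm{PREP}$ and the $U_{i}$ are query-free; this yields $\mathcal{A}'$ with the claimed query budget. Postselecting the control register onto the prepared state succeeds with probability $\|M\ket{\psi}\|^{2}/(\lambda(1-\mathrm{tail}))^{2}$, and the two facts $\|M\ket{\psi}\|\ge 1-\lambda\cdot\mathrm{tail}\ge\tfrac{9}{10}$ (so that, conditioned on success, the output state is within $O(1/10)$ of $\mathcal{A}\ket{\psi}$ and hence reproduces $\mathcal{A}$'s output) and $\lambda^{2}\le e^{\pi\alpha Q}$ together give a success probability of at least $\tfrac12 e^{-\pi\alpha Q}$, as required.

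The main obstacle is the simultaneous bookkeeping in the last two paragraphs: I must pick the truncation level $K$ small enough to get the $O(\alpha Q)$ query count, yet large enough that the discarded weight $\lambda\cdot\mathrm{tail}$ stays below a constant despite the exponential prefactor $\lambda=e^{\Theta(\alpha Q)}$. This is a genuine tension because the LCU normalization $\lambda$ grows at the same exponential rate $e^{\Theta(\alpha Q)}$ at which the Chernoff tail decays, so the constants ($50$ for the regime $\alpha Q=O(1)$ and $100$ to force the Chernoff exponent to dominate the $\pi/2$ growth rate) must be chosen carefully; everything else, namely the LCU identity, the binomial interpretation of the weights, and the compressed $\mathrm{PREP}$--$\mathrm{SEL}$ implementation, is standard and routine.
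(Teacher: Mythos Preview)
The paper does not actually prove this lemma; it is stated there purely as a citation of \cite[Lemma~4.5 and proof of Theorem~1.2]{haah2023query} and used as a black box in the proof of Proposition~\ref{prop:lower-bound-unitary}. Your proposal is a correct reconstruction of the argument underlying that reference: expand each fractional query via the two-term LCU $R^{\alpha}=e^{-i\pi\alpha/2}(cI+isR)$, interpret the fully expanded algorithm as a binomial-weighted sum over firing strings, truncate at Hamming weight $K=\Theta(\alpha Q)$ by a Chernoff bound, and implement the truncated sum with a compressed $\mathrm{PREP}$--$\mathrm{SEL}$--$\mathrm{PREP}^{\dagger}$ gadget using $K$ controlled-$R$ calls. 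The tension you flag---that the LCU normalization $\lambda=(c+s)^{Q}\le e^{\pi\alpha Q/2}$ grows at the same exponential rate at which the Chernoff tail decays, forcing a careful choice of the constants $50$ and $100$---is exactly the crux of the argument, and your resolution (choosing $K$ large enough that the Chernoff exponent strictly dominates $\pi\alpha/2$) is the right one. So your sketch is sound and matches the approach of the cited work; there is nothing further in the present paper to compare against.
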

    
    To use this lemma, we need to construct a packing net of Hermitian unitaries, and give an upper bound on the maximum probability of successfully distinguishing them. 
    Thus we need the following two lemmas.
    
    \begin{lemma}[Packing net of Hermitian unitaries, variant of {\cite[Proposition 4.1]{haah2023query}}]
    There exists a set of Hermitian unitaries $\mathcal{P} = \{R_i\}_i\subset U(d)$ with $\log|P|\geq \Omega(d^2)$ and $R_i^2=I$ for $R_i\in\mathcal{P}$, such that for any $R_i\neq R_j\in \mathcal{P}$, $d_F'(R_i, R_j)\geq 1/8$.
    \label{lem:packing_r}
    \end{lemma}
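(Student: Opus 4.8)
The plan is to manufacture the required involutory unitaries out of ordinary unitaries and then invoke the exponential lower bound on the packing number of $U(m)$ that is already available in the paper. The bridge is the off-diagonal doubling map: to each $W \in U(m)$ associate
\begin{equation}
    R_W = \begin{pmatrix} 0 & W \\ W^\dagger & 0 \end{pmatrix} \in U(2m).
\end{equation}
Then $R_W^\dagger = R_W$ and $R_W^2 = \mathrm{diag}(WW^\dagger, W^\dagger W) = I$, so each $R_W$ is a Hermitian unitary with $R_W^2 = I$, exactly as required; and $W \mapsto R_W$ is injective since the upper-right block recovers $W$. So it suffices to produce a packing net $\{W_i\} \subset U(m)$ of metric entropy $\Omega(m^2)$ that is separated in the quotient normalized Frobenius distance $d_F'$, and then to check that the map $R_{(\cdot)}$ does not contract $d_F'$.

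For the non-contraction, I would carry out a short Frobenius-norm computation. With $z = \tr(W^\dagger W')$ and using $(W - e^{-i\phi}W')^\dagger = W^\dagger - e^{i\phi}W'^\dagger$, the block structure gives
\begin{equation}
    \|R_W - e^{i\phi} R_{W'}\|_F^2 = \|W - e^{i\phi}W'\|_F^2 + \|W - e^{-i\phi}W'\|_F^2 = 4m - 4\cos\phi\,\mathrm{Re}(z),
\end{equation}
where the last equality expands each term as $2m - 2\mathrm{Re}(e^{\pm i\phi}z)$ and uses $\mathrm{Re}(e^{i\phi}z) + \mathrm{Re}(e^{-i\phi}z) = 2\cos\phi\,\mathrm{Re}(z)$. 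Minimizing over the global phase yields $d_F'(R_W, R_{W'})^2 = 2 - \tfrac{2}{m}|\mathrm{Re}(z)| \ge 2 - \tfrac{2}{m}|z| = d_F'(W, W')^2$, the last identity being the formula $d_F'(W,W')^2 = 2 - \tfrac{2}{m}|\tr(W^\dagger W')|$ derived inside the proof of Lemma~\ref{lem:dist-df'}. Hence $d_F'(R_W, R_{W'}) \ge d_F'(W, W')$. It is worth stressing that one genuinely needs the seed net to be $d_F'$-separated rather than merely $d_F$-separated: for $W' = -W$ one has $d_F(W, W') = \sqrt{2}$ but $R_{-W} = -R_W$, so $d_F'(R_W, R_{W'}) = 0$. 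This global-phase bookkeeping is the only delicate point.

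To assemble the net, take $m = \lfloor d/2 \rfloor$. Lemma~\ref{lem:2q-unitaries} gives $\mathcal{N}(U(m), \|\cdot\|_F, \sqrt{m}\,\epsilon) \ge (c_1/\epsilon)^{m^2}$, hence $\mathcal{N}(U(m), d_F, \epsilon) \ge (c_1/\epsilon)^{m^2}$; combining with the covering/packing equivalence (Lemma~\ref{lem:equi-cov-pack}) and the quotient-packing inequality (Lemma~\ref{lem:quotient-packing}) produces, for a suitable fixed constant $s > 0$, a set $\{W_i\} \subset U(m)$ with $\log|\{W_i\}| = \Omega(m^2)$ and $d_F'(W_i, W_j) > s$ for $i \ne j$. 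Setting $\mathcal{P} = \{R_{W_i}\}$ and invoking the non-contraction bound gives Hermitian unitaries with $R_i^2 = I$, $\log|\mathcal{P}| = \Omega(m^2) = \Omega(d^2)$, and $d_F'(R_i, R_j) > s$; choosing $\epsilon$ in the seed step so that $s \ge 1/8$, and, when $d$ is odd, replacing each $R_{W_i} \in U(d-1)$ by $R_{W_i} \oplus (1) \in U(d)$ (which changes $d_F$ and $d_F'$ only by a bounded factor that can be absorbed into the choice of $s$), yields the statement. The main obstacle is therefore not conceptual: it is the careful global-phase/Frobenius-norm accounting in the middle step and the need to draw the seed net in the $d_F'$ metric rather than $d_F$.
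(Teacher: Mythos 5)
Your construction is correct but follows a genuinely different route from the paper's. The paper imports the explicit packing of rank-$r$ projectors from the state-tomography lower bound of Haah et al.\ (\cite[Lemma 7]{haah2016sample}): each density matrix $(I_{2r}+V_i)/(2r)$ in that packing directly hands over a traceless Hermitian unitary $V_i$, the $1/4$ trace-distance separation converts to a $1/8$ separation in $d_F$, and \Cref{lem:quotient-packing} then passes to $d_F'$. You instead manufacture involutions via the off-diagonal doubling $W\mapsto R_W$ and seed with a generic $d_F'$-separated packing of $U(m)$ obtained from Szarek's bounds (\Cref{lem:2q-unitaries,lem:equi-cov-pack,lem:quotient-packing}). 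Your non-contraction computation is right — $d_F'(R_W,R_{W'})^2 = 2 - \tfrac{2}{m}|\mathrm{Re}\,\tr(W^\dagger W')| \geq d_F'(W,W')^2$ — and your observation that $R_{-W}=-R_W$ forces the seed to be separated in $d_F'$ rather than $d_F$ is exactly the right subtlety to worry about. Your approach buys self-containedness (no appeal to the external projector packing); the paper's buys an explicit constant.

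The one soft spot is the final clause ``choosing $\epsilon$ in the seed step so that $s\ge 1/8$.'' The lower bound $\mathcal{N}(U(m),d_F,\epsilon)\ge (c_1/\epsilon)^{m^2}$ is only nontrivial for $\epsilon < c_1$, and $c_1$ is an unspecified universal constant in \Cref{lem:2q-unitaries}; if $c_1 \le 1/2$, you cannot take the seed separation (which after \Cref{lem:quotient-packing} degrades by a further factor of $4$) up to $1/8$ by this route. So what you actually prove is the lemma with $1/8$ replaced by some universal constant $s>0$. This is harmless for the downstream application in \Cref{prop:lower-bound-unitary} — the value $1/8$ only feeds into the choice of $\alpha$ and the final constants — but it does not literally deliver the stated $1/8$, whereas the paper's projector construction does. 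Either strengthen the seed (e.g., by the paper's explicit construction) or note explicitly that the constant in the lemma is adjustable.
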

    \begin{proof}
        Let $d=2r+1$ if $d$ is odd, or $d=2r+2$ if $d$ is even. 
        \cite[Lemma 7]{haah2016sample} (or Lemma 8 in \cite{haah2017sample}) asserts that there exists a set of rank $r$ density matrices in dimension $2r$ with cardinality at least $\exp(r^2/8)$, such that all the non-zero eigenvalues are equal to $1/r$, and any two different density matrices have trace distance at least $1/4$. 
        We can write this set as $\{(I_{2r}+V_i)/(2r), i=1, \ldots, N\}$, where $V_i\in U(2r)$ is a Hermitian unitary of trace zero. 
        Then $N\geq \exp(r^2/8)$, and $\forall i\neq j$,
        \begin{equation}
            \frac{1}{4}\leq \frac{1}{2}\left\|\frac{I_{2r}+V_i}{2r} - \frac{I_{2r}+V_j}{2r}\right\|_1 = \frac{1}{4r}\|V_i-V_j\|_1 \leq \frac{1}{\sqrt{2r}}\|V_i-V_j\|_F.
        \end{equation}
        where we have used $\|V_i\|_1\leq \sqrt{2r}\|V_i\|_F$. 
        Then we embed $V_i\to R_i = V_i\oplus I_b\in U(d)$, where $b=1$ or $2$, depending on whether $d$ is odd or even. 
        We have
        \begin{equation}
            d_F(R_i, R_j) = \frac{1}{\sqrt{d}}\|R_i-R_j\|_F \geq \frac{1}{4}\sqrt{\frac{2r}{2r+b}} \geq \frac{1}{8}.
        \end{equation}
        Now we would like to translate $d_F$ into $d_F'$. From \Cref{lem:quotient-packing}, we know that changing to the quotient metric for any set of unitaries only decreases $\log N$ by an additive constant (since here we consider constant $\epsilon$). Therefore, we still have $\log|P|\geq \Omega(d^2)$ for $d_F'(R_i, R_j)\geq \frac{1}{8}$.
    \end{proof}
    
    \begin{lemma}[Upper bound on success probability of distinguishing unitaries, {\cite[Theorem 5]{bavaresco2022unitary}}]
    Let $\mathcal{P}\subseteq U(d)$ be a set of unitaries. 
    Let $\mathcal{A}$ be any algorithm that uses $Q$ queries to an input unitary $U_x$ and output a guess $\hat{x}$. 
    Suppose the input unitary is randomly picked from $\mathcal{P}$ with uniform probability. 
    Then the maximal probability that the output satisfies $\hat{x}=x$ is upper bounded by $\frac{1}{|\mathcal{P}|}\binom{Q+d^2-1}{Q}$.
    \label{lem:max_prob_disting}
    \end{lemma}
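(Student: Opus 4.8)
The plan is to prove this success-probability bound directly by a dimension-counting (polynomial-method) argument: I will show that the output states of any $Q$-query algorithm all lie in a common subspace of dimension at most $\binom{Q+d^2-1}{Q}$, and then bound the optimal discrimination probability by that subspace dimension divided by $|\mathcal{P}|$.

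First I would bring an arbitrary $Q$-query strategy into a canonical coherent form. By purifying ancillas and deferring measurements, any algorithm that interleaves $Q$ uses of the oracle $U_x$ with intermediate processing and produces a guess $\hat{x}$ can be simulated by fixed unitaries $V_0,V_1,\dots,V_Q$ on the work register together with an ancilla, alternating with $Q$ oracle calls $U_x\otimes I$, followed by a single POVM $\{M_x\}$ whose outcome is $\hat{x}$. The pre-measurement state is then
\[
\ket{\psi_{U_x}} = V_Q\,(U_x\otimes I)\,V_{Q-1}\,(U_x\otimes I)\cdots V_1\,(U_x\otimes I)\,V_0\ket{\mathrm{init}},
\]
and the success probability equals $\tfrac{1}{|\mathcal{P}|}\sum_x \bra{\psi_{U_x}}M_x\ket{\psi_{U_x}}$ (padding with identity oracle calls if fewer than $Q$ queries are used).

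Next I would observe that $\ket{\psi_U}$ is a homogeneous polynomial of degree exactly $Q$ in the $d^2$ matrix entries $\{U_{ij}\}$: expanding each factor $U\otimes I$, every oracle call contributes exactly one entry $U_{ij}$, so the state is a sum of products of exactly $Q$ such entries. Since the entries are scalars and commute, the distinct monomials that can occur are indexed by multisets of size $Q$ from the $d^2$ variables, of which there are $\binom{Q+d^2-1}{Q}$. Grouping terms yields an algebraic identity $\ket{\psi_U}=\sum_{\vec\alpha} m_{\vec\alpha}(U)\ket{w_{\vec\alpha}}$, valid for all $U$, where the $m_{\vec\alpha}$ are the monomials and the $\ket{w_{\vec\alpha}}$ are fixed vectors determined by $V_0,\dots,V_Q$. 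Hence every $\ket{\psi_{U_x}}$ lies in the subspace $S=\operatorname{span}\{\ket{w_{\vec\alpha}}\}$, whose dimension $D$ satisfies $D\leq \binom{Q+d^2-1}{Q}$.

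Finally I would convert the subspace bound into the probability bound. Let $P_S$ project onto $S$. Since each $\ket{\psi_{U_x}}=P_S\ket{\psi_{U_x}}$, writing $A_x=P_S M_x P_S\succeq 0$ gives $\bra{\psi_{U_x}}M_x\ket{\psi_{U_x}}=\tr(A_x\ket{\psi_{U_x}}\!\bra{\psi_{U_x}})\leq \tr(A_x)$, using $\ket{\psi_{U_x}}\!\bra{\psi_{U_x}}\preceq I$. Summing over $x$ and using $\sum_x A_x = P_S\big(\sum_x M_x\big)P_S = P_S$, I obtain $\sum_x \bra{\psi_{U_x}}M_x\ket{\psi_{U_x}}\leq \tr(P_S)=D\leq \binom{Q+d^2-1}{Q}$; dividing by $|\mathcal{P}|$ yields the claim, and maximizing over POVMs preserves the inequality. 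The main obstacle I anticipate is the very first step: justifying rigorously that the most general $Q$-query strategy — allowing adaptivity, classical side information, intermediate measurements, and arbitrary ancillas — reduces without loss to the canonical interleaved pure-state form, so that the degree-$Q$ polynomial structure genuinely applies. Once that normal form is established, the monomial count and the trace/POVM estimate are routine.
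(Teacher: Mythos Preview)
The paper does not prove this lemma; it quotes it verbatim from \cite[Theorem~5]{bavaresco2022unitary} and uses it as a black box. Your proposal supplies a correct, self-contained proof. The reduction to the canonical interleaved form via purification and deferred measurement is standard, the observation that $\ket{\psi_U}$ is homogeneous of degree $Q$ in the $d^2$ entries of $U$ is exactly right, and the final estimate $\sum_x \bra{\psi_{U_x}} M_x \ket{\psi_{U_x}} \le \tr(P_S) \le \binom{Q+d^2-1}{Q}$ via $P_S M_x P_S \succeq 0$ and $\sum_x M_x = I$ is clean. The ``obstacle'' you flag is not a genuine gap: the deferred-measurement normal form is textbook.

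For comparison, the cited reference obtains the same binomial coefficient by a different route: it formulates optimal unitary discrimination as a semidefinite program over testers/superchannels and appeals to the representation theory of $U(d)^{\otimes Q}$, where $\binom{Q+d^2-1}{Q}$ arises as the dimension of the symmetric subspace (equivalently, the number of irreps appearing). Your polynomial-method argument is more elementary and more direct for the sequential-query model stated here; the SDP/representation-theoretic proof in \cite{bavaresco2022unitary} buys coverage of more exotic strategies (parallel, sequential, and even indefinite-causal-order ``general'' testers) in one stroke, which is not needed for how the lemma is invoked in this paper.
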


    Now we can proceed to prove the lower bound in \Cref{prop:lower-bound-unitary}. 
    Suppose we have a learning algorithm $\mathcal{A}$ that uses $Q$ queries and outputs a $\hat{U}$ that has accuracy $\epsilon$ in $\davg$ with success probability at least $2/3$. 
    From the theory of universal gates \cite{vartiainen2004efficient}, we know that $G=\bigo(4^k)$ gates suffice to implement an arbitrary $k$-qubit unitary, i.e., there exists constant $C$ such that $G$ gates can implement arbitrary unitary on $k=\floor{\log_4(G/C)}$ qubits. 
    Let $d=\min\{2^n, 2^k\}$, and focus on the first $\min\{n, k\}$ qubits. The algorithm $\mathcal{A}$ thus is able to learn any unitary on these qubits.

    Consider the packing net $\mathcal{P}=\{R_i\}$ from \Cref{lem:packing_r} for this choice of $d$. 
    We want to identify $R\in \mathcal{P}$, but using only access to $R^\alpha$ for $1/\alpha = \floor{1/32\epsilon}>1$. 
    If we apply $\mathcal{A}$ to $R^\alpha$, then with probability at least $2/3$, the output $U$ satisfies $\davg(U, R^\alpha)\leq \epsilon$. 
    From the equivalence of $\davg$ and $d_F'$ (\Cref{lem:dist-df'}), and the triangle inequality and unitary invariance of $d_F'$, we have
    \begin{equation}
        d_F'(U^{1/\alpha}, R) \leq \sum_{p=1}^{1/\alpha} d_F'(U^{p}R^{1-\alpha p}, U^{p-1}R^{1-\alpha p+\alpha}) = \frac{2}{\alpha} \davg(U, R^\alpha)\leq \frac{2\epsilon}{\alpha}\leq \frac{1}{16}.
    \end{equation}
    Since $R\in \mathcal{P}$ have pairwise distance at least $1/8$, the algorithm can identify $R$ with success probability at least $2/3$ by finding the closest element of $\mathcal{P}$ to $U^{1/\alpha}$.

    Now, via \Cref{lem:frac_query}, we know that there is a learning algorithm $\mathcal{A}'$ that can use $50+100\alpha Q$ queries to controlled-$R$ to identify $R$ with success probability at least $\exp(-\alpha\pi Q)/2$.
    On the other hand, we know the success probability cannot exceed the upper bound $\binom{Q+d^2-1}{Q}/\mathcal{|P|}$ set by \Cref{lem:max_prob_disting} with $\log|P|\geq \Omega(d^2)$. 
    Combined with a technical lemma \cite[Lemma 4.3]{haah2023query}, this means that the number of queries must be at least $\Omega(d^2)$. That is,
    \begin{equation}
        50+100\alpha Q \geq \Omega(d^2) \implies Q\geq \Omega\left(\frac{d^2}{\alpha}\right) = \Omega\left(\frac{d^2}{\epsilon}\right) = \Omega\left(\frac{\min\{4^n, G\}}{\epsilon}\right).
    \end{equation}
    This concludes the proof of \Cref{prop:lower-bound-unitary}.
\end{proof}

We comment on the connection of our results to the recent work \cite{yang2023complexity} on the hardness of learning Haar-random unitaries, where the authors proved a sample complexity lower bound $\Omega\left(\frac{d^2}{\log^2 d}\right)$ for learning $d$-dimensional Haar-random unitaries to constant accuracy w.r.t.~$d'_F$. 
The direct consequence of our lower bound when applied to learning the whole unitary group $U(d)$, without assumptions of limited complexity, is a lower bound of $\Omega\left(d^2\right)$, which is stronger than that of \cite[Theorem 1]{yang2023complexity} by a factor of $\log^2 d$. 
We note that this difference is a consequence of proof techniques that comes about in two ways. 
One $\log d$ factor comes from their analysis of the differential entropy, which only calculated the contribution of $\Theta\left(\frac{d}{\log d}\right)$ columns of the matrix elements, instead of all $d$ columns. 
This issue does not arise for us because we focus on the discrete entropy with the use of a packing net. 
The other $\log d$ comes from the mutual information upper bound, where they use the straightforward Holevo bound: Each $d$-dimensional quantum state can carry at most $\bigo(\log d)$ bits of information. 
We manage to get rid of this factor by making use of a more refined bound on success probability as in \Cref{lem:max_prob_disting}.

Lastly, we remark on the proof technique used here compared to the Holevo information bound in the state learning case (\Cref{app:states-lower}).
The Holevo bound is particularly useful in proving these lower bounds because, combined with the data processing inequality, it gives an upper bound on the amount of information that can be extracted from quantum states.
In particular, it asserts that, given an ensemble of $d$-dimensional states $\{\rho_X\}$ with random classical labels $X\in [M]$, the maximal mutual information with the underlying random label when using $k$ copies of the state is upper bounded by $\chi(X; \rho_X^{\otimes k})\triangleq S(\E_{X}[\rho_X^{\otimes k}]) - \E_{X}[S(\rho_X^{\otimes k})].$
Meanwhile, the information needed to distinguishing a packing net of $d$-dimensional states is lower bounded by $\Omega(d)$.
Thus, upper bounding the Holevo $\chi$ via the number of samples $k$ can give us sample complexity lower bounds.
A naive upper bound is $\chi\leq S(\E_{X}[\rho_X^{\otimes k}])\leq k\log d$ because $\E_{X}[\rho_X^{\otimes k}]$ is a $d^k$-dimensional mixed state and thus has entropy at most $k\log d$.
This gives us a $\Omega(d/\log d)$ sample complexity lower bound with a sub-optimal logarithmic factor.
To get rid of the $\log d$ factor, \cite{wright2016learn} noted that $k$ copies of a $d$-dimensional pure state live in the symmetric subspace of the $k$-fold tensor power of $d$-dimensional Hilbert space. 
Therefore, the first term $S(\E_{X}[\rho_X])$, along with the Holevo $\chi$, can be more tightly upper bounded by $\log\binom{k+d-1}{k}$, where the binomial coefficient is the dimension of the symmetric subspace.
This can then be used to prove a $\Omega(d)$ lower bound, which is optimal in $d$.

However, an analogous result for unitaries (or more generally channels) queries is still lacking. 
Consider an ensemble of channels $\{C_X\}$ labeled by a classical random variable $X\in [M]$.
In general, one can sequentially query the channel $k$ times interleaved with processing operations to prepare a state carrying the information extracted from the queries. This then has the form $\rho^k_X = \mathcal{C}_k C_X \mathcal{C}_{k-1} C_X \cdots \mathcal{C}_1 C_X (\rho^0)$, where $\mathcal{C}_i$ are fixed channels independent of $X$, and $\rho_0$ is some fixed state.
Then the amount of information that one can extract is given by the Holevo information $\chi(X; \rho_X^k) = S(\E_{X}[\rho_X^k]) - \E_{X}S([\rho_X^k])$.
Upper bounding this quantity is in general difficult. \cite{huang2021info} used induction and obtained $\chi\leq k\log (d^2)$ which corresponds to the naive upper bound in the state case.
Using this, however, can only give us a suboptimal $\Omega(d^2/\log d)$ query complexity lower bound.
We suspect that an improved method, similar in spirit to \cite{wright2016learn}, making use of the fact that all $k$ queries are to the same channel $C_X$ should be possible and give a
\begin{equation}
    \chi(X; \rho_X^k) \leq \log\binom{k+d^2-1}{k}
\end{equation}
upper bound.
This would then also give an information-theoretic perspective on the binomial coefficient appearing in the unitary discrimination result \Cref{lem:max_prob_disting} originally proved by positive-semi-definite programming.
We leave the proof of this Holevo information bound as an open problem for future work.

\subsection{Learning from classically described data}
\label{app:classical-description}

As we have seen in \Cref{thm:state-learning,thm:avg-case-unitary}, the sample complexity of learning $G$-gate states and unitaries are both $\tilde{\Theta}(G)$.
This suggests that they have similar source of complexity.
However, differently from state learning, we can identify two sources of difficulty in unitary learning:
(1) reading out the input and output quantum states, 
and (2) learning the mapping from inputs to outputs.
The similar complexity $\tilde{\Theta}(G)$ of both state and unitary learning suggests that learning the mapping is actually easy and may only need a constant number of queries to the unknown unitary.

To formalize this idea, we consider a different access model for the unitary learning task: We focus on learning the mapping by assuming training data that contains classical descriptions of input and output states.
Specifically, we consider a learning algorithm $H$ that selects $N$ input $n$-qubit states $\{\ket{x_i}\}_{i=1}^N$, and queries the unknown unitary to get $\{U\ket{x_i}\}_{i=1}^N$, where we have (repeated) access to the classical descriptions of all these input and output states. 
Based on these classically described data, we want to use the learning algorithm $H$ to output a $\hat{U}$ that satisfies $\davg(\hat{U}, U)\leq \epsilon$.

A recent line of research on the quantum no-free-lunch theorem \cite{poland2020no, sharma2022reformulation} implies that the above task of learning the mapping from classically described data in the average-case distance requires at least $\Omega(2^n)$ samples.
This seems to contradict our idea that learning the mapping should be easy.
However, \cite{sharma2022reformulation} also demonstrated how to circumvent the quantum no-free-lunch theorem.
In particular, they showed that by entangling our input states with an ancillary system, applying the unitary on the original system, and collecting the output entangled states, we can reduce the sample requirement by a factor equal to the Schimidt rank $r$ of the entangled states.
In the limit of maximally entangled state where $r=2^n$, the output state is in fact the Choi–Jamio{\l}kowski state of the unitary, which already contains all the matrix elements of the unitary.
Therefore, \cite{sharma2022reformulation} concluded that using entangled data can reduce the data requirements and eventually make the unitary learning task easy, requiring only one sample with a maximally entangled input state.

Here, we aim to go beyond this result and provide a unified information-theoretic reformulation of the quantum no-free-lunch theorem (\Cref{thm:qnfl}), which is not limited to entangled data.
We find that the key ingredient to reduce the sample complexity of learning with classical description is to enlarge the representation space (i.e., the space that the output states live in).
While entanglement is one way to achieve such an enlargement, it is not the only one.
In fact, we find an alternative method that only uses classically mixed states and achieve the same reduction in sample complexity.
Specifically, we establish the following theorem.

\begin{prop}[Upper bounds in learning with classical descriptions, restatement of upper bounds in \Cref{thm:classical-description}]
There exists a learning algorithm $H_\mathrm{entangle}$ that, for any $n$-qubit unitary $U\in U(2^n)$, uses $N=\ceil{2^n/r}$ classically described data $\{(\ket{x_i}, (U\otimes I)\ket{x_i})\}_{i=1}^N$, where $\ket{x_i}$ are bipartite entangled states over two $n$-qubit systems with Schmidt rank at most $r$, to output a $\hat{U}$ such that $\davg(\hat{U}, U)\leq \epsilon$ for any $\epsilon>0$.

Similarly, there exists a learning algorithm $H_\mathrm{mixed}$ that, for any $n$-qubit unitary $U\in U(2^n)$, uses $N=\ceil{2^n/r}$ classically described data $\{(\rho_i, (U\otimes I)\rho_i(U\otimes I)^\dagger)\}_{i=1}^N$, where $\rho_i$ are classically mixed states over two $n$-qubit systems with rank at most $r$ of the form
\begin{equation}
    \rho_i = \sum_{j=1}^r p_{ij} \ketbra{\phi_{ij}} \otimes \ketbra{\psi_{ij}}\, ,
\end{equation}
to output a $\hat{U}$ such that $\davg(\hat{U}, U)\leq \epsilon$ for any $\epsilon>0$.
\label{prop:classical-description-upper}
\end{prop}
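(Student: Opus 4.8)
The plan is to prove both halves by exhibiting \emph{non-adaptive} learners that reconstruct the unknown unitary $U$ \emph{exactly}, up to an irrelevant global phase, from the prescribed classically described data; then $\davg(\hat U,U)=0\le\epsilon$ for every $\epsilon>0$ and there is nothing further to optimize. In each case the learner fixes its $N=\ceil{2^n/r}$ inputs in advance, receives the classical descriptions of the outputs, and reads off all $2^n$ columns $U\ket{k}$ of $U$ in the computational basis $\{\ket{k}\}_{k=0}^{2^n-1}$; unitarity then determines $U$. The only genuine bookkeeping is the relative phases of the reconstructed columns, which comes for free in the entangled encoding and requires a small extra device in the mixed one.

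For $H_{\mathrm{entangle}}$, first I would partition $\{0,1,\dots,2^n-1\}$ into blocks $S_1,\dots,S_N$ of size at most $r$, and take the $i$-th input to be $\ket{x_i}=|S_i|^{-1/2}\sum_{k\in S_i}\ket{k}\otimes\ket{k}$, which is maximally entangled on an $|S_i|$-dimensional subspace and hence has Schmidt rank $|S_i|\le r$. Its image is $(U\otimes I)\ket{x_i}=|S_i|^{-1/2}\sum_{k\in S_i}(U\ket{k})\otimes\ket{k}$, and since the second-register vectors $\{\ket{k}\}_{k\in S_i}$ are orthonormal, the learner obtains $U\ket{k}=|S_i|^{1/2}(I\otimes\bra{k})(U\otimes I)\ket{x_i}$ for every $k\in S_i$ straight from the output's classical description. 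Ranging over $i$ recovers every column, all carrying one common phase because the unknown channel has a single fixed representative, so the assembled matrix equals $U$ up to that global phase; output it as $\hat U$. Since $\sum_i|S_i|=2^n$ with $N=\ceil{2^n/r}$ blocks, this is the first claim.

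For $H_{\mathrm{mixed}}$ the same ``read off the columns'' strategy applies, but a real obstacle appears: from $\rho_i=\sum_j p_{ij}\ketbra{\phi_{ij}}\otimes\ketbra{\psi_{ij}}$ with the $\ket{\psi_{ij}}$ (for fixed $i$) chosen orthonormal, the classical description of $(U\otimes I)\rho_i(U\otimes I)^\dagger=\sum_j p_{ij}(U\ketbra{\phi_{ij}}U^\dagger)\otimes\ketbra{\psi_{ij}}$ reveals, through $(I\otimes\bra{\psi_{ij}})\,[\,\cdot\,]\,(I\otimes\ket{\psi_{ij}})=p_{ij}\,U\ketbra{\phi_{ij}}U^\dagger$, only the rank-one projectors $\ketbra{U\phi_{ij}}$ --- that is, each $U\ket{\phi_{ij}}$ merely up to an independent phase, so naive basis-vector inputs would reconstruct $U$ only up to an unknown diagonal phase matrix. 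The fix is to use \emph{bridge} components: in $\rho_1$ take $\ket{\phi_{11}}=\ket{0}$ and $\ket{\phi_{1j}}=(\ket{0}+\ket{j-1})/\sqrt2$ for $j=2,\dots,r$, and in each $\rho_i$ with $i\ge2$ take (up to) $r$ terms $\ket{\phi_{ij}}=(\ket{0}+\ket{k})/\sqrt2$ over $r$ fresh indices $k$. The anchor projector fixes a representative of the column $U\ket{0}$; then for each bridge, knowing the projector onto $(U\ket{0}+U\ket{k})/\sqrt2$ together with the already-fixed $U\ket{0}$ and the known input overlap (which equals $1/\sqrt2$) pins down the missing phase, and hence $U\ket{k}$ exactly, by a routine linear-algebraic computation. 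Here $\rho_1$ covers the indices $0,\dots,r-1$ and the remaining $N-1$ states cover $r$ new indices each, for $Nr\ge 2^n$ indices in $N=\ceil{2^n/r}$ states, giving the second claim.

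I expect the main difficulty to be exactly this phase-synchronization in the mixed case --- both designing the bridge pattern and verifying that it fits into precisely $\ceil{2^n/r}$ states of the required product form --- whereas the entangled case is essentially immediate once one notices that the output's Schmidt decomposition is literally handed to the learner in its classical description. Two minor points also need checking: a classical mixture of the stated form has operator rank at most $r$ (being a sum of $r$ rank-one terms), and orthonormal second-register vectors $\ket{\psi_{ij}}$ exist, which only requires $r\le 2^n$ --- the complementary case $r\ge 2^n$ reducing to the single-sample Choi-state (resp.\ full-rank mixed) construction.
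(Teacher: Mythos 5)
Your proposal is correct. The entangled half is essentially identical to the paper's: both partition the computational basis into $\ceil{2^n/r}$ blocks of size at most $r$, feed in the block-wise maximally entangled states, and read off the columns of $U$ from the Schmidt decomposition of each output. For the mixed half you diverge from the paper, and in a way that matters: the paper inputs the diagonal mixtures $\rho_j=\sum_i p_j\ketbra{ii}$ and asserts one can ``read off all the amplitudes of $U\ket{i}$'' from the output density matrix, but that matrix equals $\sum_i p_j\,(U\ketbra{i}U^\dagger)\otimes\ketbra{i}$ and its entries are only the products $\braket{k}{U|i}\overline{\braket{l}{U|i}}$ within a single column $i$; the relative phases between columns are information-theoretically absent from the data, so the diagonal-input construction pins down $U$ only up to right-multiplication by a diagonal phase matrix $D$, and $\davg(UD,U)$ is not zero (it is $\Theta(1)$ for generic $D$). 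Your bridge construction---an anchor term $\ketbra{0}$ plus terms projecting onto $(\ket{0}+\ket{k})/\sqrt{2}$, each tagged by an orthonormal ancilla label---supplies exactly the missing cross-column information: once a representative $v_0=e^{i\beta}U\ket{0}$ is fixed, the identity $2Pv_0-v_0=e^{i\beta}U\ket{k}$ (with $P$ the bridge projector) recovers every column in a common gauge, and your counting ($r$ columns per state, $\ceil{2^n/r}$ states of the required product form and rank) still meets the claimed sample complexity. So your route is not merely a valid alternative; it repairs a phase-synchronization gap that the paper's own mixed-state argument glosses over, at the cost of only the extra bookkeeping you already flagged.
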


Note that, since the number $N$ of training data points in \Cref{prop:classical-description-upper} is independent of the desired accuracy $\epsilon >0$, we can also learn w.r.t.~$\dworst$. In fact, we can even learn the unknown unitary exactly.

We prove \Cref{prop:classical-description-upper} by explicitly constructing the learning algorithms.
We remark that the $r=2^n$ case for entangled data has previously appeared in \cite{sharma2022reformulation}, and a different strategy using mixed states was proposed in \cite{yu2023optimal}.

\begin{proof}
    Let $d=2^n$. 
    We begin by describing the algorithm for entangled data.
    We consider the following set of input states
    \begin{equation}
        \ket{x_j}=\frac{1}{\sqrt{\mathcal{Z}_j}}\sum_{i=(j-1)r+1}^{\min\{jr, d\}}\ket{i}\otimes\ket{i}, \quad j=1, \ldots, \ceil{d/r}.
    \end{equation}
    where the normalization $\mathcal{Z}_j=r$ for $1\leq j\leq\ceil{d/r}-1$ and $\mathcal{Z}_j = d-(\ceil{d/r}-1)r$ for $j=\ceil{d/r}$.
    They all have Schmidt rank at most $r$.
    If we apply $U\otimes I$ on  $\ket{x_j}$, the output state reads
    \begin{equation}
        (U\otimes I)\ket{x_j} = \frac{1}{\sqrt{\mathcal{Z}_j}}\sum_{i=(j-1)r+1}^{\min\{jr, d\}}\sum_{k=1}^d \braket{k}{U|i} \ket{k}\otimes\ket{i}.
    \end{equation} 
    Since we have the classical description, we can directly read off the matrix elements $\braket{k}{U|i}$ with $1\leq k\leq d$ and $(j-1)r+1\leq i\leq \min\{jr, d\}$. Combining different $j$, we can gather all the matrix elements we need to learn $U$.

    Next, we describe the algorithm for mixed state data.
    We consider the input states to be
    \begin{equation}
        \rho_j=\sum_{i=(j-1)r+1}^{\min\{jr, d\}}p_j\ket{ii}\bra{ii}, \quad j=1, \ldots, \ceil{d/r}.
    \end{equation}
    where the uniform mixing probability $p_j=1/r$ for $1\leq j\leq\ceil{d/r}-1$ and $p_j = 1/(d-(\ceil{d/r}-1)r)$ for $j=\ceil{d/r}$.
    Then all $\rho_j$ have rank at most $r$.
    If we apply $U\otimes I$ on  $\ket{x_j}$, the output state becomes
    \begin{equation}
        \rho_j=\sum_{i=(j-1)r+1}^{\min\{jr, d\}} p_j (U\otimes I)\ket{ii}\bra{ii}(U\otimes I)^\dagger, \quad j=1, \ldots, \ceil{d/r}.
    \end{equation}
    We can interpret this output mixed state as randomly choosing a basis state in the ancillary system and applying the unitary to the same state in the original system. 
    Since we have the classical description, we can use the ancillary system as a label for which state we inputted (e.g, $\ket{i}$), and read off all the amplitudes of $U\ket{i}$ on the original system, i.e., a column of the $U$ matrix. 
    Then by combing all the different basis elements $\ket{i}$, $1\leq i\leq d$, we obtain all the matrix elements of $U$.
\end{proof}

Now we move on to the lower bound, which states that any noise-robust unitary learning algorithm needs at least $\Omega(2^n/r)$ samples to learn an arbitrary unknown unitary from classically described data.
The noise-robust requirement here is in accordance with realistic learning scenarios where the tomography of input and output states necessarily involves reconstruction imperfection and noise.
Specifically, we have the following proposition.

\begin{prop}[Lower bounds in learning with classical descriptions, restatement of lower bounds in \Cref{thm:classical-description}]
\label{prop:classical-description-lower}
Let $\epsilon\in (0, 1), \eta=\Theta(\epsilon)$. Let $H_\mathrm{entangle}$ be any learning algorithm that, for any $n$-qubit unitary $U\in U(2^n)$, uses classically described data $\{(\ket{x_i}, \ket{y_i})\}_{i=1}^N$, where $\ket{x_i}$ are bipartite entangled states over two $n$-qubit systems with Schmidt rank at most $r$ and $\ket{y_i}$ are $\eta$-noisy versions of $(U\otimes I)\ket{x_i}$ satisfying $\dtr(\ket{y_i}, (U\otimes I)\ket{x_i})\leq \eta$, to output a $\hat{U}$ such that $\davg(\hat{U}, U)\leq \epsilon$.
Then $H_\mathrm{entangle}$ needs at least $N\geq \Omega(2^n/r)$ samples.

Similarly, let $H_\mathrm{mixed}$ be any learning algorithm that, for any $n$-qubit unitary $U\in U(2^n)$, uses classically described data $\{(\rho_i, \sigma_i)\}_{i=1}^N$, where $\rho_i$ are classically mixed states over two $n$-qubit systems with rank at most $r$ of the form 
\begin{equation}
\label{eq:mixed-input-state}
    \rho_i = \sum_{j=1}^r p_{ij} \ketbra{\phi_{ij}} \otimes \ketbra{\psi_{ij}}
\end{equation}
and $\sigma_i$ are $\eta$-noisy versions of $(U\otimes I)\rho_i(U\otimes I)^\dagger$ satisfying $\dtr(\sigma_i, (U\otimes I)\rho_i(U\otimes I)^\dagger))\leq \eta$, to output a $\hat{U}$ such that $\davg(\hat{U}, U)\leq \epsilon$.
Then $H_\mathrm{mixed}$ needs at least $N\geq \Omega(2^n/r)$ samples.
\end{prop}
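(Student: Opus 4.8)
The plan is to prove both lower bounds at once by a dimension-counting/no-free-lunch argument: if $N$ is too small, one can exhibit two unitaries that are consistent with the \emph{same} (noisy) classical data yet are separated by a constant in $\davg$, so that no single output $\hat U$ can be $\epsilon$-close to both. Throughout, write $d=2^n$.

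\emph{Step 1 (a common ``seen subspace'').} For an entangled pure input $\ket{x_i}$ of Schmidt rank at most $r$, the marginal $\tr_2(\ketbra{x_i})$ on the system that $U$ acts on has rank at most $r$; for a mixed input $\rho_i=\sum_{j=1}^r p_{ij}\ketbra{\phi_{ij}}\otimes\ketbra{\psi_{ij}}$ the marginal $\tr_2(\rho_i)=\sum_j p_{ij}\ketbra{\phi_{ij}}$ is a sum of $r$ rank-one terms and again has rank at most $r$. Let $V_i\subseteq\mathbb{C}^d$ be the support of this marginal, so $\dim V_i\le r$, and set $V=\mathrm{span}\big(\bigcup_{i=1}^N V_i\big)$, so $\dim V\le Nr$. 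Since $\ket{x_i}\in V\otimes\mathbb{C}^d$ (resp.\ $\mathrm{supp}(\rho_i)\subseteq V\otimes\mathbb{C}^d$), any two unitaries $U_1,U_2$ with $U_1|_V=U_2|_V$ produce identical noiseless outputs on every training point; in particular, exact data for $U_1$ is simultaneously valid $\eta$-noisy data for $U_2$. This reduces both models to the same statement about $\dim V$.

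\emph{Step 2 (no-free-lunch core).} Assume for contradiction that $N$ is below a small constant times $d/r$, so that $\dim V\le d/2$ and hence $\dim W\ge d/2\ge\dim V$ for $W=V^\perp$. Fix any unitary $U_1$ and let $U_2=U_1P$, where $P$ is the identity on $V$ and acts on $W$ as a unitary chosen so that $\tr(P|_W)=-\dim V$ — possible precisely because $\dim W\ge\dim V$ — so that $\tr(U_1^\dagger U_2)=\tr(P)=0$. By the closed form \eqref{eq:davg} and unitary invariance of $|\tr(U_1^\dagger U_2)|$, $\davg(U_1,U_2)^2=1-\tfrac{1}{d+1}=\tfrac{d}{d+1}$, so $\davg(U_1,U_2)\ge\sqrt{2/3}$. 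By Step~1 the learner receives identical (noisy) data whether the true unitary is $U_1$ or $U_2$ and hence outputs the same $\hat U$; the triangle inequality for $\davg$ (\Cref{lem:tri-ineq-avg}) then forces $\davg(U_1,U_2)\le 2\epsilon$, a contradiction for all $\epsilon$ below a universal constant. To get the stated success probability $2/3$ (and to push $\epsilon$ up toward the trivial threshold), I would instead let the true unitary be drawn uniformly from a family $\{U_1P_k\}$ with $P_k|_V=I$ and pairwise $\davg$-separation $\Omega(1)$: the same deterministic-transcript observation shows $\hat U$ is $\epsilon$-close to at most one member, so the success probability is a constant $<2/3$. This gives $N=\Omega(2^n/r)$, and the argument is verbatim the same in the mixed-state model using the support of $\tr_2(\rho_i)$.

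\emph{Main obstacle.} The delicate point is the quantitative matching with the upper bound $\lceil 2^n/r\rceil$: one must verify $\dim V\le Nr$ (i.e.\ that the first-system marginal of a rank-$r$ separable state has rank at most $r$) and then build $P$ on $W$ so that it simultaneously (i) fixes $V$, keeping the data consistent, and (ii) makes $|\tr(U_1^\dagger U_2)|$ small, so $\davg$ is large — which is only possible while $\dim W\ge\dim V$, and this is exactly what produces the $2^n/r$ threshold. A secondary point is handling adaptive learners, done by an easy induction on the query index using that every partial seen subspace lies inside the global $V$; the $\eta$-noise-robustness is what allows us to hand literally identical transcripts to all candidate unitaries in the approximate-data setting. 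This dimension-counting argument is the quantitative core underlying the reformulated quantum no-free-lunch theorem (\Cref{thm:qnfl}).
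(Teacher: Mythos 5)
Your proof is essentially correct for the statement as written, but it takes a genuinely different route from the paper. The paper derives \Cref{prop:classical-description-lower} as a corollary of the information-theoretic no-free-lunch theorem (\Cref{thm:qnfl}): it lower-bounds the packing number of $U(2^n)$ w.r.t.\ $\davg$ by $\Omega(4^n\log(1/\epsilon))$, upper-bounds the per-sample information by the metric entropy $\mathcal{O}(2^n r\log(1/\epsilon))$ of the set of possible rank-$r$ outputs, and divides, using a Fano-type argument over a quantized (covering-net) version of the data. Your argument is instead a direct two-point indistinguishability bound: $N$ inputs whose first-system marginals each have rank at most $r$ only probe an $Nr$-dimensional subspace $V$, and once $\dim V\le 2^{n}/2$ one can take $U_2=U_1P$ with $P|_V=I$ and $\tr(P)=0$, so that the data is \emph{literally identical} while $\davg(U_1,U_2)\ge\sqrt{2/3}$. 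Your route is more elementary and, notably, does not need the $\eta$-noise-robustness hypothesis at all (the paper needs it to discretize the outputs so that Fano's inequality applies), so in that respect it proves something stronger; the paper's route buys the general \Cref{thm:qnfl} framework (arbitrary ensembles, the $\Omega(2^n)$ special case, locally scrambled distributions) and handles probabilistic success guarantees cleanly.

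Two caveats you should make explicit. First, your reduction only yields the bound for $\epsilon$ below a universal constant (roughly $\tfrac12\sqrt{2/3}$), since the maximal achievable separation in $\davg$ is bounded by $1$; this is fine for the $\Omega(2^n/r)$ claim (and the paper's packing-number lower bound likewise degenerates for $\epsilon$ near $1$), but it should be stated. Second, and more substantively: for a \emph{randomized} learner with a success-probability guarantee, your proposed fix via a family $\{U_1P_k\}$ is circular, because the subspace $V$ (and hence the family) depends on the transcript, which depends on the random seed — so there is no single fixed hard distribution over unitaries against which all seeds can be compared. For the deterministic guarantee stated in \Cref{prop:classical-description-lower} this issue does not arise and your induction over the (seed-fixed) transcript handles adaptivity correctly, but if you want the ``probability at least $2/3$'' version you would need either a Yao-style minimax argument with a prior chosen independently of the algorithm, or to fall back on the mutual-information accounting that the paper uses.
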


\Cref{prop:classical-description-lower} is a consequence of the following information-theoretic reformulation of the quantum no-free-lunch theorem.
The intuition behind this theorem is simple.
On the one hand, to learn a unitary, we have to gather enough information to specify it. This required amount of information is quantified by the metric entropy of the unitary class.
On the other hand, the information provided by each sample is limited and can be characterized by the metric entropy of the output state space.
Therefore, the number of samples needed to learn the unitary is given by the former divided by the latter.
In particular, we can see that the data requirement can be reduced if we increase the amount of information carried by each sample, represented by the metric entropy in the denominator.

\begin{theorem}[Information-theoretic reformulation of quantum no-free-lunch theorem]
\label{thm:qnfl}
Let $\eta, \epsilon\in (0, 1)$.
Let $S$ be a set of input states (possibly with ancillas) and $P$ be a distribution over $S$. Let $\{\rho_i \}_{i=1}^N\subset S^N$ be $N$ classically described input states.
Suppose that after applying the unknown $n$-qubit unitary $U$ from a class $\mathcal{U}\subseteq U(2^n)$ of unitaries, they are transformed into the output states $\{\sigma_i\}_{i=1}^N$ through the map $f_U:\rho_i\mapsto\sigma_i = f_U(\rho_i)$.
Let $\tilde{\sigma}_i$ be an $\eta$-noisy version of $\sigma_i$ satisfying $\dtr(\tilde{\sigma}_i, \sigma_i)\leq \eta$.
Let $\mathcal{N}_\eta=\sup_{\rho\in S} \mathcal{N}(\{f_V(\rho): V\in \mathcal{U}\}, \dtr, \eta)$ be the maximal covering number of the set of all possible output states with different unitary acting on the input states.
Let $\mathcal{F}_\mathcal{U}=\{f_V: V\in\mathcal{U}\}$ be the set of maps and $d_{P}(f_V, f_W) = \sqrt{\mathbb{E}_{\rho\sim P}[\dtr(f_V(\rho), f_W(\rho))^2]}$ be the root mean squared trace distance.
Then any learning algorithm $H$ that uses the $\eta$-noisy classically described data $\{\rho_i, \tilde{\sigma}_i\}_{i=1}^N$ and outputs a $\hat{U}$ such that $d_{P}(f_{\hat{U}}, f_U)\leq \epsilon$ with probability at least $2/3$ needs at least 
\begin{equation}
    N\geq \Omega\left(\frac{\log\mathcal{M}(\mathcal{F}_\mathcal{U}, d_{P}, 2\epsilon+6\eta)}{\log\mathcal{N}_\eta}\right)
\end{equation}
samples.

In particular, if $\eta=\Theta(\epsilon)$, $\mathcal{U}=U(2^n)$, $P$ is a locally scrambled ensemble up to the second moment over $n$-qubit pure states (e.g., $n$-qubit Haar measure), $S$ is the support of $P$, and $f_U(\rho)=U \rho U^\dagger$, then at least $\Omega(2^n)$ samples are needed.
\end{theorem}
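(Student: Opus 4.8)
The plan is to prove the lower bound by the standard route of converting learning into identification over a well-separated (packing) net and then bounding the number of distinguishable interaction transcripts by the output-state covering number $\mathcal{N}_\eta$. First I would fix a maximal $(2\epsilon+6\eta)$-packing net $\mathcal{M}=\{f_{V_1},\dots,f_{V_m}\}\subseteq\mathcal{F}_{\mathcal U}$ with respect to $d_P$, so $m=\mathcal{M}(\mathcal{F}_{\mathcal U},d_P,2\epsilon+6\eta)$ and $d_P(f_{V_a},f_{V_b})>2\epsilon+6\eta$ for $a\neq b$. A learner $H$ producing $\hat U$ with $d_P(f_{\hat U},f_U)\le\epsilon$, composed with the rounding map $\hat U\mapsto\phi(\hat U):=\argmin_{f_{V_a}\in\mathcal{M}}d_P(f_{\hat U},f_{V_a})$, recovers the true index whenever $U$ lies in the packing net: if $U=V_a$ then $d_P(f_{\hat U},f_{V_a})\le\epsilon<(2\epsilon+6\eta)/2$ forces $\phi(\hat U)=f_{V_a}$, and this holds with probability at least $2/3$.

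The crux is to discretize the interaction. For each input state $\rho$, let $\mathcal{G}_\rho$ be a minimal $\eta$-covering net of $\{f_V(\rho):V\in\mathcal U\}$ in trace distance, so $|\mathcal{G}_\rho|\le\mathcal{N}_\eta$. Against the true unitary $U$, I have the adversary present, at each round with queried input $\rho_i$, the pair $(\rho_i,\tilde\sigma_i)$ where $\tilde\sigma_i\in\mathcal{G}_{\rho_i}$ is the deterministically chosen nearest net element to $f_U(\rho_i)$; since $\dtr(\tilde\sigma_i,f_U(\rho_i))\le\eta$ this is legal $\eta$-noisy data, so the learner's guarantee still applies. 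Because the adversary is deterministic and each query the learner makes is a deterministic function of its internal randomness $R$ and the discretized history, the transcript $T=((\rho_1,\tilde\sigma_1),\dots,(\rho_N,\tilde\sigma_N))$ takes at most $\mathcal{N}_\eta^N$ values given $R$. Drawing $V$ uniformly from $\{V_1,\dots,V_m\}$ (independently of $R$), the identification guarantee gives $\Pr[\phi(\hat U)=f_V]\ge2/3$ with $\phi(\hat U)$ a function of $(T,R)$, so Fano's inequality yields $(2/3)\log m-O(1)\le I(V;T\mid R)\le H(T\mid R)\le N\log\mathcal{N}_\eta$, i.e. $N\ge\Omega(\log m/\log\mathcal{N}_\eta)$. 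The slack $6\eta$ is only for bookkeeping; any packing radius strictly above $2\epsilon$ suffices for the rounding step.

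For the "in particular" statement I specialize $\mathcal U=U(2^n)$, $f_U(\rho)=U\rho U^\dagger$, $P$ a locally scrambled (up to the second moment) ensemble supported on pure states such as the $n$-qubit Haar measure, and $d=2^n$. Writing $\rho=\ketbra\psi$ gives $\dtr(U\ketbra\psi U^\dagger,V\ketbra\psi V^\dagger)=\dtr(U\ket\psi,V\ket\psi)$, so $d_P(f_U,f_V)$ is exactly the locally scrambled average-case distance $d_P(U,V)$ of \Cref{lem:equi_local_scram}, hence equivalent up to universal constants to $\davg(U,V)$ and to $d_F'(U,V)$ by \Cref{lem:equi_local_scram} and \Cref{lem:dist-df'}. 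Thus the numerator is $\log\mathcal{M}(\mathcal{F}_{U(d)},d_P,2\epsilon+6\eta)=\Omega(d^2\log(1/\epsilon))$ by the packing-number lower bound for the unitary group (\Cref{lem:2q-unitaries} together with the quotient reduction \Cref{lem:quotient-packing}), for $\epsilon,\eta$ below a small constant. For the denominator, since $U(d)$ acts transitively on pure states, $\{f_V(\ketbra\psi):V\in U(d)\}$ is the whole set of $d$-dimensional pure states for every $\rho$, whose $\eta$-covering number in trace distance is at most $(O(1)/\eta)^{O(d)}$ by a volumetric bound on the unit sphere in $\mathbb{C}^d$, so $\log\mathcal{N}_\eta=O(d\log(1/\eta))$. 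With $\eta=\Theta(\epsilon)$ the logarithmic factors cancel and $N\ge\Omega\!\big(d^2\log(1/\epsilon)/(d\log(1/\eta))\big)=\Omega(d)=\Omega(2^n)$.

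The step I expect to be most delicate is the transcript count under adaptivity and randomness: the nets $\mathcal{G}_{\rho_i}$ depend on queries that depend on earlier, already-discretized answers, so one must argue carefully that once the discretization is fixed deterministically the transcript genuinely lies, conditioned on the learner's internal coins, in a product of $N$ sets each of size at most $\mathcal{N}_\eta$, and that the chain $I(V;T,R)=I(V;T\mid R)\le H(T\mid R)$ is valid because $V$ is independent of $R$. Secondary care is needed to confirm $\mathcal{N}_\eta$ is the correct finite quantity when $\mathcal U$ or $S$ is infinite, and to track the exact triangle inequalities among the presented data, the true outputs, and $d_P$ that produce the $2\epsilon+6\eta$ radius; both are routine but should be written out.
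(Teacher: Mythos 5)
Your proposal is correct and follows essentially the same route as the paper: reduce learning to identification over a $(2\epsilon+6\eta)$-packing of $\mathcal{F}_{\mathcal{U}}$, exploit noise-robustness to feed the learner outputs quantized to an $\eta$-covering net of size $\mathcal{N}_\eta$, and combine Fano's inequality with a transcript/data-processing bound of $N\log\mathcal{N}_\eta$; the ``in particular'' computation via \Cref{lem:2q-unitaries}, \Cref{lem:quotient-packing}, and \Cref{lem:dist-df'} is also the paper's. The only difference is organizational: the paper factors the argument through a standalone discrete-function learning lemma and a quantized function class $\mathcal{F}^Q$ (which is where its extra $\eta$-slack in the packing radius comes from), whereas you inline the quantization as an adversary and round $\hat{U}$ directly to the packing net, correctly observing that any radius strictly above $2\epsilon$ then suffices.
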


We remark that $\eta=\Theta(\epsilon)$ is a convenient choice of noise level for stating the results, but in fact a weaker assumption $\log(1/\eta)=\Theta(\log(1/\epsilon))$ suffices.

In the following, we will first show that \Cref{thm:qnfl} implies \Cref{prop:classical-description-lower} (the lower bounds in \Cref{thm:classical-description}). Then we will turn to the proof of \Cref{thm:qnfl}.

\begin{proof}[Proof of \Cref{prop:classical-description-lower}]
    In both cases (entangled or mixed), we prove the $\Omega(2^n/r)$ lower bound in two steps via \Cref{thm:qnfl}: (1) show that the numerator $\log\mathcal{M}(\mathcal{F}_\mathcal{U}, d_{P}, 2\epsilon+6\eta)$ in \Cref{thm:qnfl} is at least $\Omega(4^n\log(1/\epsilon))$; and (2) show that the denominator $\mathcal{N}_\epsilon$ is at most $\bigo(2^n r\log(1/\epsilon))$ when the input states are either entangled pure states of Schmidt rank at most $r$ or mixed states of rank at most $r$.
    Then the desired results follow.
    
    For step (1), we begin by defining the distribution $P$ with respect to which the performance in \Cref{thm:qnfl} is measured. 
    For both entangled and mixed cases, we define $P$ to be the distribution of $\ket{\psi}\otimes \ket{0}^{\otimes n}$ where $\ket{\psi}$ is a Haar-random state on the original system, and $\ket{0}^{\otimes n}$ is a fixed state on the ancillary system.
    Note that this state is indeed both a bipartite entangled state with Schmidt rank at most $r$ and of the form \Cref{eq:mixed-input-state} with rank at most $r$. 
    Moreover, since in both cases, the map $f_U$ is given by acting the unitary $U$ on the original system and the identity on the ancillary system, the distance metric $d_{P}(f_V, f_W)$ is the same as $\davg$.
    Therefore, the packing number satisfies
    \begin{equation}
        \mathcal{M}(\mathcal{F}_{U(2^n)}, d_{P}, 2\epsilon+6\eta) = \mathcal{M}(U(2^n), \davg, 2\epsilon+6\eta).
    \end{equation}

    To find the packing number $\mathcal{M}(U(2^n), \davg, 2\epsilon+6\eta)$, we invoke the covering number bound for $U(2^n)$ with respect to the normalized Frobeinus norm $d_F$ (\Cref{lem:2q-unitaries}), the fact that quotient out global phase only change the metric entropy by a constant (\Cref{lem:quotient-packing}), and the equivalence of $d_F'$ and $\davg$ (\Cref{lem:dist-df'}, Item 1). We have
    \begin{equation}
        \log \mathcal{M}(\mathcal{F}_{U(2^n)}, d_{P}, 2\epsilon+6\eta)= \log\mathcal{M}(U(2^n), \davg, 2\epsilon+6\eta)\geq \Omega\left(4^n\log\frac{1}{\epsilon}\right),
    \end{equation}
    where we used $\eta=\Theta(\epsilon)$.

    Next, for step (2), we compute $\mathcal{N}_\eta$.
    For entangled data, note that applying unitaries on only the first $n$ qubits does not change the bipartite Schmidt rank $r$, so the output states are pure states of the form $\ket{\chi} = \sum_{i, j=1}^{2^n} A_{ij}\ket{i}\otimes \ket{j}$, where $\|A\|_F=1$ because of normalization, and the rank of $A$ corresponds to the Schmidt rank which is at most $r$.
    Furthermore, the Euclidean distance between the output states is equal to the Frobenius distance between the corresponding $A$-matrices.
    With this correspondence, we can explicitly construct a covering net over the output states as follows.
    We take a minimal $\eta$-covering net $\mathcal{N}'$ over the set of complex matrices $A$ with bounded rank $r$ and $\|A\|_F=1$ with respect to the Frobenius distance.
    Since they are contained in the unit ball ($\|A\|_F\leq 1$) in a real linear space of dimension $2\cdot 2^n\cdot r$ \cite[Theorem 1]{flanders1962on}, by the monotinicity of covering number and the standard covering number bound for Euclidean balls via a volume argument \cite[Corollary 4.2.13]{vershynin2018high}, we have $\log|\mathcal{N}'|\leq \mathcal{O}(2^nr\log(1/\eta))$.
    Meanwhile, similar to the proof in \Cref{lem:dist-spectral-diamond}, the trace distance between any two pure states $\ket{\psi}, \ket{\phi}$ are bounded by the Euclidean distance, and thus the Frobenius distance between the corresponding $A$ matrices:
    \begin{equation}
        \dtr(\ket{\psi}, \ket{\phi}) = \sqrt{1-|\braket{\psi}{\phi}|^2}\leq \sqrt{2(1-|\braket{\psi}{\phi}|)}\leq \sqrt{2(1-\mathrm{Re}[\braket{\psi}{\phi}])} = \norm{\ket{\psi}-\ket{\phi}}_2.
    \end{equation}
    Therefore $\mathcal{N}'$ gives an $\eta$-covering net over the output states with respect to the trace distance $\dtr$.
    Hence, $\log\mathcal{N}_\eta\leq \log|\mathcal{N}'| \leq \mathcal{O}(2^nr\log(1/\eta)) = \mathcal{O}(2^nr\log(1/\epsilon))$ since $\eta=\Theta(\epsilon)$, and from \Cref{thm:qnfl} we have the desired lower bound
    \begin{equation}
        N\geq \Omega\left(\frac{4^n\log(1/\epsilon)}{2^n r\log(1/\epsilon)}\right)=\Omega\left(\frac{2^n}{r}\right).
    \end{equation}

    The case of mixed states is similar.
    For a given input state $\rho = \sum_{i=1}^r p_i\ketbra{\phi_i}\otimes \ketbra{\psi_i}$, the output state reads
    \begin{equation}
        \sigma = \sum_{i=1}^r p_iU\ketbra{\phi_i}U^\dagger\otimes \ketbra{\psi_i}.
    \end{equation}
    Now we take a minimal $\eta$-covering net $\mathcal{N}''$ over all pure $n$-qubit states with respect to Euclidean distance, which is a unit ball in a $2\cdot 2^n$ dimensional real linear space. 
    By standard covering number bound for Euclidean balls, we know $\log|\mathcal{N}''|\leq \mathcal{O}(2^n\log(1/\eta))=\mathcal{O}(2^n\log(1/\epsilon))$. Then for any $U\ket{\phi_i}$, there exists a $\ket{\eta_i}\in\mathcal{N}''$ such that $\norm{U\ket{\phi_1}-\ket{\eta_1}}_2\leq \eta$.
    Let $\sigma'=\sum_{i=1}^r p_i\ketbra{\eta_i}\otimes \ketbra{\psi_i}$. 
    Then the trace distance is bounded by
    \begin{equation}
    \begin{split}
        \frac{1}{2}\|\sigma-\sigma'\|_1 &\leq \frac{1}{2}\sum_{i=1}^r p_i\norm{U\ketbra{\phi_i}U^\dagger\otimes \ketbra{\psi_i} - \ketbra{\eta_i}\otimes \ketbra{\psi_i}}_1 \\
        &\leq \frac{1}{2}\sum_{i=1}^r p_i \norm{U\ket{\phi_i}\otimes\ket{\psi_i} - \ket{\eta_i}\otimes \ket{\psi_i}}_2 \\
        &=\frac{1}{2}\sum_{i=1}^r p_i \norm{U\ket{\phi_i} - \ket{\eta_i}}_2 \leq \frac{\eta}{2}\sum_{i=1}^r p_i=\frac{\eta}{2},
    \end{split}
    \end{equation}
    where we have used the subadditivity of trace norm, the fact that trace distance is upper bounded by Euclidean norm for pure states, and $\sum_{i=1}^r p_i=1$.
    Hence the set 
    \begin{equation}
        \left\{\sum_{i=1}^r p_i\ketbra{\eta_i}\otimes \ketbra{\psi_i}: \ket{\eta_i}\in \mathcal{N}'', 1\leq i\leq r\right\}
    \end{equation} 
    forms an $\eta/2$-covering net of set of the output states and has cardinality $|\mathcal{N}''|^r$.
    Therefore, we have $\log\mathcal{N}_\eta = r\log|\mathcal{N}''| = \mathcal{O}(2^n r\log(1/\epsilon))$.
    From \Cref{thm:qnfl}, we again arrive at the desired result \begin{equation}
        N\geq \Omega\left(\frac{4^n\log(1/\epsilon)}{2^n r\log(1/\epsilon)}\right)=\Omega\left(\frac{2^n}{r}\right).
    \end{equation}
    This concludes the proof of \Cref{prop:classical-description-lower}, and together with \Cref{prop:classical-description-upper}, we have proved \Cref{thm:classical-description}.
\end{proof}

Now we move on to prove our quantum no-free-lunch theorem (\Cref{thm:qnfl}).
We first establish the following information-theoretic lower bound on the sample complexity of learning discrete functions.
We remark that a version for binary-valued functions was proved in a different fashion in \cite[Proposition 8]{eskenazis2022low} and \cite[Lemma 4.8]{benedek1991learn}.

\begin{prop}[Information-theoretic lower bound for learning discrete functions]
\label{prop:learning-discrete-func}
    Let $\epsilon>0$, $k\in\mathbb{N}$ and $\mathcal{F}$ be a class of functions mapping $\mathcal{X}$ to $\mathcal{Y}=\{1, \ldots, k\}$ with a distance metric $d$.
    Any learning algorithm $H$ that uses $N$ samples $\{x_i\in\mathcal{X}, y_i=f(x_i)\}_{i=1}^N$ and outputs an $\hat{f}$ such that $d(\hat{f}, f)\leq \epsilon$ with probability at least 2/3 for any $f\in\mathcal{F}$ must use at least 
    \begin{equation}
        N\geq \Omega\left(\frac{\log\mathcal{M}(\mathcal{F}, d, 2\epsilon)}{\log k}\right)
    \end{equation}
    samples.
\end{prop}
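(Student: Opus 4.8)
The plan is to reduce learning to a hypothesis-distinguishing problem over a maximal packing net and then run a standard Fano-type information argument. First I would fix a maximal $2\epsilon$-packing net $\mathcal{P}=\{f_1,\dots,f_M\}\subseteq\mathcal{F}$ with $M=\mathcal{M}(\mathcal{F},d,2\epsilon)$, so that $d(f_i,f_j)>2\epsilon$ for all $i\neq j$. The key observation is that any $\hat f$ with $d(\hat f,f_i)\le\epsilon$ is within $\epsilon$ of \emph{exactly one} element of $\mathcal{P}$: if $d(\hat f,f_i)\le\epsilon$ and $d(\hat f,f_j)\le\epsilon$ then the triangle inequality forces $d(f_i,f_j)\le 2\epsilon$, a contradiction. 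Hence a learner $H$ that, for every target $f\in\mathcal{F}$, outputs $\hat f$ with $d(\hat f,f)\le\epsilon$ with probability at least $2/3$ yields, by post-processing $\hat f$ to the unique nearby packing element, an algorithm that correctly identifies the index $i$ of any $f_i\in\mathcal{P}$ with probability at least $2/3$, using the same $N$ labeled samples $\{(x_j,y_j=f_i(x_j))\}_{j=1}^N$.

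Next I would set up the information-theoretic bound. Draw the index $I$ uniformly from $\{1,\dots,M\}$, let the target be $f_I$, so $H(f_I)=\log M$. Let $R$ denote the internal randomness of $H$ (independent of $I$) and $T=((x_1,y_1),\dots,(x_N,y_N))$ the transcript of the (possibly adaptively chosen) query points and their labels $y_j=f_I(x_j)$. Since the learner's output $\hat f$ is a function of $(T,R)$, data processing gives $I(f_I;\hat f)\le I(f_I;T,R)=I(f_I;T\mid R)$, using $R\perp f_I$. Conditioning on $R$ makes the learner deterministic, so given $R$ the point $x_1$ is fixed, $x_{j}$ is determined by $(x_{<j},y_{<j})$, and $T$ is a deterministic function of the label sequence $(y_1,\dots,y_N)$; moreover $T$ is determined by $(f_I,R)$, so $H(T\mid f_I,R)=0$. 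Therefore $I(f_I;T\mid R)=H(T\mid R)=H(y_1,\dots,y_N\mid R)\le\sum_{j=1}^N H(y_j\mid R)\le N\log k$, because each $y_j\in\{1,\dots,k\}$. Thus $I(f_I;\hat f)\le N\log k$.

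On the other side, the distinguisher constructed above recovers $I$ from $\hat f$ with probability at least $2/3$, so Fano's inequality gives $H(f_I\mid\hat f)\le 1+\tfrac{1}{3}\log M$, hence $I(f_I;\hat f)=\log M-H(f_I\mid\hat f)\ge\tfrac{2}{3}\log M-1$. Chaining with the upper bound yields $\tfrac{2}{3}\log M-1\le N\log k$, i.e.\ $N\ge\Omega(\log M/\log k)=\Omega(\log\mathcal{M}(\mathcal{F},d,2\epsilon)/\log k)$, with the case of bounded $M$ being trivial.

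The packing reduction and Fano's inequality are routine; the only step requiring a little care is the transcript bound $I(f_I;T,R)\le N\log k$ for \emph{adaptive and randomized} learners. One must condition on $R$ first to eliminate the algorithmic randomness, and then invoke the chain rule so that round $j$ contributes at most $\log k$ bits through $y_j$ alone — the adaptively chosen $x_j$ contributing nothing once the past is fixed, since it is then a deterministic function of $(x_{<j},y_{<j})$. I expect this (still mild) point to be the main obstacle.
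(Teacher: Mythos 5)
Your proposal is correct and follows essentially the same route as the paper's proof: a maximal $2\epsilon$-packing net, reduction of learning to identifying the packing element via the triangle inequality, Fano's inequality for the lower bound on mutual information, and the data-processing/chain-rule bound $I \leq N\log k$ since each label lies in $\{1,\dots,k\}$. Your extra care in conditioning on the learner's internal randomness and handling adaptively chosen query points is a mild strengthening of the paper's argument (which fixes the inputs in advance), but it does not change the substance of the proof.
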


\begin{proof}[Proof of \Cref{prop:learning-discrete-func}]
    We begin by taking a maximal $2\epsilon$-packing $\mathcal{P}$ of $\mathcal{F}$, i.e., $|\mathcal{P}| = \mathcal{M}(\mathcal{F}, d, 2\epsilon)$ and for any $f_i\neq f_j \in \mathcal{P}$, $d(f_i, f_j)>2\epsilon$. 
    Now we design a communication protocol between two parties, Alice and Bob, as follows. 
    The packing $\mathcal{P}$ is shared by both parties. 
    Alice takes a random variable $W$ uniformly sampled from $\{1, \ldots, \mathcal{M}(\mathcal{F}, d, 2\epsilon)\}$ and picks the corresponding function $f_W$ from the packing $\mathcal{P}$.
    She then feeds the inputs $x=(x_1, \ldots, x_N)$ into $f_W$, generating a dataset $Z = ((x_1, f_W(x_1)), \ldots, (x_N, f_W(x_N)))$, and sends the dataset to Bob. 
    Bob's task is to use this dataset to determine which function Alice used. 
    Suppose Bob is given a learning algorithm described as in the proposition. The algorithm will learn from the dataset and output a hypothesis function $\hat{f}$ that satisfies
    \begin{equation}
        \mathbb{P}[d(\hat{f}, f_W)\leq \epsilon]\geq 2/3 \, ,
    \end{equation}
    no matter which $W$ was chosen by Alice.
    With $\hat{f}$ in hand, Bob can make the guess
    \begin{equation}
        \hat{W} = \mathrm{argmin}_{f_w \in \mathcal{P}}d(\hat{f}, f_w).
    \end{equation}
    Note that as long as $d(\hat{f}, f_W) \leq \epsilon$, then for any $f_i\neq f_W \in \mathcal{P}$,
    \begin{equation}
        d(\hat{f}, f_i) \geq d(f_W, f_i) - d(\hat{f}, f_W) > 2\epsilon-\epsilon=\epsilon \geq d(\hat{f}, f_W).
    \end{equation}
    Therefore, the error probability of Bob's guess is bounded by
    \begin{equation}
        \mathbb{P}[\hat{W}\neq W] 
        = \mathbb{P}[\exists i\neq W: d(\hat{f}, f_i) \leq d(\hat{f}, f_W)] \leq \mathbb{P}[d(\hat{f}, f_W) > \epsilon] \leq 1/3.
    \end{equation}
    By Fano's inequality \cite[Theorem 2.10.1]{cover1999elements}, the conditional entropy $S(W|\hat{W}) \leq s_2(1/3) + \frac{1}{3} \log \mathcal{M}(\mathcal{F}, d, 2\epsilon)$, where $s_2(\delta) = -\delta\log\delta-(1-\delta)\log(1-\delta)$ is the binary entropy function. Then the mutual information is at least
    \begin{equation}
        I(W;\hat{W}) = S(W) - S(W|\hat{W}) \geq (2/3) \log \mathcal{M}(\mathcal{F}, d, 2\epsilon) - s_2(1/3).
    \end{equation}
    On the other hand, the sample size $N$ controls the amount of information that Bob has access to. 
    Since Bob's guess is produced by the dataset $Z$, by the data processing inequality \cite[Theorem 2.8.1]{cover1999elements}, we have
    \begin{equation}
        I(W;\hat{W}) \leq I(W;Z) = S(Z)-S(Z|W) = S(f_W(x_1), \ldots, f_W(x_N)) \leq N\log k,
    \end{equation}
    where we used $S(Z|W)=0$, since $Z$ is determined by $W$, and the fact that $(f_W(x_1), \ldots, f_W(x_N))$ can take no more than $k^N$ different values.
    Combining the above two inequalities, we arrive at
    \begin{equation}
        N \geq \Omega\left(\frac{\mathcal{M}(\mathcal{F}, d, 2\epsilon)}{\log k}\right).
        \qedhere
    \end{equation}
\end{proof}

With \Cref{prop:learning-discrete-func}, we can prove \Cref{thm:qnfl} by quantizing the output states to the nearest elements in covering nets, similar to an idea employed in \cite{bartlett1994fat}.

\begin{proof}[Proof of \Cref{thm:qnfl}]
    Let $k=\mathcal{N}_\eta$.
    Since $\mathcal{N}_\eta\geq \mathcal{N}(\{f_V(\rho), V\in \mathcal{U}\}, \dtr, \eta)$ for every $\rho\in S$, we can find an $\eta$-covering net $\mathcal{N}_\rho$ of size $k$ for each $\rho\in S$.
    We label the elements of $\mathcal{N}_\rho$ using $\{1, \ldots, k\}$ and define $L_\rho(\sigma)\in [k]$ as the label of a covering net element $\sigma\in \mathcal{N}_\rho$.

    Now we define the quantized function $Qf_U$ that maps an input state $\rho$ to an element of the covering net $\mathcal{N}_\rho$.
    Specifically, for any $\rho\in S$ and any $\sigma \in \{f_V(\rho), V\in \mathcal{U}\},$ there exists a $\sigma' \in \mathcal{N}_\rho$, such that $\dtr(\sigma, \sigma')\leq \eta$. 
    For any unitary $U\in\mathcal{U}$, we define
    \begin{equation}
        Qf_U(\rho) = \mathrm{argmin}_{\sigma \in \mathcal{N}_\rho} \dtr(f_U(\rho), \sigma)
    \end{equation}
    and $LQf_U(\rho) = L_\rho[Qf_U(\rho)]$ be the corresponding label. (Ties are broken arbitrarily.)
    Then $LQf_U$ is a discrete-output function mapping input states $S$ to labels $[k]$ and it is in one-to-one correspondence with $Qf_U$.
    We use $\mathcal{F}^{Q}$ to denote all these labeled quantized functions, $\mathcal{F}^{Q} = \{LQf_U, U\in \mathcal{U}\}$, and define the distance metric on labeled functions as $d_L(LQf_V, LQf_W) = d_{P}(Qf_V, Qf_W)$. 
    A useful property is that for any unitary $U\in\mathcal{U}$, we have 
    \begin{equation}
        d_{P}(f_U, Qf_U) = \sqrt{\mathbb{E}_{\rho\sim P}[\dtr(f_U(\rho), Qf_U(\rho))^2]} \leq \sqrt{\mathbb{E}_{\rho\sim P}[\eta^2]}=\eta.
    \end{equation}

    Now we claim that if there exists a noise-robust learning algorithm $H$ for $\mathcal{U}$ to accuracy $\epsilon$ in $d_{P}$ with probability at least $2/3$, then we can use it to construct a learning algorithm $H^{Q}$ for $\mathcal{F}^{Q}$ to accuracy $\epsilon+2\eta$ in $d_L$ with success probability at least $2/3$. Hence, the sample complexity for $\mathcal{U}$ must satisfy
    \begin{equation}
        N \geq \Omega\left(\frac{\log\mathcal{M}(\mathcal{F}^{Q}, d_L, 2\epsilon+4\eta)}{\log k}\right),
    \end{equation}
    by \Cref{prop:learning-discrete-func}.

    To show this claim, we construct $H^{Q}$ as follows.
    For any $LQf_U\in\mathcal{F}^{Q}$, let the dataset be
    \begin{equation}
        Z = (\rho_1, Qf_U(\rho_1)), \ldots, (\rho_N, Qf_U(\rho_N)).
    \end{equation}
    From the definition of quantized functions, we know that the $Qf_U(\rho_i)$ are $\eta$-noisy version of $f_U(\rho_i)$ because $\dtr(Qf_U(\rho_i), f_U(\rho_i))\leq \eta$.
    Now we define $H^Q$ as 
    \begin{equation}
        H^{Q}[Z] = LQf_{H[Z]}.
    \end{equation}
    Since the learning algorithm $H$ is $\eta$-noise-robust, we have $d_{P}(f_{H[Z]}, f_{U})\leq \epsilon$ and thus $d_{P}(f_{H[Z]}, Qf_U)\leq d_{P}(f_{H[Z]}, f_U) + d_{P}(f_U, Qf_U)\leq \epsilon+\eta$ with probability at least $2/3$.
    Then by the triangle inequality (proved similarly as in \Cref{lem:tri-ineq-avg}), we have
    \begin{equation}
        d_L(H^Q[Z], LQf_U) 
        =d_{P}(Qf_{H[Z]}, Qf_U) 
        \leq d_{P}(Qf_{H[Z]}, f_{H[Z]}) + d_{P}(f_{H[Z]}, Qf_U) 
        \leq \epsilon+2\eta
    \end{equation}
    with probability at least $2/3$.
    Thus the claim is proved.

    At this point, it remains to prove that
    \begin{equation}
        \mathcal{M}(\mathcal{F}^Q, d_L, 2\epsilon+4\eta) \geq \mathcal{M}(\mathcal{F}_\mathcal{U}, d_P, 2\epsilon+6\eta).
    \end{equation}
    To prove this, we can take a maximal $(2\epsilon+6\eta)$-packing $\mathcal{P}$ of $\mathcal{F}_\mathcal{U}$ with respect to $d_{P}$, with $|\mathcal{P}|=\mathcal{M}(\mathcal{F}_\mathcal{U}, d_{P}, 2\epsilon+6\eta)$. Then $\forall f_{U_1} \neq f_{U_2} \in \mathcal{P}$, we have
    \begin{equation}
        2\epsilon+6\eta < d_{P}(f_{U_1}, f_{U_2}) \leq d_{P}(f_{U_1}, Qf_{U_1}) + d_{P}(Qf_{U_1}, Qf_{U_2}) + d_{P}(Qf_{U_2}, U_2) \leq 2\eta + d_{P}(Qf_{U_1}, Qf_{U_2}).
    \end{equation}
    Therefore, $d_L(LQf_{U_1}, LQf_{U_2})=d_P(Qf_{U_1}, Qf_{U_2}) > 2\epsilon+4\eta$. Hence, 
    \begin{equation}
        \mathcal{M}(\mathcal{F}^{Q}, d_L, 2\epsilon+4\eta) \geq |\{LQf_U, U\in \mathcal{P}\}| = |\mathcal{P}| = \mathcal{M}(\mathcal{F}_\mathcal{U}, d_{P}, 2\epsilon+6\eta).
    \end{equation}
    This concludes the proof of the main part in \Cref{thm:qnfl}.

    Finally, we illustrate the special case where $\eta=\Theta(\epsilon)$, $\mathcal{U}=U(2^n)$ is the whole unitary group, $P$ is a locally scrambled ensemble up to the second moment over $n$-qubit pure states (e.g., $n$-qubit Haar measure, see \Cref{def:local-scram-ensem}), $S$ is the support of $P$, and $f_U(\rho) = U\rho U^\dagger$.
    We show that at least $\Omega(2^n)$ samples are needed, thus reproducing the quantum no-free-lunch theorem in the usual sense and generalizing it to locally scrambled ensembles.
    
    To see this, we first compute $\log\mathcal{M}(\mathcal{F}_{U(2^n)}, d_{P}, 2\epsilon+6\eta)$. From the covering number bound for $U(2^n)$ with respect to the normalized Frobeinus norm $d_F$ (\Cref{lem:2q-unitaries}), the fact that quotienting out the global phase only changes the metric entropy by an additive $\mathcal{O}(\log(1/(2\epsilon+6\eta)))$ term (\Cref{lem:quotient-packing}), and by the equivalence of $d_F'$, $\davg$, and $d_P$ (\Cref{lem:dist-df'} Item 1 and \Cref{lem:equi_local_scram}), we know that $\log\mathcal{M}(\mathcal{F}_{U(2^n)}, d_P, 2\epsilon+6\eta)\geq \Omega\left(4^n\log(1/\epsilon)\right)$, where we used $\eta=\Theta(\epsilon)$.
    
    Next, we move on to $\mathcal{N}_\eta$.
    Since the output states are still $n$-qubit pure states, $\mathcal{N}_\eta$ is the covering number of the set of pure states with respect to $\dtr$.
    Considering that $\frac{1}{2}\norm{\ketbra{\psi}}_1$ is less than one for any pure state $\ket{\psi}$, the covering number is upper bounded by the covering number of a unit Euclidean ball in a $\Theta(2^n)$ dimensional linear space.
    Therefore, we have $\log\mathcal{N}_\eta\leq \bigo(2^n\log(1/\eta)) = \bigo(2^n\log(1/\epsilon))$ since $\eta=\Theta(\epsilon)$.
    Hence we arrive at
    \begin{equation}
        N\geq \Omega\left(\frac{4^n\log(1/\epsilon)}{2^n\log(1/\epsilon)}\right) = \Omega(2^n).
    \end{equation}
    This concludes the proof of \Cref{thm:qnfl}.
\end{proof}

The information theoretic version of quantum no-free-lunch theorem (\Cref{thm:qnfl}) also gives us a way to generalize quantum no-free-lunch to a restricted unitary class.
For example, for unitaries with bounded circuit complexity $G$, the packing number in the enumerator is lower bounded by $\Omega(G)$, while the covering number in the denominator is upper bounded by $\bigo(\min\{G\log G+G\log n, 2^n\})$.
This gives us a quantum no-free-lunch theorem for $G$-gate unitaries, where the sample complexity is lower bounded by $\Omega(1)$ for $G\leq \bigo(2^n)$, by $\Omega(G/2^n)$ for $\Omega(2^n)<G\leq \bigo(4^n)$ and $\Omega(2^n)$ for $G\geq \Omega(4^n)$.

\subsection{Computational complexity}
\label{app:unitary-comp-complexity}

Similar to the state learning case, our algorithm for average-case unitary learning described in \Cref{app:avg-case-unitary-up} is not computationally efficient.
In this section, we follow \Cref{app:states-comp-complexity} and first show that there is no polynomial-time algorithm for learning unitaries composed of $G = \mathcal{O}(n\polylog(n))$ two-qubit gates, assuming $\mathsf{RingLWE}$ cannot be solved efficiently on a quantum computer.
This result also holds for unitaries with circuit depth $\mathcal{O}(\polylog(n))$.
Then we invoke a stronger assumption that $\mathsf{RingLWE}$ cannot be solved by any sub-exponential-time quantum algorithm, and show that any quantum algorithm for learning unitaries composed of $\tilde{\bigo}(G)$ gates must use $\exp(\Omega(G))$ time.
Finally, we explicitly construct an efficient learning algorithm for $G=\mathcal{O}(\log n)$, thus establishing $\log n$ gate complexity as a transition point of computational efficiency.

\begin{theorem}[Unitary learning computational complexity lower bound assuming polynomial hardness of \textsf{RingLWE}]
\label{thm:unitary-comp-complexity}
Let $\lambda=n$ be the security parameter.
Let $U$ be a unitary consisting of $G = \mathcal{O}(n\polylog(n))$ gates (or a depth $d = \mathcal{O}(\polylog(n))$ circuit) that implements a pseudorandom function in $\mathcal{RF}$.
Such a unitary $U$ exists by \Cref{coro:PRF-size}.
There exists no polynomial-time quantum algorithm for learning a circuit description of $U$ to  within $\epsilon\leq 1/64$ average-case distance $\davg$ with probability at least $2/3$ from $N = \poly(\lambda)$ queries, if quantum computers cannot solve $\mathsf{RingLWE}$ in polynomial time.
\end{theorem}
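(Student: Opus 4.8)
The plan is to argue by contradiction, following the template of \Cref{thm:states-comp-complexity} but replacing pseudorandom states by pseudorandom functions. Suppose there were a polynomial-time quantum algorithm $\mathcal{A}_0$ that, from $N=\poly(\lambda)$ queries, learns a circuit description of any $n$-qubit unitary of circuit complexity $G=\mathcal{O}(n\polylog(n))$ (equivalently depth $\mathcal{O}(\polylog(n))$) to within $\epsilon\le 1/64$ in $\davg$ with probability at least $2/3$. First I would boost the success probability: since $\davg$ equals, up to a factor $\sqrt2$, the root-mean-squared trace distance $d_Q$ over random tensor products of single-qubit stabilizer states (\Cref{lem:equi_local_scram}), and since $d_Q(\hat U,U)$ is efficiently estimable by preparing such random inputs, applying the learned circuit $\hat U$ and the queried unitary $U$, and running SWAP tests \cite{barenco1997stabilization,buhrman2001quantum}, standard amplification (e.g.\ \cite[Proposition 2.4]{haah2023query}) yields a polynomial-time learner $\mathcal{A}$ with the same accuracy and success probability at least $p=1-1/128$, at only constant overhead. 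Crucially, because $\mathcal{A}$ runs in polynomial time its output is always a $\poly(n)$-size circuit description, so $\hat U$ is efficiently implementable.

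Next I would build a polynomial-time quantum distinguisher $\mathcal{D}$ that, given quantum query access to an unknown $n$-qubit unitary $V$ promised to be either the unitary oracle $O_{f_\lambda(\mathbf{k},\cdot)}$ of a randomly keyed PRF $f\in\mathcal{RF}$ (of the stated circuit complexity by \Cref{coro:PRF-size}) or the unitary oracle $O_g$ of a uniformly random Boolean function $g$, decides which. The distinguisher runs $\mathcal{A}$ with all of its queries routed to $V$, obtaining a circuit for $\hat U$; then it samples a random product of single-qubit stabilizer states $\ket{x}$, prepares $V\ket{x}$ and $\hat U\ket{x}$, performs a SWAP test, and outputs the resulting bit. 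Since $\hat U$ is efficiently implementable, $\mathcal{D}$ is efficient. In the PRF case, with probability at least $p$ we have $\davg(\hat U,V)\le\epsilon$, hence $\E_{\ket{x}}\,\dtr(\hat U\ket{x},V\ket{x})^2\le 2\epsilon^2$ by \Cref{lem:equi_local_scram}, so $\E_{\ket{x}}|\braket{x}{\hat U^\dagger V|x}|^2\ge 1-2\epsilon^2$ and therefore $\Pr[\mathcal{D}(O_f)=1]\ge p(1-\epsilon^2)$ (if necessary one averages over polynomially many SWAP tests to make this robust).

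The core of the argument is the random-function case, where I must show $\Pr[\mathcal{D}(O_g)=1]\le 1-\Omega(1)$. Let $S_{\mathcal{A}}(O_g)$ be the set of unitaries $\mathcal{A}$ may output when its queries are answered by $O_g$; every element has a $\poly(n)$-size description, so by \Cref{cor:covering-avg} with $G=\poly(n)$ this set admits a $\sqrt{\theta/2}$-covering net of cardinality $\exp(\poly(n))$ with respect to $\davg$. Introducing a cutoff $\theta$ and the random variable $O_V\triangleq\max_{\hat U\in S_{\mathcal{A}}(O_g)}\E_{\ket{x}}|\braket{x}{\hat U^\dagger O_g|x}|^2$, as in \Cref{thm:states-comp-complexity} it suffices to prove that for every fixed efficiently-describable $\hat U$ the probability over the uniformly random $g$ that $\E_{\ket{x}}|\braket{x}{\hat U^\dagger O_g|x}|^2$ exceeds $1-\theta/2$ is exponentially small (in a $\poly(n)$ exceeding the net exponent); a union bound then forces $\Pr[O_V>1-\theta/2]=\mathrm{negl}(n)$ and hence $\Pr[\mathcal{D}(O_g)=1]\le 1-\theta/4+\mathrm{negl}(n)+(1-p)$. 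Combining the two cases with $\theta$ a fixed constant, $\epsilon\le 1/64$, and $p=1-1/128$ yields a distinguishing advantage bounded below by an absolute constant, contradicting the quantum security of $\mathcal{RF}$ against polynomial-time adversaries, which by \Cref{coro:PRF-size} (equivalently \Cref{thm:PRF}) would entail a polynomial-time quantum algorithm for \textsf{RingLWE}; this also contradicts \Cref{def:PRF}.

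The step I expect to be the main obstacle is precisely this concentration bound for a fixed $\hat U$ over the uniformly random Boolean function $g$ --- the analog of the Haar-measure concentration used in \Cref{thm:states-comp-complexity}. Unlike a Haar-random state, $O_g$ is a structured (diagonal XOR) oracle tested on structured stabilizer-product inputs, so a single input extracts little randomness from $g$. I would handle this with a second-moment computation: the action of $O_g$ on a given input encodes the values $g(x)$ on the support of that input, and these are independent uniform bits, so $|\braket{x}{\hat U^\dagger O_g|x}|^2$ has mean at most $\tfrac12+o(1)$ over $g$ for any fixed $\hat U$; averaging the SWAP-test statistic over polynomially many independent random stabilizer inputs (all of which $\mathcal{D}$ can prepare efficiently) then drives the tail probability down exponentially via a Hoeffding/Bernstein bound, beating the covering-net cardinality. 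Care is needed in choosing the test register so this encoding is not washed out by the oracle's ancilla/workspace qubits, and in ensuring the averaging over $\ket{x}$ interacts correctly with the adaptive, $g$-dependent choice of $\hat U$ through the covering-net union bound. The sub-exponential-hardness strengthening and the $G=\mathcal{O}(\log n)$ efficient-learnability statement then follow exactly as in \Cref{thm:states-comp-complexity-subexp} and \Cref{prop:state-logn-efficiently}, using the junta-learning step of \Cref{app:unitary-no-ancilla}.
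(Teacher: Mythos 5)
Your overall architecture — boosting, a distinguisher that runs the learner on the oracle and then SWAP-tests $V\ket{x}$ against $\hat U\ket{x}$ on a random stabilizer-product input, the Case 1 analysis via \Cref{lem:equi_local_scram}, and the covering-net-plus-union-bound treatment of Case 2 — is exactly the paper's proof. The one step you flag as the main obstacle is also where your sketch has a genuine gap, and it is the one place where the paper does something different from what you propose.

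For Case 2 you need to show that, with all but negligible probability over the uniformly random Boolean function $g$, the oracle unitary $O_g$ is at $\davg$-distance at least a constant $\theta$ from \emph{every} element of the $2^{\poly(n)}$-size net. Your proposed route — a per-input second-moment bound of the form $\E_g|\braket{x}{\hat U^\dagger O_g|x}|^2\le\tfrac12+o(1)$, amplified by Hoeffding over polynomially many independent inputs $\ket{x}$ — does not work as stated, for two reasons. First, the per-input claim is false: for $\hat U=I$ and any input whose output register is $\ket{x+}$, the XOR action is invisible and $|\braket{x}{\hat U^\dagger O_g|x}|^2=1$ for every $g$; only the average over the input ensemble is bounded away from $1$, and only by a constant, not down to $1/2$. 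Second, the quantity that must exceed $1-\theta/2$ with negligible probability is the \emph{exact} expectation $\E_{\ket{x}}|\braket{x}{\hat U^\dagger O_g|x}|^2$, whose only randomness is $g$; Hoeffding over independent samples of $\ket{x}$ controls the empirical SWAP-test statistic, not the tail of this population quantity as a function of $g$. A correct concentration argument would instead need, e.g., McDiarmid in the $2^n$ independent bits of $g$ together with a bound on $\E_g|\tr(W^\dagger O_g)|^2$. The paper avoids concentration entirely: it observes that if two Boolean functions differ on at least $\lceil 64\theta^2 2^n\rceil$ inputs then their oracle unitaries differ in that many columns and hence satisfy $\davg\ge 4\theta$, so the number of functions whose oracle lies within $4\theta$ of any fixed unitary is at most $\sum_{k\le 64\theta^2 2^n}\binom{2^n}{k}\le 2^{0.87\cdot 2^n}$ for $\theta=1/16$, which is an exponentially small fraction of the $2^{2^n}$ functions; the union bound over the net then closes the argument. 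You should replace your concentration step with this counting argument (or supply the McDiarmid-style argument in full); the rest of your proof, including the sub-exponential strengthening and the $G=\mathcal{O}(\log n)$ upper bound, matches the paper.
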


\begin{proof}
   Suppose for the sake of contradiction that there is an efficient algorithm $\mathcal{A}_0$ that can learn a description of $U$ to within $\epsilon$ average-case distance with probability at least $2/3$.
   Then by standard boosting of success probability (see e.g, \cite[Proposition 2.4]{haah2023query}), there is an efficient algorithm $\mathcal{A}$ that can learn $U$ to the same accuracy with probability at least $p=1-1/8192$ with only a constant factor overhead in time complexity.
   Note that this boosting requires the distance metric to be efficiently computable, which is guaranteed by the SWAP test elaborated below.
   We will construct a polynomial-time quantum distinguisher $\mathcal{D}$ that invokes $\mathcal{A}$ to distinguish between $U$ and the unitary $V\in\mathcal{U}$ corresponding to a random classical function.
   This contradicts \Cref{thm:PRF} Item 2.

   The distinguisher $\mathcal{D}$ operates according to Algorithm~\ref{algo:distinguish-unitary}.
   \begin{algorithm}
        \label{algo:distinguish-unitary}
        \caption{Distinguisher $\mathcal{D}$ for PRF}
        \KwIn{$N$ query access to $U$}
        \KwOut{$b \in \{0,1\}$}
        Run $\mathcal{A}$ using $(N-1)$ queries to $U$, receiving $\hat{U}$.\\
        Prepare a random tensor product of $1$-qubit stabilizer states $\ket{x}, x\in\mathbb{Z}_6^n$.\\
        Query $U$ one more time to prepare $U\ket{x}$.\\
        Run the SWAP test on $U\ket{x}$ and $\hat{U}\ket{x}$, receiving a bit $b \in \{0,1\}$.\\
        Output $b$.
    \end{algorithm}

    Recall that the SWAP test~\cite{barenco1997stabilization,buhrman2001quantum} takes two quantum states $\ket{\alpha}, \ket{\beta}$ as input and outputs $1$ with probability $(1+|\braket{\alpha}{\beta}|^2)/2$.
    We denote this algorithm as $\mathsf{SWAP}(\ket{\alpha},\ket{\beta})$.

    Note that Step 2 in Algorithm~\ref{algo:distinguish-unitary}, the preparation of tensor product of $1$-qubit stabilizer states $\ket{x}, x\in\mathbb{Z}_6^n$, is computationally efficient, because it can be achieved by random one-qubit gates acting on each of the $n$ qubits.
    Moreover, Step 4 can be implemented efficiently on a quantum computer because $\hat{U}$ is given in terms of efficient circuit description and because the SWAP test is efficiently implementable. 
    Thus, assuming the hypothetical learner $\mathcal{A}$ to be efficient, the distinguisher $\mathcal{D}$ is efficient as well.
    
    We analyze the probability that the distinguisher $\mathcal{D}$ outputs $1$ when given the pseudorandom function $U$ versus the random classical Boolean function $V$.
    We denote the distribution of $\ket{x}$ by $Q$.
    From \Cref{lem:equi_local_scram}, we have 
    \begin{equation}
        d_Q(U, \hat{U}) = \sqrt{\mathbb{E}_{\ket{x}\sim Q}[\dtr(U\ket{x}, \hat{U}\ket{x})^2]}\leq \sqrt{2}\davg(U, \hat{U}).
    \end{equation}
    
    \textbf{Case 1: $U\in \mathcal{RF}$.}
    By the guarantees of $\mathcal{A}$, with probability at least $p$, we have $\davg(\hat{U}, U) \leq \epsilon\leq 1/64$, where $\hat{U}$ is the unitary learned by algorithm $\mathcal{A}$.
    This implies
    \begin{equation}
        \label{eq:goodest-unitary}
        \E_{\ket{x}\sim Q}|\bra{x}\hat{U}^\dagger U\ket{x}|^2 = 1-d_Q^2(U, \hat{U})\geq 1-2\epsilon^2,
    \end{equation}
    where we used the relationship between fidelity and trace distance.
    Then it immediately follows from \Cref{eq:goodest-unitary} that
    \begin{equation}
    \begin{split}
        \label{eq:case1-unitary}
        \Pr_{\substack{U\in \mathcal{RF}, \mathcal{D}}}\left[\mathcal{D}^{\ket{U}}(\cdot) = 1 \right] 
        &= \Pr_{\substack{U\in \mathcal{RF}, \ket{x}\sim Q\\\mathcal{A},\mathsf{SWAP}}}\left[\mathsf{SWAP}\left(U\ket{x}, \hat{U}\ket{x}\right) = 1\right] \\
        &= \E_{U\in \mathcal{RF}, \ket{x}\sim Q}\left[\Pr_{\mathcal{A},\mathsf{SWAP}}\left[\left.\mathsf{SWAP}\left(U\ket{x}, \hat{U}\ket{x}\right) = 1\right\rvert U, \ket{x}\right]\right] \\
        &\geq p \E_{U\in \mathcal{RF}}\left[\frac{1}{2} + \frac{1}{2}\E_{\hat{U}, \ket{x}\sim Q}\left[|\bra{x}\hat{U}^\dagger U\ket{x}|^2\right]\right] \\
        &\geq p \E_{U\in \mathcal{RF}}\left[\frac{1}{2} + \frac{1}{2}(1-2\epsilon^2)\right]
        =p(1-\epsilon^2) > \frac{8189}{8192},
    \end{split}
    \end{equation}
    where in the first inequality we split the probability into two terms conditioned on the success and failure of $\mathcal{A}$, and we lower bound the failure term by zero, and in the last inequality we have used the fact that $p(1-\epsilon^2)\geq (1-1/8192)(1-1/4096)> 8189/8192$.

    \textbf{Case 2: $U=V \in \mathcal{U}$}, where $V$ is the $n$-qubit unitary implementing a randomly chosen classical function.
    We want to upper bound the probability that the distinguisher $\mathcal{D}$ outputs $1$ when given queries to $V$.
    Let $\mathcal{C}$ be the set of all possible output unitaries of $\mathcal{A}$. We follow the same reasoning as in \Cref{eq:case1-unitary} and note that
    \begin{equation}
    \begin{split}
        \Pr_{V\in\mathcal{U}, \mathcal{D}}\left[\mathcal{D}^{\ket{V}}(\cdot)=1\right] 
        &\leq 
        \E_{V\in\mathcal{U}}\left[\max_{W\in \mathcal{C}} \E_{\ket{x}\sim Q}\left[\frac{1}{2}+\frac{1}{2}|\bra{x}V^\dagger W\ket{x}|^2\right]\right] + (1-p)\\
        &\leq \E_{V\in\mathcal{U}}\left[\max_{W\in \mathcal{C}} \left[1-\frac{1}{4}\davg(V, W)^2\right]\right] + (1-p) \\
        &\triangleq \E_{V\in\mathcal{U}}\left[O_V\right] + (1-p),
    \end{split}
    \end{equation}
    where we define $O_V=\max_{W\in \mathcal{C}} \left[1-\frac{1}{4}\davg(V, W)^2\right]$.
    Furthermore, we can split the right hand side into two parts by introducing a constant $\theta$:
    \begin{equation}
        \E_{V\in\mathcal{U}}\left[O_V\right] \leq \Pr[O_V\leq 1-\frac{\theta^2}{4}]\cdot \left(1-\frac{\theta^2}{4}\right) + \Pr[O_V>1-\frac{\theta^2}{4}]\cdot 1 \leq 1-\frac{\theta^2}{4} + \Pr[O_V>1-\frac{\theta^2}{4}],
    \end{equation}
    where we have used the fact that $O_V\leq 1$.
    Note that
    \begin{equation}
    \begin{split}
        \Pr[O_V>1-\frac{\theta^2}{4}] 
        &\leq \Pr_{V\in \mathcal{U}}\left[\exists W\in\mathcal{C}: \davg(V, W)< \theta\right] \\
        &\leq \sum_{W\in \mathcal{N}}\Pr_{V\in \mathcal{U}}\left[\davg(V, W)< \theta\right] \\
        &=\sum_{W\in \mathcal{N}}\frac{1}{|\mathcal{U}|}\sum_{V\in \mathcal{U}}1\left\{\davg(V, W)< \theta\right\} \\
        &\leq \frac{|\mathcal{N}|\max_{W\in \mathcal{N}}N_{W, \theta}}{|\mathcal{U}|}.
    \end{split}
    \end{equation}
    In the second line, we define $\mathcal{N}$ be a minimal $\theta$-covering net over $\mathcal{C}$ with respect to $\davg$. 
    Also, in the last line, we define $N_{W, \theta}\triangleq\sum_{V\in \mathcal{U}}1\{\davg(V, W)< \theta\}$ to be the number of $V\in\mathcal{U}$ that are $\theta$-close to $W$ in $\davg$.

    Now we aim to upper bound $N_{W, \theta}$ by counting.
    We first note that $N_{W, \theta}\leq \max_{V\in \mathcal{U}}N_{V, 4\theta}+1$.
    This is because, by definition of $N_{W, \theta}$, there exist $V_1, \ldots, V_{N_{W, \theta}}\in\mathcal{U}$ such that $\davg(V_i, W)< \theta, 1\leq i\leq N_{W, \theta}$.
    Then for $V_1$ and any $V_i, 2\leq i\leq N_{W, \theta}$, we have
    \begin{equation}
        \davg(V_1, V_i)\leq d_F'(V_1, V_i)\leq d_F'(V_1, W)+d_F'(V_i, W) \leq 2\davg(V_1, W)+2\davg(V_i, W) < 4\theta.
    \end{equation}
    This means that there are at least $N_{W, \theta}-1$ elements of $\mathcal{U}$ that are $(4\theta)$-close to $V_1$.
    Therefore, $N_{V_1, 4\theta}\geq N_{W, \theta}-1$ and hence $N_{W, \theta}\leq \max_{V\in \mathcal{U}}N_{V, 4\theta}+1$.

    Next, we upper bound $N_{V, 4\theta}$ for any $V\in \mathcal{U}$.
    Recall that each $V\in \mathcal{U}$ is an oracle unitary of a Boolean function on $\{0, 1\}^n$.
    We can represent it by $f_V(i)\in\{0, 1\}, 1\leq i\leq 2^n$.
    Consider a different $V'\in\mathcal{U}$ corresponding to the Boolean function $f_{V'}$.
    If $f_V$ and $f_{V'}$ differ on at least $\ceil{64\theta^2\cdot 2^n}$ of the $2^n$ possible inputs $i\in [2^n]$, then the corresponding columns of the unitaries $V$ and $V'$ must also differ.
    In particular, in each of these columns, there will be a matrix element that is $1$ for $V$ but $0$ for $V'$.
    This means that $V$ and $V'$ are $4\theta$ apart from each other w.r.t.~$\davg$:
    \begin{equation}
        \davg(V, V')\geq \frac{1}{2}\min_{e^{i\phi}\in U(1)}\norm{V - V'e^{i\phi}}_F \geq \frac{1}{2\sqrt{d}}\min_{e^{i\phi}\in U(1)}\sqrt{64\theta^2 \cdot 2^n |1-0\cdot e^{i\phi}|^2} = 4\theta.
    \end{equation}
    Therefore, all functions $f_{V'}$ corresponding to the $V'\in \mathcal{U}$ counted in $N_{V, 4\theta}$ must differ from $f_V$ on strictly less than $\ceil{64\theta^2\cdot 2^n}$ of the $2^n$ inputs.
    This gives us
    \begin{equation}
        N_{V, 4\theta}\leq \sum_{k=0}^{\ceil{64\theta^2\cdot 2^n}} \binom{2^n}{k},
    \end{equation}
    where each term represents choosing $k$ inputs where the output is different from $f_V$.
    The right hand side can be further bounded as 
    \begin{equation}
        \sum_{k=0}^{\ceil{64\theta^2\cdot 2^n}} \binom{2^n}{k} \leq \left(\frac{e2^n}{\ceil{64\theta^2\cdot 2^n}}\right)^{\ceil{64\theta^2\cdot 2^n}}\leq 2^{(64\theta^2\cdot 2^n+1)\log_2(e/64\theta^2)}.
    \end{equation}
    Note that when $\theta=1/16$, we have $64\theta^2=1/4$ and $64\theta^2\log_2(e/64\theta^2)=\log_2(4e)/4<0.87$.
    Therefore, recalling that the set of all $n$-bit classical Boolean functions has size  $|\mathcal{U}|=2^{2^n}$, we obtain
    \begin{equation}
        \Pr[O_V>1-\frac{\theta^2}{4}] \leq |\mathcal{N}|2^{-0.13\cdot 2^n+\log_2(4e)+1},
    \end{equation}
    where the extra one in the exponent takes the one in $N_{W, \theta}\leq \max_{V\in \mathcal{U}}N_{V, 4\theta}+1$ into account.

    Finally, we move on to bound $|\mathcal{N}|$.
    Similar to \Cref{eq:covering-poly-state} in the state learning case, since our learning algorithm is a polynomial time algorithm that can only output circuit descriptions with size $\poly(n)$, we must have
    \begin{equation}
        |\mathcal{N}| \leq \bigo\left(\left(1/\theta\right)^{\poly(n)}\right) = \bigo\left(2^{\poly(n)}\right).
    \end{equation}
    Thus we arrive at
    \begin{equation}
    \label{eq:case2-unitary}
        \Pr[O_V>1-\frac{\theta^2}{4}] \leq \bigo\left(2^{\poly(n)-0.13\cdot 2^n}\right)=\mathrm{negl}(n)
    \end{equation}
    and therefore
    \begin{equation}
        \Pr\left[\mathcal{D}^{\ket{V}}(\cdot)=1\right] \leq 1-\frac{\theta^2}{4} + \mathrm{negl}(n) + (1-p) = \frac{8185}{8192} + \mathrm{negl}(n),
    \end{equation}
    where we have used $\theta=1/16$ and $p=1-1/8192$.

    Combining \Cref{eq:case1-unitary} and \Cref{eq:case2-unitary}, we have
    \begin{equation}
        \left|\Pr_{U \in \mathcal{RF}}[\mathcal{D}^{\ket{U}}(\cdot) = 1] - \Pr_{V \in \mathcal{U}}[\mathcal{D}^{\ket{V}}(\cdot) = 1]\right| \geq \frac{4}{8192} - \mathrm{negl}(n)\geq \frac{1}{4096}
    \end{equation}
    for large $n$.
    This contradicts the defining property of pseudorandom functions $\mathcal{RF}$ (\Cref{thm:PRF} Item 2) under the assumption that $\mathsf{RingLWE}$ is hard.
\end{proof}

Next, we invoke the stronger assumption that \textsf{RingLWE} cannot be solved by any sub-exponential-time quantum algorithms and show that learning unitaries composed of $G=\mathcal{O}(\log n\cdot\mathsf{polyloglog}n)$ gates is computationally hard.

\begin{theorem}[Unitary learning computational complexity lower bound assuming sub-exponential hardness of \textsf{RingLWE}, restatement of lower bound in \Cref{thm:unitary-comp-complex}]
\label{thm:unitary-comp-complexity-subexp}
Let $\lambda=l=\Theta(G)$ with $l\leq n$ be the security parameter.
Let $V$ be an $l$-qubit unitary consisting of $\mathcal{O}(l\polylog(l))=\mathcal{O}(G\polylog(G))$ gates (or a depth $d = \mathcal{O}(\polylog(G))$ circuit) that implements a pseudorandom function in $\mathcal{RF}$.
Such a unitary $V$ exists by \Cref{coro:PRF-size}.
Let $U=V\otimes I$, where the identity $I$ is over the last $(n-l)$ qubits.
Any quantum algorithm for learning a circuit description of the $n$-qubit unitary $U$ to within $\epsilon\leq 1/64$ average-case distance $\davg$ with probability at least $2/3$ from $N = \poly(\lambda)$ queries to $U$ must use $\exp(\Omega(\min\{G, n\}))$ time, if quantum computers cannot solve $\mathsf{RingLWE}$ in sub-exponential time.
\end{theorem}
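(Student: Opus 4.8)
The plan is to reduce to the $l$-qubit case already handled in \Cref{thm:unitary-comp-complexity} (the polynomial-hardness version) and then lift it from $V$ to $U=V\otimes I$, exactly paralleling the state-learning argument in the proof of \Cref{thm:states-comp-complexity-subexp}. First I would re-run the proof of \Cref{thm:unitary-comp-complexity} with the assumption of polynomial quantum hardness of $\mathsf{RingLWE}$ replaced by sub-exponential quantum hardness. The only substantive changes are: (i) \Cref{coro:PRF-size} is invoked with a sub-exponential time bound $t(l)$, so that \Cref{thm:PRF} yields a quantum-secure PRF against $\mathcal{O}(t(l))$ adversaries that is implementable on $l$ qubits using $\mathcal{O}(l\polylog l)$ gates (or depth $\mathcal{O}(\polylog l)$); and (ii) the distinguisher $\mathcal{D}$ of Algorithm~\ref{algo:distinguish-unitary}, built from a hypothetical learner of $V$ plus the SWAP test, is now permitted to run in sub-exponential time. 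With these substitutions the same calculation — Case 1 lower-bounding $\Pr[\mathcal{D}^{\ket{V}}=1]$ via the average-case accuracy and the fidelity/trace-distance relation, Case 2 upper-bounding $\Pr[\mathcal{D}^{\ket{V}}=1]$ by a counting bound on unitaries $\davg$-close to a fixed one together with the covering-number bound on the outputs of the learner — shows that any quantum algorithm learning the $l$-qubit PRF unitary $V$ to within $\davg$-accuracy $\epsilon\le 1/64$ with success probability $\ge 2/3$ from $\poly(l)$ queries must use time $\exp(\Omega(l))$, for otherwise $\mathsf{RingLWE}$ would be solvable in sub-exponential quantum time. Taking $\lambda=l=\Theta(G)$ with $l\le n$ makes the hard instance have gate complexity $\mathcal{O}(l\polylog l)=\mathcal{O}(G\polylog G)$.

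Second, I would transfer this $l$-qubit hardness to the $n$-qubit statement about $U=V\otimes I$. The key point is that a learner for $U$ can be used, with only polynomial overhead, to learn $V$: each query to $U$ is implemented by one query to $V$ acting on the first $l$ qubits with the identity on the idle $n-l$ qubits, and once the $U$-learner outputs a circuit description of some $\hat U$ we restrict to the first $l$ qubits to obtain a description of a candidate $\hat V$. It remains to argue this restriction does not increase $\davg$. The cleanest route is to pass to Choi states via \Cref{lem:trace-dist-choi}: tracing out the $n-l$ idle input systems and their maximally-entangled partners is a CPTP map, hence cannot increase the trace distance between the Choi states of $U$ and $\hat U$, so $\davg(\hat V,V)\le \davg(\hat U,U)\le\epsilon$. (Equivalently, one fixes the idle input registers to a locally scrambled product ensemble and uses that the Haar-averaged squared trace distance factorizes over the two blocks, so the error on the $V$-block is bounded by the error on $U$; this is the $\davg$-analogue of ``trace distance does not increase under post-selecting the idle qubits on $\ket{0}$'' used for states.) Consequently the $\exp(\Omega(l))$ lower bound passes to learning $U$, and since $l=\Theta(G)$ and $l\le n$ this is $\exp(\Omega(\min\{G,n\}))$.

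The step I expect to be the main obstacle is precisely this ``restriction without loss in $\davg$'' argument: for the diamond distance it is immediate because $\dworst$ is worst-case, but for the average-case metric one must be careful that the first-$l$-qubit block of the learned circuit is genuinely a good description of $V$ — the Choi-state / CPTP-contraction formulation above is the safe way to make this rigorous, and I would spell it out rather than leave it to the reader. Everything else is bookkeeping: verifying the $\poly(\lambda)$ query budget is preserved under the reduction, and choosing $l=\Theta(G)$ so the hard instance has the claimed gate and depth bounds. I would close by recording, as elsewhere in the paper, that because the PRF of \Cref{coro:PRF-size} can be realized with a comparable number of Clifford and $T$ gates, the same computational hardness holds for Clifford$+T$ circuits with $\tilde{\omega}(\log n)$ $T$-gates, and that together with the efficient learner for $G=\mathcal{O}(\log n)$ this identifies $\log n$ as the computational-efficiency transition point.
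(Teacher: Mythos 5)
Your proposal is correct and follows essentially the same route as the paper: instantiate the polynomial-hardness argument (Theorem~\ref{thm:unitary-comp-complexity}) with the sub-exponential \textsf{RingLWE} assumption to get an $\exp(\Omega(l))$ lower bound for learning the $l$-qubit PRF unitary $V$, then lift to $U=V\otimes I$ by discarding the idle $(n-l)$ qubits, using that this cannot increase $\davg$. The paper states this last contraction step in one line; your Choi-state/CPTP formulation is just a more explicit justification of the same step.
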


\begin{proof}
    With polynomial hardness of \textsf{RingLWE} replaced by sub-exponential hardness, \Cref{thm:unitary-comp-complexity} asserts that there are no sub-exponential (in $l$) quantum algorithms that can learn the $l$-qubit unitary $V$ to within average case distance $\epsilon<1/64$ with success probability at least $2/3$.
    That is, any such learning algorithms must use time at least $\exp(\Omega(l))=\exp(\Omega(\min\{G, n\}))$, since $l\leq n$.
    Meanwhile, a polynomial learning algorithm for the $n$-qubit unitary $U=V\otimes I$ can be used to learn the $l$-qubit unitary $V$ in the same runtime by discarding the last $(n-l)$ qubits, because trace distance does not increase under such operation and thus neither does $\davg$.
    This implies the $\exp(\Omega(\min\{G, n\}))$ time lower bound for the $n$-qubit learning algorithm.
\end{proof}

Finally, we briefly show that learning becomes efficient when $G=\mathcal{O}(\log n)$.
The idea is that with $\bigo(\log n)$ gates, there can only be at most $\bigo(\log n)$ qubits affected.
Thus we can focus on these qubits and learning the unitary amounts to manipulating at most $2^{\bigo(\log n)}=\poly(n)$ size matrices, which is efficient.
Specifically, we have the following statement.

\begin{prop}[Learning unitaries with logarithmic circuit complexity efficiently, restatement of upper bound in \Cref{thm:unitary-comp-complex}]
\label{prop:unitary-logn-efficiently}
    Let $\epsilon > 0$. Suppose we are given $N$ queries to an $n$-qubit unitary $U$ consisting of $G=\mathcal{O}(\log n)$ two-qubit gates.
    There exists a learning algorithm that outputs a $\hat{U}$ such that $\davg(U, \hat{U})\leq \epsilon$ with probability at least $2/3$ using $\poly(n, 1/\epsilon)$ queries and time.
\end{prop}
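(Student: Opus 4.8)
The plan is to mirror the proof of \Cref{prop:state-logn-efficiently} for states, with the Choi state $\kket{U}=(U\otimes I)\ket{\Phi}$ playing the role of the unknown state. The key point is again that a circuit of $G=\mathcal{O}(\log n)$ two-qubit gates acts nontrivially only on a set $A\subseteq[n]$ with $|A|\leq 2G=\mathcal{O}(\log n)$, so after a junta-learning step the task reduces to tomography on a system of effective dimension $2^{\mathcal{O}(\log n)}=\poly(n)$, from which a unitary can be extracted by linear algebra. Since every quantity below is $\poly(n,1/\epsilon)$ and every computation manipulates $\poly(n)$-dimensional objects, the resulting algorithm is $\poly(n,1/\epsilon)$-time.

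First I would run the junta-learning routine \Cref{algo:junta-choi} on $\mathcal{O}((G+\log(1/\delta))/\epsilon^{2})=\poly(n,1/\epsilon)$ prepared Choi states $\kket{U}$, obtaining a list $\hat A\subseteq[n]$ with $|\hat A|\leq|A|\leq 2G$ such that, by \Cref{lemma:good-overlap-choi}, $\bbra{I_{\hat B}}\rho_{\hat B}\kket{I_{\hat B}}\geq 1-(\epsilon/c)^{2}$ for $\hat B=[n]\setminus\hat A$, $\rho=\kket{U}\bbra{U}$, and $c$ a large constant to be fixed later. I then prepare fresh copies of $\kket{U}$, measure the qubits in $\hat B$ together with their entangled partners in the Pauli--Choi basis, and post-select on the outcome $\kket{I_{\hat B}}$; since the associated projector $\Lambda=I\otimes\kket{I_{\hat B}}\bbra{I_{\hat B}}$ satisfies $\tr(\Lambda\rho)\geq 1-(\epsilon/c)^{2}$, the gentle measurement lemma (\Cref{lemma:gentle-meas}) yields a post-selected state $\rho'$ with $\dtr(\rho',\rho)\leq\epsilon/c$ with high probability. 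Exactly as in \Cref{app:unitary-choi}, after this post-selection the action on every entangled pair in $\hat B$ is trivial, so $\rho'$ factors as a state on the $2|\hat A|=\mathcal{O}(\log n)$ qubits of $\hat A$ and their partners tensored with $\kket{I_{\hat B}}\bbra{I_{\hat B}}$; this $\poly(n)$-dimensional factor is all we need to reconstruct.

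Instead of the hypothesis-selection step of \Cref{app:unitary-choi}, I would tomograph $\rho'$ directly by estimating $\tr(\rho'P)$ for each of the $4^{\mathcal{O}(\log n)}=\poly(n)$ Pauli strings $P$ on these $\mathcal{O}(\log n)$ qubits, each to accuracy $\mathcal{O}(\epsilon/\poly(n))$, costing $\poly(n,1/\epsilon)$ post-selected copies (hence $\poly(n,1/\epsilon)$ queries to $U$) by Chernoff--Hoeffding. Reassembling gives $\hat\rho$ with $\dtr(\hat\rho,\rho')\leq\epsilon/c$, so $\dtr(\hat\rho,\rho)\leq 2\epsilon/c$. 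I then diagonalize $\hat\rho$ and keep its top eigenvector $\kket{\hat\psi}$, the closest pure state to $\hat\rho$; since $\rho$ is pure this yields $\dtr(\kket{\hat\psi}\bbra{\hat\psi},\rho)\leq 4\epsilon/c$. Writing $\kket{\hat\psi}=\sum_{ij}A_{ij}\ket{i}\otimes\ket{j}$, the matrix $A$ is close in Frobenius norm to the Choi matrix of $U$ (restricted to $\hat A$), which is proportional to a unitary, so I take the unitary factor $\hat V$ of the polar decomposition of $A$; a standard stability bound for the polar factor near the unitary manifold together with \Cref{lem:trace-dist-choi} (converting trace distance of Choi states into $\davg$) gives $\davg(\hat V\otimes I,U)\leq\epsilon$ for an appropriate choice of $c$. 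Finally I output $\hat U=\hat V\otimes I$ on $[n]$, whose $\poly(n)$-gate circuit description is obtained by orthogonalization or by synthesizing the $\mathcal{O}(\log n)$-qubit unitary $\hat V$ with $\mathcal{O}(4^{2G})=\poly(n)$ two-qubit gates; rescaling $\delta$ and $\epsilon$ by constants gives the claimed $2/3$ success probability.

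The main obstacle I anticipate is not any individual estimate but the bookkeeping around the junta step: one must argue carefully, as in \Cref{app:unitary-choi}, that $\hat A$ need not equal the true support $A$ (since $U$ may use some qubits as reset workspace) and that post-selecting the $\hat B$ registers onto $\kket{I_{\hat B}}$ changes the Choi state by only $\mathcal{O}(\epsilon)$ in trace distance, so that working entirely on the identified $\mathcal{O}(\log n)$ qubits is lossless up to $\mathcal{O}(\epsilon)$. The other point needing a little care is the last step, extracting a genuine unitary from the approximate, unnormalized Choi matrix $A$ via the polar decomposition, which relies on Lipschitz continuity of the polar factor in a neighborhood of the unitary manifold; once $\dtr(\hat\rho,\rho)$ is small this is routine.
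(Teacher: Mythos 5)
Your proposal is correct and follows essentially the same route as the paper's proof: junta learning via \Cref{algo:junta-choi} on the Choi state, post-selection of the $\hat{B}$ registers onto $\kket{I_{\hat{B}}}$ via the gentle measurement lemma, Pauli tomography of the resulting $\poly(n)$-dimensional state, and conversion back to a unitary through \Cref{lem:trace-dist-choi}. The only difference is your final rounding step (top eigenvector followed by the polar factor of $A$, rather than the paper's projection onto the span of Choi-type vectors), which is a slightly more explicit implementation of the same extraction and propagates the error linearly, so the argument goes through.
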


\begin{proof}
    We prove this by a learning algorithm similar to \Cref{prop:state-logn-efficiently} via junta learning based on Choi states (\Cref{app:unitary-choi}) as follows.
    
    Firstly, we prepare the Choi state of $U$ by applying it to a maximally entangled state over $2n$ qubits, execute Algorithm~\ref{algo:junta-choi}, and post-select on the trivial qubits being in the state $\kket{I}$ as in \Cref{app:unitary-choi}.
    This step uses $\poly(n, 1/\epsilon)$ queries and time, and gives us the post-selected Choi state which is non-trivial on only $4G=\bigo(\log n)$ qubits.
    Then we use the Pauli tomography method as in \Cref{prop:state-logn-efficiently} to learn a trace-distance approximation to the $4G$-qubit Choi state $\kket{\hat{V}}$ using $\poly(n, 1/\epsilon)$ queries and time.
    We can enforce this approximation to be a valid Choi state by projecting it to the subspace spanned by $(A\otimes I)\ket{\Phi}$ and normalize the projected state, where $A$ is an arbitrary matrix and $\ket{\Phi}$ is the maximally entangled state.
    This can be done via a projector which is a $2^{4G}=\poly(n)$ dimensional matrix.
    Finally, we calculate the corresponding unitary $\hat{V}$ and set $\hat{U}=\hat{V}\otimes I$.
    Note that this step is efficient as it only involves manipulating matrices of size $2^{4G}=\poly(n)$.
    Since the trace distance between Choi states is equivalent to the average-case distance between the corresponding unitaries, this gives us a $\poly(n, 1/\epsilon)$ learning algorithm for average-case unitary learning.
\end{proof}

\section{Learning physical functions}
\label{app:learning-function}

As stated in the main text, learning classical functions that map variables controlling the input states and evolution to some property of the outputs is an alternative way of learning about Nature.
Learning such functions has long been a central task of statistics and, more recently, classical and quantum machine learning.
However, the physical mechanism that gives rise to these functions has largely been overlooked for the convenience of mathematical abstraction.

In fact, we can formulate the physical mechanism underlying a classical function as an experiment procedure involving a unitary with bounded circuit complexity. 
Specifically, we consider the following general experimental setting.
\begin{enumerate}
    \item Given a set of $\nu$ variables $x\in [0, 1]^\nu$, we prepare a pure state that can depend on $x$ in a fixed way.
    \item We evolve the state using a unitary $U(x; \{U_i\}_{i=1}^G, a)$ that contains at most $G$ two-qubit gates $\{U_i\}_{i=1}^G$, which can be tuned arbitrarily, and any number of fixed unitaries, which can depend on $x$, according to a circuit architecture $a$ in an architecture class $A$.
    \item We measure the output state with a fixed observable $O$ and read out the expectation value as the function output.
\end{enumerate}
We can w.l.o.g.~absorb the state preparation into the unitary. Then the experiment gives rise to the function
\begin{equation}
    f(\cdot; \{U_i\}, a): [0, 1]^\nu\ni x \mapsto f(x; \{U_i\}, a) = \braket{0^n}{U(x; \{U_i\}, a)^\dagger O U(x; \{U_i\}, a)|0^n}.
\end{equation}
We define 
\begin{equation}
    \mathcal{F}_{G, A}^\nu=\{f(\cdot, \{U_i\}, a): a\in A, U_i\in U(2^2), i=1, \ldots, G\}\subseteq \mathbb{R}^{[0, 1]^\nu}
\end{equation} to be the function class given by a class of architectures $A$ for $G$-gate unitaries. 
We call such functions \textit{physical functions}, and $\mathcal{F}_{G, A}^\nu$ the class of $\nu$-variable physical functions with $G$ gates and architectures $A$.

This experiment can also be understood as a quantum machine learning problem, where we want to collect training data $\{x, f(x)\}$ to learn to approximate certain functions in a function class using the ansatz described above.
Then, the tunable gates $\{U_i\}$ can be understood as variational/trainable parameters of our quantum neural network.
We note that the data encoding unitaries may simply use $x$ as the angles for rotation, or it can also be arbitrarily complex (e.g., complex enough to implement a quantum random access memory \cite{giovannetti2008quantum} that prepares the amplitude encoding of the data) as long as it is not trainable. 
This encompass the case where the input data are classical descriptions of the input pure state. 
Also the order of the data encoding unitaries and the trainable unitaries can be arbitrary, thus accommodating data re-uploading strategies \cite{perez2020data, caro2021encodingdependent}.

We will show that to approximate a certain class of functions well, we need a minimal number of samples to learn and a minimal number of gates $G$ (\Cref{thm:approx-functions}).
In particular, we consider the class of $1$-bounded and $1$-Lipschitz functions on $[0, 1]^\nu$, which can (up to equivalence classes) be represented by the unit ball $B^{1, \infty}$ in the Sobolev space $W^{1, \infty}_{[0, 1]^\nu}$.
We establish the following theorem, where the learning criterion is the standard one for learning real functions \cite[Definition 16.1]{anthony1999neural}.

\begin{theorem}[Sample and gate complexity lower bounds on functions given by $G$-gate unitaries to approximate bounded Lipschitz functions, restatement of {\Cref{thm:approx-functions}}]
\label{thm:approx-functions-full}
    Let $\mathcal{F}_{G, A}^\nu\subseteq \mathbb{R}^{[0, 1]^\nu}$ be the function class given by an architecture class $A$ of $G$ two-qubit unitaries. 
    Let $\epsilon\in (0, 1)$ and let $l(|h(x)-y|)$ be a loss function where $l$ is a strictly increasing function with derivative larger than some positive constant on $[1, \infty)$. 
    Suppose for any $1$-bounded and $1$-Lipschitz function $f\in B^{1, \infty}$, there exists an $h\in \mathcal{F}_{G, A}^\nu$ such that $\|f-h\|_\infty<\epsilon$. 
    Then the smallest training data size $N$ such that there exists a learning algorithm $H: ([0, 1]^\nu, [0, 1])^N \to \mathcal{F}_{G, A}^\nu$ that satisfies 
    \begin{equation}
        \mathbb{P}_{S \sim P^N}\left\{ \mathbb{E}_{(X, Y)\sim P} l(|H[S](X)-Y|) - \inf_{f\in\mathcal{F}_{G, A}^\nu}\mathbb{E}_{(X, Y)\sim P} l(|f(X)-Y|) \leq \epsilon \right\} \geq 0.99,
    \end{equation}
    for any probability distribution $P$ over $[0, 1]^{\nu}\times [0, 1]$ must be at least
    \begin{equation}
        N\geq \Omega\left(\frac{1}{\epsilon^\nu}\right).
    \end{equation} 
    Moreover, we need at least 
    \begin{equation}
        G\geq \tilde{\Omega}\left(\frac{1}{\epsilon^{\nu/2}}\right)
    \end{equation}
    two-qubit unitaries if $A$ contains variable circuit structures, or $G\geq \tilde{\Omega}(1/\epsilon^\nu)$ if the circuit structure is fixed.
    The $\tilde{\Omega}$ for variable circuit structures hides logarithmic factors in $\epsilon$ as well as in the number of qubits $n$, while the $\tilde{\Omega}$ for fixed structure only hides logarithmic factors in $\epsilon$.
\end{theorem}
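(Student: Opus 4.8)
The strategy is a three-step chain: (1) lower bound the complexity of any function class capable of $\epsilon$-approximating all of $B^{1,\infty}$, measured by fat-shattering dimension; (2) translate that into a sample complexity lower bound via standard agnostic learning theory; (3) translate it into a gate complexity lower bound via the known bound relating $\mathrm{Pdim}/\mathrm{fat}$ of $\mathcal{F}_{G,A}^\nu$ to $G$. I will carry these out in that order.

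\textbf{Step 1: Fat-shattering of $B^{1,\infty}$.} First I recall the classical fact (see e.g.\ \cite{kearns1994efficient,anthony1999neural}) that the $\alpha$-fat-shattering dimension of the unit ball of $1$-Lipschitz, $1$-bounded functions on $[0,1]^\nu$ is $\Theta(1/\alpha^\nu)$: one can place a grid of $\sim(1/\alpha)^\nu$ points at mutual spacing $\gtrsim\alpha$ and, for any sign pattern, interpolate with a $1$-Lipschitz bump function of height $\pm\alpha$, while a packing/volume argument shows one cannot do better. Now suppose $\mathcal{F}=\mathcal{F}_{G,A}^\nu$ $\epsilon$-approximates every $f\in B^{1,\infty}$ in $\|\cdot\|_\infty$. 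Then I claim $\mathrm{fat}(\mathcal{F}, 4\epsilon)\geq \mathrm{fat}(B^{1,\infty}, 5\epsilon) = \Omega(1/\epsilon^\nu)$: take a set of points $\alpha$-fat-shattered by $B^{1,\infty}$ with $\alpha=5\epsilon$ and witnesses $y_i$; for each sign pattern $C$, the witnessing $f_C\in B^{1,\infty}$ is within $\epsilon$ in sup-norm of some $h_C\in\mathcal{F}$, so $h_C(x_i)\geq y_i+5\epsilon-\epsilon = y_i+4\epsilon$ on $C$ and $\leq y_i-4\epsilon$ off $C$, i.e.\ the same points are $4\epsilon$-fat-shattered by $\mathcal{F}$ (with the same witnesses). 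Hence $\mathrm{fat}(\mathcal{F}_{G,A}^\nu, 4\epsilon) = \Omega(1/\epsilon^\nu)$.

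\textbf{Step 2: Sample complexity.} This is now a direct invocation of agnostic PAC bounds for real-valued function learning with a loss $l$ that is strictly increasing with derivative bounded below on $[1,\infty)$ — exactly the hypothesis in the theorem and in \cite[Ch.\ 16--19]{anthony1999neural}. The lower bound on the sample size needed to achieve excess risk $\epsilon$ with probability $0.99$ scales as $\Omega(\mathrm{fat}(\mathcal{F}, c\epsilon)/\epsilon)$ or, in the cleaner regime, $\Omega(\mathrm{fat}(\mathcal{F},c\epsilon))$ for an appropriate constant $c$ (one can choose a hard distribution $P$ supported on the fat-shattered points with the labels $y_i$ randomized). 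Plugging in Step 1 gives $N = \Omega(1/\epsilon^\nu)$. The details here are entirely classical and I would cite rather than reprove them.

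\textbf{Step 3: Gate complexity.} Finally, I use the bound of \cite{caro2020pseudo} (and for variable architectures, combined with a union bound over the $|A|$ possible structures) stating that $\mathrm{Pdim}(\mathcal{F}_{G,A}^\nu) \leq \tilde{O}(G^2 \cdot \mathrm{polylog})$ when the architecture is allowed to vary over all placements of $G$ two-qubit gates on $n$ qubits (the $G^2$ coming from the number of continuous parameters times a log of the number of architectures, which is $\binom{n}{2}^G$), and $\mathrm{Pdim}(\mathcal{F}_{G,A}^\nu)\leq \tilde{O}(G)$ for a single fixed architecture. Since $\mathrm{fat}(\mathcal{F},\alpha)\leq \mathrm{Pdim}(\mathcal{F})$ for every $\alpha>0$, combining with $\mathrm{fat}(\mathcal{F}_{G,A}^\nu, 4\epsilon)=\Omega(1/\epsilon^\nu)$ from Step 1 yields $\tilde{O}(G^2)\geq \Omega(1/\epsilon^\nu)$, i.e.\ $G\geq \tilde\Omega(1/\epsilon^{\nu/2})$ in the variable-structure case and $G\geq\tilde\Omega(1/\epsilon^\nu)$ in the fixed-structure case, with the logarithmic factors in $\epsilon$ (and in $n$, for variable structures) absorbed into $\tilde\Omega$.

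\textbf{Main obstacle.} The genuinely delicate point is Step 1 — specifically, making sure the approximation hypothesis transfers fat-shattering with only a constant-factor loss in the scale $\alpha$, and that the lower bound $\mathrm{fat}(B^{1,\infty},\alpha)=\Omega(1/\alpha^\nu)$ is stated with the constants one needs (the interpolation must respect both the Lipschitz constant and the $1$-bound simultaneously, which forces the bump heights and the grid spacing to be tied together; choosing the grid spacing $\asymp\alpha$ rather than $\asymp 1$ is what produces the $1/\alpha^\nu$ rather than a constant). Everything downstream is a bookkeeping exercise in invoking \cite{anthony1999neural} and \cite{caro2020pseudo}; the conceptual content is confined to the reduction in Step 1 and the observation that circuit complexity caps the pseudo-dimension.
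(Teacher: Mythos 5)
Your proposal follows essentially the same route as the paper's proof: the paper likewise (i) constructs a lattice of Lipschitz bump functions to show that $\epsilon$-approximating $B^{1,\infty}$ forces $\mathrm{fat}(\mathcal{F}_{G,A}^\nu,\Theta(\epsilon))=\Omega(1/\epsilon^\nu)$ and $\mathrm{Pdim}=\Omega(1/\epsilon^\nu)$, (ii) invokes a variant of the Anthony--Bartlett fat-shattering sample-complexity lower bound (adapted to the general loss $l$, which the paper reproves rather than cites), and (iii) combines the polynomial-counting pseudo-dimension upper bound $\tilde{O}(G)$ (fixed architecture) or $\tilde{O}(G^2\log n)$ (variable architecture, from the $\binom{n}{2}^G$ structures) with $\mathrm{fat}\leq\mathrm{Pdim}$ to get the gate bounds. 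The only differences are presentational (you factor Step 1 through $\mathrm{fat}(B^{1,\infty},\alpha)=\Theta(1/\alpha^\nu)$ plus a sup-norm transfer, whereas the paper builds the shattered set for $\mathcal{F}$ directly), and your constant bookkeeping in the transfer is sound.
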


This means that to approximate $1$-bounded and $1$-Lipschitz functions in $\nu$-variables well to $\bigo(1/n^D)$ accuracy, we need at least $\tilde{\Omega}(n^{\nu D/2})$ two-qubit unitaries and $\Omega(n^{\nu D})$ samples to train on. And $\sim 1/\exp(n)$ accuracy can only be achieved with exponential-size quantum circuits and exponentially many samples. This result establishes a limitation on the maximal efficiency of using parameterized quantum circuits to approximate functions, complementary to existing works on universal approximation theorems for parameterized quantum circuits \cite{gonon2023universal, perez2021one, schuld2021effect, manzano2023parametrized}.

The exponential dependence \ins{of the training data size $N$} on the number of variables \edit{$N$}{$\nu$} suggests that if one has an extensively large input vector (whose length scales with $n$), then the number of samples and gates needed to approximate such functions is exponentially large.
Moreover, if the variables are encoded using amplitude encoding (e.g., via QRAM), which accommodates exponentially many variables ($\sim 2^n$), then the gate and sample requirement would grow double exponentially in $1/\epsilon$.
This phenomenon, named curse of dimensionality, was also established in the theory of classical neural networks \cite[Chapter 3]{grohs2022mathematical}. We show that it still exists in quantum machine learning.

This curse can be circumvented by introducing more structure or constraints on the function class. 
For example, if we constrain to Fourier-integrable functions, a $\nu$-independent number of $\bigo(1/\epsilon^2)$ parameters suffice for both classical \cite[Theorem 3.9]{grohs2022mathematical} and quantum \cite{gonon2023universal} machine learning. 
However, the curse of dimensionality shows that many-variable $1$-bounded and $1$-Lipschitz functions are not physical \cite{poland2020no, barthel2018fundamental} because Nature cannot efficiently implement them.

In order to prove \Cref{thm:approx-functions-full}, we proceed in three steps.
Firstly, we show that the complexity of the function class $\mathcal{F}_{G, A}^\nu$ is limited by the number of gates $G$.
Then we prove that to approximate certain functions ($1$-bounded and $1$-Lipschitz functions) well enough, the complexity must not be too small.
Finally, we show that to learn a function class from data, the number of samples we need is lower bounded by the complexity of the function class.

\subsection{Circuit complexity and function complexity}

The complexity of the function class $\mathcal{F}_{G, A}^\nu$, measured by the pseudo-dimension or fat-shattering dimension \cite{mohri2018foundations, wolf2018mathematical}, is limited by the number of trainable gates $G$ and the size of the architecture class $A$. This is because from the linearity of quantum mechanics, the function $f(x; \{U_i\}, a)$ is a polynomial in the matrix elements of the trainable unitaries $\{U_i\}$, and the degree of this polynomial is limited by $G$. Following the idea of \cite{caro2020pseudo}, we formalize this idea into the following lemma.

\begin{lemma}[Functions given by $G$-gate unitaries are bounded degree polynomials]
Let $\mathcal{F}_{G, A}^\nu$ be the function class given by an architecture class $A$ of $G$ two-qubit unitaries. Then there exists a set of functions $P_{G, A}^\nu$ in $32G+\nu$ real variables with size $|P_{G, A}^\nu|=|A|$ such that the following two properties hold.
\begin{enumerate}
    \item $\forall f\in \mathcal{F}_{G, A}^\nu$, there exist a $p\in P_{G, A}^\nu$ and an assignment of the first $32G$ variables such that $p$ under this assignment is the same as $f$ in the last $\nu$ variables;
    \item Each $p\in P_{G, A}^\nu$ depends polynomially on the first $32G$ variables with degree at most $2G$.
\end{enumerate}
\label{lem:func_poly}
\end{lemma}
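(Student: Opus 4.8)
The plan is to make precise the intuition that a quantum circuit with $G$ tunable two-qubit gates computes an expectation value that is a polynomial of controlled degree in the entries of those gates. First I would fix the labeling: each tunable two-qubit gate $U_i$ is a $4\times 4$ unitary, which we regard as a point in $\mathbb{R}^{32}$ by listing the real and imaginary parts of its $16$ complex entries; so $32G$ real variables encode $\{U_i\}_{i=1}^G$, and the remaining $\nu$ variables are the data $x\in[0,1]^\nu$. For a fixed architecture $a\in A$, define $p_a$ on $\mathbb{R}^{32G+\nu}$ by the same formula that defines $f(\cdot;\{U_i\},a)$, namely
\begin{equation}
    p_a(z_1,\dots,z_{32G},x) = \bra{0^n} U(x;\{U_i(z)\},a)^\dagger\, O\, U(x;\{U_i(z)\},a)\ket{0^n},
\end{equation}
where $U_i(z)$ is the $4\times 4$ matrix whose entries are read off from the block $z_{32(i-1)+1},\dots,z_{32i}$. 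Taking $P_{G,A}^\nu = \{p_a : a\in A\}$ immediately gives $|P_{G,A}^\nu| = |A|$ and Property 1, since restricting $p_a$ to the assignment $z$ corresponding to a particular choice of $\{U_i\}$ reproduces $f(\cdot;\{U_i\},a)$ exactly (the fixed, $x$-dependent unitaries and the state preparation are absorbed into $U(x;\cdot,a)$ and do not involve the $z$-variables).

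The substance is Property 2, the degree bound. Here I would expand $U(x;\{U_i\},a)$ as an ordered product of the $G$ tunable gates interleaved with fixed unitaries. Each matrix entry of this product is, by the Leibniz formula for matrix multiplication, a sum of monomials each of which is linear in exactly one entry of each $U_i$ — hence a polynomial of degree exactly $G$ in the $z$-variables, and likewise each entry of $U^\dagger$ is a polynomial of degree $G$ (entries of $U_i^\dagger$ are conjugates of entries of $U_i$, still degree-$1$ in the real coordinates $z$). The scalar $\bra{0^n} U^\dagger O U \ket{0^n}$ is a bilinear contraction of the entries of $U^\dagger$ against those of $U$ (with the fixed matrix $O$ and the fixed vector $\ket{0^n}$ supplying constants), so it is a polynomial of degree at most $G + G = 2G$ in $z$. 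I would write this out as a short chain of inequalities on degrees — $\deg(U_{\text{product}}) \le G$, then $\deg(\bra{\cdot}U^\dagger O U\ket{\cdot}) \le \deg(U^\dagger) + \deg(U) \le 2G$ — being careful that we count degree in the $32G$ real variables, where a single complex entry $a+bi$ of some $U_i$ contributes degree $1$.

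I expect the only mild subtlety — and the place a careful writer should slow down — is bookkeeping around the complex-versus-real count and the treatment of $U^\dagger$: one must confirm that passing from complex entries to their real and imaginary parts does not inflate the degree (it does not: $\operatorname{Re}$ and $\operatorname{Im}$ are real-linear in the complex entry, and conjugation is too), and that the $x$-dependent fixed unitaries and the state-preparation step genuinely carry no $z$-dependence so they only contribute constant coefficients to $p_a$ as a polynomial in $z$. Everything else is routine: there is no real obstacle, just the need to state the construction cleanly. Once Properties 1 and 2 are in hand, the lemma follows, and in the subsequent parts of the argument this will be combined with a bound on the pseudo-dimension / fat-shattering dimension of classes of bounded-degree polynomials (in the spirit of \cite{caro2020pseudo}) to control the complexity of $\mathcal{F}_{G,A}^\nu$ in terms of $G$ and $\log|A|$.
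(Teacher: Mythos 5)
Your proposal is correct and follows essentially the same route as the paper's proof: define one polynomial per architecture in the $32G$ real coordinates of the tunable gates plus the $\nu$ data variables, observe that each output amplitude is degree at most $G$ in those coordinates by expanding the ordered product of gates, and conclude the expectation value has degree at most $2G$. Your extra care about the real-versus-complex bookkeeping and conjugation is a welcome explicitness but does not change the argument.
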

\begin{proof}
    We begin by noting that for any fixed architecture $a\in A$, the function $f(x, \{U_i\}, a)$ is a function of $32G+\nu$ real variables, where the first $32G=2\cdot 2^2\cdot 2^2 \cdot G$ variables are the real and imaginary parts of the matrix elements of $\{U_i\in U(2^2)\}$, and the last $\nu$ variables are the input data $x\in [0, 1]^\nu$. 
    
    Next, we aim to prove that $f$ is a bounded degree polynomial in the unitary matrix elements. We follow the idea of \cite[Lemma 1]{caro2020pseudo} and analyze the function $f(x, \{U_i\}, a)$ gate by gate. 
    We note the following fact from linear algebra: for any state $\ket{\psi}$ and matrix $U$, the product $U\ket{\psi}$ is a state whose amplitudes are linear combinations of the amplitudes of $\ket{\psi}$ and of matrix elements of $U$. 
    Therefore, by applying $\{U_i\}_{i=1}^G$ and other unitaries that do not depend on $\{U_i\}$ sequentially according to the architecture $a$, we get a state whose amplitude is a polynomial of the matrix elements of $\{U_i\}$ with degree at most $G$. Hence, the output scalar $\braket{0^n}{U(x; \{U_i\}, a)^\dagger O U(x; \{U_i\}, a)|0^n}$ is a polynomial of the matrix elements of $\{U_i\}$ with degree at most $2G$. 
    Fixing those $32G$ variables corresponds to fixing $\{U_i\}$ and thus specifying any particular function in $\mathcal{F}_{G, A}^\nu$ with this architecture $a$. 
    Taking into account the dependence on $x$ and gathering the function for each architecture $a\in A$, we arrive at the desired set of functions $P_{G, A}^\nu$ with $|P_{G, A}^\nu|=|A|$.
\end{proof}

The fact that these functions are of bounded degree in the variables specifying the trainable unitaries implies an upper bound on pseudo-dimension. 
We prove this with a reasoning analogous to \cite{goldberg1995bounding} and \cite[Theorem 2]{caro2020pseudo}.

\begin{prop}[Pseudo-dimension upper bound for functions given by $G$-gate unitaries]
Let $\mathcal{F}_{G, A}^\nu$ be the function class given by an architecture class $A$ of $G$ two-qubit unitaries. 
Then the pseudo-dimension of $\mathcal{F}_{G, A}^\nu$ is at most $128G\log_2(16eG|A|)$.
\label{prop:pdim_gates}
\end{prop}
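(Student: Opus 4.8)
The plan is to prove the pseudo-dimension bound by a sign-pattern counting argument in the spirit of Warren's theorem and the bounds of \cite{goldberg1995bounding} for concept classes parameterized by real numbers (see also \cite[Theorem 2]{caro2020pseudo}), with \Cref{lem:func_poly} supplying the crucial structural input that the relevant functions are polynomials of bounded degree in the trainable parameters. Suppose $(x_1,y_1),\dots,(x_D,y_D)\in[0,1]^\nu\times\mathbb{R}$ are pseudo-shattered by $\mathcal{F}_{G,A}^\nu$, so that every one of the $2^D$ subsets of $[D]$ is realized as $\{i\in[D]:f(x_i)\ge y_i\}$ for some $f\in\mathcal{F}_{G,A}^\nu$. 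It suffices to upper bound the number of distinct such subsets and then solve the resulting inequality for $D$.

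First, restrict attention to a fixed architecture $a\in A$. By \Cref{lem:func_poly}, any $f\in\mathcal{F}_{G,A}^\nu$ implemented with architecture $a$ has the form $f(\cdot)=p_a(w,\cdot)$ for some weight vector $w\in\mathbb{R}^{32G}$, and for each fixed input the map $w\mapsto p_a(w,x_i)$ is a real polynomial of degree at most $2G$. Hence the $D$ functions $q_i(w)\triangleq p_a(w,x_i)-y_i$ are polynomials of degree at most $2G$ in the $32G$ real variables $w$, and the subsets arising from architecture $a$ are precisely the sets $\{i:q_i(w)\ge 0\}$ as $w$ ranges over $\mathbb{R}^{32G}$. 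By the Milnor--Thom/Warren-type bound on the number of sign patterns of a system of real polynomials (e.g.\ \cite[Theorem 8.3]{anthony1999neural}; one passes from strict to weak conditions $\ge 0$ by an infinitesimal perturbation), the number of such subsets is at most $C_0\,(c_0 D)^{32G}$ for absolute constants $C_0,c_0$ (concretely $C_0=2$, $c_0=e/8$) in the regime $D\ge 32G$, while if $D<32G$ the claimed bound is immediate since $128G\log_2(16eG|A|)\ge 32G$. Summing over the $|A|$ architectures gives $2^D\le C_0\,|A|\,(c_0 D)^{32G}$, hence $D\le \log_2(C_0)+\log_2|A|+32G\log_2(c_0 D)\le \mathcal{O}(1)+\log_2|A|+32G\log_2 D$, using $\log_2 c_0<0$. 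Applying the elementary fact that $D\le\alpha+\beta\log_2 D$ with $\beta\ge 1$ forces $D=\mathcal{O}(\alpha+\beta\log_2\beta)$, and chasing the constants with $\alpha=\mathcal{O}(1)+\log_2|A|$ and $\beta=32G$, yields $D\le 128G\log_2(16eG|A|)$, which bounds the pseudo-dimension as claimed.

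The work is essentially bookkeeping, and the main obstacle is keeping the constants under control so that the chain of inequalities collapses exactly to $128G\log_2(16eG|A|)$: one has to pick a convenient form of the polynomial sign-pattern bound (including the passage from strict to weak inequalities via perturbation) and then solve the transcendental inequality $D\le\mathcal{O}(1)+\log_2|A|+32G\log_2 D$ with enough slack to absorb the subleading terms. A minor additional point is to dispose of the degenerate regime $D<32G$ separately, but there the bound holds trivially because $\log_2(16eG|A|)>1$.
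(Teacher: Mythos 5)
Your proposal is correct and follows essentially the same route as the paper: reduce via \Cref{lem:func_poly} to degree-$2G$ polynomials in $32G$ real parameters, bound the number of realizable subsets by a Warren-type sign-pattern count, and solve the resulting inequality for the number of shattered points. The only (harmless) bookkeeping difference is that you apply the sign-pattern bound per architecture and sum over $|A|$, whereas the paper applies it once to the union of all $m|A|$ polynomials; both land within the stated $128G\log_2(16eG|A|)$ bound once the constants are chased.
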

\begin{proof}
    Let $\{(x_i, y_i)\}_{i=1}^m \subseteq [0, 1]^\nu\times\mathbb{R}$ be a set of data points satisfying that for any $C\subseteq \{1, \ldots, m\}$, there exists $f_C\in\mathcal{F}_{G, A}^\nu$ such that $f(x_i)-y_i\geq 0$ if and only if $i\in C$. That is, $\{(x_i, y_i)\}_{i=1}^m$ is pseudo-shattered by $\mathcal{F}_{G, A}^\nu$.
    From \Cref{lem:func_poly} we know that there exists a set of functions $P$ in $32G+\nu$ real variables with size $|P|=|A|$ such that for every $C$, there is a $p_C\in P$ and an assignment $\Xi_C$ to the first $32G$ variable that satisfies $p_C(\Xi_C, x_i)-y_i\geq 0$ if and only if $i\in C$. 
    This means that the set of functions $\{p(\cdot, x_i)-y_i: i=1, \ldots, m, p\in P\}$ is a set of $m|A|$ polynomials of degree at most $2G$ in $32G$ real variables that has at least $2^m$ different consistent sign assignments\footnote{A consistent sign assignment to a set of polynomials $p_1, \ldots, p_k$ is a vector $b\in\{-1, 0, 1\}^k$ such that there exists a set of input variables $z_1, \ldots, z_N\in \mathbb{R}$ such that $\mathrm{sgn}[p_i(z_1, \ldots, z_N)]=b_i$ for all $1\leq i\leq k$.}. 
    Now we invoke the following technical lemma.
    \begin{lemma}[Bounded degree polynomials have a bounded number of consistent sign assignments, {\cite{warren1968lower, goldberg1995bounding, caro2020pseudo}}]
    Let $P$ be a set of real polynomials in $v$ variables with $|P|\geq v$, each of degree at most $D\geq 1$. Then the number of consistent sign assignments to $P$ is at most $(8De|P|/v)^v$.
    \end{lemma}
    Thus we have
    \begin{equation}
        2^m \leq \left(\frac{8\cdot 2G\cdot em|A|}{32G}\right)^{32G}.
    \end{equation}
    Taking the logarithm yields
    \begin{equation}
        m\leq 32G(\log_2 (16eG|A|) + \log_2(m/(32G))).
    \end{equation}
    Let's first assume $m\geq 32G$. 
    If $\log_2(16eG|A|)\geq \log_2(m/(32G))$, then we have $m\leq 64G\log_2(16eG|A|)$. 
    Otherwise, $\log_2(16eG|A|)< \log_2(m/(32G))$ and we have $m\leq 64G\log_2(m/(32G))$, which translates into $\frac{\log_2(m/(32G))}{m/(32G)}\geq \frac{1}{2}$. 
    Thus $m/(32G)\leq 4$ and $m\leq 128G$. 
    In both cases, we have $m\leq 128G\log_2(16eG|A|)$. 
    If $m<32G$, this is also true. 
    Therefore, we have pseudo-dimension (by definition in \Cref{def:pdim}) at most $128G\log_2(16eG|A|)$.
\end{proof}

A special case is for fixed circuit architecture $|A|=1$, where we have pseudo-dimension at most $128G\log_2(16eG)$. 
On the other hand, if we allow variable structure of the trainable unitaries, then $|A|\leq \binom{n}{2}^G\leq n^{2G}$, and we have pseudo-dimension at most $128G\log_2(16eG) + 256G^2\log_2(16eGn)$.

\subsection{Function complexity and approximation power}

Now that we know the pseudo-dimension of such function class is upper bounded via the number of gates $G$, we can derive the minimal number of gates needed to obtain certain function approximation power. 
Consider the class of $1$-bounded and $1$-Lipschitz functions on $[0, 1]^\nu$, which can be represented by the unit ball $B^{1, \infty}$ in the Sobolev space $W^{1, \infty}_{[0, 1]^\nu}$. In order to approximate these functions well, the pseudo-dimension (and also the fat-shattering dimension) of our function class cannot be too small.
\begin{lemma}[Pseudo/fat-shattering dimension and approximation power, variant of {\cite[Theorem 2.10]{wolf2018mathematical}} and {\cite[Theorem 4]{yarotsky2017error}}]
Let $\epsilon>0$ and $\mathcal{F}\subseteq \mathbb{R}^{[0, 1]^\nu}$ be a class of functions such that for any $f\in B^{1, \infty}$, there is an $h\in\mathcal{F}$ such that $\|f-h\|_\infty<\epsilon$. 
Then the pseudo-dimension of $\mathcal{F}$ must be at least $1/(4\epsilon)^\nu$. 
The $\epsilon$-fat-shattering dimension of $\mathcal{F}$ must be at least $1/(8\epsilon)^\nu$.
\label{lem:pdim_approx}
\end{lemma}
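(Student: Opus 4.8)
The plan is to adapt the classical ``bump function'' lower bound of Yarotsky (Theorem 4 in \cite{yarotsky2017error}) and of \cite{wolf2018mathematical} to the present abstract setting. I would fix a regular grid $\Xi=\{x_1,\dots,x_m\}\subseteq[0,1]^\nu$ of spacing $s$ along each coordinate axis, so that the number of grid points satisfies $m\geq (1/s)^\nu$ and any two distinct grid points are at $\ell^1$-distance at least $s$. Around each $x_i$ I place a localized tent function $g_i(x)=\max\{0,\,h-\norm{x-x_i}_1\}$ of height $h$. Each $g_i$ is globally $1$-Lipschitz with respect to $\ell^1$ (equivalently, all its first partial derivatives lie in $\{-1,0,1\}$ almost everywhere), has $\norm{g_i}_\infty=h$, and vanishes outside the $\ell^1$-ball of radius $h$ about $x_i$; with the standard $W^{1,\infty}$-norm as in \cite{yarotsky2017error} this gives $\norm{g_i}_{W^{1,\infty}}=\max\{h,1\}$. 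Choosing $s=2h$ makes the supports of the $g_i$ pairwise disjoint and — the key point — ensures that for every subset $C\subseteq[m]$ the superposition $f_C\triangleq\sum_{i\in C}g_i$ is still $1$-Lipschitz with respect to $\ell^1$: the separation $2h$ exactly absorbs the sum of two tent heights in the worst case $x\in\mathrm{supp}(g_i)$, $y\in\mathrm{supp}(g_j)$. Hence $f_C$ lies in the closed unit ball $B^{1,\infty}$ of $W^{1,\infty}([0,1]^\nu)$ whenever $h\leq 1$, while $f_C(x_i)=h$ for $i\in C$ and $f_C(x_i)=0$ for $i\notin C$.

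Next I would invoke the approximation hypothesis: for each $C$ there is an $h_C\in\mathcal{F}$ with $\norm{f_C-h_C}_\infty<\epsilon$. For the pseudo-dimension bound take $h=2\epsilon$ (so $s=4\epsilon$, $m\geq(1/(4\epsilon))^\nu$), working in the nontrivial regime $\epsilon\leq 1/2$; then $\norm{f_C-h_C}_\infty<\epsilon$ forces $h_C(x_i)>\epsilon$ for $i\in C$ and $h_C(x_i)<\epsilon$ for $i\notin C$. Choosing the thresholds $y_i=\epsilon$ for all $i$ yields $h_C(x_i)\geq y_i\iff i\in C$, so $\{(x_i,y_i)\}_{i=1}^m$ is pseudo-shattered by $\mathcal{F}$ and $\mathrm{Pdim}(\mathcal{F})\geq m\geq 1/(4\epsilon)^\nu$. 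For the $\epsilon$-fat-shattering bound the construction is identical except I take $h=4\epsilon$ (so $s=8\epsilon$, $m\geq(1/(8\epsilon))^\nu$), valid for $\epsilon\leq 1/4$, and thresholds $y_i=2\epsilon$: then $\norm{f_C-h_C}_\infty<\epsilon$ gives $h_C(x_i)>3\epsilon=y_i+\epsilon$ for $i\in C$ and $h_C(x_i)<\epsilon=y_i-\epsilon$ for $i\notin C$, so $\Xi$ is $\epsilon$-fat-shattered (with margins $\epsilon$) and $\mathrm{fat}(\mathcal{F},\epsilon)\geq m\geq 1/(8\epsilon)^\nu$. The excluded large-$\epsilon$ ranges are trivial, since there the claimed bounds are below $1$.

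The only places where care is needed — and hence the ``hard part,'' though it is really just bookkeeping rather than a genuine obstacle — are: (i) checking that each $f_C$ genuinely sits in the unit ball of $W^{1,\infty}([0,1]^\nu)$, which relies on the fact that tent functions built from $\norm{\cdot}_1$ have all partials bounded by $1$ and that the choice $s=2h$ keeps the Lipschitz constant of the superposition at exactly $1$; and (ii) translating the uniform-approximation guarantee $\norm{f_C-h_C}_\infty<\epsilon$ into the (strict-versus-non-strict) inequalities demanded by \Cref{def:pdim} and \Cref{def:fat} with the right targets $y_i$. One could instead phrase the whole argument with $\ell^\infty$- or $\ell^2$-balls, but at the cost of dimension-dependent constants; using $\ell^1$ is cleanest because it matches the ``max of partial derivatives'' norm defining $W^{1,\infty}$ and produces the dimension-independent constants $4$ and $8$ appearing in the statement. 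Tracking the interplay of the three scales — bump height $h$, grid spacing $s$, and target accuracy $\epsilon$ — is where sign errors would most easily creep in.
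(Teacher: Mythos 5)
Your proposal is correct and follows essentially the same route as the paper's proof: a grid of $\Theta(1/\epsilon)^{-\nu}$-spaced points carrying disjointly supported $1$-Lipschitz bumps of height $\Theta(\epsilon)$, whose $2^m$ superpositions all lie in $B^{1,\infty}$ and, via the $\epsilon$-approximation hypothesis, (fat-)shatter the grid. The only difference is cosmetic: you use piecewise-linear $\ell^1$ tent functions with partial derivatives exactly $\pm 1$, which delivers the constants $4$ and $8$ directly, whereas the paper uses smoothed tensor-product triangular bumps with derivative bound $C>2$ and recovers the constants by letting $C\to 2$.
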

\begin{proof}
    Let $m\in \mathbb{N}$ to be chosen later. Let $x_1, \ldots, x_M \in [0, 1]^d$ be $M=(m+1)^\nu$ points on a cubic lattice such that $\|x_i-x_j\|\geq 1/m$ for all $i\neq j$. 
    Let $y\in \mathbb{R}^M$, and we will now construct a smooth function that takes the $y$ values at these lattice points. 
    Specifically, we define
    \begin{equation}
        f(x) = \sum_{i=1}^M y_i \phi(m(x-x_i)),
    \end{equation}
    where $\phi(z)=\prod_{j=1}^\nu \varphi(z_j)$ and $\varphi$ is a smoothed version of the triangular function that takes value $0$ at $|z|\geq 1/2$ and value $1$ at $z=0$ and $|\partial_j \phi(z)|\leq C$ for any $C>2$. 
    In this way, we have $f(x_i) = y_i$ for all $1\leq i\leq M$.

    Next, for any $\alpha\in\{0, 1\}^M$, set $y_i = \alpha_i/(Cm)$. 
    This means that $|y_i|\leq 1/(Cm)$ and thus $f\in B^{1, \infty}$. 
    Then by assumption there must be an $h\in\mathcal{F}$ such that $\|f-h\|_\infty <\epsilon$. In particular, we have $|f(x_i)-h(x_i)|=|y_i-h(x_i)| <\epsilon$ for all $i$. 

    Now, for the pseudo-dimension, we can choose $m$ large enough (say, $m=\floor{1/(C^2\epsilon)}$) such that $\epsilon<1/(2Cm)$. Then
    \begin{equation}
        h(x_i) \geq \frac{1}{2Cm} \iff \alpha_i=1, y_i=\frac{1}{Cm}.
    \end{equation}
    Therefore, by definion in \Cref{def:pdim}, $\{x_1, \ldots, x_M\}$ is pseudo-shattered by $\mathcal{F}$, and thus the pseudo-dimension of $\mathcal{F}$ is at least $M = (m+1)^\nu \geq 1/(C^2\epsilon)^\nu$. Taking the limit $C\to 2$ yields the desired result.

    For fat-shattering dimension, we can choose $m$ large enough (say, $m=\floor{1/(C^3\epsilon)}$) such that $\epsilon<1/(4Cm)$. Then
    \begin{equation}
        \alpha_i=1 \implies h(x_i)\geq \frac{1}{Cm}-\epsilon \geq \frac{1}{2Cm} + \epsilon,
    \end{equation}
    and 
    \begin{equation}
        \alpha_i=0 \implies h(x_i)\leq \epsilon \leq \frac{1}{2Cm} - \epsilon.
    \end{equation}
    Therefore, by definion in \Cref{def:fat}, $\{x_1, \ldots, x_M\}$ is $\epsilon$-fat-shattered by $\mathcal{F}$, and thus the $\epsilon$-fat-shattering dimension of $\mathcal{F}$ is at least $M = (m+1)^\nu \geq 1/(C^3\epsilon)^\nu$. Taking the limit $C\to 2$ yields the desired result.
\end{proof}

\subsection{Function complexity and sample complexity}

Now we aim to show that in order to learn a function class, the number of samples we need is lower bounded by its complexity.
In particular, we achieve this through the fat-shattering dimension.

\begin{prop}[Sample complexity lower bound for real-valued functions by fat-shattering dimension, variant of {\cite[Theorem 19.5]{anthony1999neural}}]
    Let $\mathcal{F}\subseteq[0, 1]^\mathcal{X}$ with loss function $l_h(x, y) = l(|h(x)-y|)$. 
    Suppose $l$ is an increasing (almost everywhere) differentiable function, i.e., $C=\inf_{ t\geq 1} l'(t)>0$. 
    For $0<\epsilon<1, 0<\delta\leq 0.01$, the smallest training data size $N$ such that there exists a learning algorithm $H: (\mathcal{X}, [0, 1])^N \to \mathcal{F}$ that satisfies 
    \begin{equation}
        \mathbb{P}_{S \sim P^N}\left\{ \mathbb{E}_{(X, Y)\sim P} l(|H[S](X)-Y|) - \inf_{f\in\mathcal{F}}\mathbb{E}_{(X, Y)\sim P} l(|f(X)-Y|) \leq \epsilon \right\} \geq 1-\delta,
    \end{equation}
    for any probability distribution $P$ over $\mathcal{X}\times [0, 1]$ must be at least
    \begin{equation}
        N\geq C\frac{\mathrm{fat}(\mathcal{F}, \epsilon/\alpha)-1}{32\alpha}, \quad \forall \alpha \in (0, 1/4).
    \end{equation}
    Note that this contains $L_p$ loss functions as a special case, where $l_h(x, y)=|h(x)-y|^p$, and $l'(t)=pt^{p-1}\geq p=C$.
\label{prop:fat}
\end{prop}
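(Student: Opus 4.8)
The plan is to reproduce, under the weakened hypothesis on the loss $l$, the classical fat‑shattering lower bound of \cite[Theorem~19.5]{anthony1999neural}, so I will describe its three ingredients and flag the single place where our variant must be re‑checked. Fix an arbitrary $\alpha\in(0,1/4)$ and set $\gamma=\epsilon/\alpha$ and $d=\mathrm{fat}(\mathcal{F},\gamma)$; if $d\le 1$ the claimed bound $N\ge C(d-1)/(32\alpha)$ is vacuous, so assume $d\ge 2$. By Definition~\ref{def:fat} of the fat‑shattering dimension there are points $x_1,\dots,x_d\in\mathcal{X}$, witnesses $r_1,\dots,r_d\in\mathbb{R}$, and for every bit string $b\in\{0,1\}^d$ a function $f_b\in\mathcal{F}$ with $f_b(x_i)\ge r_i+\gamma$ when $b_i=1$ and $f_b(x_i)\le r_i-\gamma$ when $b_i=0$. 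First I would build an adversarial family of distributions: for $b\in\{0,1\}^d$, let $P_b$ have $X$‑marginal uniform on $\{x_1,\dots,x_d\}$ and, conditionally on $X=x_i$, let $Y$ be supported on the two points $r_i\pm\gamma$ with mass $\tfrac12+\beta$ on $r_i+(2b_i-1)\gamma$ and mass $\tfrac12-\beta$ on the other, where the bias $\beta$ is a small parameter tied to $\alpha$. Since $f_b\in\mathcal{F}$, the benchmark $\inf_{f\in\mathcal{F}}\mathbb{E}_{P_b}l(|f(X)-Y|)$ is attained, up to negligible slack, at $f_b$, which is essentially the Bayes‑optimal predictor.

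Second I would establish the quantitative heart: for an index $i$ on which a predictor lands on the \emph{wrong} side of $r_i$ relative to $b_i$, the excess conditional risk at $x_i$ is bounded below by a constant multiple of $\epsilon$. This uses only that $l$ is nondecreasing (to locate the conditional minimizer on the $b_i$‑side) together with a mean‑value estimate $l(s+t)-l(s)\ge C\,t$ applied on the interval of arguments of $l$ that actually occur in the construction; the factor $C=\inf_{t\ge 1}l'(t)$ enters here, and verifying that the construction's range of loss arguments (after the normalization inherited from \cite[Chapter~19]{anthony1999neural}) is one on which $l'\ge C$ is guaranteed is precisely where this statement is a \emph{variant} rather than a verbatim citation and must be checked. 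Averaging over $b_i$ uniform, any predictor whose value at $x_i$ is independent of $b_i$ incurs expected excess risk $\Omega(C\epsilon)$ at $x_i$.

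Third I would run the averaging and the high‑probability conversion. Draw $b$ uniformly from $\{0,1\}^d$ and then a size‑$N$ sample $S$ from $P_b$. For any learner $H$ and any $i$ with $x_i\notin S$, the hypothesis $H[S]\in\mathcal{F}$ is conditionally independent of $b_i$, so by the previous paragraph it contributes $\Omega(C\epsilon)$ to the expected excess risk; combining with the coupon‑collector estimate $\mathbb{E}_S[\#\{i:x_i\notin S\}]\ge d(1-1/d)^N\ge d-N$ gives $\mathbb{E}_{b}\mathbb{E}_{S\sim P_b^N}[\text{excess risk of }H[S]]\ge \Omega\!\big(\tfrac{d-N}{d}\,C\epsilon\big)$, so for $N$ below a threshold of order $C\,d/\alpha$ the constants can be arranged so that this average strictly exceeds $\epsilon$. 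Then there is a fixed $b^\star$ for which $\mathbb{E}_{S\sim P_{b^\star}^N}[\text{excess risk}]>\epsilon$; since the excess risk is nonnegative and uniformly bounded (as $l$ is finite on the bounded range of $|h(X)-Y|$), Markov's inequality forces $\mathbb{P}_{S\sim P_{b^\star}^N}\{\text{excess risk}>\epsilon\}>0.01$, contradicting the assumed $0.99$‑success guarantee of $H$ against $P_{b^\star}$. Unwinding the constants yields $N\ge C(d-1)/(32\alpha)$ for every $\alpha\in(0,1/4)$, i.e.\ $N\ge C(\mathrm{fat}(\mathcal{F},\epsilon/\alpha)-1)/(32\alpha)$; the $L_p$ case is immediate from $\inf_{t\ge1}\tfrac{d}{dt}t^p=p$.

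\textbf{Main obstacle.} The delicate points are not conceptual but arise in the bookkeeping: (i) choosing the bias $\beta$ and the support of $Y$ so that the single increment of $l$ used in the mean‑value step genuinely lies in the region $[1,\infty)$ on which our hypothesis provides the lower bound $l'\ge C$ (this is the respect in which the result is a variant of the textbook statement), and (ii) the final passage from an average‑over‑$b$ lower bound on the \emph{expected} excess risk to the impossibility of $0.99$‑success against a \emph{single} worst‑case distribution, carried out with constants consistent with the explicit factor $1/(32\alpha)$. Both steps follow the template of \cite[Chapter~19]{anthony1999neural}, and the remaining work is to confirm each estimate there survives the substitution of our loss hypothesis and our $[0,1]$‑valued, $\mathcal{F}$‑constrained setting.
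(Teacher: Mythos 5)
Your overall architecture (adversarial distributions indexed by shattered bit strings, an averaging argument over unseen points, then a conversion to a single bad distribution) is the standard template, and you correctly flag the one place where the "variant" hypothesis on $l$ matters. But that flagged step is exactly where your construction, as described, breaks, and you leave it unresolved. You place the labels at $r_i\pm\gamma$ with $\gamma=\epsilon/\alpha$, so every loss argument $|h(x_i)-Y|$ occurring in your construction lies in $[0,2\gamma]$, i.e.\ typically well inside $[0,1)$. The only quantitative hypothesis available is $C=\inf_{t\ge 1}l'(t)>0$, which says nothing about $l'$ on $[0,1)$: for instance $l(t)=t^p$ with large $p$ satisfies the hypothesis with $C=p$, yet is essentially flat near $0$, so the per-point excess risk of a wrongly-sided predictor in your construction is far smaller than the claimed $\Omega(C\epsilon)$, and the whole lower bound collapses. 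The paper's proof avoids this by a reduction to binary classification in which the binary labels $y_i\in\{0,1\}$ are re-encoded as $\tilde y_i\in\{-1,2\}$ before being handed to the real-valued learner; since $\mathcal{F}\subseteq[0,1]^{\mathcal{X}}$, every loss argument $|f(x_i)-\tilde y_i|$ is then at least $1$, which is precisely what lets the mean-value step invoke $l'\ge C$. Without that (or an equivalent) device, your mean-value estimate is not justified.

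A secondary difference: the paper does not re-derive the averaging/coupon-collector argument at all. It reduces learning $\mathcal{F}$ to accuracy $\epsilon$ to learning the class $H_d$ of all binary functions on $d=\mathrm{fat}(\mathcal{F},\epsilon/\alpha)$ points to accuracy $\alpha/C$, and then imports the known $(d-1)/(32\epsilon)$ classification lower bound wholesale; the explicit constant $1/(32\alpha)$ comes directly from that citation. Your plan re-proves that inner lower bound from scratch, which is legitimate in principle but adds work (and your final Markov step is stated backwards: you need a reverse-Markov/Paley--Zygmund-type bound using the uniform bound on the excess risk, not plain Markov, to convert an expectation lower bound into a probability lower bound). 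The essential missing idea, however, is the label-shifting trick that places all loss arguments in $[1,\infty)$.
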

\begin{proof}
    Similarly to the proof of Theorem 19.5 in \cite{anthony1999neural}, the idea is to reduce the problem to a discrete classification problem. 
    Consider the class $H_d$ of all functions mapping from a finite set $\{x_1, \ldots, x_d\}\subset \mathcal{X}$ to $\{0,1\}$. 
    It's known that any learning algorithm for $H_d$ has sample complexity at least $(d-1)/(32\epsilon)$ for small $\epsilon, \delta$ (\cite[Theorem 5.3]{anthony1999neural}). 
    Here we show that, for any fixed $\alpha$ between $0$ and $1/4$, any learning algorithm for $\mathcal{F}$ to accuracy $\epsilon$ can be used to construct a learning algorithm for $H_d$ to accuracy $\alpha/C$, where $d=\mathrm{fat}(\mathcal{F}, \epsilon/\alpha)$. 
    Then the proposition follows.

    To see this, suppose $\{x_1, \ldots, x_d\}$ is $\epsilon/\alpha$-shattered by $\mathcal{F}$, witnessed by $r_1, \ldots, r_d$. 
    Suppose $L$ is a learning algorithm for $\mathcal{F}$, then we can construct a learning algorithm for $H_d$ as follows. 
    For each labeled example $(x_i, y_i)$, assuming $y_i$ is deterministic given $x_i$, the algorithm passes to $L$ the labeled example $(x_i, \tilde{y}_i)$, where $\tilde{y}_i=2$ if $y_i=1$ and $\tilde{y}_i=-1$ if $y_i=0$. 
    Let $P$ be the original distribution on $\mathcal{X}\times \{0, 1\}$, and $\tilde{P}$ the induced distribution on $\mathcal{X}\times \{-1, 2\}$. 
    Then suppose $L$ produces a function $f: \mathcal{X}\to [0, 1]$, the learning algorithm for $H_d$ then outputs $h: \mathcal{X}\to \{0, 1\}$, where $h(x_i)=1$ if and only if $f(x_i)>r_i$. 
    Thus we only need to prove if $\mathbb{E}_{\tilde{P}}l_f-\inf_{g\in\mathcal{F}}\mathbb{E}_{\tilde{P}}l_g<\epsilon$, then $\mathbb{E}_P 1(h(x)\neq y)\leq \alpha/C.$

    To show this, we claim that
    \begin{align}
        \inf_{g\in\mathcal{F}}\mathbb{E}_{\tilde{P}}l_g = \inf_{g\in\mathcal{F}}\mathbb{E}_{\tilde{P}}l(|g(x)-\tilde{y}|) \leq \mathbb{E}_{\tilde{P}} \min \{l(|\hat{y}-\tilde{y}|), \hat{y}\in\{r(x)\pm \epsilon/\alpha\}\},
    \end{align}
    where $r(x_i)=r_i$. 
    This is because that $\tilde{P}$ is concentrated on the shattered set. 
    Then for any assignment $\{\hat{y}_i\in \{r_i\pm\epsilon/\alpha\}, i=1, \ldots, d\}$, there exists a $g\in\mathcal{F}$ s.t. $g(x_i)\geq \hat{y}_i$ if $\hat{y}_i=r_i+\epsilon/\alpha$ and $g(x_i)\leq \hat{y}_i$ if $\hat{y}_i=r_i-\epsilon/\alpha$. 
    In particular, we consider the assignment of $\hat{y}_i$ s.t. $l(|\hat{y}_i-\tilde{y}_i|)$ is minimized. 
    Then there exists a function $g^*$ staistifying the following property. 
    If $\tilde{y}_i=-1$, then the minimizer is $\hat{y}_i=r_i-\epsilon/\alpha$, and we have $l(|g^*(x_i)-\tilde{y}_i|)\leq l(|\hat{y}-\tilde{y}_i|)$ since $\tilde{y}_i < g^*(x_i)\leq \hat{y}_i$. 
    Similarly, if $\tilde{y}_i=2$, then the minimizer is $\hat{y}_i=r_i+\epsilon/\alpha$, and we still have $l(|g^*(x_i)-\tilde{y}_i|)\leq l(|\hat{y}-\tilde{y}_i|)$ since $\hat{y}_i \leq g^*(x_i) \leq \tilde{y}_i$. 
    Therefore, since $\tilde{y}_i$ and $y_i$ is deterministic given $x_i$, we have found a single $g^*$ s.t. $\mathbb{E}_{\tilde{P}}l(|g^*(x)-\tilde{y}|) \leq \mathbb{E}_{\tilde{P}} \min \{l(|\hat{y}-\tilde{y}|), \hat{y}\in\{r(x)\pm \epsilon/\alpha\}\}$. 
    Hence, the infimum over $g\in\mathcal{F}$ $\inf_{g\in\mathcal{F}}\mathbb{E}_{\tilde{P}}l(|g(x)-\tilde{y}|) \leq \mathbb{E}_{\tilde{P}}l(|g^*(x)-\tilde{y}|) \leq \mathbb{E}_{\tilde{P}} \min \{l(|\hat{y}-\tilde{y}|), \hat{y}\in\{r(x)\pm \epsilon/\alpha\}\}$. 
    Therefore,
    \begin{align}
        \mathbb{E}_{\tilde{P}}l_f-\inf_{g\in\mathcal{F}}\mathbb{E}_{\tilde{P}}l_g \geq \mathbb{E}[l(|f(x)-\tilde{y}|)-\min\{l(|\hat{y}-\tilde{y}|), \hat{y}\in\{r(x)\pm \epsilon/\alpha\}\}].
    \end{align}
    Consider the quantity inside the expectation, for $x=x_i$ with $y=0$, $\tilde{y}=-1$, let $a=f(x_i)+1, b=r_i-\epsilon/\alpha+1$. 
    Then by Lagrange's mean value theorem, there exists a $c$ between $a$ and $b$, such that this quantity can be written as 
    \begin{align}
        l(a)-l(b)=l'(c)(a-b)=l'(c)(f(x_i)-r_i+\epsilon/\alpha).
    \end{align}
    If $l(|f(x)-\tilde{y}|)-\min\{l(|\hat{y}-\tilde{y}|), \hat{y}\in\{r(x)\pm \epsilon/\alpha\}\}<C\epsilon/\alpha$, then
    \begin{align}
        f(x_i)-r_i
        &< \frac{\epsilon}{\alpha}\frac{C-l'(c)}{l'(c)}<0,
    \end{align}
    and we have $f(x_i)<r_i$ and $h(x_i)=0=y_i$.
    Similar arguments apply for $y=1$. 
    Thus,
    \begin{align}
        \mathbb{E}_P 1(h(x)\neq y) &\leq \tilde{P}[|f(x)-\tilde{y}|^p-\min\{|\hat{y}-\tilde{y}|^p, \hat{y}\in\{r(x)\pm \epsilon/\alpha\}\}\geq C\epsilon/\alpha] \\
        &\leq \frac{\alpha}{C\epsilon}\mathbb{E}_{\tilde{P}}[|f(x)-\tilde{y}|^p-\min\{|\hat{y}-\tilde{y}|^p, \hat{y}\in\{r(x)\pm \epsilon/\alpha\}\}] \\
        &\leq \frac{\alpha}{C\epsilon} (\mathbb{E}_{\tilde{P}}l_f-\inf_{g\in\mathcal{F}}\mathbb{E}_{\tilde{P}}l_g) \leq \frac{\alpha}{C}.
    \end{align}
    This completes the proof of \Cref{prop:fat}.
\end{proof}

With \Cref{prop:pdim_gates}, \Cref{lem:pdim_approx} and \Cref{prop:fat}, we can finally proceed to prove \Cref{thm:approx-functions-full}.

\begin{proof}[Proof of \Cref{thm:approx-functions-full}]
    To show the gate number lower bound, note that from \Cref{prop:pdim_gates}, $\mathrm{Pdim}(\mathcal{F}_{G, A}^\nu)$ is upper bounded by $128G\log_2(16eG)+256G^2\log_2(16eGn)$ for variable circuit structures and by $128G\log_2(16eG)$ for fixed circuit structure.
    Meanwhile, from \Cref{lem:pdim_approx}, we know that to approximate any $\nu$-variable $1$-bounded and $1$-Lipschitz functions to $\epsilon$ error in $\norm{\cdot}_\infty$, we must have $\mathrm{Pdim}(\mathcal{F}_{G, A}^\nu)\geq 1/(4\epsilon)^\nu$ and $\mathrm{fat}(\mathcal{F}_{G, A}^\nu, \epsilon)\geq 1/(8\epsilon)^\nu$.
    Therefore, for variable circuit structures, we have
    \begin{equation}
        1/(4\epsilon)^\nu \leq \mathrm{Pdim}(\mathcal{F}_{G, A}^\nu) \leq 128G\log_2(16eG)+256G^2\log_2(16eGn),
    \end{equation}
    and thus $G\geq \tilde{\Omega}(1/(\epsilon)^{\nu/2})$. Similarly, for fixed circuit structure, we have $G\geq \tilde{\Omega}(1/\epsilon^{\nu})$.

    To show the sample complexity lower bound, note that from \Cref{prop:fat}, we have the sample complexity $N\geq C\frac{\mathrm{fat}(\mathcal{F}_{G, A}^\nu, \epsilon/\alpha)-1}{32\alpha}$. Setting $\alpha=1/8$ and using the fat-shattering bound from \Cref{lem:pdim_approx}, we arrive at $N\geq \Omega(1/\epsilon^\nu)$. 
\end{proof}

\ins{
\section{Details of the numerical experiments}
\label{app:numerics}
}

\ins{
In this section, we provide the implementation details of the numerical experiments presented in \Cref{sec:numerics}.
All simulations are conducted using JAX \cite{jax2018github} and TensorCircuit \cite{zhang2023tensorcircuit} with the matrix product state backend.
}

\ins{
We consider a large system size $n = 10000$ to illustrate the sample complexity's independence of $n$.
For simplicity, we consider the setting in which the $G$ gates used to generate the unknown target states are sampled from a discrete gate set of size two.
Here, we sample two Haar-random 2-qubit gates to form this gate set.
We also assume that the $n=10000$ qubits form a one-dimensional line, and each gate only acts on neighboring qubits.
The gate configurations are randomly sampled over the first $4$ qubits (Case (a)) or over all qubits (Case (b)).
We further assume that the gate positions are known to the learning algorithm to accelerate the simulation.
Of course, this assumption can be removed by enumerating over all $\binom{n'}{2}^G$ possible configurations (with $n'=4$ or $\leq 2G$ the number of qubits acted upon non-trivially by the $G$ gates) or by performing Algorithm~\ref{algo:nonzero-qubits}.
However, this will introduce an additional overhead to the implementation, which we want to avoid for the sake of these numerics.
}

\ins{
For each $G$ and sampled target state, we perform the learning algorithm detailed in \Cref{app:states-upper} and calculate the fidelity $F$ between the output state and the target state.
To accelerate the learning algorithm for different sample sizes, we first sample the Clifford classical shadows with the maximal sample size, and then sub-sample the results for smaller sample sizes.
In Case (b), where the effective system size $n'$ is larger (up to $2G=12$), we adopt the shallow shadow modification of Clifford classical shadows \cite{bertoni2024shallow,schuster2024random}.
Specifically, we replace the global Clifford rotation with a brickwork Clifford circuit of depth $10$.
We also replace Clifford gates with Haar random gates to reduce statistical fluctuations.
}

\ins{
The procedure described above constitutes a single sweep over the entire $G-N$ plane.
We repeat this procedure independently $100000$ times for Case (a) and $25000$ times for Case (b).
We record the resulting fidelity and calculate the average and median values for each $G$ and $N$.
The results are presented in \Cref{fig:numerics}.
}

\ins{
We note that for each $G$, when the sample size is above the sample complexity, the reconstruction fidelity of most trials are exactly one up to machine precision.
This is because we are using a discrete gate set, so the target states can be found unambiguously.
Therefore, the sample complexity lines for $F_\mathrm{med}=0.999, 0.9999, 0.99999$ coincide, and we only plot the one for $F_\mathrm{med}=0.999$.
Meanwhile, the line for averaged fidelity $F$ changes with the value of $F$, because the average fidelity $F$ takes into account those failing cases where fidelity can be close to zero.
}

\vspace{2em}
\section{Preliminary results on learning brickwork circuits}
\label{app:brickwork}

As stated in the outlook section, an interesting circuit structure is the brickwork circuit, which is generated by repeatedly applying the following two layers of gates (suppose $n$ is even): (1) $U_{1, 2}\otimes U_{3, 4}\otimes\cdots\otimes U_{n-1, n}$ and (2) $U_{2, 3}\otimes U_{4, 5}\otimes\cdots\otimes U_{n-2, n-1}$, where $U_{i, j}$ denotes a 2-qubit unitary acting on the $i$th and $j$th qubit.
Here we utilize the tools from unitary $t$-designs \cite{brandao2016local} to prove that the metric entropy of $G$-gate brickwork circuits is lower bounded by $\Omega(tn)$, if they can implement (approximate) unitary $t$-designs.
Specifically, we have the following result.

\begin{prop}
[Metric entropy lower bound of brickwork circuits]
Let $U^{n, \mathrm{brick}}_G \subseteq U(2^n)$ be the set of $n$-qubit unitaries that can be implemented with $G$-gate brickwork circuits. 
Suppose that the uniform distribution over $U^{n, \mathrm{brick}}_G$ forms an $\epsilon$-approximate $t$-design of $U(2^n)$ for some $\epsilon\in (0,1/2)$.
Then we have
\begin{equation}
    \log\mathcal{M}(U^{n, \mathrm{brick}}_G, \davg, \epsilon) \geq \Omega(tn).
\end{equation}
\end{prop}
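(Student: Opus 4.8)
The plan is to exploit the defining inequality of an $\epsilon$-approximate $t$-design together with the (quotient normalized Frobenius, equivalently $\davg$) covering/packing number bounds for the full unitary group. The heuristic is this: if the uniform distribution over $U^{n,\mathrm{brick}}_G$ approximates the $t$-th moments of the Haar measure on $U(2^n)$, then the measure cannot be too ``spread out'' — its support must be rich enough to reproduce the $t$-design property, and $t$-designs are known to require a number of distinct elements that grows like $2^{\Omega(tn)}$ (this is, morally, the lower bound behind the $G \geq \tilde\Omega(tn)$ gate-count bound of \cite{brandao2016local}). Since $U^{n,\mathrm{brick}}_G$ is a \emph{finite} set (it is parametrized by finitely many brick positions — actually for fixed brickwork architecture there is a fixed number of gates, but the point is that if we wanted a genuine design we need the image to be large), we want to turn ``the image set is large'' into ``the metric entropy w.r.t.\ $\davg$ is large.''

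First I would make precise the relation between approximate $t$-designs and cardinality/covering. The cleanest route: a standard fact (see e.g.\ \cite{brandao2016local}, or a direct Markov/second-moment argument) states that if a distribution $\nu$ on $U(2^n)$ is an $\epsilon$-approximate $t$-design with $\epsilon<1/2$, then any $\davg$-ball of radius $r$ has $\nu$-measure at most some explicit function $q(r,t,n)$ that decays; concretely one shows $\nu(B_{\davg}(V,r)) \le 2 \Pr_{\mathrm{Haar}}[B_{\davg}(V, 2r)] + (\text{design error})$, using the $t$-th moment operator to control the indicator of a small ball via a low-degree polynomial approximation. The Haar measure of a $\davg$-ball of radius $r$ around any fixed $V$ is at most $(c r)^{d^2}$ with $d=2^n$ by the volumetric covering bound \Cref{lem:2q-unitaries} combined with \Cref{lem:dist-df'} and \Cref{lem:quotient-packing}. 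Then a covering net of $U^{n,\mathrm{brick}}_G$ at scale $\epsilon$ with $M = \mathcal{N}(U^{n,\mathrm{brick}}_G, \davg, \epsilon)$ elements has its balls covering all the $\nu$-mass, so $1 \le M \cdot q(\epsilon, t, n)$, i.e.\ $M \ge 1/q(\epsilon,t,n)$. Taking logarithms and tracking the $d^2 = 4^n$ from the Haar ball volume against the design parameters should yield $\log M \ge \Omega(tn)$; I expect the $tn$ (rather than $t n$ times a log) to come from balancing the number of ``independent coordinates'' a $t$-design must control, $\Theta(tn)$, against the exponentially small ball volumes.

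Actually, let me reconsider the cleanest packing of the argument, since going through the moment operator directly is delicate. An alternative I would pursue in parallel: use the lower bound on the size of an exact-design \emph{support}, or rather the frame-potential / Gram-matrix argument. If $\{V_i\}_{i=1}^M$ is an $\epsilon$-covering net of $U^{n,\mathrm{brick}}_G$ w.r.t.\ $\davg$, replace each element of $U^{n,\mathrm{brick}}_G$ by its nearest net point; this pushes the uniform distribution to a distribution $\tilde\nu$ supported on $\le M$ points, and one checks that $t$-th moments shift by at most $O(t\epsilon)$ in the relevant norm (using \Cref{lem:dist-df'} item 2 and unitary invariance to bound $\|V^{\otimes t} - \hat V^{\otimes t}\|$ by $t \, d_F'(V,\hat V) \lesssim t\epsilon$), so $\tilde\nu$ is still an $O(\epsilon + t\epsilon)$-approximate $t$-design. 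But an $\epsilon'$-approximate $t$-design supported on $M$ points with $\epsilon' < 1/2$ requires $M \ge \dim(\text{span of }\{V^{\otimes t,t}\}) \cdot (1 - 2\epsilon') \ge c \cdot \binom{d^2 + t - 1}{t}^{?}$ — more carefully, the rank of the averaged $t$-fold channel forces $M \ge \Omega\big(\binom{d^2+t-1}{t}\big)$-ish, whose log is $\Omega(t \log(d^2/t)) = \Omega(tn)$ for $t \le \mathrm{poly}(d)$. Comparing the two routes, this second one is more robust and I would present it as the main line.

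The main obstacle I anticipate is the quantitative ``support size of an approximate $t$-design'' lemma: making rigorous that an $\epsilon'$-approximate $t$-design ($\epsilon'<1/2$) on $M$ atoms forces $M$ to be at least the dimension (up to a constant) of the image of the $t$-th Haar moment superoperator, \emph{and} getting that dimension's logarithm to be $\Omega(tn)$ rather than, say, $\Omega(t)$ or $\Omega(n)$ alone. The dimension in question is the dimension of the commutant-orthogonal-complement picture / the number of Weingarten basis elements, which is a partition-counting quantity; bounding its log below by $\Omega(tn)$ needs $t \lesssim \mathrm{poly}(2^n)$ and a Stirling estimate, and I would need to be careful about the regime of $t$ versus $n$ for which the claim as stated ($\Omega(tn)$, with the hidden constant) actually holds. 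I would state the lemma with whatever mild restriction on $t$ (e.g.\ $t \le 2^{n/2}$, which is the only physically relevant regime anyway) makes the Stirling bound clean, and flag that the general case is a routine but tedious refinement.
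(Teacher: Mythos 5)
Your first route has the right skeleton --- bound the measure of small $\davg$-balls under the design distribution, then conclude that many balls are needed --- and this is essentially what the paper does. But the mechanism you propose for the ball bound is where the gap lies. You suggest transferring the Haar ball-volume estimate $(cr)^{d^2}$ to the design measure by approximating the indicator of a small ball with a degree-$t$ polynomial. This is both unnecessary and problematic: the indicator of a ball is not well approximated by degree-$t$ polynomials (the error of such an approximation does not decay in a way you can control uniformly over $V$), and the $d^2=4^n$ exponent in the Haar volume is far more than a $t$-design can ``see.'' The paper sidesteps all of this with one observation you are missing: the function $U\mapsto|\tr(U^\dagger V)|^{2t}$ is itself a balanced degree-$(t,t)$ polynomial in the entries of $U$ and $\bar U$, so the approximate-design property applies to it \emph{directly}, giving $\mathbb{E}_{U\sim\mathcal{E}}[|\tr(U^\dagger V)|^{2t}]\leq(1+\epsilon)\,t!$. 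Markov's inequality then bounds $\Pr[|\tr(U^\dagger V)|\geq d\Delta]\leq(1+\epsilon)t!/(\Delta^{2t}d^{2t})$, and since $\|U-V\|_F^2=2d-2\,\mathrm{Re}\,\tr(U^\dagger V)$, this is exactly a small-ball bound in $d_F$. A union bound over $\binom{M}{2}$ pairs of i.i.d.\ samples from $\mathcal{E}$ then produces a packing of size $M\sim\sqrt{\Delta^{2t}d^{2t}/t!}$, whose logarithm is $\Omega(t\log(d^2/t))=\Omega(tn)$; the passage from $d_F$ to $\davg$ is the routine quotient step you already cite.

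Your second route (quantize the uniform measure onto an $\epsilon$-covering net and invoke a support-size lower bound for approximate designs) has a quantitative flaw you flag only partially: moving each unitary by $\epsilon$ in $\davg$ perturbs the $t$-fold twirl by $O(t\epsilon)$, so the quantized measure is only an $O(t\epsilon)$-approximate design. The proposition fixes $\epsilon\in(0,1/2)$ as a constant and the interesting regime is $t$ growing with $n$, so $t\epsilon$ blows past any constant threshold and the support-size lemma you want to apply no longer has a hypothesis to bite on. This route would only recover the claim for $t=O(1/\epsilon)$, which is not enough. I would recommend abandoning it in favor of the moment-plus-Markov argument above.
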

\begin{proof}
    Suppose $U_G^{n, \mathrm{brick}}$ with the uniform distribution forms an $\epsilon$-approximate $t$-design $\mathcal{E}$ of $U(2^n)$.
    We begin by recalling a moment bound for approximate unitary designs.
    \begin{lemma}[Moment bound of approximate unitary designs, {\cite[proof of Lemma 1]{brandao2021models}}]
    Suppose $\mathcal{E}$ is an $\epsilon$-approximate unitary $t$-design of $U(d)$. Then for any unitary $V\in U(d)$, we have
    \begin{equation}
    \mathbb{E}_{U\sim\mathcal{E}}\left[|\tr(U^\dagger V)|^{2t}\right] \leq (1+\epsilon)t!\, .
    \end{equation}
    \end{lemma}
    Consequently, by Markov's inequality, we have the following lemma saying that a random element of a design is far apart from a fixed unitary with high probability.
    \begin{lemma}
    (Design elements are far away from any fixed unitary).
    Suppose $\mathcal{E}$ is an $\epsilon$-approximate unitary $t$-design of $U(d)$. Then for any unitary $V\in U(d)$, we have
    \begin{equation}
        \mathbb{P}_{U\sim\mathcal{E}}\left[ \|U-V\|_F^2 \leq 2d(1-\Delta) \right] \leq \mathbb{P}_{U\sim\mathcal{E}}\left[ |\tr(U^\dagger V)| \geq d\Delta \right] \leq \frac{1+\epsilon}{\Delta^{2t}} \frac{t!}{d^{2t}}.
    \end{equation}
    \label{lem:design_far_away}
    \end{lemma}
    \begin{proof}
        To prove this, we use the above moment bound and Markov's inequality:
        \begin{equation}
            \mathbb{P}_{U\sim\mathcal{E}}\left[ |\tr(U^\dagger V)| \geq d\Delta \right] = \mathbb{P}_{U\sim\mathcal{E}}\left[ |\tr(U^\dagger V)|^{2t} \geq d^{2t}\Delta^{2t} \right] \leq \frac{\mathbb{E}_{U\sim\mathcal{E}}\left[|\tr(U^\dagger V)|^{2t}\right]}{d^{2t}\Delta^{2t}} \leq \frac{1+\epsilon}{\Delta^{2t}} \frac{t!}{d^{2t}}.
        \end{equation}
        Furthermore, since $\|U-V\|^2_F=2d-2\mathrm{Re}[\tr(U^\dagger V)]\leq 2d(1-\Delta)$ implies $|\tr(U^\dagger V)| \geq \mathrm{Re}[\tr(U^\dagger V)] \geq d\Delta$, \Cref{lem:design_far_away} follows.
    \end{proof}

    Now, we apply a probabilistic argument by randomly choosing $M$ i.i.d. unitaries $U_1, \ldots, U_M$ from $\mathcal{E}$. 
    The probability that any two of them are far away from each other is given by
    \begin{align}
        &\mathbb{P}_{U_1, \ldots, U_M \sim \mathcal{E}}[\forall 1\leq i\neq j\leq M, \|U_i-U_j\|_F^2 \geq 2d(1-\Delta)] \\
        &=1-\mathbb{P}_{U_1, \ldots, U_M \sim \mathcal{E}}[\exists 1\leq i\neq j\leq M, \|U_i-U_j\|_F^2 \leq 2d(1-\Delta)] \\
        &\geq 1 - \sum_{1\leq i\neq j\leq M}\mathbb{P}_{U_1, \ldots, U_M \sim \mathcal{E}}[\|U_i-U_j\|_F^2 \leq 2d(1-\Delta)] \\
        &\geq 1-\frac{M(M-1)}{2} \frac{1+\epsilon}{\Delta^{2t}} \frac{t!}{d^{2t}}\\
        &\geq 1-M^2 \frac{1+\epsilon}{\Delta^{2t}} \frac{t!}{d^{2t}}\, ,
    \end{align}
    where we have used the union bound in the first inequality and \Cref{lem:design_far_away} in the last.
    Therefore, as long as we take $M < \sqrt{\floor{\frac{\Delta^{2t}}{1+\epsilon} \frac{d^{2t}}{t!}}}$, we have 
    \begin{equation}
        \mathbb{P}_{U_1, \ldots, U_M \sim \mathcal{E}}[\forall 1\leq i\neq j\leq M, \|U_i-U_j\|_F^2 \geq 2d(1-\Delta)]>0.
    \end{equation}
    Hence there must be at least one instance $V_1, \ldots, V_M \in \mathcal{E}$ such that $\|V_i-V_j\|_F^2 \geq 2d(1-\Delta)$ for any pair $V_i, V_j$. 
    These unitaries form a $\sqrt{2d(1-\Delta)}$-packing net of $U_G^{n, \mathrm{brick}}$ with respect to  $\|\cdot\|_F$. 
    Thus we have
    \begin{equation}
        \log \mathcal{M}(U_G^{n, \mathrm{brick}}, \|\cdot\|_F, \sqrt{2d(1-\Delta)}) 
        \geq \Omega\left( \frac{1}{2}\log\floor{\frac{\Delta^{2t}}{1+\epsilon} \frac{d^{2t}}{t!}}\right).
    \end{equation}
    If we set $\sqrt{2d(1-\Delta)}=\sqrt{d}\epsilon$ (i.e., $\Delta=1-\epsilon^2/2$), we arrive at
    \begin{equation}
        \log \mathcal{M}(U^{n, \mathrm{brick}}_G, d_F, \epsilon) \geq \Omega(tn).
    \end{equation}
    From the fact that quotienting out a global phase only changes the metric entropy by an additive $\Omega(\log(1/\epsilon))$ terms (\Cref{lem:quotient-packing}) and the equivalence of $d_F'$ and $\davg$ (\Cref{lem:dist-df'} Item 1), we arrive at the desired result.
\end{proof}

\end{document}